\DeclareMathAlphabet\scr{U}{scr}{m}{n}
\SetMathAlphabet\scr{bold}{U}{scr}{b}{n}
  \DeclareFontFamily{U}{scr}{\skewchar\font'177}%
  \DeclareFontShape{U}{scr}{m}{n}{<-6>rsfs5<6-8>rsfs7<8->rsfs10}{}%
  \DeclareFontShape{U}{scr}{b}{n}{<-6>rsfs5<6-8>rsfs7<8->rsfs10}{}%
\newcommand{\mal}{\stackrel{\mbox{\tiny$\bullet$}}{}}
\renewcommand{\Re}{\mathrm{Re}}
\renewcommand{\Im}{\mathrm{Im}}
\newcommand{\rr}{\mathbb R}
\newcommand{\rp}{\mathbb R _+}
\newcommand{\F}{\scr F}
\newcommand{\G}{\scr G}
\newcommand{\D}{\scr D}
\newcommand{\E}{\scr E}
\newcommand{\B}{\scr B}
\renewcommand{\epsilon}{\varepsilon}
\renewcommand{\theta}{\vartheta}
\renewcommand{\rho}{\varrho}
\newcommand{\essinf}{\mathrm{ess\ inf\ }}
\newcommand{\fraum}{(\Omega,\F,(\F_t)_{t\in\mathbb R_+},P)}
\newcommand{\osm}{(X,\Psi_0,\alpha,\beta,M)}
\newcommand{\osmzwei}{(\widetilde X,\widetilde \Psi_0,\widetilde \alpha,\widetilde \beta,\widetilde M)}
\newcommand{\raum}[1]{L^1(\mathbb R_+,#1)}
\newcommand{\ito}{It\^o}
\newcommand{\Xp}{X^\parallel}
\newcommand{\gamm}{\psi^{(\Xp,M)}}
\numberwithin{equation}{section} 
\numberwithin{figure}{section} 
  \theoremstyle{plain}
  \newtheorem{theorem}{Theorem}[section]
   \theoremstyle{plain}
  \newtheorem{proposition}[theorem]{Proposition}
 \theoremstyle{plain}
  \theoremstyle{definition}
  \newtheorem{definition}[theorem]{Definition}
  \theoremstyle{remark}
  \newtheorem{remark}[theorem]{Remark}
  \theoremstyle{remark}
  \newtheorem*{proofofmaintheorem}{Proof of Theorem \ref{Satz: Bedingungen impl. Vollstaendig}}
  \theoremstyle{plain}
  \newtheorem{lemma}[theorem]{Lemma}
  \theoremstyle{definition}
  \newtheorem{example*}{Example}
  \theoremstyle{definition}
  \newtheorem{subexample*}{Example}[example*]
  \theoremstyle{definition}
  \newtheorem{example}[theorem]{Example}
  \theoremstyle{plain}
  \newtheorem{corollary}[theorem]{Corollary}
  \theoremstyle{plain}
\begin{document}

\title{On a Heath-Jarrow-Morton approach for stock options}

\author{Jan Kallsen  \and Paul Kr\"uhner}
\address{	
	      Department of Mathematics,
	      Kiel University,
	      Westring 383,
	      24118 Kiel, 
	      Germany,
              (e-mail: kallsen@math.uni-kiel.de)}        
  \address{  
	     Department of Mathematics,
             University of Oslo,
             P.O. Box 1053, Blindern,
            0316 Oslo,
            Norway,
            (e-mail: paulkru@math.uio.no) 
            }

\begin{abstract}
This paper aims at transferring the philosophy behind Heath-Jarrow-Mor\-ton to the modelling of call 
options with all strikes and maturities. Contrary to the approach by 
Carmona and Nadtochiy \cite{carmona.nadtochiy.09} and related to the recent contribution 
\cite{carmona.nadtochiy.12} by the same authors, the key parametrisation of our approach 
involves time-inhomoge\-neous L\'evy processes instead of local volatility models. We provide 
necessary and sufficient conditions for absence of arbitrage. Moreover we discuss the construction 
 of arbitrage-free models. Specifically, we prove their existence and uniqueness given basic building blocks.\\

\noindent\textsc{Keywords:} Heath-Jarrow-Morton, option price surfaces, L\'evy processes\\

\noindent\textsc{MSC subject classification (2010):} 91B24, 91G20

\noindent\textsc{JEL classification:} G 12, G 13 
\end{abstract}

\maketitle

\section{Introduction}
The traditional approach to modelling stock options takes the underlying as a starting point. 
If the dynamics of the stock are specified under a risk neutral measure for the whole market 
(i.e.\ all discounted asset price processes are martingales), then options prices are obtained as 
conditional expectations of their payoff. In reality, standard options as calls and puts are liquidly traded. 
If one wants to obtain vanilla option prices which are consistent with observed market values, 
special care has to be taken. A common and also theoretically reasonable way is calibration,  
i.e.\ to choose the parameters for the stock dynamics such that the model approximates market 
values sufficiently well. After a while, models typically have to be recalibrated, i.e.\ different parameters
 have to be chosen in order for model prices to be still consistent with market quotes. However, 
frequent recalibration is unsatisfactory from a theoretical point of view because model parameters 
are meant to be deterministic and 
constant. Its necessity indicates that the chosen class fails to describe the market consistently.

In Markovian factor models with additional unobservable state variables,
the situation is slightly more involved. Since these state variables are randomly changing
within the model, they may be recalibrated, which means that their current values are inferred from
option prices. In practice, however, the model parameters are often recalibrated as well
because the few state variables do not provide enough flexibility to match observed option data.
In this case, we are facing the same theoretically unsatisfactory situation as above.

A possible way out is to model the whole surface of call options as a state variable, i.e.\ 
as a family of primary assets in their own right. This alternative perspective is motivated 
from the Heath-Jarrow-Morton (HJM, see \cite{heath.al.92}) approach in interest rate theory. 
Rather than considering bonds as derivatives on the short rate, HJM treat the whole family 
of zero bonds or equivalently the forward rate curve as state variable in the first place. 
In the context of HJM-type approaches for stock options, Wissel \cite{schweizer.wissel.08} and 
Sch\"onbucher \cite{schoenbucher.05} consider the case of a single strike, whereas 
Cont et al.\ \cite{cont.al.02} and Carmona and Nadtochiy \cite{carmona.nadtochiy.09,carmona.nadtochiy.12} 
allow for all strikes and maturities. Further important references in this context include Jacod and 
Protter \cite{jacod.protter.06}, Schweizer and Wissel \cite{schweizer.wissel.08b} and 
Wissel \cite{wissel.08}. The HJM approach has been adapted to other asset classes, e.g.\ 
credit models in Benanni \cite{bennani.05}, Sch\"onbucher \cite{schoenbucher.05}, or 
Sidenius et al.\ \cite{sidenius.al.08} and variance swaps in B\"uhler \cite{buehler.06}, cf.\ 
Carmona \cite{carmona.09} for an overview and further references.

Similar to Carmona and Nadtochiy \cite{carmona.nadtochiy.09} we aim at modelling the 
whole call option price surface using the HJM methodology. However, our approach differs 
in the choice of the parametrisation or {\em codebook}, which constitutes a crucial step in 
HJM-type setups. By relying on time-in\-homo\-geneous L\'evy processes rather than 
Dupire's local volatility models, we can avoid some intrinsic difficulties of the framework in 
\cite{carmona.nadtochiy.09}. 
E.g., a simpler drift condition makes the approach more amenable to existence and 
uniqueness results. Moreover, the L\'evy-based setup allows for jumps and may hence 
be more suitable to account particularly for short-term option prices,
cf.\ \cite[Section 1.2]{cont.tankov.04}.

More recently and independently of the present study, Carmona and Nadtochiy \cite{carmona.nadtochiy.12} 
have also put forward a HJM-type approach for the option price surface which is based on 
time-inhomogeneous L\'evy processes. The similarities and differences of their and our 
approach are discussed in Section \ref{Abschnitt: CNs Modell}.

The paper is arranged as follows. We start in Section \ref{Abschnitt: HJM philosophie} 
with an informal discussion of the HJM philosophy, as a motivation to its application to stock options. 
Section \ref{Abschnitt: Notationen und Modelvorbereitung} provides necessary and sufficient 
conditions for an option surface model to be arbitrage-free or, more precisely, risk-neutral. 
Subsequently, we turn to existence and uniqueness of option surface models given basic building blocks. 
In particular, we provide a concrete example which turns out to be related to the 
stochastic volatility model proposed by Barndorff-Nielsen and Shephard in \cite{barndorff.shephard.01}. 
Mathematical tools and some technical proofs are relegated to the appendix.
Facts on seminartingale characteristics are summarised in Section \ref{s:localchar}. The subsequent section
concerns mainly option pricing by Fourier transform.
In Section \ref{s:frechetsde} we consider stochastic differential equations in Fr\'echet spaces,
driven by subordinators. This framework is needed for existence and uniqueness results in 
Section \ref{Abschnitt: Beispiele und Existenzresultate}.

\subsection*{Notation}
$\Re$ and $\Im$ denote the real resp.\ imaginary part of a complex vector in $\mathbb C^d$. 
We write $[a,b]$ for the closed interval $\{x\in\mathbb R:a\leq x\leq b\}$, which is empty if $a>b$. 
We use the notations $\partial_u$ and $D$ for partial and total derivatives, respectively. 
We often write $\beta\mal X_t=\int_0^t\beta_sdX_s$ for stochastic integrals. 
$L(X)$ denotes the set of $X$-integrable predictable processes for a semimartingale $X$. 
If we talk about an $m+n$-dimensional semimartingale $(X,Y)$, we mean that 
$X$ is an $\mathbb R^m$-valued semimartingale and $Y$ is an $\mathbb R^n$-valued 
semimartingale. For $u,v\in\mathbb C^d$ we denote the bilinear form of $u$ and $v$ by 
$uv:=\sum_{k=1}^du_k v_k$. The abbreviation {\em PII} stands for processes 
with independent increments in the sense of \cite{js.87}. Further unexplained 
notation is used as in \cite{js.87}.

\section{Heath-Jarrow-Morton and L\'evy models}\label{Abschnitt: HJM philosophie}
This section provides an informal discussion of the HJM philosophy and its application to stock options.

\subsection{The Heath-Jarrow-Morton philosophy}\label{Abschnitt2: HJM philosophie}
 According to the fundamental theorem of asset pricing, there exists at least one equivalent 
probability measure that turns discounted prices of all traded securities into martingales or, 
more precisely, into $\sigma$-martingales. For simplicity we take the point of view of 
risk-neutral modelling in this paper, i.e.\ we specify the dynamics of all assets in the market 
directly under such an equivalent martingale measure (EMM). Moreover, we assume the 
existence of a deterministic bank account
unless we refer to the original HJM setup in interest rate theory. 
This allows us to express all prices easily in discounted terms.

Before we turn to our concrete setup, we want to highlight key features of the HJM approach in general. 
For more background and examples we refer the reader to the brilliant exposition from 
Carmona \cite{carmona.09}, which greatly inspired our work. We proceed by stating seven 
informal axioms or steps. 

\begin{enumerate}
 \item In HJM-type setups there typically exists a canonical underlying asset or reference process, 
namely the money market account in interest rate theory or the stock in the present paper. 
The object of interest, namely bonds in interest rate theory or vanilla options in stock markets, 
can be interpreted as derivatives on the canonical process. HJM-type approaches typically focus 
on a whole manifold of such --- at least in theory --- liquidly traded derivatives, e.g.\ the 
one-parameter manifold of bonds with all maturities or the two-parameter manifold of 
call options with all strikes and maturities. As first and probably most important HJM axiom 
we claim that this manifold of liquid derivatives is to be treated as the set of primary assets. 
It --- rather than the canonical reference asset --- constitutes the object whose dynamics should 
be modelled in the first place.
\begin{example*}
Zero bonds are securities with terminal value $1$ and appear to be somewhat degenerate
derivatives. But as noted above, we consider discounted prices in this paper.
If the money market account $S^0$ is chosen as numeraire, the discounted payoff of a
bond maturing at $T$ is of the form $1/S^0_T$ and hence a function of $S^0$.
In this broader sense, we view bonds here as derivatives on the money market account.
\end{example*}
\begin{example*}
European call options on a stock $S$ with maturity $T$ and strike $K$
have a payoff of the form $(S_T-K)^+$.
The same is true for their discounted payoff  relative to a  deterministic numeraire,
provided that $S$ and $K$ are replaced by their discounted analogues as well.
\end{example*}
\item The first axiom immediately leads to the second one: do {\em not} model the 
canonical reference asset in detail under the market's risk-neutral measure. Indeed, 
otherwise all derivative prices would be entirely determined by their martingale property, 
leaving no room for a specification of their dynamics.
\item Direct modelling of the above manifold typically leads to awkward constraints. 
Zero bond price processes must terminate in 1, vanilla options in their respective payoff. 
Rather than prices themselves one should therefore consider a convenient parametrisation 
(or {\em codebook} in the language of Carmona \cite{carmona.09}), e.g.\ 
instantaneous forward rates in interest rate theory. Specifying the dynamics 
of this codebook leads immediately to a model for the manifold of primary assets.
If the codebook is properly chosen, then static arbitrage constraints are satisfied
automatically, cf.\ Steps \ref{i:4} and  \ref{i:5}.
\item\label{i:4} It is generally understood that choosing a convenient parametrisation 
constitutes a crucial step for a successful HJM-type approach. This is particularly obvious 
in the context of call options. Their prices are linked by a number of non-trivial static arbitrage 
constraints, which must hold independently of any particular model, cf.\ 
Davis and Hobson \cite{davis.hobson.07}. These static constraints have to be respected 
by any codebook dynamics. Specifying the latter properly may therefore be a difficult 
task unless the codebook is chosen such that the constraints naturally hold. We now 
suggest a way how to come up with a reasonable parametrisation.

The starting point is a family of simple risk-neutral models for the canonical underlying 
whose parameter space has --- loosely speaking ---  the same ``dimension'' or ``size'' 
as the space of liquid derivative manifolds. Provided sufficient regularity holds, the 
presently observed manifold of derivative prices is explained by one and only one of these models.
 \addtocounter{example*}{-2}
\begin{example*}\label{Bsp: HJM}
 In interest rate theory consider bank accounts of the form 
\begin{equation}\label{e:bankkonto}
 S^0_t=\exp\bigg(\int_0^tr(s)ds\bigg)
\end{equation}
with deterministic short rate $r(T),T\in\mathbb R_+$. Fix $t\in\mathbb R_+$ and a 
differentiable curve of bond prices $B(t,T)$, $T\in\mathbb R_+$. Except for the past 
up to time $t$, the observed bond prices are consistent with one and only one of these models, namely for
\begin{eqnarray}
 r(T):=-\partial_T\log(B(t,T)).\label{Gleichung: HJM Ausgangspunkt}
\end{eqnarray}
\end{example*}
\begin{example*}
 Consider Dupire's local volatility models $$dS_t=S_t\sigma(S_t,t)dW_t$$ for a discounted stock, 
where $W$ denotes standard Brownian motion and $\sigma:\mathbb R_+^2\rightarrow\mathbb R$ 
a deterministic function. Up to regularity, any surface of discounted call option prices $C_t(T,K)$ 
with varying maturity $T$ and strike $K$ and fixed current date $t\in\mathbb R_+$ is obtained 
by one and only one local volatility function $\sigma$, namely by
\begin{eqnarray}
 \sigma^2(K,T):=\frac{2\partial_TC_t(T,K)}{K^2\partial_{KK}C_t(T,K)}.\label{Gleichung: Carmona Ausgangspunkt}
\end{eqnarray}
\end{example*}
Note that the above starting point should not be taken as a precise mathematical requirement.
As illustrated in the examples, we relate ``size'' liberally to the number of arguments in the 
respective functions: parameter curves correspond to bond price curves,
parameter surfaces to option price surfaces. The actual regularity needed for one-to-one 
correspondence crucially depends on the chosen family of simple models.

If market data follows the simple model, the parameter manifold, e.g.\ $(r(T))_{T\geq t}$ in 
Example \ref{Bsp: HJM}, is deterministic and does not depend on the time $t$ when derivative 
prices are observed. Generally, however, market data does not follow such a simple model as in 
the two examples. Hence, evaluation of the right-hand side of \eqref{Gleichung: HJM Ausgangspunkt} 
and \eqref{Gleichung: Carmona Ausgangspunkt} leads to a parameter manifold which changes randomly over time.
 \addtocounter{example*}{-2}
\begin{example*}
The {\em instantaneous forward rate} curve
 $$f(t,T):=-\partial_T\log(B(t,T)), \quad T\geq t$$
  for fixed $t\in\mathbb R_+$
can be interpreted as the family of deterministic short rates that is consistent with the presently 
observed bond price curve $B(t,T)$, ${T\geq t}$.
\end{example*}
\begin{example*}
 The {\em implied local volatility}
\begin{eqnarray*}
 \sigma^2(K,T):=\frac{2\partial_TC_t(T,K)}{K^2\partial_{KK}C_t(T,K)},\quad K>0,\quad T\geq t
\end{eqnarray*}
for fixed $t\in\mathbb R_+$ can be interpreted as the unique local volatility function that is consistent with the 
presently observed discounted call prices $C_t(T,K)$,
$T\ge t$, $K>0$.
\end{example*}
The idea now is to take this present parameter manifold as a parametrisation or codebook for the 
manifold of derivatives.
\item\label{i:5} In a next step, we set this parameter manifold ``in motion.'' We consider the codebook, 
e.g.\ the instantaneous forward rate curve $f(t,T)$ or the implied local volatility $\sigma_t(T,K)$, 
as an infinite-dimensional stochastic process. It is typically modelled by a stochastic differential equation, e.g.
$$df(t,T)=\alpha(t,T)dt+\beta(t,T)dW_t,$$
where $W$ denotes standard Brownian motion. As long as the solution to this equation moves 
within the parameter space for the family of simple models, one automatically obtains derivative 
prices that satisfy any static arbitrage constraints. Indeed, since the current bond prices resp.\ call 
prices coincide with the prices from an arbitrage-free model, they cannot violate any such constraints, 
however complicated they might be. This automatic absence of static arbitrage motivates the codebook choice in Step~4.
 \item Absence of static arbitrage does not imply absence of arbitrage altogether. Under the risk-neutral 
modelling paradigm, all discounted assets must be martingales. In interest rate theory this leads 
to the well known HJM drift condition. More generally it means that the drift part of the codebook 
dynamics of Step 5 is determined by its diffusive component.
\item Finally we come back to Step 2. The dynamics of the canonical reference asset process is 
typically implied by the current state of the codebook. E.g. in interest rate theory the short rate is determined 
by the so-called consistency condition
$$r(t)=f(t,t).$$
Similar conditions determine the current stock volatility in 
\cite{schweizer.wissel.08,wissel.08,carmona.nadtochiy.09,carmona.nadtochiy.12}.
\end{enumerate}

\subsection{Time-inhomogeneous L\'evy models}\label{Abschnitt: ZeitinhomogeneLevys}
According to the above interpretation, the approach of \cite{carmona.nadtochiy.09} to 
option surface modelling relies on the family of Dupire's local volatility models. 
Similarly as the independent study \cite{carmona.nadtochiy.12}, we suggest 
another family of simple models for the stock, also relying on a two-parameter manifold. 
To this end, suppose that the discounted stock is a martingale of the form $S=e^X$, 
where the {\em return process} $X$ denotes a process with independent increments 
(or time-inhomogeneous L\'evy process, henceforth PII) on some filtered probability space 
$(\Omega,\F,(\F_{t})_{t\in\mathbb R_+},P)$. Recall that we work with risk-neutral probabilities, 
i.e.\ discounted asset prices are supposed to be $P$-martingales. More specifically, the 
characteristic function of $X$ is assumed to be absolutely continuous in time, i.e. 
\begin{equation}\label{Gleichung: 2.3}
E(e^{iuX_t})=\exp\bigg(iuX_0+\int_0^t\Psi(s,u)ds\bigg)
\end{equation}
with some function $\Psi:\mathbb R_+\times\mathbb R\rightarrow \mathbb C$. 

We assume call options of all strikes and maturities to be liquidly traded.
Specifically, we write $C_t(T,K)$ for the discounted price at time $t$ of a call which 
expires at $T$ with discounted strike $K$.
A slight extension of \cite[Proposition 1]{belomestny.reiss.05} shows that option 
prices can be expressed in terms of $\Psi$. To this end, we define {\em modified option prices}
$$\scr O_t(T,x):=e^{-(x+X_t)}C_t(T,e^{x+X_t})-(e^{-x}-1)^+.$$
Since call option prices are obtained from $C_t(T,K)=E((S_T-K)^+\vert\F_t)$, 
by call-put parity, and by $E(S_T\vert\F_t)=S_t$, we have
$$\scr O_t(T,x)=\bigg\{\begin{array}{cc}
E((e^{(X_T-X_t)-x}-1)^+\vert\F_t) & \mathrm{if}\ x\geq0,\\
E((1-e^{(X_T-X_t)-x})^+\vert\F_t) & \mathrm{if}\ x<0.\end{array}$$
Proposition \ref{Proposition: Fourier transformierte der Call-Preise} yields
\begin{eqnarray}
 \scr O_t(T,x)&=&\F^{-1}\bigg\{u\mapsto\frac{1-E(e^{iu(X_T-X_t)}\vert\F_t)}{u^2+iu}\bigg\}(x),
\label{Gleichung: Levy-mod-optionen} \\
 \F\{x\mapsto\scr O_t(T,x)\}(u)&=&\frac{1-E(e^{iu(X_T-X_t)}\vert\F_t)}{u^2+iu}
\end{eqnarray}
where $\F^{-1}$ and $\F$ denote the improper inverse Fourier transform and the improper Fourier transform, 
respectively, in the sense of (\ref{Gleichung: Fourier Transformation}, \ref{Gleichung: Inverse Fourier Transformation}) 
in Section \ref{Abschnitt: Fourier transformation und Optionspreise} of the appendix.
Since 
\begin{equation}
 C_t(T,K) = (S_t-K)^++K\scr O_t\bigg(T,\log\frac{K}{S_t}\bigg)\label{Gleichung: Levy-Optionen}
\end{equation}
and
\begin{equation}
 E(e^{iu(X_T-X_t)}\vert \F_t) = \exp\bigg(\int_t^T\Psi(s,u)ds\bigg),\label{Gleichung Levy char. exponent}
\end{equation}
we can compute option prices according to the following diagram:
$$\Psi \rightarrow \exp\bigg(\int_t^T\Psi(s,\cdot)ds\bigg) \rightarrow \scr O_t(T,\cdot) \rightarrow C_t(T,\cdot).$$
For the last step we also need the present stock price $S_t$. Under sufficient smoothness 
we can invert all transformations. Indeed, we have
\begin{eqnarray}
 \Psi(T,u)=\partial_T\log\bigg(1-(u^2+iu)\F\{x\mapsto\scr O_t(T,x)\}(u)\bigg)\label{Gleichung: Idee fuer Psi}.
\end{eqnarray}
Hence we obtain option prices from $\Psi$ and vice versa as long as we know the present stock price.

\subsection{Setting L\'evy in motion}\label{Abschnitt: Levy in Bewegung}
Generally we do not assume that the {\em return process}
\begin{eqnarray}
X:=\log(S)\label{Gleichung: X} 
\end{eqnarray}
follows a time-inhomoge\-neous L\'evy process. Hence the right-hand side of 
Equation \eqref{Gleichung: Idee fuer Psi} will typically change randomly over time. 
In line with Step \ref{i:4} above, we define modified option prices
\begin{eqnarray}
 \scr O_t(T,x) & := & e^{-(x+X_t)}C_t(T,e^{x+X_t})-(e^{-x}-1)^+\label{Gleichung: O ist gleich}
\end{eqnarray} as before and
\begin{eqnarray}
 \Psi_t(T,u) & := & \partial_T\log\bigg(1-(u^2+iu)\F\{x\mapsto\scr O_t(T,x)\}(u)\bigg).\label{Gleichung: Psi ist gleich}
\end{eqnarray}
This constitutes our {\em codebook process} for the surface of discounted option prices. 
As in Section \ref{Abschnitt: ZeitinhomogeneLevys} the asset price processes $S$ and $C(T,K)$ 
can be recovered from $X$ and $\Psi(T,u)$ via
\begin{eqnarray*}
 S & = & \exp(X),\\
 \scr O_t(T,x) & = & \F^{-1}\bigg\{u\mapsto\frac{1-\exp(\int_t^T\Psi_t(s,u)ds)}{u^2+iu}\bigg\}(x),\\
C_t(T,K) & = & (S_t-K)^++K\scr O_t\bigg(T,\log\frac{K}{S_t}\bigg).
\end{eqnarray*}

In the remainder of this paper we assume that the infinite-dimensional codebook process 
satisfies an equation of the form
\begin{eqnarray}\label{Gleichung:neu}
d\Psi_t(T,u)=\alpha_t(T,u)dt+\beta_t(T,u)dM_t,
\end{eqnarray}
driven by some finite-dimensional semimartingale $M$.

\section{Model setup and risk neutrality}\label{Abschnitt: Notationen und Modelvorbereitung}
As before we fix a filtered probability space $(\Omega,\F,(\F_t)_{t\in\mathbb R_+},P)$ 
with trivial initial $\sigma$-field $\F_0$. In this section we single out conditions such that a given pair 
$(X,\Psi)$ corresponds via (\ref{Gleichung: X} -- \ref{Gleichung: Psi ist gleich}) to a 
risk-neutral model for the stock and its call options.

\subsection{Option surface models}\label{Abschnitt: Optionsfl\"achenmodelle}
We denote by $\Pi$ the set of characteristic exponents of L\'evy processes $L$ such that 
$E(e^{L_1})=1$ . More precisely, $\Pi$ contains all functions $\psi:\mathbb R\rightarrow\mathbb C$ of the form
$$\psi(u)=-\frac{u^2+iu}{2}c+\int(e^{iux}-1-iu(e^x-1))K(dx),$$
where $c\in\mathbb R_+$ and $K$ denotes a L\'evy measure on $\mathbb R$ satisfying 
$\int_{\{x>1\}}e^xK(dx)<\infty$. 
\begin{definition}\label{Def: osm} 
A quintuple $(X,\Psi_0,\alpha,\beta,M)$ is an {\em option surface model} if
\begin{itemize}
\item $(X,M)$ is a $1+d$-dimensional semimartingale that allows for local characteristics 
in the sense of Section \ref{su:lc},
\item $\Psi_0:\mathbb R_+\times\mathbb R\rightarrow\mathbb C$ is measurable with
$\int_0^T\vert\Psi_0(r,u)\vert dr<\infty$ for any $T\in\mathbb R_+$, $u\in\mathbb R,$
\item $\alpha(T,u),\beta(T,u)$ are $\mathbb C$- resp.\ $\mathbb C^d$-valued 
processes for any $T\in\mathbb R_+$, $u\in\mathbb R$,
\item $(\omega,t,T,u)\mapsto\alpha_t(T,u)(\omega),\beta_t(T,u)(\omega)$ are 
$\scr P\otimes\B(\mathbb R_+)\otimes\B$-measurable, where $\scr P$ denotes the 
predictable $\sigma$-field on $\Omega\times\mathbb R_+$,
\item $\int_0^t\int_0^T\vert\alpha_s(r,u)\vert drds<\infty$ for any $t,T\in\mathbb R_+$, $u\in\mathbb R$,
\item $\int_0^T|\beta_t(r,u)|^2dr<\infty$ for any $t,T\in\mathbb R_+$, $u\in\mathbb R$,
\item $((\int_0^T|\beta^j_{t}(r,u)|^2dr)^{1/2})_{t\in\mathbb R_+}\in L(M^j)$ 
for any fixed $T\in\mathbb R_+$, $u\in\mathbb R$, $j\in\{1,\dots,d\}$, where the set $L(M^j)$ of $M^j$-integrable 
processes is defined as in \cite[Definition III.6.17]{js.87},
\item a version of the corresponding {\em codebook process}
\begin{eqnarray}\label{Gleichung:neu2}
\Psi_t(T,u):=\Psi_0(T,u)+\int_0^{t\wedge T}\alpha_s(T,u)ds+\int_0^{t\wedge T}\beta_s(T,u)dM_s
\end{eqnarray}
has the following properties:
\begin{enumerate}
\item $(\omega,t,T,u)\mapsto\Psi_t(T,u)(\omega)$ is $\scr A\otimes\B(\mathbb R_+)\otimes\B$-measurable, 
where $\scr A$ denotes the optional $\sigma$-field on $\Omega\times\mathbb R_+$,
\item $u\mapsto\int_t^T\Psi_s(r,u)(\omega)dr$ is in $\Pi$ for any $T\in\mathbb R_+$, $t\in[0,T]$, 
$s\in[0,t]$, $\omega\in\Omega$.
\end{enumerate}
\end{itemize}
\end{definition}

\begin{remark}
The square-integrability conditions on $\beta$ are imposed only to warrant
\begin{eqnarray}
 \int_0^T\left\vert\int_0^t\beta_s(r,u)dM_s\right\vert dr &<&\infty\quad\mbox{and}\label{e:intbeta1}\\
 \int_0^t\int_0^T\beta_s(r,u)drdM_s &=&\int_0^T\int_0^t\beta_s(r,u)dM_sdr. \label{e:intbeta2}
\end{eqnarray}
If $M$ has increasing components, we can and do replace the integrability conditions on $\beta$ by
the weaker requirement
\begin{itemize}
\item $\sum_{j=1}^d\int_0^t\int_0^T\vert\beta^j_s(r,u)\vert drdM^j_s<\infty$ 
for any $t,T\in\mathbb R_+$, $u\in\mathbb R$,
\end{itemize}
which implies (\ref{e:intbeta1}, \ref{e:intbeta2}) by Fubini's theorem.
\end{remark}

We denote the local exponents of $(X,M), X$ by $\psi^{(X,M)}, \psi^X$ and their 
domains by $\scr U^{(X,M)}, \scr U^X$, cf.\ Definitions \ref{Definition: lokaler Exponent 1. Version} 
and \ref{Definition: lokaler Exponent}.
In line with Section \ref{Abschnitt: Levy in Bewegung}, the discounted stock and call price processes 
associated with an option surface model are defined by
\begin{eqnarray}
 S_t & := & \exp(X_t),\\
 \scr O_t(T,x) & := & \F^{-1}\left\{u\mapsto\frac{1-\exp\big(\int_t^T\Psi_t(r,u)dr\big)}{u^2+iu}\right\}(x),
\label{Gleichung: O als Fourier Transformierte}\\
C_t(T,K) & := & (S_t-K)^++K\scr O_t\bigg(T,\log\frac{K}{S_t}\bigg)\label{Gleichung: Optionen sind}
\end{eqnarray}
for any $T\in\mathbb R_+$, $t\in[0,T]$, $x\in\mathbb R$, $K\in\mathbb R_+$, 
where $\scr F^{-1}$ denotes the improper inverse Fourier transform in the sense of 
Section \ref{Abschnitt: Fourier transformation und Optionspreise}.
From (\ref{Gleichung:neu2}) it follows that 
$\Psi_t(T,u)=\Psi_T(T,u)$ for $T<t$.
By (\ref{Gleichung: O als Fourier Transformierte}, \ref{Gleichung: Optionen sind}) 
this part $\Psi_t(T,u), T<t$ of the codebook does not affect option prices
and is hence irrelevant. 

\begin{remark}\label{Bemerkung: Vergleich zu PII}
The existence of these processes is implied by the assumptions above. Indeed, by 
Fubini's theorem for ordinary and stochastic integrals \cite[Theorem IV.65]{protter.04}, we have
$$\int_0^T\vert\Psi_t(r,u)\vert dr<\infty.$$
Fix $\omega\in\Omega$. Since $u\mapsto\int_t^T\Psi_t(r,u)(\omega)dr\in\Pi$, there is a 
random variable $Y$ on some probability space $(\widetilde\Omega,\widetilde{\scr F},\widetilde P)$ 
which has infinitely divisible distribution and characteristic function
$$\widetilde E(e^{iuY})=\exp\left(\int_t^T\Psi_t(r,u)(\omega)dr\right),\quad u\in\mathbb R.$$
Since this function is in $\Pi$, we have $\widetilde E(e^Y)=1$. Thus 
Proposition \ref{Proposition: Fourier transformierte der Call-Preise} yields the existence of the inverse 
Fourier transform in Equation \eqref{Gleichung: O als Fourier Transformierte}. 
Moreover, it implies $C_t(T,K)(\omega)=\widetilde E((S_t(\omega)e^Y-K)^+)$ and 
thus we have $0\leq C_t(T,K)(\omega)\leq S_t(\omega)$ and $0\leq P_t(T,K)(\omega)\leq K$ 
for any $K\in\mathbb R_+$, $T\in\mathbb R_+$, $t\in[0,T]$, where 
$P_t(T,K):=C_t(T,K)+K-S_t$ for any $K\in\mathbb R_+$, $T\in\mathbb R_+$, $t\in[0,T]$.
\end{remark}

As noted above, we model asset prices under a risk-neutral measure for the whole market. 
Put differently, we are interested in risk-neutral option surface models in the following sense.
\begin{definition}
 An option surface model $\osm$ is called {\em risk neutral} if the corresponding stock $S$ 
and all European call options $C(T,K)$, $T\in\mathbb R_+$, $K>0$ are $\sigma$-martingales or, 
equivalently, local martingales (cf.\ \cite[Proposition 3.1 and Corollary 3.1]{kallsen.03}). 
It is called {\em strongly risk neutral} if $S$ and all $C(T,K)$ are martingales.
\end{definition}

Below, risk-neutral option surface models are characterized in terms of the following properties.
\begin{definition} An option surface model $\osm$ satisfies the {\em consistency condition} if 
$$\psi^X_t(u)=\Psi_{t-}(t,u),\quad u\in \mathbb R$$ outside some $dP\otimes dt$-null set.
Moreover, it satisfies the {\em drift condition} if 
$$\bigg(u,-i\int_t^T\beta_t(r,u)dr\bigg)_{t\in\mathbb R_+}\in \scr U^{(X,M)}$$ 
and
\begin{eqnarray}\label{Gleichung: Driftbedingung}
\int_t^T\alpha_t(r,u)dr=\psi^X_t(u)-\psi_t^{(X,M)}\bigg(u,-i\int_t^T\beta_t(r,u)dr\bigg)
\end{eqnarray}
outside some $dP\otimes dt$-null set for any $T\in\mathbb R_+$, $u\in\mathbb R$.
Finally, the option surface model satisfies the {\em conditional expectation condition} if
$$\exp\left(\int_t^T\Psi_t(r,u)dr\right)=E(e^{iu(X_T-X_t)}\vert\F_t)$$ for any $T\in\mathbb R_+$, 
$t\in[0,T]$, $u\in\mathbb R$.
\end{definition}
\begin{remark}\label{Bemerkung: abgeleitete Driftbedingung}
 The drift condition can be rewritten as
 \begin{eqnarray}\label{G:Driftbedingung, einf.}
\alpha_t(T,u)=-\partial_T\left(\psi_t^{(X,M)}\bigg(u,-i\int_t^T\beta_t(r,u)dr\bigg)\right)  
 \end{eqnarray}
for almost all $T\in\mathbb R_+$. It gets even simpler if $X$ and $M$ are assumed to be 
locally independent in the sense of Definition \ref{Definition: lokale unabh}:
 \begin{eqnarray}\label{G:Driftbedingung, ganz einf.}
\alpha_t(T,u)=-\partial_T\left(\psi_t^M\bigg(-i\int_t^T\beta_t(r,u)dr\bigg)\right).  
 \end{eqnarray}
 If the derivative $\psi_t^\prime(u):=\partial_u\psi_t^M(u)$ exists as well, the drift condition 
simplifies once more and turns into
 $$\alpha_t(T,u)=i\psi_t^\prime\bigg(-i\int_t^T\beta_t(r,u)dr\bigg)\beta_t(T,u).$$
 Now consider the situation that $M$ is a one-dimensional Brownian motion which is 
locally independent of the return process $X$. Then $\psi^M(u)=-u^2/2$ and the drift condition reads as
 $$\alpha_t(T,u)=-\beta_t(T,u)\int_t^T\beta_t(r,u)dr.$$
Thus the drift condition for option surface models is similar to the HJM drift condition (cf.\ \cite{heath.al.92}).
\end{remark}

Drift condition \eqref{Gleichung: Driftbedingung} seems to rely on the joint exponent of $X$ and $M$. 
However, \eqref{G:Driftbedingung, ganz einf.} suggests that only partial knowledge about $X$ is needed. 
It is in fact sufficient to specify the joint exponent $\psi^{(\Xp,M)}$ of $M$ and the dependent part $\Xp$ of 
$X$ relative to $M$, which is defined in Section \ref{Abschnitt: Semimartingalprojektion}. Using this notion, 
Equation \eqref{G:Driftbedingung, einf.} can be rewritten as
\begin{equation}\label{e:letztedrift}
\alpha_t(T,u)=-\partial_T\left(\psi_t^{(\Xp,M)}\bigg(u,-i\int_t^T\beta_t(r,u)dr\bigg)\right)
\end{equation}
because $\psi^{(X,M)} = \psi^{(X^\perp,0)}+\psi^{(\Xp,M)}$ and the first summand on the 
right-hand side does not depend on its second argument.

\subsection{Necessary and sufficient conditions}\label{Unterabschnitt: Notwendige Bedingungen}
The goal of this section is to prove the following characterisation of risk-neutral option surface models.
\begin{theorem}\label{Satz: Bedingungen impl. Vollstaendig}
For any option surface model $\osm$ the following statements are equivalent.
\begin{enumerate}
\item It is strongly risk neutral.
\item It is risk neutral.
\item It satisfies the conditional expectation condition.
\item It satisfies the consistency and drift conditions.
\end{enumerate}
\end{theorem}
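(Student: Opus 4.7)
The plan is to prove the four conditions equivalent cyclically as $(1)\Rightarrow(2)\Rightarrow(3)\Rightarrow(4)\Rightarrow(1)$. The first implication is immediate since martingales are in particular local martingales and hence $\sigma$-martingales. For $(2)\Rightarrow(3)$, the key observation is that the discounted put $P_t(T,K):=C_t(T,K)+K-S_t$ is a local martingale which, by Remark \ref{Bemerkung: Vergleich zu PII}, is bounded by $K$ and therefore a true martingale; in particular $P_t(T,K)=E((K-S_T)^+|\F_t)$. Rewriting this via put-call parity expresses $\scr O_t(T,x)$ as a conditional expectation of a function of $X_T-X_t$, and applying the conditional analogue of Proposition \ref{Proposition: Fourier transformierte der Call-Preise} yields $\F\{x\mapsto\scr O_t(T,x)\}(u)=(1-E(e^{iu(X_T-X_t)}|\F_t))/(u^2+iu)$. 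Comparison with the defining equation \eqref{Gleichung: O als Fourier Transformierte} together with uniqueness of the improper Fourier transform forces $\exp(\int_t^T\Psi_t(r,u)dr)=E(e^{iu(X_T-X_t)}|\F_t)$.

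For $(3)\Rightarrow(4)$, I would study the process $N_t(T,u):=\exp(iuX_t+\int_t^T\Psi_t(r,u)dr)$, which by $(3)$ equals $E(e^{iuX_T}|\F_t)$ and is thus a bounded martingale. Setting $Z_t:=\int_t^T\Psi_t(r,u)dr$ and invoking the stochastic Fubini theorem (legitimised precisely by the integrability hypotheses on $\beta$ encoding (\ref{e:intbeta1}, \ref{e:intbeta2})), one derives
$$dZ_t=\bigg(-\Psi_{t-}(t,u)+\int_t^T\alpha_t(r,u)dr\bigg)dt+\bigg(\int_t^T\beta_t(r,u)dr\bigg)dM_t.$$
Applying It\^o's formula to $N_t=\exp(iuX_t+Z_t)$ and identifying the drift through the joint local exponent $\psi^{(X,M)}_t$ (used in its form for predictable complex integrands $u$ and $-i\int_t^T\beta_t(r,u)dr$, as developed in Section \ref{su:lc}), the martingale property of $N$ yields
$$\Psi_{t-}(t,u)=\int_t^T\alpha_t(r,u)dr+\psi^{(X,M)}_t\bigg(u,-i\int_t^T\beta_t(r,u)dr\bigg).$$
Passing to the limit $T\to t$ and exploiting $\psi^{(X,M)}_t(u,0)=\psi^X_t(u)$ produces the consistency condition $\Psi_{t-}(t,u)=\psi^X_t(u)$; re-inserting this into the displayed identity recovers the drift condition \eqref{Gleichung: Driftbedingung}.

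The implication $(4)\Rightarrow(1)$ runs the same calculation in reverse. Under consistency and drift, the It\^o/Fubini computation shows the drift of $N_t$ vanishes, hence $N$ is a local martingale. The requirement that $u\mapsto\int_t^T\Psi_t(r,u)dr$ belongs to $\Pi$ forces $|\exp(\int_t^T\Psi_t(r,u)dr)|\leq 1$, so $|N_t|\leq 1$ and $N$ is in fact a bounded martingale, which gives $(3)$. Because every element of $\Pi$ admits an analytic extension to a strip containing $u=-i$ where it vanishes, evaluating $(3)$ there yields $E(S_T|\F_t)=S_t$, so $S$ is a martingale. Feeding $(3)$ back into (\ref{Gleichung: O als Fourier Transformierte}, \ref{Gleichung: Optionen sind}) and inverting the Fourier step via Proposition \ref{Proposition: Fourier transformierte der Call-Preise} identifies $C_t(T,K)=E((S_T-K)^+|\F_t)$, establishing $(1)$.

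The main obstacle will be the rigorous It\^o-plus-Fubini computation of $dZ_t$ and the drift identification for the complex-valued process $N_t$ through the joint local exponent with \emph{predictable, complex, time-varying} integrands. This is exactly where the integrability hypotheses on $\beta$ in Definition \ref{Def: osm} and the full characterisation of local exponents from Section \ref{su:lc} must be used; the remaining steps are then essentially algebraic manipulations of Fourier transforms and the bounded-local-martingale lemma.
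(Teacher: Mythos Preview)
Your cyclic scheme and the It\^o/Fubini/local-exponent computation for $(3)\Leftrightarrow(4)$ are essentially the paper's approach (organised there through the auxiliary process $\Gamma_t(T,u)$ and Lemmas \ref{Lemma: Gammazerlegung}--\ref{Lemma: Psi ist Martingal}); no issue with that part.

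There is, however, a genuine gap in your $(2)\Rightarrow(3)$. From boundedness you correctly get $P_t(T,K)=E((K-S_T)^+|\F_t)$, but the next step---``put-call parity expresses $\scr O_t(T,x)$ as a conditional expectation of a function of $X_T-X_t$'' so that Proposition \ref{Proposition: Fourier transformierte der Call-Preise} applies---presupposes $E(e^{X_T-X_t}|\F_t)=1$, i.e.\ that $S$ is a \emph{true} martingale. Under (2) you only know $S$ is a local martingale, and rewriting via put-call parity gives, for $x\ge0$,
\[
\scr O_t(T,x)=E\big((1-e^{X_T-X_t-x})^+\big|\F_t\big)+e^{-x}-1,
\]
which coincides with $E((e^{X_T-X_t-x}-1)^+|\F_t)$ only when $S$ is a martingale; so the hypothesis of Proposition \ref{Proposition: Fourier transformierte der Call-Preise} is not verified. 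The paper circumvents this by never identifying $\scr O_t$ as a conditional expectation at this stage: it computes the improper Fourier transform of $O_t(T,\cdot)$ directly from the \emph{codebook} side (Equation \eqref{Gleichung: Optionen Transformiert}), uses Proposition \ref{Proposition: Fourier Transformation von Martingalen} to pass the \emph{local} martingale property of $O_t(T,x)$ through the Fourier transform, and only then upgrades $\Phi_t(u)=\exp(iuX_t+\int_t^T\Psi_t(r,u)dr)$ to a true martingale via $|\Phi_t(u)|\le1$. Your route is salvageable---from $P_t(T,K)=E((K-S_T)^+|\F_t)$ together with $C_t(T,K)\to0$ as $K\to\infty$ (a consequence of Remark \ref{Bemerkung: Vergleich zu PII}) one can first deduce $E(S_T|\F_t)=S_t$---but that step must be supplied explicitly.

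A minor imprecision in your $(4)\Rightarrow(1)$: ``every element of $\Pi$ admits an analytic extension to a strip containing $u=-i$'' is not the right justification and is not generally true beyond the closed strip $-1\le\Im u\le 0$. The correct argument (Lemma \ref{Lemma: conditional expectation implies martingale property}, Remark \ref{Bemerkung: Vergleich zu PII}) is that $u\mapsto\int_t^T\Psi_t(r,u)dr\in\Pi$ pins down the conditional \emph{law} of $X_T-X_t$ as an infinitely divisible distribution whose exponential has mean one, whence $E(e^{X_T-X_t}|\F_t)=1$ directly, without any analytic continuation of the identity in $u$.
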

The remainder of this section is devoted to the proof of Theorem \ref{Satz: Bedingungen impl. Vollstaendig}. 
We proceed according to the following scheme
$$(1)\Rightarrow(2)\Rightarrow(3)\Rightarrow(4)\Rightarrow(3)\Rightarrow(1).$$
We use the notation
\begin{eqnarray*}
\delta_t(T,u) & := & \int_t^T\alpha_t(r,u)dr-\psi_t^X(u),\\
\sigma_t(T,u) & := & \int_t^T\beta_t(r,u)dr,\\
\Gamma_t(T,u) & := & \int_0^T\Psi_0(r,u)dr+\int_0^t\delta_s(T,u)ds+\int_0^t\sigma_s(T,u)dM_s.
\end{eqnarray*}
The existence of the integrals above is implied by the condition for option surface models. 
Observe that $\Gamma(T,u)$ is a semimartingale.

\begin{lemma}\label{Lemma: Gammazerlegung}
For any $u\in\mathbb R,T\in\mathbb R_+,t\in[0,T]$ we have
 $$\Gamma_t(T,u)-\Gamma_t(t,u)=\int_t^T\Psi_t(r,u)dr.$$
\end{lemma}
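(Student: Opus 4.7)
The plan is to unwind the definitions, split each of the three summands of $\Gamma_t(T,u)$ into pieces involving integrals over $[0,t]$ and $[t,T]$, and then rewrite everything as a single outer integral over $[t,T]$ using Fubini's theorems. Finally one identifies the integrand with $\Psi_t(r,u)$ via definition \eqref{Gleichung:neu2}.

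Concretely, I would first observe that for $s \in [0,t]$ with $t \le T$ one has the elementary identities $\delta_s(T,u) - \delta_s(t,u) = \int_t^T \alpha_s(r,u)\, dr$ and $\sigma_s(T,u) - \sigma_s(t,u) = \int_t^T \beta_s(r,u)\, dr$, since in each case the $\psi^X_s(u)$ term cancels and the interval $[s,T]$ decomposes as $[s,t]\cup[t,T]$. Using these, together with $\int_0^T \Psi_0(r,u)\, dr - \int_0^t \Psi_0(r,u)\, dr = \int_t^T \Psi_0(r,u)\, dr$, the difference $\Gamma_t(T,u) - \Gamma_t(t,u)$ becomes
\begin{equation*}
\int_t^T \Psi_0(r,u)\, dr + \int_0^t \!\!\int_t^T \alpha_s(r,u)\, dr\, ds + \int_0^t \!\!\int_t^T \beta_s(r,u)\, dr\, dM_s.
\end{equation*}

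Next I would swap the order of integration in the two double integrals. For the $\alpha$-term this is the classical Fubini theorem, justified by the integrability condition $\int_0^t\int_0^T |\alpha_s(r,u)|\, dr\, ds < \infty$ in Definition~\ref{Def: osm}. For the $\beta$-term this is the stochastic Fubini theorem, which is precisely what the integrability hypotheses on $\beta$ are designed to provide, as recorded explicitly in \eqref{e:intbeta2}. After swapping, the sum takes the form
\begin{equation*}
\int_t^T \left( \Psi_0(r,u) + \int_0^t \alpha_s(r,u)\, ds + \int_0^t \beta_s(r,u)\, dM_s \right) dr.
\end{equation*}

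Since $r \ge t$ in the outer integral, one has $t \wedge r = t$, so the integrand is by definition \eqref{Gleichung:neu2} equal to $\Psi_t(r,u)$, which yields the claim. The only genuine technical point is the stochastic Fubini step, which is exactly the content of \eqref{e:intbeta2}; everything else is bookkeeping on intervals.
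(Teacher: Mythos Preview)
Your proof is correct and follows essentially the same approach as the paper's: both expand the difference $\Gamma_t(T,u)-\Gamma_t(t,u)$ via the definitions of $\delta$ and $\sigma$, apply classical and stochastic Fubini (the latter being exactly \eqref{e:intbeta2}, which the paper attributes to \cite[Theorem IV.65]{protter.04}), and then recognise the resulting integrand as $\Psi_t(r,u)$ from \eqref{Gleichung:neu2}. Your write-up is simply more explicit about the cancellation of the $\psi^X_s(u)$ terms and the interval bookkeeping, but there is no substantive difference.
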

\begin{proof}
 Using the definition of $\Gamma,\delta,\sigma$ and applying Fubini's theorem as in \cite[Theorem IV.65]{protter.04} 
yields
\begin{eqnarray*}
\Gamma_t(T,u)-\Gamma_t(t,u) 
& = & \int_t^T\Psi_0(r,u)dr+\int_t^T\int_0^t\alpha_s(r,u)dsdr\\
&&{}+\int_t^T\int_0^t\beta_s(r,u)dM_sdr\\
& = & \int_t^T\Psi_t(r,u)dr.
\end{eqnarray*}
\end{proof}

\begin{lemma}\label{Lemma: Gammas querdrift}
For any $u\in\mathbb R$, $t\in\mathbb R_+$ we have
$$\Gamma_t(t,u)=\int_0^t\left(\Psi_{s-}(s,u)-\psi^X_s(u)\right)ds.$$
\end{lemma}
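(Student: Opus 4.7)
My plan is to expand the right-hand side of $\Gamma_t(t,u)$ using the definitions of $\delta$ and $\sigma$, then exchange the orders of integration via the classical and stochastic Fubini theorems so as to recognise the integrand as $\Psi_{\cdot}(\cdot,u)$ on the diagonal.

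Concretely, I would first substitute
\[
\delta_s(t,u)=\int_s^t\alpha_s(r,u)dr-\psi^X_s(u),\qquad \sigma_s(t,u)=\int_s^t\beta_s(r,u)dr
\]
into the definition of $\Gamma_t(t,u)$, obtaining
\[
\Gamma_t(t,u)=\int_0^t\Psi_0(r,u)dr+\int_0^t\!\!\int_s^t\alpha_s(r,u)drds-\int_0^t\psi^X_s(u)ds+\int_0^t\!\!\int_s^t\beta_s(r,u)drdM_s.
\]
Next I would apply ordinary Fubini to the $\alpha$-term and the stochastic Fubini theorem of \cite[Theorem IV.65]{protter.04} to the $\beta$-term, relying on the integrability conditions imposed in Definition \ref{Def: osm} (the same hypotheses that justify \eqref{e:intbeta1} and \eqref{e:intbeta2}). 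This swaps the two iterated integrals to $\int_0^t\int_0^r\alpha_s(r,u)ds\,dr$ and $\int_0^t\int_0^r\beta_s(r,u)dM_s\,dr$, so that
\[
\Gamma_t(t,u)=\int_0^t\!\Bigl(\Psi_0(r,u)+\int_0^r\alpha_s(r,u)ds+\int_0^r\beta_s(r,u)dM_s\Bigr)dr-\int_0^t\psi^X_s(u)ds.
\]

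The inner bracket is precisely $\Psi_r(r,u)$ by definition \eqref{Gleichung:neu2} (taking $T=r$ and $t=r$ so that $t\wedge T=r$). To match the statement, which involves $\Psi_{s-}(s,u)$, I would use that for fixed $r$ the process $s\mapsto \Psi_s(r,u)$ is continuous in $s$ except possibly at jump times of $M$ (the $\alpha$-integral is absolutely continuous in $s$, and the stochastic integral with respect to $M$ only jumps when $M$ does). Since $M$ is càdlàg, its jump times form an at most countable set, hence a Lebesgue null set in $r$. Consequently $\Psi_r(r,u)=\Psi_{r-}(r,u)$ for almost every $r\in[0,t]$, so the two Lebesgue integrals agree and the claim follows.

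The main obstacle is the legitimate application of stochastic Fubini: one has to check that $(s,r)\mapsto\beta_s(r,u)\mathbf{1}_{\{s\le r\}}$ satisfies the integrability hypotheses of Protter's theorem uniformly for $r\in[0,t]$, but this is exactly what the square-integrability condition on $\beta$ in Definition \ref{Def: osm} (or its replacement when $M$ has increasing components) is designed to provide. Once that is in place, the argument is a bookkeeping exercise; the replacement of $\Psi_r$ by $\Psi_{r-}$ costs nothing because it only changes the integrand on the countable jump set of $M$.
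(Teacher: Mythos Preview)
Your proposal is correct and follows essentially the same approach as the paper's proof: both expand $\delta$ and $\sigma$, apply ordinary and stochastic Fubini (\cite[Theorem IV.65]{protter.04}) to swap the iterated integrals, and identify the resulting integrand as $\Psi_r(r,u)$. Your discussion of why $\Psi_r(r,u)=\Psi_{r-}(r,u)$ for Lebesgue-almost every $r$ is more explicit than the paper's, which simply writes ``This yields the claim'' after arriving at the Fubini-rearranged expression.
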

\begin{proof}
By Fubini's theorem for ordinary and stochastic integrals we have
\begin{eqnarray*}
 \Gamma_t(t,u) & = & \int_0^t\Psi_0(r,u)dr+\int_0^t\delta_s(t,u)ds+\int_0^t\sigma_s(t,u)dM_s\\
 &=& \int_0^t\bigg(\Psi_0(r,u)+\int_0^r\alpha_s(r,u)ds-\psi^X_r(u)\bigg)dr+\int_0^t\int_0^r\beta_s(r,u)dM_sdr.
\end{eqnarray*}
This yields the claim.
\end{proof}

\begin{lemma}\label{Lemma: Bedingte charackteristische Funktion} If $\osm$ is risk neutral, then
it satisfies the conditional expectation condition.
\end{lemma}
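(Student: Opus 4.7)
The plan is to identify, for each $t\le T$, the conditional law of $X_T-X_t$ given $\F_t$ with the infinitely divisible law encoded by $u\mapsto\int_t^T\Psi_t(r,u)dr$, by showing that the two give identical put prices at every positive strike.

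First I would observe that $P_t(T,K):=C_t(T,K)+K-S_t$ is a local martingale (as both $S$ and $C(T,K)$ are) and is bounded in $[0,K]$ according to Remark \ref{Bemerkung: Vergleich zu PII}, hence a true martingale on $[0,T]$ with terminal value $(K-S_T)^+$; consequently
\begin{equation*}
P_t(T,K) = E((K-S_T)^+\,|\,\F_t),\quad K>0,\ 0\le t\le T.
\end{equation*}
This is the essential leverage extracted from risk-neutrality: although $S$ may be only a strict local martingale, puts are honest conditional expectations.

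Second, by condition (ii) of Definition~\ref{Def: osm}, for each fixed $\omega,t,T$ the map $u\mapsto\int_t^T\Psi_t(r,u)dr$ lies in $\Pi$, hence is the characteristic exponent of an infinitely divisible law $\widetilde\mu$ on $\mathbb R$ with $\int e^z\widetilde\mu(dz)=1$. Let $Y\sim\widetilde\mu$ on an auxiliary probability space $(\widetilde\Omega,\widetilde\F,\widetilde P)$. Applying Proposition \ref{Proposition: Fourier transformierte der Call-Preise} to $Y$---the same Fourier computation already sketched in Remark \ref{Bemerkung: Vergleich zu PII}---and comparing with the Fourier inversion formula \eqref{Gleichung: O als Fourier Transformierte} yields the ``virtual'' representation
\begin{equation*}
\scr O_t(T,y)=\begin{cases}\widetilde E[(e^{Y-y}-1)^+] & y\ge 0,\\ \widetilde E[(1-e^{Y-y})^+] & y<0.\end{cases}
\end{equation*}

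The core step is to match this with a real-world representation of $\scr O_t(T,y)$ via puts, which is where the first observation pays off. For $y<0$, \eqref{Gleichung: Optionen sind} reads $K\scr O_t(T,y)=P_t(T,K)$ with $K:=S_te^y$, and Step 1 gives $\widetilde E[(1-e^{Y-y})^+]=E((1-e^{X_T-X_t-y})^+\,|\,\F_t)$. For $y\ge 0$, put--call parity yields $K\scr O_t(T,y)=C_t(T,K)=P_t(T,K)+S_t-K$; using $\widetilde E(e^Y)=1$ to rewrite $\widetilde E[(e^{Y-y}-1)^+]=e^{-y}-1+\widetilde E[(1-e^{Y-y})^+]$ on the virtual side, the same identity persists for $y\ge 0$. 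Hence put prices agree at every positive strike, which determines the conditional law of $S_T/S_t$ uniquely (e.g.\ via $\partial_K P_t(T,K)=P(S_T\le K\,|\,\F_t)$ in the distributional sense), so the laws of $e^{X_T-X_t}$ given $\F_t$ and of $e^Y$ under $\widetilde P$ coincide, and equating characteristic functions gives
\begin{equation*}
E(e^{iu(X_T-X_t)}\,|\,\F_t)=\widetilde E(e^{iuY})=\exp\!\left(\int_t^T\Psi_t(r,u)dr\right),
\end{equation*}
which is the conditional expectation condition. The main obstacle is the put--call parity bookkeeping of the third paragraph: the ``virtual'' martingale property $\widetilde E(e^Y)=1$ must cancel precisely against the $e^{-y}-1$ arising from the in-/out-of-the-money split, and this cancellation works exactly because $\Psi$ takes values in $\Pi$.
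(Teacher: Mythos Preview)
Your argument is correct and takes a genuinely different route from the paper. The paper works entirely at the Fourier level: it computes $\F\{x\mapsto O_t(T,x)\}(u)$, invokes the technical Proposition~\ref{Proposition: Fourier Transformation von Martingalen} to deduce that this Fourier transform is a local martingale in $t$ for each $u$, and hence that $\Phi_t(u)=\exp(iuX_t+\int_t^T\Psi_t(r,u)\,dr)$ is a bounded local martingale terminating in $e^{iuX_T}$. You instead exploit the boundedness of puts to get honest conditional expectations directly, match them strike by strike with the ``virtual'' put prices supplied by Remark~\ref{Bemerkung: Vergleich zu PII}, and recover the conditional law via a Breeden--Litzenberger argument. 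This is conceptually more transparent and sidesteps Proposition~\ref{Proposition: Fourier Transformation von Martingalen} entirely; the paper's route, by contrast, stays inside the Fourier framework that pervades the rest of the analysis and never needs to identify a law. One small point you should make explicit: the equality $\widetilde E[(1-e^{Y-y})^+]=E[(1-e^{X_T-X_t-y})^+\,|\,\F_t]$ is obtained for each fixed $y$ up to an $\F_t$-null set, and you need it for all $y$ simultaneously before differentiating in $K$; this is routine---restrict to rational $y$ and extend by the $1$-Lipschitz property of both sides in $K$---but it is the one place where your argument requires a uniform-in-$K$ statement while the paper only ever fixes a single $u$.
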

\begin{proof} Let $T\in \mathbb R_+$. 
We define
$$O_t(T,x):=\bigg\{\begin{array}{cc}e^{-x}C_t(T,e^x) 
& \mathrm{if}\ x\geq0,\\e^{-x}P_t(T,e^x) & \mathrm{if}\ x<0,\end{array}$$
where $P_t(T,K):=C_t(T,K)+K-S_t$ for any $K\in\mathbb R_+$, $t\in[0,T],x\in\mathbb R$. Then we have
$$O_t(T,x)=\bigg\{\begin{array}{cc}(e^{X_t-x}-1)^++\scr O_t(T,x-X_t) 
& \mathrm{if}\ x\geq0,\\(1-e^{X_t-x})^++\scr O_t(T,x-X_t) & \mathrm{if}\ x<0.\end{array}$$
We calculate the Fourier transform of $O_t(T,x)$ in two steps by considering the summands separately. 
The improper Fourier transform of the second summand $\scr O_t(T,x-X_t)$ exists and satisfies
\begin{eqnarray*}
 \F\{x\mapsto\scr O_t(T,x-X_t)\}(u)&=&\F\{x\mapsto\scr O_t(T,x)\}(u)e^{iuX_t}\\ 
&=& \frac{1-\exp\left(\int_t^T\Psi_t(r,u)dr\right)}{u^2+iu}e^{iuX_t}
\end{eqnarray*}
for any $u\in\mathbb R\setminus\{0\}$ by Remark \ref{Bemerkung: Vergleich zu PII}, 
Proposition \ref{Proposition: Fourier transformierte der Call-Preise} and the translation property for the 
Fourier transform, which holds for the improper Fourier transform as well. 
The Fourier transform of the first summand $A_t(T,x):=O_t(T,x)-\scr O_t(T,x-X_t)$ exists and equals
$$\F\{x\mapsto A_t(T,x)\}(u)=\frac{1}{iu}-\frac{e^{X_t}}{iu-1}-\frac{e^{iuX_t}}{u^2+iu}$$
for any $u\in\mathbb R\setminus\{0\}$.
Therefore the improper Fourier transform of $x\mapsto O_t(T,x)$ exists and is given by
\begin{equation}
 \F\{x\mapsto O_t(T,x)\}(u)=\frac{1}{iu}-\frac{e^{X_t}}{iu-1}-
\frac{\exp(iuX_t+\int_t^T\Psi_t(r,u)dr)}{u^2+iu},\label{Gleichung: Optionen Transformiert}
\end{equation}
for any $u\in\mathbb R\setminus\{0\}$. By Lemmas \ref{Lemma: Gammazerlegung} and 
\ref{Lemma: Gammas querdrift} we have that the right-hand side of 
\eqref{Gleichung: Optionen Transformiert} is a semimartingale, in particular it has c\`adl\`ag paths. 
Remark \ref{Bemerkung: Vergleich zu PII} yields that $0\leq P_t(T,K)\leq K$. 
Hence $\left(P_t(T,K)\right)_{t\in[0,T]}$ is a martingale because it is a bounded local martingale. 
Let $(\tau_n)_{n\in\mathbb N}$ denote a common localising sequence for $\left(C_t(T,1)\right)_{t\in[0,T]}$ 
and $S$, i.e.\ $S^{\tau_n}$, $C^{\tau_n}(T,1)$ are uniformly integrable martingales 
for any $n\in\mathbb N$. Since $C^{\tau_n}_t(T,K)\leq C^{\tau_n}_t(T,1)$ for $K\in[1,\infty)$, 
we have that $(\tau_n)_{n\in\mathbb N}$ is a common localising sequence for all European calls 
with maturity $T$ and strike $K\geq 1$.  The definition of $O_t(T,x)$ yields that it is a local martingale 
for any $x\in\mathbb R$ and $(\tau_n)_{n\in\mathbb N}$ is a common localising sequence for 
$(O_t(T,x))_{t\in[0,T]},x\in\mathbb R$. 

Fix $\omega\in\Omega$. Since $u\mapsto\int_t^T\Psi_t(r,u)(\omega)dr$ is in $\Pi$ for any $t\in[0,T]$, 
there is a random variable $Y$ on some space $(\widetilde\Omega,\widetilde{\scr F},\widetilde P)$ 
with characteristic function
$$\widetilde E(e^{iuY})=\exp\left(\int_t^T\Psi_t(r,u)(\omega)dr\right).$$
Then $\widetilde E (e^Y)=1$ and
$$O_t(T,x)(\omega)=\bigg\{\begin{array}{cc}\widetilde E( (S_t(\omega)e^{Y-x}-1)^+) 
& \mathrm{if}\ x\geq0,\\\widetilde E(1-S_t(\omega)e^{Y-x})^+) & \mathrm{if}\ x<0,\end{array}$$
cf. Remark \ref{Bemerkung: Vergleich zu PII}. By Corollary \ref{Korollar: Abschaetzung der Fourier transformierten} 
we have $\vert\int_{-C}^\infty e^{iux}O_t(T,x)dx\vert\leq S_t(\omega)+\frac{1+2\vert u\vert}{u^2}$. 
Proposition \ref{Proposition: Fourier Transformation von Martingalen} yields that 
$$\left(\F\{x\mapsto O_t(T,x)\}(u)\right)_{t\in[0,T]}$$ 
and hence $\left(\Phi_t(u)\right)_{t\in[0,T]}$ given by 
$$\Phi:\Omega\times[0,T]\times\mathbb R\rightarrow\mathbb C,\quad
(\omega,t,u)\mapsto\exp\left(iuX_t(\omega)+\int_t^T\Psi_t(r,u)(\omega)dr\right)$$
are local martingales for any $u\in\mathbb R\setminus\{0\}$. Since 
$u\mapsto \int_t^T\Psi_t(r,u)(\omega)dr$ is in $\Pi$ for any $t\in[0,T]$, $\omega\in\Omega$, 
its real part is bounded by $0$ from above, cf.\ Lemma \ref{Lemma: Negativitaet von LKM}. 
Hence $\vert\Phi_t(u)\vert\leq 1$ and thus $(\omega,t)\mapsto\Phi_t(u)(\omega)$ is a 
true martingale for any $u\in\mathbb R\setminus\{0\}$. By $\Phi_t(0)=1$ it is a martingale for $u=0$ 
as well. Since $\Phi_T(u)=\exp(iuX_T)$, the two martingales $\left(\Phi_t(u)\right)_{t\in[0,T]}$ and 
$\left(E(\exp(iuX_T)\vert\F_t)\right)_{t\in[0,T]}$ coincide for any $u\in\mathbb R$. Thus we have
$$\exp\bigg(\int_t^T\Psi_t(r,u)dr\bigg)=\exp(-iuX_t)\Phi_t(u)=E(e^{iu(X_T-X_t)}\vert\F_t)$$
for any $u\in\mathbb R$, $t\in[0,T]$.
\end{proof}

\begin{lemma}[Drift condition in terms of $\delta$ and $\sigma$]\label{Lemma: Driftbedingung in delta und sigma} 
If $\osm$ satisfies the conditional expectation condition, we have the drift condition
$$\delta_t(T,u)=\Psi_{t-}(t,u)-\psi_t^X(u)-\psi_t^{(X,M)}(u,-i\sigma_t(T,u))$$
outside some $dP\otimes dt$-null set for $T\in\mathbb R_+$, $u\in\mathbb R$. 
In particular, $(u,-i\sigma(T,u))\in\scr U^{(X,M)}$.
\end{lemma}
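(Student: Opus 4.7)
The plan is to compare the bounded martingale $\Phi_t(u):=\exp\bigl(iuX_t+\int_t^T\Psi_t(r,u)\,dr\bigr)$, which by the conditional expectation condition equals $E(e^{iuX_T}\mid\F_t)$, against the canonical exponential local martingale attached to the joint local exponent $\psi^{(X,M)}$, and then to read off the drift condition from the uniqueness of the exponential compensator. The conditional expectation condition immediately identifies $\Phi_\cdot(u)$ with a true bounded martingale on $[0,T]$: the representation as a conditional expectation is the hypothesis itself, and boundedness $|\Phi_t(u)|\le 1$ follows because $u\mapsto\int_t^T\Psi_t(r,u)(\omega)\,dr$ lies in $\Pi$ and hence has non-positive real part (cf.\ Lemma \ref{Lemma: Negativitaet von LKM}).

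Next I would combine Lemmas \ref{Lemma: Gammazerlegung} and \ref{Lemma: Gammas querdrift} to rewrite the exponent of $\Phi_t(u)$ as
\begin{align*}
iuX_t+\int_t^T\Psi_t(r,u)\,dr={}&iuX_0+\int_0^T\Psi_0(r,u)\,dr+iu(X_t-X_0)\\
&{}+\int_0^t\sigma_s(T,u)\,dM_s+\int_0^t\bigl(\delta_s(T,u)-\Psi_{s-}(s,u)+\psi_s^X(u)\bigr)\,ds.
\end{align*}
The two stochastic-integral terms together form the integral of the predictable complex integrand $i(u,-i\sigma_\cdot(T,u))$ against $(X,M)$ that is required to invoke $\psi^{(X,M)}$ at the complex argument $(u,-i\sigma_\cdot(T,u))$.

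Finally I would appeal to the characterisation of the joint local exponent provided by Section \ref{s:localchar}: since $\Phi(u)$ is a bounded non-vanishing complex local martingale of the form $\exp(\text{const}+[\text{joint }i(u,-i\sigma_\cdot(T,u))\text{-integral against }(X,M)]+\text{absolutely continuous predictable drift})$, admissibility $(u,-i\sigma_\cdot(T,u))\in\scr U^{(X,M)}$ must hold and the absolutely continuous drift is forced to coincide with the canonical exponential compensator $-\int_0^\cdot\psi_s^{(X,M)}(u,-i\sigma_s(T,u))\,ds$. Equating integrands $dP\otimes dt$-a.e.\ then delivers both the asserted formula for $\delta_t(T,u)$ and the admissibility statement simultaneously.

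The main obstacle is this last step: one must simultaneously derive admissibility of the complex argument $-i\sigma_\cdot(T,u)$ and extract uniqueness of the compensator of a complex exponential local martingale. This is standard within Kallsen's local-characteristic framework and should reduce directly to the machinery of Section \ref{s:localchar}, but it requires careful bookkeeping because the integrand is complex-valued and the integrability of $\sigma_\cdot(T,u)$ against $M$ is inherited only indirectly from the square-integrability hypotheses on $\beta$ in Definition \ref{Def: osm}.
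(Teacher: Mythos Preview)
Your proposal is correct and follows essentially the same route as the paper. The paper also starts from the martingale $\exp(Z_t)$ with $Z_t=iuX_t+\int_t^T\Psi_t(r,u)\,dr$, invokes Lemma \ref{Lemma: exp(X) Martingal gdw psi(-i)=0} to get $\psi^Z_t(-i)=0$, and then unwinds $Z$ via Lemmas \ref{Lemma: Gammazerlegung} and \ref{Lemma: Gammas querdrift} together with the local-exponent calculus (Lemmas \ref{Lemma: Levy exponeten Formel}--\ref{Lemma: Driftregel fuer Levy Exponenten}) to arrive at $0=\psi_t^{(X,M)}(u,-i\sigma_t(T,u))+\delta_t(T,u)-\Psi_{t-}(t,u)+\psi_t^X(u)$; your explicit decomposition of the exponent followed by the definition/uniqueness of the local exponent (Definition \ref{Definition: lokaler Exponent} and Lemma \ref{Lemma: Eindeutigkeit des Exponenten}) is precisely the same computation, just with the algebra carried out up front rather than threaded through the calculus lemmas.
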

\begin{proof}
 For $u\in\mathbb R$ and $T\in\mathbb R_+$ define the process $Z_t:=iuX_t+\int_t^T\Psi_t(r,u)dr$. 
The conditional expectation condition yields that $\exp(Z_t)=E(e^{iuX_T}\vert\F_t)$ is a martingale. 
Hence $-i\in\scr U^Z$ and $\psi^Z_t(-i)=0$ by Lemma \ref{Lemma: exp(X) Martingal gdw psi(-i)=0}. 
With $Y_t:=\Gamma_t(t,u)$ we obtain
\begin{eqnarray*}
0 & = & \psi^Z_t(-i)\\
 & = & \psi^{iuX+\Gamma(T,u)-Y}_t(-i) \\
 & = & \psi^{iuX+\Gamma(T,u)}_t(-i)-\big(\Psi_{t-}(t,u)-\psi^X_t(u)\big) \\
 & = & \psi^{(iuX,\Gamma(T,u))}_t(-i,-i)-\Psi_{t-}(t,u)+\psi^X_t(u) \\
 & = & \psi^{(X,M)}_t(u,-i\sigma_t(T,u))+\delta_t(T,u)-\Psi_{t-}(t,u)+\psi^X_t(u),
\end{eqnarray*}
where the second equation follows from Lemma \ref{Lemma: Gammazerlegung}, the third from 
Lemmas \ref{Lemma: Gammas querdrift} and \ref{Lemma: Driftregel fuer Levy Exponenten}, 
the fourth from Lemma \ref{Lemma: Levy exponenten Formel} and the last from 
Lemmas \ref{Lemma: Levy exponeten Formel} and  \ref{Lemma: Driftregel fuer Levy Exponenten}.
\end{proof}

\begin{corollary}[Consistency condition]\label{Korollar: Abstrakte Konsistenzbedingung} 
 If $\osm$ satisfies the conditional expectation condition, then it satisfies the consistency condition.
\end{corollary}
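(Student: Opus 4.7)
My plan is to evaluate the drift identity from Lemma~\ref{Lemma: Driftbedingung in delta und sigma} at $T=t$. Directly from the definitions of $\delta$ and $\sigma$ one has $\delta_t(t,u)=-\psi^X_t(u)$ and $\sigma_t(t,u)=0$. Because setting the $M$-argument of the joint local exponent to zero recovers the marginal local exponent, $\psi^{(X,M)}_t(u,0)=\psi^X_t(u)$. Substituting into the identity
$$\delta_t(T,u)=\Psi_{t-}(t,u)-\psi^X_t(u)-\psi^{(X,M)}_t(u,-i\sigma_t(T,u))$$
at $T=t$ yields $-\psi^X_t(u)=\Psi_{t-}(t,u)-2\psi^X_t(u)$, i.e.\ $\Psi_{t-}(t,u)=\psi^X_t(u)$, which is exactly the consistency condition.

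The only care required is that Lemma~\ref{Lemma: Driftbedingung in delta und sigma} supplies the identity only for each fixed $T\in\mathbb R_+$ and $u\in\mathbb R$ outside a $dP\otimes dt$-null set that may depend on $T,u$, so $T=t$ cannot be substituted literally. I would fix $u$, let $T$ range over a countable dense subset $D\subseteq\mathbb R_+$, and take the union of the corresponding null sets to obtain one null set outside which the identity holds simultaneously for all $T\in D$. Both sides are continuous in $T$: the left side because $r\mapsto\alpha_t(r,u)$ is integrable on bounded intervals by Definition~\ref{Def: osm}, and the right side because $\sigma_t(\cdot,u)$ is an ordinary integral in $T$ while $\psi^{(X,M)}_t(u,\cdot)$ is continuous at the interior point $0$, which follows from the general properties of local exponents collected in Section~\ref{su:lc}. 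Letting $T\downarrow t$ through $D$ then delivers the identity at $T=t$.

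Finally, to obtain a single null set valid for all $u\in\mathbb R$ as required by the definition of the consistency condition, I would union the previously produced null sets over a countable dense set of $u$'s and extend by continuity of $u\mapsto\psi^X_t(u)$ and $u\mapsto\Psi_{t-}(t,u)$, both of which are continuous because they arise from characteristic exponents of infinitely divisible laws (via the assumption $\int_t^T\Psi_t(r,u)\,dr\in\Pi$ and standard properties of $\psi^X_t$). The main obstacle is therefore not the algebraic content, which is essentially a one-line substitution, but the null-set bookkeeping combined with the verification of continuity of the composed local exponent at the boundary value $0$.
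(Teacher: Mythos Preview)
Your approach is essentially identical to the paper's: evaluate the identity of Lemma~\ref{Lemma: Driftbedingung in delta und sigma} at $T=t$, use $\delta_t(t,u)=-\psi^X_t(u)$, $\sigma_t(t,u)=0$, and $\psi^{(X,M)}_t(u,0)=\psi^X_t(u)$ to obtain $\Psi_{t-}(t,u)=\psi^X_t(u)$. The paper's proof is the one-line substitution you describe; you are in fact more careful than the paper about the null-set dependence on $T$ and $u$, which the paper simply glosses over.
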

\begin{proof}
 Lemma \ref{Lemma: Driftbedingung in delta und sigma} and the definition of $\delta$ yield
 $$\Psi_{t-}(t,u)=\delta_{t}(t,u)+\psi_t^X(u)+\psi_t^{(X,M)}(u,0)=\psi_t^X(u).$$
\end{proof}

\begin{corollary}[Drift condition]\label{Korollar: Driftbedingung} If $\osm$ satisfies the 
conditional expectation condition, then it satisfies the drift condition.
\end{corollary}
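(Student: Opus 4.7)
The plan is to derive the drift condition by combining the two preceding results: Lemma \ref{Lemma: Driftbedingung in delta und sigma}, which expresses $\delta_t(T,u)$ in terms of $\Psi_{t-}(t,u)$, $\psi^X_t(u)$ and $\psi^{(X,M)}_t(u,-i\sigma_t(T,u))$, and Corollary \ref{Korollar: Abstrakte Konsistenzbedingung}, which already yields the consistency condition $\Psi_{t-}(t,u)=\psi^X_t(u)$ from the conditional expectation condition. Thus no new analytic work is required; the drift condition will fall out by direct algebraic substitution.

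First I would note that by Lemma \ref{Lemma: Driftbedingung in delta und sigma} we already have the membership $(u,-i\sigma(T,u))\in\scr U^{(X,M)}$ and the identity
\[
\delta_t(T,u)=\Psi_{t-}(t,u)-\psi^X_t(u)-\psi^{(X,M)}_t\bigl(u,-i\sigma_t(T,u)\bigr)
\]
outside some $dP\otimes dt$-null set, for every fixed $T\in\rr_+$ and $u\in\rr$. Then I would invoke Corollary \ref{Korollar: Abstrakte Konsistenzbedingung} to replace $\Psi_{t-}(t,u)$ by $\psi^X_t(u)$, yielding
\[
\delta_t(T,u)=-\psi^{(X,M)}_t\bigl(u,-i\sigma_t(T,u)\bigr).
\]
Inserting the definitions $\delta_t(T,u)=\int_t^T\alpha_t(r,u)\,dr-\psi^X_t(u)$ and $\sigma_t(T,u)=\int_t^T\beta_t(r,u)\,dr$ and rearranging produces exactly \eqref{Gleichung: Driftbedingung}.

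The only subtlety I anticipate is bookkeeping of the exceptional null sets, which must be handled uniformly in $T$ and $u$. Since the null sets from Lemma \ref{Lemma: Driftbedingung in delta und sigma} and Corollary \ref{Korollar: Abstrakte Konsistenzbedingung} are stated for each fixed $(T,u)$, their union for fixed $(T,u)$ is still a $dP\otimes dt$-null set, which is precisely the form required by the definition of the drift condition. No further measurability or integrability considerations intervene, because both sides of \eqref{Gleichung: Driftbedingung} are already well-defined on account of the standing assumptions on an option surface model. Hence the corollary reduces to a two-line substitution, and I expect no real obstacle.
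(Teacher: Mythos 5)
Your proposal is correct and follows exactly the route the paper itself takes: the paper's proof of Corollary~\ref{Korollar: Driftbedingung} is a one-line citation of Lemma~\ref{Lemma: Driftbedingung in delta und sigma} and Corollary~\ref{Korollar: Abstrakte Konsistenzbedingung}, which is precisely the substitution you spell out. Your additional remarks on the bookkeeping of the exceptional null sets and on unwinding the definitions of $\delta$ and $\sigma$ are accurate and merely make the paper's terse proof explicit.
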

\begin{proof}
 This follows from Lemma \ref{Lemma: Driftbedingung in delta und sigma} and 
Corollary \ref{Korollar: Abstrakte Konsistenzbedingung}.
\end{proof}

\begin{lemma}\label{Lemma: Gamma Darstellung}
 If the option surface model satisfies the consistency condition, then
$$\Gamma_t(T,u)=\int_t^T\Psi_t(r,u)dr$$ 
for any $T\in\mathbb R_+$, $t\in[0,T]$, $u\in\mathbb R$.
\end{lemma}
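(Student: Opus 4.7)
The plan is to combine the two structural identities we already have for $\Gamma$, namely Lemma~\ref{Lemma: Gammazerlegung} and Lemma~\ref{Lemma: Gammas querdrift}, with the consistency condition. The former gives the decomposition
$$\Gamma_t(T,u)=\Gamma_t(t,u)+\int_t^T\Psi_t(r,u)dr,$$
so it suffices to show that the consistency condition forces $\Gamma_t(t,u)=0$ (almost surely, for every $t$).

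To do this I invoke Lemma~\ref{Lemma: Gammas querdrift}, which represents $\Gamma_t(t,u)$ as an ordinary Lebesgue integral
$$\Gamma_t(t,u)=\int_0^t\bigl(\Psi_{s-}(s,u)-\psi^X_s(u)\bigr)ds.$$
The consistency condition states precisely that the integrand vanishes outside a $dP\otimes dt$-null set. By Fubini applied to this null set, for $P$-almost every $\omega$ the integrand is zero for Lebesgue-almost every $s$, so the Lebesgue integral over $[0,t]$ vanishes for every $t\in\mathbb R_+$ simultaneously. Plugging $\Gamma_t(t,u)=0$ back into the decomposition from Lemma~\ref{Lemma: Gammazerlegung} yields the claim.

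There is essentially no obstacle: the lemma is a bookkeeping corollary of the two preceding lemmas once the consistency condition is imposed. The only point requiring a line of care is the quantifier switch from ``$dP\otimes dt$-a.e.'' in the hypothesis to ``for all $t$, $P$-a.s.''\ in the conclusion, which is handled by the observation that integrating an a.e.\ zero function in $s$ against Lebesgue measure gives identically zero in $t$.
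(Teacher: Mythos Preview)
Your proof is correct and follows exactly the approach of the paper, which simply notes that the result is a direct consequence of Lemmas~\ref{Lemma: Gammazerlegung} and~\ref{Lemma: Gammas querdrift}. You have merely spelled out the details the paper leaves implicit, including the care with the $dP\otimes dt$-null set.
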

\begin{proof}
This is a direct consequence of Lemmas \ref{Lemma: Gammazerlegung} and \ref{Lemma: Gammas querdrift}.
\end{proof}

\begin{lemma}\label{Lemma: Psi ist Martingal}
  If the option surface model satisfies the drift condition, then
$$\Phi_t(T,u):=\exp(iuX_t+\Gamma_t(T,u))$$
defines a local martingale $(\Phi_t(T,u))_{t\in[0,T]}$ for any $u\in\mathbb R$, $T\in\mathbb R_+$.
\end{lemma}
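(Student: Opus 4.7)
The plan is to exhibit $\Phi(T,u)$ as the stochastic exponential (in the sense of being a local martingale) via the criterion of Lemma \ref{Lemma: exp(X) Martingal gdw psi(-i)=0}: it suffices to show that the semimartingale $Z_t := iuX_t + \Gamma_t(T,u)$ has local exponent vanishing at $-i$, i.e.\ $\psi^Z_t(-i)=0$ outside a $dP\otimes dt$-null set. This is essentially the reverse of the computation performed in Lemma \ref{Lemma: Driftbedingung in delta und sigma}: there the martingale property of $\exp(Z)$ was used to derive the drift condition, and now we run the same chain of equalities backwards to deduce the martingale property from it.

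Concretely, I would first record that
\[
Z_t = \int_0^T \Psi_0(r,u)dr + iuX_t + \int_0^t \sigma_s(T,u)\,dM_s + \int_0^t \delta_s(T,u)\,ds,
\]
where the leading term is $\F_0$-measurable and hence irrelevant for the local exponent. Applying Lemma \ref{Lemma: Driftregel fuer Levy Exponenten} to separate the absolutely continuous drift $\int_0^\cdot \delta_s(T,u)\,ds$ yields
\[
\psi^Z_t(-i) \;=\; \psi^{iuX + \sigma(T,u)\cdot M}_t(-i) + \delta_t(T,u),
\]
and then Lemma \ref{Lemma: Levy exponenten Formel}, which expresses the local exponent of a linear combination / stochastic integral against $(X,M)$ in terms of $\psi^{(X,M)}$, identifies the first summand as $\psi^{(X,M)}_t\!\bigl(u,-i\sigma_t(T,u)\bigr)$. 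Substituting the drift condition
\[
\delta_t(T,u) \;=\; \int_t^T \alpha_t(r,u)\,dr - \psi^X_t(u) \;=\; -\,\psi^{(X,M)}_t\!\bigl(u,-i\sigma_t(T,u)\bigr)
\]
produces $\psi^Z_t(-i)=0$, as required. The domain condition $-i\in\scr U^Z$ needed to invoke Lemma \ref{Lemma: exp(X) Martingal gdw psi(-i)=0} is guaranteed by the hypothesis $\bigl(u,-i\sigma_t(T,u)\bigr)\in\scr U^{(X,M)}$ built into the drift condition, together with the drift rule.

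The only delicate point I expect is bookkeeping with the local exponent machinery: one must check that the auxiliary processes $iuX$ and $\sigma(T,u)\cdot M$ admit local exponents jointly, so that the composition rule (Lemma \ref{Lemma: Levy exponenten Formel}) actually applies in the complex argument $-i$. This is handled precisely as in the proof of Lemma \ref{Lemma: Driftbedingung in delta und sigma}: the hypothesis $(u,-i\sigma_t(T,u))\in\scr U^{(X,M)}$ together with the integrability conditions on $\beta$ built into Definition \ref{Def: osm} and the standing assumption that $(X,M)$ admits local characteristics ensure the required integrability. With that check in place, the proof reduces to a one-line computation chaining the three lemmas and the drift condition.
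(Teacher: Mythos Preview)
Your proposal is correct and follows essentially the same approach as the paper: define $Z_t:=iuX_t+\Gamma_t(T,u)$, use Lemmas \ref{Lemma: Levy exponeten Formel}--\ref{Lemma: Driftregel fuer Levy Exponenten} together with the drift condition to show $\psi^Z_t(-i)=0$, and conclude via Lemma \ref{Lemma: exp(X) Martingal gdw psi(-i)=0}. The only cosmetic difference is that the paper starts from the drift condition and builds up to $\psi^Z_t(-i)$, whereas you start from $\psi^Z_t(-i)$ and reduce it; the chain of equalities is the same.
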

\begin{proof}
Fix $T, u$ and define $Z_t:=iuX_t+\Gamma_t(T,u)$. By the drift condition and 
Lemmas \ref{Lemma: Levy exponeten Formel} -- \ref{Lemma: Driftregel fuer Levy Exponenten} we have
\begin{eqnarray*}
 0&=&\psi_t^{(X,M)}(u,-i\sigma_t(T,u))+\delta_t(T,u)\\
 &=&\psi_t^{(X,\sigma(T,u)\mal\! M)}(u,-i)+\delta_t(T,u)\\
 &=&\psi_t^{(iuX,\Gamma(T,u))}(-i,-i)\\
 &=&\psi_t^{iuX+\Gamma(T,u)}(-i)\\
 &=&\psi_t^Z(-i).
\end{eqnarray*}
 Hence $\exp(Z)$ is a local martingale by Lemma \ref{Lemma: exp(X) Martingal gdw psi(-i)=0}.
\end{proof}

\begin{lemma}\label{Lemma: Bedingungen gdw conditional expectation}
 $\osm$ satisfies the drift and consistency conditions if and only if it satisfies the conditional expectation condition.
\end{lemma}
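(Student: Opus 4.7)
The plan is to prove the two directions separately, since most of the hard work has already been done in the preceding lemmas and corollaries.

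For the implication ``conditional expectation condition $\Rightarrow$ drift and consistency conditions,'' I would simply invoke Corollary \ref{Korollar: Abstrakte Konsistenzbedingung} (which gives consistency) together with Corollary \ref{Korollar: Driftbedingung} (which gives the drift condition). No further argument is needed here.

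For the converse direction, assume the option surface model satisfies both the drift and consistency conditions. I would fix $T\in\mathbb R_+$ and $u\in\mathbb R$ and consider the process $\Phi_t(T,u) := \exp(iuX_t + \Gamma_t(T,u))$ on $[0,T]$. First, the consistency condition allows me to apply Lemma \ref{Lemma: Gamma Darstellung}, which rewrites $\Gamma_t(T,u) = \int_t^T \Psi_t(r,u)\,dr$; in particular $\Gamma_T(T,u) = 0$, so $\Phi_T(T,u) = e^{iuX_T}$. Next, the drift condition together with Lemma \ref{Lemma: Psi ist Martingal} gives that $(\Phi_t(T,u))_{t\in[0,T]}$ is a local martingale.

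To upgrade this to a genuine martingale, I would reuse the boundedness argument from the proof of Lemma \ref{Lemma: Bedingte charackteristische Funktion}: since $u\mapsto\int_t^T \Psi_t(r,u)(\omega)\,dr$ belongs to $\Pi$ for every $\omega$ and every $t\in[0,T]$, Lemma \ref{Lemma: Negativitaet von LKM} shows its real part is non-positive, so $|\Phi_t(T,u)|\leq 1$. A bounded local martingale is a true martingale, so
\[
\exp\bigg(iuX_t+\int_t^T\Psi_t(r,u)\,dr\bigg) \;=\; \Phi_t(T,u) \;=\; E(\Phi_T(T,u)\mid\F_t) \;=\; E(e^{iuX_T}\mid\F_t),
\]
and dividing by $e^{iuX_t}$ (which is $\F_t$-measurable) yields the conditional expectation condition.

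The main obstacle here is really conceptual rather than technical: one has to recognise that consistency is precisely what makes $\Gamma_t(T,u)$ collapse to the ``remaining'' characteristic exponent $\int_t^T\Psi_t(r,u)\,dr$, while the drift condition is what turns $\Phi(T,u)$ into a local martingale. Once both are in place, the identification with the conditional characteristic function of $X_T$ is automatic, and the only quantitative input needed is the $\Pi$-bound on the real part to rule out explosion of the local martingale.
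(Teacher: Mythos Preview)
Your proof is correct and follows essentially the same route as the paper: both directions are handled by invoking the same preceding results (Corollaries \ref{Korollar: Abstrakte Konsistenzbedingung} and \ref{Korollar: Driftbedingung} for one direction; Lemmas \ref{Lemma: Gamma Darstellung}, \ref{Lemma: Psi ist Martingal}, and \ref{Lemma: Negativitaet von LKM} for the other), and the bounded-local-martingale-to-martingale argument is identical.
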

\begin{proof}
$\Leftarrow:$ This is a restatement of Corollaries \ref{Korollar: Driftbedingung} and 
\ref{Korollar: Abstrakte Konsistenzbedingung}.

$\Rightarrow:$ Fix $u\in\mathbb R$, $T\in\mathbb R_+$. Lemma \ref{Lemma: Negativitaet von LKM}
implies that the absolute value of
$$\Phi_t(T,u):=\exp\!\left(iuX_t+\int_t^T\Psi_t(r,u)dr\right)$$ 
is bounded by $1$. By Lemmas \ref{Lemma: Gamma Darstellung} and \ref{Lemma: Psi ist Martingal}, 
$\Phi(T,u)$ is a local martingale and hence a martingale. This yields 
$$\Phi_t(T,u)=E(\Phi_T(T,u)\vert\F_t)=E(e^{iuX_T}\vert\F_t).$$
\end{proof}

\begin{lemma}\label{Lemma: conditional expectation implies martingale property}
 If the option surface model $\osm$ satisfies the conditional expectation condition, then $S=e^X$ is a martingale.
\end{lemma}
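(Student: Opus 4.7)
The plan is to exploit the fact that every characteristic exponent $\psi\in\Pi$ satisfies $\psi(-i)=0$. Substituting $u=-i$ into the defining formula
$$\psi(u)=-\frac{u^2+iu}{2}c+\int\big(e^{iux}-1-iu(e^x-1)\big)K(dx)$$
annihilates both the diffusion coefficient $-(u^2+iu)/2$ and the jump integrand. This is precisely the martingale normalisation $E(e^{L_1})=1$ encoded in the definition of $\Pi$.

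First I would fix $T\in\mathbb R_+$, $t\in[0,T]$ and work pathwise in $\omega\in\Omega$. Since an option surface model requires $u\mapsto\int_t^T\Psi_t(r,u)(\omega)\,dr$ to lie in $\Pi$, there is (as already used in Remark \ref{Bemerkung: Vergleich zu PII}) an infinitely divisible probability measure $\mu_{t,T,\omega}$ on $\mathbb R$ with characteristic function $u\mapsto\exp(\int_t^T\Psi_t(r,u)(\omega)\,dr)$, and by the first paragraph it satisfies $\int e^y\,\mu_{t,T,\omega}(dy)=1$. The conditional expectation condition then reads
$$E(e^{iu(X_T-X_t)}\mid\F_t)(\omega)=\int e^{iuy}\,\mu_{t,T,\omega}(dy)$$
for all real $u$, and uniqueness of the characteristic function lets me identify $\mu_{t,T,\omega}$ as (a version of) the regular conditional distribution of $X_T-X_t$ given $\F_t$.

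Next I would integrate $y\mapsto e^y$ against this regular conditional distribution. Combined with $\int e^y\,\mu_{t,T,\omega}(dy)=1$ this yields
$$E(e^{X_T-X_t}\mid\F_t)=1,$$
hence $E(S_T\mid\F_t)=S_t$ and in particular $E(S_T)<\infty$. Since this holds for every $0\le t\le T$, the process $S$ is a martingale, which is the desired conclusion.

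The delicate point is passing from the identity of characteristic functions at real $u$ to the evaluation of the conditional Laplace transform at the point $1$. Rather than invoking analytic continuation, I would rely on the regular conditional distribution above: the $\Pi$-structure simultaneously provides infinite divisibility, the explicit form of the characteristic exponent, and integrability of $y\mapsto e^y$ against $\mu_{t,T,\omega}$. A secondary technicality is the measurability of $\omega\mapsto\mu_{t,T,\omega}$, which follows from the optional measurability built into Definition \ref{Def: osm} together with Fubini.
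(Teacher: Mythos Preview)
Your proposal is correct and follows essentially the same route as the paper: both fix $\omega$, use the $\Pi$-property of $u\mapsto\int_t^T\Psi_t(r,u)(\omega)\,dr$ to produce an infinitely divisible law with unit exponential moment, identify it with the conditional law of $X_T-X_t$ via the conditional expectation condition, and conclude $E(e^{X_T-X_t}\mid\F_t)=1$. The paper compresses the passage from characteristic function to conditional Laplace transform into a reference to Remark~\ref{Bemerkung: Vergleich zu PII}, whereas you spell it out via regular conditional distributions; this is the same argument with slightly more justification on the step you flag as delicate.
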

\begin{proof}
 For $T\in\mathbb R_+$, $t\in[0,T]$ we have
$$E(e^{iu(X_T-X_t)}\vert\F_t)=\exp\left(\int_t^T\Psi_t(r,u)dr\right).$$
Since $u\mapsto\int_t^T\Psi_t(r,u)(\omega)dr$ is in $\Pi$, we have
$E(e^{X_T-X_t}\vert\F_t)=1$,
cf.\ Remark \ref{Bemerkung: Vergleich zu PII}.
\end{proof}

\begin{lemma}\label{Lemma: conditional expectation impliziert risiko neutral}
 If the option surface model $\osm$ satisfies the conditional expectation condition, it is strongly risk neutral.
\end{lemma}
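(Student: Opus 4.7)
The strategy is to leverage the conditional expectation condition to recover the familiar risk-neutral pricing formula $C_t(T,K) = E((S_T-K)^+|\F_t)$, which together with integrability of $S_T$ yields the true martingale property. The martingale property of $S=e^X$ is already granted by Lemma~\ref{Lemma: conditional expectation implies martingale property}, so the task reduces to analysing the call price processes $C(T,K)$.

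The key step is to identify $\scr O_t(T,x)$ with a conditional expectation. Fix $T\in\mathbb R_+$, $t\in[0,T]$, $K>0$ and an $\omega\in\Omega$. By Remark~\ref{Bemerkung: Vergleich zu PII}, on an auxiliary probability space $(\widetilde\Omega,\widetilde{\scr F},\widetilde P)$ there exists a random variable $Y$ whose characteristic function equals $\exp(\int_t^T\Psi_t(r,u)(\omega)dr)$ and which satisfies
\[
C_t(T,K)(\omega)=\widetilde E\!\left((S_t(\omega)e^Y-K)^+\right).
\]
Now the conditional expectation condition gives $E(e^{iu(X_T-X_t)}|\F_t)(\omega)=\exp(\int_t^T\Psi_t(r,u)(\omega)dr)$ for $dP$-almost every $\omega$, so the law of $Y$ coincides with the regular conditional distribution of $X_T-X_t$ given $\F_t$. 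Since $S_T=S_t e^{X_T-X_t}$ and $(S_T-K)^+\le S_T$ is integrable (the integrability being inherited from the martingale property of $S$), we can replace the auxiliary expectation by the conditional one to obtain
\[
C_t(T,K)=E\bigl((S_T-K)^+\,\big|\,\F_t\bigr).
\]

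Being conditional expectations of the integrable random variables $(S_T-K)^+$, the processes $(C_t(T,K))_{t\in[0,T]}$ are martingales for every $T$ and $K$, yielding the strong risk-neutrality. The one delicate point — and the main obstacle — is the transition from the auxiliary space identity to the conditional expectation, since it implicitly requires a regular conditional distribution for $X_T-X_t$ given $\F_t$ and a Fubini-type interchange of the $\omega$-wise construction with the measurable structure. I would handle this either by invoking existence of regular conditional distributions on a Polish state space (here $\mathbb R$), or more directly by verifying that $\omega\mapsto\widetilde E((S_t(\omega)e^Y-K)^+)$ depends measurably on $\omega$ through $(S_t(\omega),\Psi_t(\cdot,\cdot)(\omega))$ and coincides with $E((S_T-K)^+|\F_t)$ by comparing characteristic functions on both sides and using the uniqueness of the Laplace/Fourier transform of positive bounded functionals.
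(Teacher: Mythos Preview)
Your proposal is correct and follows the same overall strategy as the paper: both establish $C_t(T,K)=E((S_T-K)^+\mid\F_t)$ and deduce the martingale property from integrability of $S_T$ (the latter via Lemma~\ref{Lemma: conditional expectation implies martingale property}). The difference lies in how the key identification is carried out. The paper works from the conditional-expectation side: it defines $\widetilde C(K):=E((e^{X_T}-K)^+\mid\F_t)$ and the associated $\widetilde{\scr O}$, applies Proposition~\ref{Proposition: Fourier transformierte der Call-Preise} to compute its Fourier transform directly as $(1-E(e^{iuY}\mid\F_t))/(u^2+iu)$ with $Y=X_T-X_t$, and then uses the conditional expectation condition to match this with the defining inverse Fourier transform of $\scr O_t$. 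This sidesteps regular conditional distributions entirely, because Proposition~\ref{Proposition: Fourier transformierte der Call-Preise} already treats conditional expectations via a Fubini-type argument. Your route through regular conditional distributions is equally valid---existence on $\mathbb R$ is standard---and arguably more probabilistically transparent, but the paper's approach has the advantage of staying within the Fourier machinery already developed in the appendix, so no external measure-theoretic input is required and the ``delicate point'' you flag simply does not arise.
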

\begin{proof}
 Lemma \ref{Lemma: conditional expectation implies martingale property} implies that $e^X$ is a 
martingale and in particular that $e^{X_t}$ is integrable for any $t\in\mathbb R_+$. 
Let $T\in\mathbb R_+$, $t\in[0,T]$. We define
\begin{eqnarray*}
 \widetilde C(K)&:=&E((e^{X_T}-K)^+\vert\F_t),\\
 \widetilde{\scr O}(x)&:=&e^{-(x+X_t)}\widetilde{C}(e^{x+X_t})-(e^{-x}-1)^+,\\
 Y &:=& X_T-X_t
\end{eqnarray*}
 for any $K\in\mathbb R_+$. Obviously we have
$$\widetilde{\scr O}(x)=\bigg\{\begin{array}{cc}E((e^{Y-x}-1)^+\vert\F_t) & \mathrm{if}\ x\geq0,\\ 
E((1-e^{Y-x})^+\vert\F_t) & \mathrm{if}\ x<0\end{array}$$
and $E(e^Y\vert\F_t)=1$. Hence Corollary \ref{Proposition: Fourier transformierte der Call-Preise}, 
the conditional expectation condition, and the definition of $\scr O$ yield
\begin{eqnarray*}
\F\{x\mapsto\widetilde{\scr O}(x)\}(u)&=&\frac{1-E(e^{iuY}\vert\F_t)}{u^2+iu}\\
&=&\frac{1-\exp\left(\int_t^T\Psi_t(r,u)dr\right)}{u^2+iu}
\end{eqnarray*}
as well as
\begin{eqnarray*}
\widetilde{\scr O}(x)&=&\scr F^{-1}\left\{u\mapsto\frac{1-\exp\left(\int_t^T\Psi_t(r,u)dr\right)}{u^2+iu}\right\}(x)\\
&=&\scr O_t(T,x).
\end{eqnarray*}
Thus we have
$$\widetilde C(K)=C_t(T,K)$$
for any $K\in\mathbb R_+$. Consequently, the option surface model $\osm$ is strongly risk neutral.
\end{proof}

\begin{proofofmaintheorem}
$(1)\Rightarrow(2)$ is obvious.\\
$(2)\Rightarrow(3)$ has been shown in Lemma \ref{Lemma: Bedingte charackteristische Funktion}.\\
$(3)\Leftrightarrow(4)$ is the conclusion of Lemma \ref{Lemma: Bedingungen gdw conditional expectation}.\\
$(3)\Rightarrow(1)$ has been shown in Lemma \ref{Lemma: conditional expectation impliziert risiko neutral}.
\end{proofofmaintheorem}

\subsection{Musiela parametrisation}
In practice one may prefer to parametrise the codebook in terms of
time-to-maturity $x:=T-t$  instead of maturity $T$,
which is referred to as {\em Musiela parametrisation} in interest rate theory.
However, in order to express the corresponding codebook dynamics, 
we need some additional regularity.
\begin{proposition}\label{p:musiela}
 Let $\osm$ be an option surface model such that
 \begin{enumerate}
  \item $M$ is of the form $M_t=N_t+\int_0^tv_sds$ with some locally square-integrable martingale
  $N$ and an integrable predictable process $v$,
 \item $T\mapsto \alpha_t(T,u),\beta_t(T,u),\Psi_0(T,u)$ are continuously differentiable for any $t\in\rp$, $u\in\rr$,
 \item $\int_0^T|\partial_r\beta_t(r,u)|^2dr<\infty$ for any $t,T\in\mathbb R_+$, $u\in\mathbb R$,
 \item $\int_0^t\sup_{r\in[0,T]}\vert \partial_r\alpha_s(r,u)+ \partial_r\beta_s(r,u)v_s\vert ds <\infty$ 
 for any $t,T\in\rp$, $u\in\rr$,
  \item $((\int_0^T(\vert \beta^j_t(r,u)\vert^2 
  +\vert \partial_r\beta^j_t(r,u)\vert^2)dr)^{1/2})_{t\in\rp}\in L^2_\mathrm{loc}(N^j)$ 
  for any $T\in\rp$, $u\in\rr$, $j\in\{1,\dots,d\}$,
  where $L^2_\mathrm{loc}(N^j)$ is defined as in \cite[I.4.39]{js.87}. 
 \end{enumerate}
Define
  $\check\alpha_t(x,u) := \alpha_t(t+x,u)$,
  $\check\beta_t(x,u) := \beta_t(t+x,u)$, 
  $\check\Psi_t(x,u) := \Psi_t(t+x,u)$
for any $t\in\rp$, $x\in\rp$, $u\in\rr$. For any fixed $u\in\rr$, the mapping 
$x\mapsto\check\Psi_t(x,u)$ is differentiable for $dt$-almost all $t\in\rp$ and we have
 \begin{eqnarray*}
  \check\Psi_t(x,u) = \check\Psi_0(x,u) + 
  \int_0^t\left(\check\alpha_s(x,u) + \partial_x\check\Psi_s(x,u)\right)ds + \int_0^t\check\beta_s(x,u) M_s
\end{eqnarray*}
for any $t\in\rp$, $x\in\rr$, $u\in\rr$.
\end{proposition}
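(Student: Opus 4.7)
The plan is to expand $\check\Psi_t(x,u)-\check\Psi_0(x,u)$ directly from the definition \eqref{Gleichung:neu2} of $\Psi_t(T,u)$, and to reorganise the resulting double integrals using the fundamental theorem of calculus in $T$ together with Fubini's theorem, both classical and in its stochastic version \cite[Theorem IV.65]{protter.04}. The ``extra'' drift term $\int_0^t\partial_x\check\Psi_s(x,u)\,ds$ then emerges from swapping orders of integration.

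Concretely, for $x\ge 0$ we have $t\le t+x$, so \eqref{Gleichung:neu2} gives
$$\check\Psi_t(x,u)=\Psi_0(t+x,u)+\int_0^t\alpha_s(t+x,u)\,ds+\int_0^t\beta_s(t+x,u)\,dM_s.$$
Assumption~(2) lets me write, for $s\le t$,
$$\Psi_0(t+x,u)-\Psi_0(x,u)=\int_0^t\partial_T\Psi_0(r+x,u)\,dr,$$
with analogous identities expressing $\alpha_s(t+x,u)-\alpha_s(s+x,u)$ and $\beta_s(t+x,u)-\beta_s(s+x,u)$ as integrals $\int_s^t\partial_T(\cdot)(r+x,u)\,dr$. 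Substituting these into the display above and applying Fubini (classical for the $ds$-integrals, stochastic for the $dM_s$-integrals) to swap $\int_0^t\cdots ds$ or $\int_0^t\cdots dM_s$ with $\int_s^t\partial_T(\cdots)\,dr$ produces two groups of summands. The first is $\int_0^t\alpha_s(s+x,u)\,ds+\int_0^t\beta_s(s+x,u)\,dM_s=\int_0^t\check\alpha_s(x,u)\,ds+\int_0^t\check\beta_s(x,u)\,dM_s$. The second regroups as
$$\int_0^t\!\Bigl(\partial_T\Psi_0(r+x,u)+\int_0^r\!\partial_T\alpha_s(r+x,u)\,ds+\int_0^r\!\partial_T\beta_s(r+x,u)\,dM_s\Bigr)dr.$$

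The bracketed integrand here is, by the same kind of differentiation-under-the-integral argument as in step (2), exactly $\partial_T\Psi_r(T,u)$ evaluated at $T=r+x$, i.e.\ $\partial_x\check\Psi_r(x,u)$. The regularity assumptions (3)--(5) are tailored precisely so that differentiation in $T$ may be pulled under the stochastic integral and so that the resulting process is $dr$-integrable on $[0,t]$. This simultaneously establishes the differentiability of $x\mapsto\check\Psi_t(x,u)$ for $dt$-almost all $t$ and the representation claimed in the proposition.

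The main technical obstacle will be verifying the hypotheses of the stochastic Fubini theorem uniformly in the outer parameter $r$. Assumption~(1), which splits $M$ into a locally square-integrable martingale $N$ and an absolutely continuous drift $\int v\,ds$, lets me treat the two parts separately. Assumption~(5) supplies the $L^2_{\mathrm{loc}}(N^j)$ bounds needed both for the existence of the inner stochastic integrals $\int_0^r\partial_T\beta_s(r+x,u)\,dN^j_s$ and for the subsequent swap with $\int dr$; assumption~(4), in its $\sup_{r\in[0,T]}$-form, supplies the corresponding uniform bounds for the drift contribution $\int\partial_T\beta_s\,v_s\,ds$ and for the $\alpha$-part. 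Once these swaps are justified, reassembling the terms into the stated SDE for $\check\Psi_t(x,u)$ is routine.
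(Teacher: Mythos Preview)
Your approach is essentially the same as the paper's: expand $\check\Psi_t(x,u)$ via \eqref{Gleichung:neu2}, express the shifts in $T$ as integrals of $\partial_T$-derivatives, swap the order of integration via classical and stochastic Fubini, and recognise the inner bracket as $\partial_x\check\Psi_r(x,u)$. The paper first absorbs the drift $\int v\,ds$ into $\alpha$ to reduce w.l.o.g.\ to the case where $M$ is a locally square-integrable martingale, and then justifies the crucial identity $\int_0^r\partial_x\beta_s(\cdot+x,u)\,dM_s=\partial_x\int_0^r\beta_s(\cdot+x,u)\,dM_s$ by viewing $r\mapsto\beta_s(r,u)$ as an element of the Sobolev-type Hilbert space $H_1\subset L^2([0,t+x])$ and invoking \cite[Theorem 8.7(v)]{peszat.zabczyk.07} on the commutation of continuous linear operators with Hilbert-space-valued stochastic integrals; this is the concrete tool behind the step you describe as ``differentiation under the stochastic integral.''
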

\begin{proof}
Since
\begin{eqnarray*}
\Psi_t(T,u)&=&\Psi_0(T,u)+\int_0^{t\wedge T}\alpha_s(T,u)ds+\int_0^{t\wedge T}\beta_s(T,u)dM_s\\
&=&\Psi_0(T,u)+\int_0^{t\wedge T}(\alpha_s(T,u)+\beta_s(T,u)v_s)ds+\int_0^{t\wedge T}\beta_s(T,u)dN_s,
\end{eqnarray*}
we can assume w.l.o.g.\ that $M$ is a  locally square-integrable martingale.
By localisation, it even suffices to consider square-integrable martingales.

By Fubini's theorem for the stochastic integral (cf.\ \cite[Theorem IV.65]{protter.04}), we have
\begin{eqnarray*}
  \check\Psi_t(x,u) &=& \check\Psi_0(t+x,u) + \int_0^t \check\alpha_s(t-s+x,u)ds 
  + \int_0^t\check\beta_s(t-s+x,u)dM_s \\
                    &=& \check\Psi_0(x,u) + \int_0^t \check\alpha_s(x,u)ds + \int_0^t\check\beta_s(x,u)dM_s \\
  &&{}+ \int_0^t \partial_x\check\Psi_0(r+x,u)dr + \int_0^t\int_s^t\partial_x\check\alpha_s(r-s+x,u)drds \\ 
  &&{}+ \int_0^t\int_s^t\partial_x\check\beta_s(r-s+x,u)drdM_s \\
                     &=& \check\Psi_0(x,u) + \int_0^t \check\alpha_s(x,u)ds + \int_0^t\check\beta_s(x,u)dM_s \\
  &&{}+ \int_0^t \partial_x\check\Psi_0(x+r,u)dr + \int_0^t\int_0^r\partial_x\alpha_s(r+x,u)dsdr \\ 
  &&{}+ \int_0^t\int_0^r\partial_x\beta_s(r+x,u)dM_sdr \\
                    &=& \check\Psi_0(x,u) + \int_0^t \check\alpha_s(x,u)ds + \int_0^t\check\beta_s(x,u)dM_s \\
  &&{}+ \int_0^t \Big(\partial_x\check\Psi_0(x+r,u) + \partial_x\int_0^r\alpha_s(r+x,u)ds \\ 
  &&{}+ \partial_x\int_0^r\beta_s(r+x,u)dM_s\Big)dr \\
                    &=& \check\Psi_0(x,u) + \int_0^t\left(\check\alpha_s(x,u) + \partial_x\check\Psi_s(x,u)\right)ds 
                    + \int_0^t\check\beta_s(x,u) M_s
\end{eqnarray*}
 for any $t,x\in\rp$, $u\in\rr$ where the fourth equality is explained below. 
 For fixed $t,x\in\rp$, $u\in\rr$ define Hilbert spaces $H:=L^2([0,t+x],\mathbb R)$ and  
$$H_1:=\{ f\in H:f\text{ is differentiable and }f'\in H\}$$
with norm
$$\|f\|_{H_1}:=\sqrt{\|f\|_{H}^2+\|f'\|_{H}^2}.$$
 The mapping $r\mapsto \beta_t(r+x,u)$ is in $H_1$ by assumption. 
 Since $\partial_r:H_1\rightarrow H$, $f\mapsto f'$  is linear and continuous,
 \cite[Theorem 8.7(v)]{peszat.zabczyk.07} yields
 $$ \int_0^t\partial_r\beta_s(\cdot,u)dM_s = \partial_r\int_0^t\beta_s(\cdot,u)dM_s$$
 and hence
 \begin{eqnarray*}
  \int_0^t\int_0^r\partial_x\beta_s(r+x,u)dM_sdr
   = \int_0^t\partial_x\int_0^r\beta_s(r+x,u)dM_sdr.
 \end{eqnarray*}

 \end{proof}

\section{Constructing models from building blocks}\label{Abschnitt: Beispiele und Existenzresultate}
In this section we turn to existence and uniqueness results for option surface models
which are driven by a subordinator $M$.

\subsection{Building blocks}
Theorem \ref{Satz: Bedingungen impl. Vollstaendig} indicates thet neither the drift part 
$\alpha$ of the codebook nor the dynamics of the return process $X$ can be chosen arbitrarily 
if one wants to end up with a risk-neutral option surface model. What ingredients do we need in 
order to construct such a model? It seems natural to consider volatiliy processes $\beta$ that are 
functions of the present state of the codebook, i.e.\
$$\beta_t(T,u)(\omega) = b(t,\Psi_{t-}(\cdot,\cdot)(\omega))(T,u)$$
for some deterministic function $b:\mathbb R_+\times \raum{\Pi}\rightarrow \raum{\Pi}$,
 where $\raum{\Pi}$ denotes some suitable space of conceivable codebook states, i.e.\ 
essentially of functions $\mathbb R_+\rightarrow\Pi$. It is specified below. In order to hope 
for uniqueness, we need to fix the initial values $X_0$ and $\Psi_0(\cdot,\cdot)$, function $b$ 
and the law of the driving process $M$. The drift part in \eqref{Gleichung:neu2} need not be 
specified as it is implied by the drift condition. But we need some information on $X$. 
Although its dynamics seem to be determined by the consistency condition, the joint 
behaviour of $X$ and $M$ is not. The latter, however, is needed for the drift condition 
\eqref{Gleichung: Driftbedingung} resp.\ \eqref{G:Driftbedingung, einf.}. In view of (\ref{e:letztedrift}), 
we assume that the joint law of $M$ and the dependent part $\Xp$ of $X$ relative to $M$ 
in the sense of Section \ref{Abschnitt: Semimartingalprojektion} are given. More specifically, 
we suppose 
that $(\Xp,M)$ is a L\'evy process with given L\'evy exponent $\psi^{(\Xp,M)}=\gamma$. 
The components of $M$ are supposed to be subordinators.
Altogether, we suggest to construct models based on a quadrupel $(x_0,\psi_0,b,\gamma)$, 
where $x_0\in\mathbb R$ and $\psi_0\in \raum{\Pi}$ stand for the initial states of the return 
process and the codebook, respectively.

In order to derive existence and uniqueness results, we still need to specify the domain and codomain of $b$. 
For ease of notation, we focus on one-dimensional driving processes $M$. 
The vector-valued case can be treated along the same lines.

Let $E$ denote the set of continuous functions $\psi:\mathbb R\rightarrow\mathbb C$  and
$$\Vert\psi\Vert_m:=\sup\left\{|\psi(u)|:|u|\le m\right\}$$
for $m\in\rp$.
By $\scr L^1(\mathbb R_+,E)$ we denote the set of measurable functions 
$\psi:\mathbb R_+\times\mathbb R\rightarrow\mathbb C$ such that $\psi(T,\cdot)\in E$ and
$$\|\psi\|_{T,m}:=\int_0^T\Vert\psi(r,\cdot)\Vert_m dr<\infty$$
for any $T,m\in\mathbb R_+$. For $\psi\in \scr L^1(\mathbb R_+,E)$ we set
$$ [\psi] := \{ \varphi\in\scr L^1(\rp,E): \psi(T,\cdot)=\varphi(T,\cdot) \text{ for almost any }T\in\rp\}.$$
Moreover, we define the space
 $$\raum{E} := \{ [\psi] : \psi\in\scr L^1(\rp,E) \}.$$
as usual. Finally, we set
\begin{eqnarray*}
 \raum{\Pi} &:=& \left\{\psi\in\raum{E}:\int_t^T\psi(r,\cdot)dr\in\Pi\text{ for any }0\leq t\leq T<\infty\right\}, 
\end{eqnarray*}
where we refer to the Bochner integral in the sense of Definition \ref{d:frechetintegral}
and Example \ref{ex:frechet}.
%
\begin{lemma}\label{l:banach}
The following statements hold:
 \begin{enumerate}
  \item $(E,\Vert\cdot\Vert_m)$ is a complete and separable semi-normed space for any $m\in\rp$.
\item $(\raum{E},\Vert\cdot\Vert_{T,m})$ is a complete and separable semi-normed space for any $T,m\in\mathbb R_+$.
If $x\in\raum{E}$ with $\|x\|_{n,n}=0$ for any $n\in\mathbb N$, we have $x=0$.
Moreover, if $(x_k)_{k\in\mathbb N}$ is a Cauchy sequence in $\raum{E}$ relative to any 
$\|\cdot\|_{n,n}$, $n\in\mathbb N$,
there exists $x\in\raum{E}$ such that 
$\lim_{k\to\infty}\|x_k-x\|_{n,n}=0$, $n\in\mathbb N$. 
Consequently,
$(\raum{E},d)$ is a separable Fr\'echet space for the metric
 $$d(\psi,\varphi):=\sum_{n\in\mathbb N}2^{-n}(1\wedge \Vert\psi-\varphi\Vert_{n,n}),\quad \psi,\varphi\in \raum{E}.$$
  \item $\Pi\subset E$ is a convex cone. If $A$ is a Borel subset of $\rr$, $\mu$ a finite measure on $A$, and 
$\psi:A\times\mathbb R\rightarrow\mathbb C$ is measurable with $\psi(r,\cdot)\in\Pi$ and 
$\int_A\Vert\psi(r,\cdot)\Vert_m\mu(dr)<\infty$ for all $m\in\mathbb N$, then the mapping
$u\mapsto\int_A\psi(r,u)\mu(dr)$
is in $\Pi$. 
  \item If $\psi\in\raum{E}$ and $\psi(T,\cdot)\in\Pi$ for almost all $T\in\mathbb R_+$, 
then $\psi\in\raum{\Pi}$.  
  \item For any increasing function $X:\mathbb R_+\rightarrow\mathbb R_+$ 
and any locally $X$-integrable function $\eta:\mathbb R_+\rightarrow\raum{\Pi}$, 
we have $\int_0^t\eta_s dX_s\in\raum{\Pi}$ for any $t\in\mathbb R_+$. 
Here, we refer to Bochner integration on $\raum{E}$, cf.\ Example \ref{ex:frechet} and Definition \ref{d:frechetintegral}.
  \item $\Pi$ is a Borel subset of $E$ and consequently $\raum{\Pi}$ is  a 
Borel subset of $\raum{E}$ (relative to the Borel-$\sigma$-field generated by the metric $d$).
 \end{enumerate}
\end{lemma}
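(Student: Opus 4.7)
The plan is to prove the six assertions in sequence. Parts (1), (2), and (6) rely on largely standard functional-analytic reasoning, while (3)–(5) carry the substantive content concerning the structure of $\Pi$.

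For part (1), I would establish completeness of $(E,\Vert\cdot\Vert_m)$ by observing that a Cauchy sequence in this seminorm restricts to a Cauchy sequence in $C([-m,m],\mathbb C)$, hence converges uniformly to some continuous $\varphi$ on $[-m,m]$, and any continuous extension of $\varphi$ to $\mathbb R$ gives a limit in $E$. Separability follows from Stone–Weierstrass: polynomials with rational complex coefficients are dense in $C([-m,m],\mathbb C)$ and already define elements of $E$ globally. For part (2), the non-degeneracy assertion that $\Vert x\Vert_{n,n}=0$ for every $n$ forces $x=0$ is obtained by choosing, for each $n$, a $dr$-null set outside which $\Vert\psi(r,\cdot)\Vert_n=0$ and taking the countable union. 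Completeness in each seminorm $\Vert\cdot\Vert_{T,m}$ is the standard $L^1$-completeness argument with $E$ as complete range space, combined with extracting a subsequence that converges $dr$-almost everywhere. The metric $d$ generates the same topology as the countable family $(\Vert\cdot\Vert_{n,n})$, giving the Fr\'echet-space structure; separability follows from step functions on rational intervals with values in a countable dense subset of $E$.

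Part (3) is the heart of the lemma and requires the greatest care. The convex cone property is immediate from the L\'evy–Khintchine form: sums and non-negative scalar multiples of pairs $(c,K)$ with $\int_{x>1}e^xK(dx)<\infty$ remain in that class, and the representation is linear in $(c,K)$. For the integration claim, uniqueness of the L\'evy–Khintchine decomposition yields a measurable selection $r\mapsto(c(r),K(r,\cdot))$, and the natural candidate is $c^\prime:=\int_A c(r)\mu(dr)$ and $K^\prime(B):=\int_A K(r,B)\mu(dr)$. A Fubini computation then shows that $u\mapsto\int_A\psi(r,u)\mu(dr)$ coincides with the L\'evy–Khintchine expression associated to $(c^\prime,K^\prime)$, provided the latter satisfies the required finiteness properties. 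The main obstacle is to derive $c^\prime<\infty$, the L\'evy measure condition on $K^\prime$, and $\int_{x>1}e^xK^\prime(dx)<\infty$ from the hypothesis $\int_A\Vert\psi(r,\cdot)\Vert_m\mu(dr)<\infty$ for all $m$. I would show that each of $c(r)$, $\int(x^2\wedge 1)K(r,dx)$, and $\int_{x>1}e^xK(r,dx)$ is controlled by a polynomial in $\Vert\psi(r,\cdot)\Vert_m$ for a suitable $m$: the first from the quadratic growth of $-\Re\psi(r,u)$ at large $u$, the second from the inequality $\int(1-\cos ux)K(r,dx)\le\Vert\psi(r,\cdot)\Vert_m$ combined with a standard $u$-averaging argument, and the third by isolating the linearly-growing compensator term in $\Im\psi(r,u)$. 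Integrating these bounds against $\mu$ then gives the three required finiteness statements. The exponential-moment extraction is the most delicate step, since it requires carefully separating $|u|\int(e^x-1)K(r,dx)$ from the other terms inside $\Im\psi(r,u)$ using only real-line data.

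For part (4), the hypothesis says $\psi(T,\cdot)\in\Pi$ for almost every $T$, and then part (3) applied with $A=[t,T]$ and $\mu$ the Lebesgue measure on $A$ yields $\int_t^T\psi(r,\cdot)dr\in\Pi$ for all $0\le t\le T$, which is precisely the condition $\psi\in\raum{\Pi}$. For part (5), I would approximate the Bochner integral $\int_0^t\eta_sdX_s$ by Riemann–Stieltjes-type partial sums $\sum_i(X_{s_{i+1}}-X_{s_i})\eta_{s_i}$; since $X$ is increasing, each such sum is a non-negative linear combination of elements of $\raum{\Pi}$, hence again in $\raum{\Pi}$ by iterating part (3). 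It then remains to verify that $\raum{\Pi}$ is closed under the relevant limit in $\raum{E}$, or alternatively to apply part (3) directly with $\mu$ the measure on $\mathbb R_+$ induced by $X$ and $\psi(r,u):=\eta_r(u)$, after checking the measurability and integrability hypotheses. Finally, for part (6), I would exhibit $\Pi\subset E$ as a countable intersection of Borel subsets: infinite divisibility of $\exp(\psi)$ is encoded via Bochner's theorem as countably many positive semi-definiteness conditions on finite subsets with rational arguments, each Borel in $\psi$; the L\'evy–Khintchine components $(c,K)$ can be recovered from $\psi$ by measurable inversion formulas, yielding a Borel parameter-extraction map, under which $c\ge 0$ and $\int_{x>1}e^xK(dx)<\infty$ become Borel conditions on $\psi$. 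Consequently $\Pi$ is Borel in $E$, and $\raum{\Pi}$ is Borel in $\raum{E}$ by writing it as the countable intersection over non-negative rationals $t\le T$ of preimages of $\Pi$ under the continuous linear maps $\psi\mapsto\int_t^T\psi(r,\cdot)dr$.
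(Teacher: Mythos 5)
Parts (1), (2), (4), (5) are essentially the same as the paper's proof. For part (3), you take a genuinely different route: you propose to extract explicit bounds on $c(r)$, $\int(x^2\wedge1)K(r,dx)$, and $\int_{x>1}e^xK(r,dx)$ directly from $\|\psi(r,\cdot)\|_m$ and integrate these bounds against $\mu$. This is workable in principle but, as you note yourself, the exponential-moment extraction is delicate; one has to be careful since $\int(e^x-1)K(r,dx)$ need not converge absolutely and cannot be simply ``split off'' from $\Im\psi(r,u)$. The paper sidesteps all of this by first invoking L\'evy's continuity theorem (together with a convergence theorem for infinitely divisible laws) to conclude that the pointwise limit $\Psi$ is an infinitely divisible exponent, then citing \cite[Lemma II.2.44]{js.87} to identify the triplet of $\Psi$ as $(\int b_r\mu(dr),\int c_r\mu(dr),\int F_r(\cdot)\mu(dr))$, and then verifying $Ee^X=1$ by a Tonelli computation. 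This is cleaner and buys you the integrability statements for free rather than having to establish them by hand.

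For part (6) there is a genuine gap. You close by asserting that $\raum{\Pi}$ equals the countable intersection over non-negative rationals $q_1\leq q_2$ of the preimages $I_{q_1,q_2}^{-1}(\Pi)$. This would be automatic if $\Pi$ were closed in $E$, since the maps $(t,T)\mapsto I_{t,T}\psi$ are continuous. But $\Pi$ is \emph{not} closed: for instance, taking $K_n:=\delta_n/(e^n-1)$ one gets $\psi_n(u)=\frac{e^{iun}-1}{e^n-1}-iu\in\Pi$ converging locally uniformly to $-iu$, which is the exponent of a pure drift $-1$ and hence not in $\Pi$. So the inclusion $\bigcap_{q_1\leq q_2\in\mathbb Q_+}I_{q_1,q_2}^{-1}(\Pi)\subset\raum{\Pi}$ needs a dedicated argument. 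The paper proves it by choosing rational sequences $q_n\downarrow t$, $p_n\uparrow T$ and writing $I_{t,T}\psi=I_{q_0,p_0}\psi+\sum_n(I_{q_{n+1},p_{n+1}}-I_{q_n,p_n})\psi$; since each difference lies in $\Pi$ (because $\Pi$ is a convex cone and $I_{q_{n+1},p_{n+1}}\psi=I_{q_{n+1},q_n}\psi+I_{q_n,p_n}\psi+I_{p_n,p_{n+1}}\psi$), the series can be rewritten as an integral over $\mathbb N$ of a $\Pi$-valued function against a finite measure, and part (3) then gives $I_{t,T}\psi\in\Pi$. Without some version of this argument, your proof of part (6) does not establish that $\raum{\Pi}$ is Borel. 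Your approach to the Borel-measurability of $\Pi$ itself (Bochner's theorem plus measurable triplet extraction) is different from the paper's but plausible; the paper instead shows the set of all L\'evy exponents is closed via L\'evy continuity, defines $f(\varphi)=E(e^{L_1})$ by Fourier inversion, and writes $\Pi=f^{-1}(1)$.
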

\begin{proof}
\begin{enumerate}
\item This follows from the fact that the continuous functions on $[-m,m]$ are a separable
Banach space relative to the uniform norm.
\item $(\raum{E},\Vert\cdot\Vert_{T,m})$ is a complete and separable semi-normed space 
because the corresponding Lebesgue-Bochner space of integrable functions on $[0,T]$
with values in the Banach space of continuous functions $[-m,m]\to\mathbb C$ is a Banach space.

Let $Q$ be a countable dense set in $E$. Define 
$$S:=\left\{\sum_{j=1}^n q_j1_{(a_j,b_j]}: n\in\mathbb N,q\in Q^n,a,b\in\mathbb Q^n\right\}.$$ 
$S$ is dense in $\raum{E}$ because $S$ is obviously dense in
$$T:=\left\{\sum_{j=1}^n q_j1_{A_j}: n\in\mathbb N,q\in Q^n,A_1,\dots,A_n\in\scr B(\mathbb R_+)\right\}$$
and $T$ is dense in $\raum{E}$, cf.\ Section \ref{su:frechet} of the appendix.
This shows separability of $(\raum{E},\Vert\cdot\Vert_{T,m})$.
The remaining statements are straightforward.
 \item $\Pi$ is obviously a convex cone.  
Define 
$$\Psi:\mathbb R\rightarrow\mathbb C,\quad u\mapsto \int_A\psi(r,u)\mu(dr).$$
Then $\Psi$ is continuous and it is the pointwise limit of L\'evy exponents. Hence L\'evy's continuity theorem 
(see \cite[Theorem 3.6.1]{lukacs.70}) together with \cite[Theorem 5.3.3]{lukacs.70} yield that $\Psi$ is the 
characteristic exponent of an infinitely divisible random variable $X$.

Fix the truncation function $h:\mathbb R\rightarrow\mathbb R,x\mapsto x1_{\{\vert x\vert\leq 1\}}$. 
For all $r\in A$ let $(b_r,c_r,F_r)$ be the L\'evy-Khintchine triplet corresponding to $\psi(r,\cdot)$. 
A detailed analysis of the proof of \cite[Lemma II.2.44]{js.87} yields integrability of $b$ and $c$ and that 
$F$ is a transition kernel satisfying $\int_A\int(|x|^2\wedge 1)F_r(dx)\mu(dr)<\infty$.

In order to prove $\Psi\in\Pi$ we have to show that $Ee^X=1$. Let $(B,C,\nu)$ be the triplet corresponding to 
$\Psi$ and $h$. 
Then \cite[Theorem II.4.16]{js.87} yields
$$\exp(\Psi(u)) = \exp\left(iuB-\frac{u^2}{2}C+\int_{\mathbb R}(e^{iux}-1-iuh(x))\nu(dx)\right) $$
as well as
\begin{eqnarray*}
&&\exp(\Psi(u)) \\
&=& \exp\left(\int_A \psi(r,u)\mu(dr)\right) \\
 &=& \exp\left(iu\int_Ab_rdr-\frac{u^2}{2}\int_Ac_rdr+\int_A\int(e^{iux}-1-iuh(x))F_r(dx)\mu(dr)\right)
\end{eqnarray*}
for any $u\in\mathbb R$. From \cite[Lemma II.2.44]{js.87} we obtain
 $B=\int_Ab_r\mu(dr)$, $C=\int_Ac_r\mu(dr)$, and $\nu(G) = \int_AF_r(G)\mu(dr)$ for $G\in\scr B$. 
Consequently, we have
\begin{eqnarray*}
E(e^X) &=& \exp\left(B+\frac{1}{2}C+\int (e^x-1-h(x)) \nu(dx)\right) \\
 &=& \exp\Bigg(\int_A\left(b_r+\frac{1}{2}c_r+\int_{(-\infty,1]} (e^x-1-h(x)) F_r(dx)\right)\mu(dr) \\
&&+ \int_{(1,\infty)} (e^x-1) \nu(dx)\Bigg).
\end{eqnarray*}
Tonelli's theorem yields $\int_{(1,\infty)} (e^x-1) \nu(dx) = \int_A\int_{(1,\infty)} (e^x-1)F_r(dx)\mu(dr)$. Hence
$$E(e^X) =  \exp\left(\int_A\left(b_r+\frac{1}{2}c_r+\int (e^x-1-h(x)) F_r(dx)\right)\mu(dr)\right) = 1.$$

\item This is a consequence of Statement 3.

\item This is a consequence of Statement 3 as well.

\item 
$E$ is a metric space relative to 
$$\delta(\psi,\varphi):=\sum_{n\in\mathbb N}
2^{-n}(1\wedge \|\psi-\varphi\|_n).$$
Let $C\subset E$ denote the set of all L\'evy exponents. 
L\'evy's continuity theorem (cf.\ \cite[Theorem 3.6.1]{lukacs.70}) and 
\cite[Theorem 5.3.3]{lukacs.70} imply that $C$ is closed in $E$
and in particular a Borel subset in $E$. 
The function 
$$f:C\rightarrow \overline{\mathbb R}_+,
\quad\varphi\mapsto \frac{1}{2\pi}\int_{-\infty}^{\infty}e^x\int_{-\infty}^{\infty}
\exp\left(\varphi(u)-\frac{1}{2}(iu+u^2)-iux\right)dudx$$
is well defined and measurable, where $\overline{\mathbb R}_+:=\mathbb R_+\cup\{\infty\}$ . 
Indeed, observe that for $\varphi\in C$ the function 
$$u\mapsto \varphi(u)-\frac{1}{2}(iu+u^2)$$
is a L\'evy exponent and thus \cite[Theorem 3.2.2]{lukacs.70} yields that 
$$x\mapsto \frac{1}{2\pi}\int_{-\infty}^{\infty}\exp\left(\varphi(u)-\frac{1}{2}(iu+u^2)-iux\right)du$$
is a density function. 

Let $L$ be a L\'evy process with L\'evy exponent $\varphi\in C$ and $W$ be an independent Brownian motion 
with diffusion coefficient $1$ and drift rate $-1/2$. 
\cite[Theorem 3.2.2]{lukacs.70} yields that 
$$p:\mathbb R\rightarrow\mathbb R_+,\quad x \mapsto \frac{1}{2\pi}\int_{-\infty}^{\infty}
\exp\left(\varphi(u)-\frac{1}{2}(iu+u^2)-iux\right)du$$
is the density function of $L_1+W_1$. Thus we have
\begin{eqnarray*}
 E(e^{L_1}) &=& E(e^{L_1+W_1}) 
  = \int_{-\infty}^{\infty}e^xp(x)dx 
  = f(\varphi).
\end{eqnarray*}
Hence $\Pi = f^{-1}(1)$, which implies that $\Pi$ is measurable in $E$. 
It remains to be shown that $\raum{\Pi}$ is a measurable set in $\raum{E}$. For $a,b\in\mathbb R_+$ define
the continuous and hence measurable map 
$$I_{a,b}:\raum{E}\rightarrow E,\quad  \psi \mapsto \int_{a}^{b}\psi(r,\cdot)dr.$$
We obviously have
$$\raum{\Pi}\subset M:=\bigcap \{ I_{q_1,q_2}^{-1}(\Pi) : q_1,q_2\in\mathbb Q_+,q_1\leq q_2\}.$$
We show that the two sets are in fact equal.
Let $\psi\in M$ and $t,T\in\rp$ with $t< T$. Then $\psi\in\raum{E}$ and hence $\int_t^T \psi(r,\cdot)dr\in E$. 
Let $(q_n)_{n\in\mathbb N},(p_n)_{n\in\mathbb N}$ be sequences of rational numbers 
such that $q_n\downarrow t$, $p_n\uparrow T$, and $q_n\leq p_n$. Then
$$I_{t,T}\psi = I_{q_0,p_0}\psi + \sum_{n\in\mathbb N}(I_{q_{n+1},p_{n+1}}-I_{q_{n},p_{n}})\psi.$$
Defining the finite measure $\mu$ on $\mathbb N$ by
 $\mu(\{n\}):= 1/n^2$
and the function 
$$\gamma:\mathbb N\rightarrow \Pi,\quad n\mapsto n^2(I_{q_{n+1},p_{n+1}}-I_{q_{n},p_{n}})\psi,$$
we obtain
 $$ I_{t,T}\psi = I_{q_0,p_0}\psi + \int_{\mathbb N} \gamma d\mu.$$
Hence Statement 3 yields $I_{t,T}\psi \in \Pi$, which in turn implies $\psi\in\raum{\Pi}$. 

\end{enumerate}
\end{proof}
%
We are now ready to formalise the notion of building blocks. 

\begin{definition}\label{D:Bausteine}
 We call a quadruple $(x_0,\psi_0,b,\gamma)$ {\em building blocks} of an option surface model if
\begin{enumerate}
 \item $x_0\in\mathbb R_+$,
 \item $\psi_0\in\raum{\Pi}$,
 \item $b:\mathbb R_+\times\raum{E}\rightarrow\raum{E}$ is measurable
(relative to  the $\sigma$-fields $\B(\rp)\otimes\B(\raum{E})$ and $\B(\raum{E})$, where
$\B(\raum{E})$ denotes the Borel-$\sigma$-field on the metric space
$(\raum{E},d)$ as introduced in Lemma \ref{l:banach}),
\item $b$ maps $\mathbb R_+\times\raum{\Pi}$ on a subset of $\raum{\Pi}$,
\item $b$ is {\em locally Lipschitz} in the sense that 
for any $T\in\rp$ there are $T_0,m_0\in\rp$ such that for any $\widetilde T\geq T_0$, $m\geq m_0$
there exists $c\in\mathbb R_+$ such that
$$\Vert b(t,\psi_1)-b(t,\psi_2)\Vert_{\widetilde T,m}\leq c\Vert\psi_1-\psi_2\Vert_{\widetilde T,m}$$
holds for any $\psi_1,\psi_2\in\raum{E}$ and any  $t\in[0,T]$,
 \item $b(t,\psi)(T,u)=0$ for any $\psi\in\raum{E}$, $u\in\mathbb R$ and any $t,T\in\mathbb R_+$ with $t>T$,
\item $\sup_{t\in[0,T]}\|b(t,0)\|_{T,m}<\infty$ for any $T,m\in\rp$,
 \item $\gamma:\mathbb R\times(\rr+i\rp)\rightarrow\mathbb C$ is the extended L\'evy exponent 
on $\mathbb R\times(\rr+i\rp)$(cf.\ Section \ref{Unterabschnitt: L\'evy exponenten}) of an 
$\mathbb R^{1+1}$-dimensional L\'evy process $(\Xp,M)$ such that
\begin{enumerate}
 \item $M$ is a pure jump subordinator, i.e.\ $M_t=\sum_{s\leq t}\Delta M_s$, $t\in\mathbb R_+$,
 \item $\Xp$ is the dependent part of $\Xp$ relative to $M$,
 \item $\gamma$ is differentiable, and
 \item $\partial_2\gamma:\mathbb R\times(\rr+i\rp)\rightarrow\mathbb C$ is bounded and Lipschitz continuous.
\end{enumerate}
\end{enumerate}
\end{definition}

\begin{remark}
By Remark \ref{Remark smoothness} the smoothness conditions (c,d) on $\gamma$ are satisfied if 
both $X^\|$ and $M$ have finite second moments.
\end{remark}

Our goal is to find corresponding risk-neutral option surface models in the following sense.
\begin{definition}\label{Definition: kompatibel}
 An option surface model $\osm$ is said to be {\em compatible} with building blocks $(x_0,\psi_0,b,\gamma)$ if
\begin{itemize}
 \item $X_0=x_0$,
 \item $\Psi_0=\psi_0$,
 \item 
$\rp\to\raum{E}$, $t\mapsto \Psi_{t}(\omega)$ is a well-defined, a.s.\ c\`adl\`ag mapping,
\item
$\beta_{t}(\omega) = b(t,\Psi_{t-}(\omega))$ for $dP\otimes dt$-almost any 
$(\omega,t)\in \Omega\times\mathbb R_+$, 
 \item $\psi^{(\Xp,M)}=\gamma$ on $\mathbb R\times(\rr+i\rp)$,
where $\Xp$ denotes the dependent part of $X$ relative to $M$.
\end{itemize}
\end{definition}

\begin{remark}
In practice one may be interested in coefficients $b$ of the form
\begin{equation}\label{e:musiela}
 b(t,\psi)(T,u):=\left\{\begin{array}{ll}
\check b(\psi^{(t)})(T-t,u) & \mbox{ if } t\leq T,\\
0 & \mbox{ if }t> T,
\end{array}\right.
\end{equation}
where 
$\psi^{(t)}\in\raum{E}$ is defined by
$$\psi^{(t)}({x},u):=\psi(t+{x},u),\quad {x}\in\rp, \quad u\in\rr$$
for $\psi\in\raum{E}$.
In line with Proposition \ref{p:musiela}, $(\Psi_t^{(t)})_{t\in\rp}$ may be called
{\em Musiela parametrisation} of the codebook process $(\Psi_t)_{t\in\rp}$.
In other words, the Mu\-sie\-la codebook refers to a function of the remaining life time 
${x}=T-t$ rather than maturity $T$ of the claim.

Function $b$ in (\ref{e:musiela}) satisfies 
Conditions 3--7 in Definition \ref{D:Bausteine}
if $\check b:\raum{E}\rightarrow\raum{E}$  maps $\raum{\Pi}$ on a subset of $\raum{\Pi}$
and if $\check b$ is {\em locally Lipschitz} in the sense that
there exist $x_0,m_0\in\rp$ such that for any $x\geq x_0,m\geq m_0$ there exists $c\in\mathbb R_+$ such that
$$\Vert \check b(\psi_1)- \check b(\psi_2)\Vert_{\widetilde x,m}\leq c\Vert\psi_1-\psi_2\Vert_{\widetilde x,m}$$
holds for any $\psi_1,\psi_2\in\raum{E}$ and any $\widetilde x\in[x_0,x]$. 
\end{remark}

\subsection{Existence and uniqueness results}\label{Abschnitt: Existenzergebnis HJM}
For building blocks $(x_0,\psi_0,b,\gamma)$
consider the stochastic differential equation (SDE)
  \begin{eqnarray}\label{G:Psi SDE}
d\Psi_t = a(t,\Psi_{t-})dt+b(t,\Psi_{t-})dM_t,\quad\Psi_0=\psi_0
\end{eqnarray}
in $\raum{E}$, where $M$ denotes a subordinator with L\'evy exponent $\gamma(0,\cdot)$ and
\begin{eqnarray}\label{G:Driftbedingung, Markovsch}
a(t,\psi)(T,u) &:=& -\partial_T\left(\gamma\Big(u,-i\int_{t\wedge T}^T\tilde b(t,\psi)(r,u)dr\Big)\right) \\
   &=& i\partial_2\gamma\Big(u,-i\int_t^T\tilde b(t,\psi)(r,u)dr\Big)\tilde b(t,\psi)(T,u)1_{[t,\infty)}(T). \nonumber
\end{eqnarray}
with
\begin{eqnarray}\label{e:tilb}
 \tilde b(t,\psi)(r,u)
:=\big(\Re(b(t,\psi)(r,u))\wedge 0\big)
+i \Im (b(t,\psi)(r,u)).
\end{eqnarray}
Note that $\tilde b(t,\psi)= b(t,\psi)$
for $\psi\in\Pi$ by Lemma \ref{Lemma: Negativitaet von LKM}.
In view of Equations \eqref{e:letztedrift} and \eqref{Gleichung:neu2}, 
any compatible codebook process should solve \eqref{G:Psi SDE}. 
We start by showing that \eqref{G:Psi SDE} allows for a unique solution in $\raum{E}$.

\begin{proposition}\label{Proposition: SDE ist eindeutig}
Let $(x_0,\psi_0,b,\gamma)$ be building blocks.
\eqref{G:Driftbedingung, Markovsch} defines a measurable function 
$a:\mathbb R_+\times\raum{E}\rightarrow\raum{E}$. 
 Let $T,T_0,m_0,\widetilde T,m$ be as in Definition \ref{D:Bausteine}(5). 
 Then there are a constant $C\in\rp$ and for any  $\Vert\cdot\Vert_{\widetilde T,m}$-bounded set 
 $B\subset \raum{E}$ a constant $c\in\rp$ such that $a$ satisfies the Lipschitz and linear growth conditions
\begin{eqnarray*} 
\Vert a(t,\psi_1) - a(t,\psi_2) \Vert_{\widetilde T,m} &\leq& c\Vert \psi_1-\psi_2\Vert_{\widetilde T,m}, \\
 \Vert a(t,\psi)  \Vert_{\widetilde T,m} &\leq& C(1+\Vert \psi\Vert_{\widetilde T,m})
\end{eqnarray*}
for any $t\in[0,T]$, $\psi_1,\psi_2\in B$, $\psi\in \raum{E}$. 
 \item 
\sloppy{If $(X^\|,M)$ denotes an $\rr^2$-valued L\'evy process with L\'evy exponent $\gamma$
(implying in particular that $M$ is a subordinator), 
then the SDE \eqref{G:Psi SDE} has a unique c\`adl\`ag $\raum{E}$-valued solution,
in the sense of $\raum{E}$-valued processes and integrals, cf.\ Section \ref{s:frechetsde}. 
The joint law of $(X^\|,M,\Psi)$ on
$$\left(\mathbb D(\rr^2)\times \mathbb D(\raum{E}), \scr D(\rr^2)\otimes \scr D(\raum{E})\right)$$
is uniquely determined by $(x_0,\psi_0,b,\gamma)$. Here, $(\mathbb D(\rr^2), \scr D(\rr^2))$
and 
$(\mathbb D(\raum{E})$, $\scr D(\raum{E}))$ 
denote the Skorohod spaces of c\`adl\`ag functions
on $\rr_+$ with values in the Polish spaces $\rr^2$ and $\raum{E}$, respectively
(cf.\ \cite[Section 3.5]{ethier.kurtz.86}).}
\end{proposition}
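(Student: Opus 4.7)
The plan is to establish the three assertions sequentially, starting with the well-definedness and measurability of $a$, then the Lipschitz and linear growth estimates, and finally the SDE existence/uniqueness and uniqueness of joint law.

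\textbf{Step 1: well-definedness and measurability of $a$.} Since $z\mapsto (\Re z\wedge 0)+i\Im z$ is $1$-Lipschitz, $\tilde b(t,\psi)\in\raum{E}$ whenever $b(t,\psi)\in\raum{E}$. The construction of $\tilde b$ moreover guarantees that $-i\int_t^T\tilde b(t,\psi)(r,u)dr$ lies in $\rr+i\rp$, so $\partial_2\gamma$ may be applied. Boundedness of $\partial_2\gamma$ yields the pointwise bound $|a(t,\psi)(T,u)|\le \Vert\partial_2\gamma\Vert_\infty |b(t,\psi)(T,u)|$, so $a(t,\psi)\in\raum{E}$. Continuity of $u\mapsto a(t,\psi)(T,u)$ follows from dominated convergence with a continuous integrand, and joint measurability of $(t,\psi)\mapsto a(t,\psi)$ from the measurability of $b$, the Lipschitz continuity of $\partial_2\gamma$, and Fubini's theorem.

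\textbf{Step 2: Lipschitz and linear growth.} Set $K:=\Vert\partial_2\gamma\Vert_\infty$, let $L$ be a Lipschitz constant for $\partial_2\gamma$, and write $\sigma_j(T,u):=\int_t^T\tilde b(t,\psi_j)(r,u)dr$. I would decompose
\begin{align*}
a(t,\psi_1)(T,u)-a(t,\psi_2)(T,u)
&= i\bigl[\partial_2\gamma(u,-i\sigma_1(T,u))-\partial_2\gamma(u,-i\sigma_2(T,u))\bigr]\tilde b(t,\psi_1)(T,u) \\
&\quad + i\,\partial_2\gamma(u,-i\sigma_2(T,u))\bigl[\tilde b(t,\psi_1)-\tilde b(t,\psi_2)\bigr](T,u)
\end{align*}
and then apply the Lipschitz property of $\partial_2\gamma$ together with the $1$-Lipschitz property of the truncation inside $\tilde b$. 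Taking $\sup_{|u|\le m}$ and integrating over $T\in[0,\widetilde T]$ yields
\begin{equation*}
\Vert a(t,\psi_1)-a(t,\psi_2)\Vert_{\widetilde T,m}\le L\Vert b(t,\psi_1)\Vert_{\widetilde T,m}\Vert b(t,\psi_1)-b(t,\psi_2)\Vert_{\widetilde T,m}+K\Vert b(t,\psi_1)-b(t,\psi_2)\Vert_{\widetilde T,m}.
\end{equation*}
The local Lipschitz property of $b$ (Definition \ref{D:Bausteine}(5)) and the bound $\sup_{t\in[0,T]}\Vert b(t,0)\Vert_{\widetilde T,m}<\infty$ (Definition \ref{D:Bausteine}(7)) then deliver, for $\psi_1,\psi_2$ in any $\Vert\cdot\Vert_{\widetilde T,m}$-bounded set $B$, the desired Lipschitz constant $c$ depending on $B$. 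An analogous, simpler estimate $\Vert a(t,\psi)\Vert_{\widetilde T,m}\le K\Vert b(t,\psi)\Vert_{\widetilde T,m}$ combined with Definition \ref{D:Bausteine}(5,7) gives the linear growth constant $C$.

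\textbf{Step 3: SDE existence/uniqueness and joint law.} Equipping $\raum{E}$ with the Fr\'echet metric $d$ from Lemma \ref{l:banach}, Steps 1 and 2 together with Definition \ref{D:Bausteine}(5,7) show that $a$ and $b$ are jointly measurable, locally Lipschitz, and of linear growth in each semi-norm $\Vert\cdot\Vert_{n,n}$, $n\in\mathbb N$. This fits the framework for Fr\'echet-space SDEs driven by subordinators developed in Section \ref{s:frechetsde}: a Picard iteration combined with Gronwall-type estimates in each semi-norm yields a unique c\`adl\`ag $\raum{E}$-valued solution $\Psi$ of \eqref{G:Psi SDE} up to any finite horizon, extended by pasting to all of $\rp$. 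Since the solution is strong, $\Psi$ is a measurable functional of $(\psi_0,M)$, and since the law of $(\Xp,M)$ on $\mathbb D(\rr^2)$ is determined by $\gamma$, the joint image law of $(\Xp,M,\Psi)$ on $\mathbb D(\rr^2)\times\mathbb D(\raum{E})$ is uniquely determined by $(x_0,\psi_0,b,\gamma)$.

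The principal technical obstacle will be Step 2: producing a Lipschitz estimate for $a$ that closes in the same semi-norm $\Vert\cdot\Vert_{\widetilde T,m}$ on both sides, despite $a(t,\psi)$ depending on $\psi$ both pointwise through $\tilde b(t,\psi)(T,\cdot)$ and cumulatively through $\int_t^T\tilde b(t,\psi)(r,\cdot)dr$, with the resulting constant local in $\psi$ but uniform in $t\in[0,T]$. Once this closure is in place, invoking the Section \ref{s:frechetsde} Fr\'echet-space SDE theory is essentially routine because the pure-jump structure of the subordinator $M$ avoids the delicate stochastic-integration issues that arise for continuous drivers.
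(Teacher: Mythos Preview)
Your proposal is correct and follows essentially the same approach as the paper: the same decomposition of $a(t,\psi_1)-a(t,\psi_2)$ exploiting the boundedness and Lipschitz property of $\partial_2\gamma$, and the same appeal to the Fr\'echet-space SDE theory of Section \ref{s:frechetsde} (Corollary \ref{C:Ex und Eind Satz}) for existence and pathwise uniqueness. The only minor difference is in deducing uniqueness of the joint law: the paper invokes a Yamada--Watanabe-type argument along the lines of \cite[Theorem IX.1.7]{revuz.yor.99}, whereas your observation that the pathwise solution is a measurable functional of $M$ is actually more direct here, since Corollary \ref{C:Ex und Eind Satz} already constructs the solution adapted to the filtration generated by the driving process.
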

\begin{proof}
\begin{enumerate}
\item
\sloppy{From the representation 
\begin{equation}\label{e:a}
 a(t,\psi)(T,u)
=i\partial_2\gamma\Big(u,-i\int_t^T\tilde b(t,\psi)(r,u)dr\Big)\tilde b(t,\psi)(T,u)1_{[t,\infty)}(T)
\end{equation}
and boundedness of $\partial_2\gamma$ one concludes that $a(t,\psi)\in\raum{E}$.
Moreover, $a$ is the composition of the measurable mapping
$\rp\times\raum{E}\to\rp\times\raum{E}$, $(t,\psi)\mapsto(t,b(t,\psi))$ 
and the continuous and hence measurable mapping
$\rp\times\raum{E}\to\raum{E}$, $(t,\psi)\mapsto f(t,\psi)$ defined by
$$f(t,\psi)(T,u)
=i\partial_2\gamma\Big(u,-i\int_t^T\widetilde \psi(r,u)dr\Big)\widetilde \psi(T,u)1_{[t,\infty)}(T),$$
where $\widetilde\psi$ is defined by truncating the real part of $\psi$ as in (\ref{e:tilb}).}

Let $T,T_0,m_0,\widetilde T,m,c$ be as in Definition \ref{D:Bausteine}(5) such that
\begin{eqnarray}\label{e:lipb}
 \Vert b(t,\psi_1)-b(t,\psi_2)\Vert_{\widetilde T,m} &\leq& c \Vert \psi_1-\psi_2\Vert_{\widetilde T,m}
\end{eqnarray}
for any $\psi_1,\psi_2\in \raum{E}$, $t\in[0,T]$. Let $B$ be a bounded set and
$L$ a Lipschitz constant of $\partial_2\gamma$. Then 
\begin{eqnarray*}
 H &:=& \sup_{(R,u)\in S} \left\vert i\partial_2\gamma\left(u,-i\int_t^R\widetilde\psi_1(r,u)dr\right)-i
\partial_2\gamma\left(u,-i\int_t^R\widetilde\psi_2(r,u)dr\right)\right\vert \\
 &\leq& L\sup_{(R,u)\in S} \left\vert \int_t^R\widetilde\psi_1(r,u)dr-\int_t^R\widetilde\psi_2(r,u)dr\right\vert \\
 &\leq& L\sup_{(R,u)\in S} \int_t^R\vert\psi_1(r,u)-\psi_2(r,u)\vert dr \\
 &\leq& L\Vert \psi_1-\psi_2\Vert_{\widetilde T,m}
\end{eqnarray*}
with $S=[t,\widetilde T]\times[-m,m]$. Let $c_1\in\mathbb R_+$ be a bound for the set $B$,
$$c_2 := \sup_{t\in[0,T]}\Vert b(t,0)\Vert_{\widetilde T,m} + cc_1,$$
and $c_3$ be a bound for $\partial_2\gamma$. Observe that
 $ \Vert b(t,\psi) \Vert_{\widetilde T,m} \leq c_2$
 for any $\psi\in B, t\in[0,T]$. Submultiplicativity of the uniform norm yields
\begin{eqnarray*}
 \lefteqn{\Vert f(t,\psi_1)-f(t,\psi_2)\Vert_{\widetilde T,m}}\\ 
&=& \int_0^{\widetilde T}\sup_{u\in[-m,m]}\vert (f(t,\psi_1)-f(t,\psi_2))(r,u)\vert dr \\
   & \leq & \int_0^{\widetilde T} \Big(H\sup_{u\in[-m,m]}\vert \psi_1(r,u)\vert \\
&&{}+ \sup_{u\in[-m,m]}\left\vert \partial_2\gamma\left(u,-i\int_t^R\widetilde\psi_2(r,u)dr\right)\right\vert 
\vert (\psi_1-\psi_2)(r,u)\vert \Big)dr\\
   &\leq& L\Vert \psi_1-\psi_2\Vert_{\widetilde T,m} \Vert\psi_1\Vert_{\widetilde T,m}
   +c_3 \Vert\psi_1-\psi_2\Vert_{\widetilde T,m} \\
   &\leq& (Lc_2+c_3) \Vert \psi_1-\psi_2\Vert_{\widetilde T,m}.
\end{eqnarray*}
for any $t\in[0,T]$ and any  $\psi_1,\psi_2\in\raum{E}$ which are bounded by $c_2$. 
Since $a$ is the composition of $(t,\psi)\rightarrow (t,b(t,\psi))$ and $f$ and since $b$ is Lipschitz continuous, 
the first inequality follows.

For the second inequality note that
\begin{eqnarray*}
\Vert a(t,\psi)  \Vert_{\widetilde T,m} 
&\leq& c_3\Vert b(t,\psi)\Vert_{\widetilde T,m} \\
&\leq&c_3\left(\Vert b(t,0)\Vert_{\widetilde T,m} 
+\Vert b(t,\psi)-b(t,0)\Vert_{\widetilde T,m}\right) \\
&\leq& c_3\left(\Vert b(t,0)\Vert_{\widetilde T,m} 
+c\Vert \psi\Vert_{\widetilde T,m}\right) \\
&\leq& C(1+\Vert\psi\Vert_{\widetilde T,m})
\end{eqnarray*}
by (\ref{e:a}, \ref{e:lipb}) for $C:=c_3 (\sup_{t\in[0,T]}\Vert b(t,0)\Vert_{\widetilde T,m} \vee c)$.

\item \sloppy{For fixed $\omega\in\Omega$
SDE (\ref{G:Psi SDE}) is a pathwise equation in the Fr\'echet space $(\raum{E},d)$, driven by two increasing
functions. Existence and uniqueness under the present Lipschitz and growth conditions follows from 
Corollary \ref{C:Ex und Eind Satz}.} Pathwise uniqueness of a solution to SDE (\ref{G:Psi SDE}) 
now implies uniqueness in law. This follows along the same lines as for $\rr^d$-valued SDE's driven by a Wiener process, 
cf.\ e.g.\ \cite[Theorem IX.1.7 and Exercise IV.5.16]{revuz.yor.99}. For the proof of \cite[Exercise IV.5.16]{revuz.yor.99} 
one may note that the law of the Bochner integral $\int_0^\cdot a(t,\Psi_{t-})dt$ (and likewise the law of the integral 
with respect to $M$) is determined by the law of all random vectors of the form 
$$\left(\int_0^{t_1}f_1(a(s,\Psi_{s-}))ds,\dots, \int_0^{t_d}f_d(a(s,\Psi_{s-}))ds\right),$$
where $d\in\mathbb N$, $t_1,\dots,t_d\in\rp$, and $f_1,\dots,f_d$ denote continuous linear functionals on $\raum{E}$.

\end{enumerate}
\end{proof}

We can now state an existence and uniqueness result for compatible option surface models.
The condition in Statement 2 of the following theorem means essentially that
\begin{itemize}
 \item  the current codebook state $(T,u)\mapsto\Psi_t(T,u)$ must look like the exponent of a PII
whose exponential is a martingale and
\item $u\mapsto\Psi_{t-}(t,u)$ is the local exponent of some process $X$ whose exponential is a
martingale and whose dependent part $\Xp$ relative to $M$ is of the form in Definition \ref{D:Bausteine}(8).
\end{itemize}
The first requirement makes sense because of the very idea of a codebook in the present
L\'evy setup. The second condition, on the other hand, naturally appears through the
consistency condition.

\begin{theorem}\label{Satz: starker Existenzsatz}
 Let $(x_0,\psi_0,b,\gamma)$ be building blocks.
\begin{enumerate}
 \item \sloppy{Any two compatible risk-neutral option surface models $\osm$ resp.\  $\osmzwei$ coincide in law, 
i.e.\ $(\Psi,X,M)$ and $(\widetilde\Psi,\widetilde X, \widetilde M)$ have the same law
on $\mathbb D(\raum{E})\times\mathbb D(\rr^2)$.}
\item If a compatible risk-neutral option surface model $\osm$ exists, then
the $\raum{E}$-valued process $\eta$ defined by
\begin{equation}\label{e:eta}
 \eta_t(T,u):=\Phi_t(T,u)-\gamma(u,0)1_{[0,t]}(T),\quad t,T\in\rp,\quad u\in\rr,
\end{equation}
has values in $\raum{\Pi}$. Here, $\Phi$ denotes 
the $\raum{E}$-valued solution to SDE \eqref{G:Psi SDE} from 
Proposition \ref{Proposition: SDE ist eindeutig}.
 \item Let $\Phi$ denote
the $\raum{E}$-valued solution to SDE \eqref{G:Psi SDE} from 
Proposition \ref{Proposition: SDE ist eindeutig}.
If the $\raum{E}$-valued process $\eta$ in (\ref{e:eta}) has values in $\raum{\Pi}$,
there exists a compatible risk-neutral option surface model.
\end{enumerate}
\end{theorem}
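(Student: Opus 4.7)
The plan is to prove the three items in turn, relying on Theorem \ref{Satz: Bedingungen impl. Vollstaendig} to translate between risk neutrality and the drift and consistency conditions, and on Proposition \ref{Proposition: SDE ist eindeutig} for the $\raum{E}$-valued SDE. Throughout I use the additive decomposition $\psi^{(X,M)}_t(u,v) = \psi^{X^\perp}_t(u) + \gamma(u,v)$, which follows from local independence of $X^\perp$ and $(X^\|,M)$ combined with $\psi^{(X^\|,M)} = \gamma$. For Statement 1, compatibility gives $\beta_t = b(t, \Psi_{t-})$ and, by the above splitting, the drift condition (\ref{e:letztedrift}) reduces to $\alpha_t = a(t, \Psi_{t-})$ on $[t, \infty)$. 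Hence $\Psi$ solves SDE (\ref{G:Psi SDE}) driven by $M$, and Proposition \ref{Proposition: SDE ist eindeutig}(2) pins down the joint law of $(X^\|, M, \Psi)$. The consistency condition then forces $\psi^{X^\perp}_t(u) = \Psi_{t-}(t, u) - \gamma(u, 0)$; together with local independence of $X^\perp$ from $M$, the conditional law of $X^\perp$ given $M$ is the PII law with that ($M$-measurable) local exponent, and the joint law of $(X, M, \Psi) = (X^\perp + X^\|, M, \Psi)$ is uniquely determined.

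For Statement 2, the same argument yields $\Psi \stackrel{d}{=} \Phi$. The membership condition of Definition \ref{Def: osm} immediately gives $\int_{t_1}^{t_2} \eta_s(r, u)\,dr = \int_{t_1}^{t_2} \Psi_s(r, u)\,dr \in \Pi$ whenever $s \leq t_1 \leq t_2$. For $t_2 \leq s$, one uses $\Psi_s(r, u) = \Psi_r(r, u)$ for $r \leq s$ (a.e.\ in $r$) together with the consistency condition to rewrite $\int_{t_1}^{t_2} \eta_s(r, u)\,dr$ as $\int_{t_1}^{t_2} \psi^{X^\perp}_r(u)\,dr$, which must lie in $\Pi$ because, conditionally on $M$, it is the characteristic exponent of the increment $X^\perp_{t_2} - X^\perp_{t_1}$ of the PII identified in Statement 1. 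Additivity of the Bochner integral and convexity of $\Pi$ (Lemma \ref{l:banach}(3)) then handle the remaining case $t_1 < s < t_2$.

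For Statement 3, first construct a L\'evy process $(X^\|, M)$ with joint exponent $\gamma$, solve SDE (\ref{G:Psi SDE}) on the same probability space to obtain $\Phi$, and then enlarge the space by a process $X^\perp$ which, given $(X^\|, M)$, is a PII whose conditional characteristic exponent of $X^\perp_{t_2} - X^\perp_{t_1}$ equals $\int_{t_1}^{t_2} \eta_s(r, \cdot)\,dr$ for any $s \geq t_2$. This conditional PII is well defined because $\eta_s \in \raum{\Pi}$ by hypothesis, the displayed integral is independent of the choice $s \geq t_2$ (since $\Phi_s(r, \cdot) = \Phi_r(r, \cdot)$ for $r \leq s$), and the resulting finite-dimensional laws are consistent so that Kolmogorov's extension theorem applies. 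Set $X := X^\| + X^\perp$, $\Psi_0 := \psi_0$, $\beta := b(\cdot, \Phi_-)$, $\alpha := a(\cdot, \Phi_-)$. The remaining requirements of Definition \ref{Def: osm} follow from $\eta \in \raum{\Pi}$ and the regularity of $\Phi$ granted by Proposition \ref{Proposition: SDE ist eindeutig}; the drift condition holds by the definition of $a$; and the consistency condition is verified via
$$\Psi_{t-}(t, u) = \Phi_{t-}(t, u) = \eta_{t-}(t, u) + \gamma(u, 0) = \psi^{X^\perp}_t(u) + \psi^{X^\|}_t(u) = \psi^X_t(u).$$
Theorem \ref{Satz: Bedingungen impl. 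Vollstaendig} then yields strong risk neutrality, while compatibility is built into the construction.

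The hardest step is the construction of $X^\perp$ in Statement 3. One must realise a process with (random) local exponent $\Phi_{r-}(r, u) - \gamma(u, 0)$ that is locally independent of $(X^\|, M)$ and matches the prescribed conditional characteristic functions. This amounts to upgrading the integral-in-$r$ information encoded by $\eta_s \in \raum{\Pi}$ to pointwise-in-$r$ validity of $\Phi_{r-}(r, \cdot) - \gamma(\cdot, 0)$ as a L\'evy exponent, and invoking Kolmogorov's extension theorem on the enlarged product space to build the conditional PII. A second delicate point is verifying that $X^\|$ really is the dependent part of $X = X^\| + X^\perp$ relative to $M$, which relies on the local independence of $X^\perp$ and $(X^\|, M)$ built into the construction.
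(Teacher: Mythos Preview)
Your overall strategy matches the paper's, but there is a genuine gap in Statement~1. You write that ``together with local independence of $X^\perp$ from $M$, the conditional law of $X^\perp$ given $M$ is the PII law with that ($M$-measurable) local exponent.'' This implication is not automatic: local independence in the sense of Definition~\ref{Definition: lokale unabh} is a statement about the \emph{local characteristics} of $(X^\perp,X^\|,M)$, not about conditional laws. Knowing that $\psi^{X^\perp}_t(u)=\Psi_{t-}(t,u)-\gamma(u,0)$ is $\G_\infty$-measurable (where $\G_\infty=\sigma(X^\|,M)$) and that the joint local exponent factors does not by itself force $X^\perp$ to be a $\G_\infty$-conditional PII with the claimed conditional characteristic function. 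The paper devotes two full steps to this: it first proves the conditional-PII property~(\ref{e:CPII}) by an explicit martingale computation (reducing to test functions of the form $e^{iuX^\perp}$ and $e^{i\int v\,d(X^\|,M)}$ and exploiting that a certain stochastic exponential is a bounded martingale), and then identifies the conditional characteristic function~(\ref{e:cpiicf}). Without this, uniqueness of the law of $X^\perp$ given $(X^\|,M,\Psi)$ is unjustified.

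A smaller point: in Statement~2 you write $\Psi\stackrel{d}{=}\Phi$, but what is actually needed (and what Step~1 of the paper's proof of Statement~1 gives) is the \emph{pathwise} equality $\Psi=\Phi$, coming from uniqueness of the SDE solution. Also, your justification that $u\mapsto\psi^{X^\perp}_r(u)\in\Pi$ via the conditional-PII structure is circular in light of the gap above; the paper instead appeals directly to Lemma~\ref{Lemma: Semimartingalprojektion} ($e^{X^\perp}$ is a local martingale) and Remark~\ref{r:pi}. Your outline for Statement~3 is correct in spirit and you rightly flag the two delicate points; note that the paper additionally verifies that $(X^\|,M)$ remains a L\'evy process with respect to the enlarged filtration on the product space, which is not entirely trivial.
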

\begin{proof}
\begin{enumerate}
 \item 
{\em Step 1:}
Let $\osm$ be a risk-neutral option surface model which is compatible with the building blocks. 
Denote by $\Xp$ the dependent part of $X$ relative to $M$. By compatibility we have 
$\psi^{(\Xp,M)}=\gamma.$ Thus $(\Xp,M)$ is a L\'evy process. Let $(\scr G_t)_{t\in\mathbb R_+}$ 
be the filtration generated by $(\Xp,M)$, i.e.\ 
$$\scr G_t=\bigcap_{s>t}\sigma({(\Xp,M)}_r:r\leq s).$$
Since the option surface model $\osm$ is risk neutral, Theorem \ref{Satz: Bedingungen impl. Vollstaendig} 
yields that it satisfies the drift condition, the consistency condition and the conditional expectation condition. 
Moreover, compatibility and the drift condition imply
\begin{eqnarray*}
 \beta_t(T,u) &=& b(t,\Psi_{t-})(T,u),\\
 \alpha_t(T,u) &=& -\partial_T\left(\gamma\Big(u,-i\int_t^T b(t,\Psi_{t-})(r,u)dr\Big)\right)
\end{eqnarray*}
a.s\ for any $u\in\mathbb R$ and almost any $T\in\rp$,  $t\in[0,T]$. 
(\ref{Gleichung:neu2}) implies that $\Psi$ solves the SDE 
\begin{equation}\label{e:ba}
 d\Psi_t = a(t,\Psi_{t-})dt + b(t,\Psi_{t-})dM_t,\quad\Psi_0=\psi_0.
\end{equation}
pointwise for any $T\in\rp$, $u\in\rr$.
By compatibility, $\Psi_t,\Psi_{t-}$ are $\raum{E}$-valued random variables. 
It is not hard to see that Equation (\ref{e:ba}) holds also in the sense of 
$\raum{E}$-valued processes. 
By Proposition \ref{Proposition: SDE ist eindeutig} $\Psi$ is the unique pathwise solution to the SDE. 
Thus $\Psi$ is adapted to the filtration $(\scr G_t)_{t\in\mathbb R_+}$.

{\em Step 2:}
Define a filtration $(\scr H_t)_{t\in\rp}$ via 
$\scr H_t:=\bigcap_{s>t}(\F_s\vee\G_\infty)$.
We show that $X^\perp-X_0=X-X^\|-X_0$ is a $\G_\infty$-conditional PII with respect to filtration 
$(\scr H_t)_{t\in\rp}$.
Indeed, adaptedness follows from the fact that both $X$ and $X^\perp$ 
are adapted to the original filtration $(\F_t)_{t\in\rp}$. 
By definition of conditional PII's in \cite[Section II.6.1]{js.87} it remains to be shown that
\begin{equation}\label{e:CPII}
 E\left(f(X^\perp_r-X^\perp_s)ZY\right)=E\left(E\big(f(X^\perp_r-X^\perp_s)\big|\G_\infty\big)
E(Z|\G_\infty)Y\right)
\end{equation}
for any $s\le r$, any bounded measurable function $f:\rr\to\rr$, 
any bounded $\scr H_s$-measurable random variable $Z$ 
and any bounded $\G_\infty$-measurable function $Y$.
By right-continuity of $X^\perp$, it suffices to consider only
$\F_s\vee\G_\infty$-measurable $Z$.
Standard measure theory yields that we can focus on functions of the form
$f(x)=e^{iux}$ for any $u\in\rr$ and $Z$ of the form $Z=1_F1_G$ with $F\in\F_s$ and 
$G\in\G_\infty$.
In view of (\ref{e:CPII}), it even suffices to discuss $Z=1_F$ 
because the second factor can be moved to $Y$.
Moreover, we may replace $X^\perp_r$ by $X^\perp_{r\wedge\tau_n}$, 
where the $\G_\infty$-measurable stopping times $\tau_n,n\in\mathbb N$ are defined by
$$\tau_n:=\inf\left\{\tilde t\ge s:\Re\Big(\int_s^{\tilde t}\Psi_{t}(t,u)dt\Big)\ge n\right\}.$$
Finally, $\G_\infty$-mesurability of $\Psi$ implies that we can write $Y$ as 
$$Y=\widetilde Y\exp\left(-\int_s^{r\wedge\tau_n}(\Psi_t(t,u)-\gamma(u,0))dt\right)$$
with some bounded $\G_\infty$-measurable $\tilde Y$.
The consistency condition and local independence of $X^\|,X^\perp$ imply that
$$\Psi_t(t,u)-\gamma(u,0)=\psi^X_t(u)-\psi_t^{X^\|}(u)=\psi^{X^\perp}_t(u), \quad u\in\rr$$
outside some $dP\otimes dt$-null set.
As above, standard measure theory
 yields that it suffices to consider 
$\widetilde Y$ of the form
$$\widetilde Y=\exp\left(i\int_0^Tv(t)d(X^\|,M)_t\right)$$
with $T\in[r,\infty)$ and bounded measurable $v=(v_1,v_2):[0,T]\to\rr^2$.
If we set 
$\G^+_s:=\sigma((X^\|,M)_t-(X^\|,M)_s:t\ge s)$, we have $\G_\infty=\G_s\vee\G^+_{s}$.
Moreover, $\G^+_s$ is independent of $\F_s$ because $(X^\|,M)$ 
is a L\'evy process with respect to filtration $(\F_t)_{t\in\rp}$.
Since $Z=1_F$ is $\F_s$-measurable, we have 
$E(Z|\G_\infty)=E(Z|\G_s)$,
cf.\ e.g.\ \cite[Satz 54.4]{bauer.78}.
This yields
\begin{eqnarray}\label{e:indep1}
 \lefteqn{E\left(E\big(f(X^\perp_{r\wedge\tau_n}-X^\perp_s)\big|\G_\infty\big)
E(Z|\G_\infty)Y\right)}\nonumber\\
&=&E\left(f(X^\perp_{r\wedge\tau_n}-X^\perp_s)YE(Z|\G_\infty)\right)\nonumber\\
&=&E\left(f(X^\perp_{r\wedge\tau_n}-X^\perp_s)YE(Z|\G_s)\right)\nonumber\\
&=&E\left(E\big(f(X^\perp_{r\wedge\tau_n}-X^\perp_s)Y\big|\G_s\big)Z\right).
\end{eqnarray}
It remains to be shown that
$E(f(X^\perp_{r\wedge\tau_n}-X^\perp_s)Y|\F_s)$
is in fact $\G_s$-measurable because together with (\ref{e:indep1}) this implies 
\begin{eqnarray*}
E\left(E\big(f(X^\perp_{r\wedge\tau_n}-X^\perp_s)\big|\G_\infty\big)
E(Z|\G_\infty)Y\right)
&=&E\left(E\big(f(X^\perp_{r\wedge\tau_n}-X^\perp_s)Y\big|\F_s\big)Z\right)\\
&=&E\left(f(X^\perp_{r\wedge\tau_n}-X^\perp_s)ZY\right)
\end{eqnarray*}
as claimed in (\ref{e:CPII}).

To this end, note that \normalsize{
\begin{eqnarray*}
\lefteqn{f(X^\perp_{r\wedge\tau_n}-X^\perp_s)Y}\\
&=&\exp\Bigg(iu (X^\perp_{r\wedge\tau_n}-X^\perp_s)
+i\int_0^Tv(t)d(X^\|,M)_t-\int_s^{r\wedge\tau_n}\psi^{X^\perp}_t(u)dt\Bigg)\\
&=&\exp\Bigg(i\int_0^T(u1_{(s,r\wedge\tau_n]}(t),v_1(t),v_2(t))d(X^\perp,X^\|,M)_t\\
&&{}-\int_0^T\psi^{(X^\perp,X^\|,M)}_t(u1_{(s,r\wedge\tau_n]}(t),v_1(t),v_2(t))dt\Bigg)\\
&&{}\times\exp\left(\int_0^T\psi^{(X^\|,M)}_t(v_1(t),v_2(t))dt\right)\\
&=&M_TD,
\end{eqnarray*}}
where $\psi^{(X^\perp,X^\|,M)},\psi^{(X^\|,M)}$
 denote local exponents in the sense of Definition \ref{Definition: lokaler Exponent 1. Version},
\begin{eqnarray*}
 M&=&\exp\Bigg(i\int_0^\cdot(u1_{(s,r\wedge\tau_n]}(t),v_1(t),v_2(t))d(X^\perp,X^\|,M)_t\\
&&{}-\int_0^\cdot\psi^{(X^\perp,X^\|,M)}_t(u1_{(s,r\wedge\tau_n]}(t),v_1(t),v_2(t))dt\Bigg),
\end{eqnarray*}
and $D$ stands for the remaining factor.
Since $D$ is deterministic and
$M$ is a bounded local martingale and hence a martingale, we have
\begin{equation}\label{e:msd}
 E(M_TD|\F_s)=M_sD,
\end{equation}
which is $\G_s$-measurable as desired.

{\em Step 3:}
Using the notation of Step 2, we show that
\begin{equation}\label{e:cpiicf}
 E\big(f(X^\perp_r-X^\perp_s)\big|\G_\infty\big)
=\exp\left(\int_s^{r}(\Psi_t(t,u)-\gamma(u,0))dt\right).
\end{equation}
Indeed, first note that we may replace $r$ with $r\wedge\tau_n$, $n\in\mathbb N$ by
right-continuity. Choosing $\G_\infty$-measurable $Y$ as in Step 2, we obtain using
(\ref{e:msd}):
\begin{eqnarray*}
E\big(f(X^\perp_{r\wedge\tau_n}-X^\perp_s)Y\big) 
&=&D\\
&=&E(\widetilde Y)\\
&=&E\left(\exp\Big(\int_s^{r\wedge\tau_n}(\Psi_t(t,u)-\gamma(u,0))dt\Big)Y\right),
\end{eqnarray*}
which yields the assertion.

{\em Step 4:}
We now show uniqueness of the law of $(X^\perp,X^\|,M,\Psi)$, which implies 
uniqueness of the law of $(\Psi,X,M)$.
To this end, observe that $(X^\|,M,\Psi)$ is $\G_\infty$-measurable whereas 
the conditional law of $X^\perp$ given $\G_\infty$ is determined by
the fact that $X^\perp-X_0$ is a $\G_\infty$-conditional PII with conditional characteristic function
(\ref{e:cpiicf}). Therefore, it suffices to prove uniqueness of the law of $(X^\|,M,\Psi)$.
This uniqueness, on the other hand, follows from Step 1 and Statement 2 in 
Proposition \ref{Proposition: SDE ist eindeutig}.

\item In Step 1 of the proof of Statement 1 it is shown that the codebook process $\Psi$
of (\ref{Gleichung:neu2}) solves SDE (\ref{G:Psi SDE}), i.e.\ it coincides with $\Phi$.
It suffices to show $\int_t^T\eta_s(r,\cdot)dr\in\Pi$ separately for $s<t\le T$ and for $t\le T\le s$.

By definition of option surface models, we have $\int_t^T\Psi_s(r,\cdot)dr\in\Pi$ for $s\le t\le T$.
Since $\eta_s(r,\cdot)=\Psi_s(r,\cdot)$ for $s<r$, this yields $\int_t^T\eta_s(r,\cdot)dr\in\Pi$ for $s<t\le T$.

For $r\le s$ outside some Lebesgue-null set, we have
\begin{eqnarray*}
 \eta_s(r,u)&=&\Psi_r(r,u)-\gamma(u,0)\\
&=&\psi^X_r(u)-\gamma(u,0)\\
&=&\psi^{X^\|}_r(u)+\psi^{X^\perp}_r(u)-\gamma(u,0)\\
&=&\psi^{X^\perp}_r(u), \quad u\in\rr,
\end{eqnarray*}
where we used the consistency condition in the second equality.
By Lemma \ref{Lemma: Semimartingalprojektion} and Remark \ref{r:pi} 
$u\mapsto\psi^{X^\perp}_t(u)$ is in $\Pi$. Lemma \ref{l:banach}(3)
yields that $\int_t^T\eta_s(r,\cdot)dr\in\Pi$ for $t\le T\le s$.
\item
{\em Construction of the codebook process}: By Theorem \ref{S:Ex von PII} there is a 
L\'evy process $(\Xp,M)$ on a complete filtered probability space 
$(\Omega^{(1)},\scr F^{(1)},(\scr F^{(1)}_t)_{t\in\mathbb R_+},P^{(1)})$ 
such that its (extended) L\'evy exponent is $\gamma$. 
Let $\Psi$ be the $\raum{E}$-valued c\`adl\`ag solution to the SDE
$$d\Psi_t = a(t,\Psi_{t-})dt + b(t,\Psi_{t-})dM_t,\quad\Psi_0=\psi_0$$
given by Proposition \ref{Proposition: SDE ist eindeutig}. $\Psi$ is an
$\raum{\Pi}$-valued process because this even holds for $\eta$  by assumption. 
It is not hard to find versions 
\begin{eqnarray*}
 \alpha_t(T,u) &:=& a(t,\Psi_{t-})(T,u), \\
 \beta_t(T,u) &:=& b(t,\Psi_{t-})(T,u)
\end{eqnarray*}
 for any  $t,T\in\mathbb R_+$, $u\in\mathbb R$ and a version of $\Psi$ such that
\begin{equation}\label{e:SDEp}
\Psi_t(T,u)=\Psi_0(T,u)+\int_0^{t\wedge T}\alpha_s(T,u)ds
+\int_0^{t\wedge T}\beta_s(T,u)dM_s
\end{equation}
for any $t,T\in\mathbb R_+$, $u\in\mathbb R$ almost surely.
More precisely, one can choose versions of the $\raum{E}$-valued processes
$\Psi,(a(t,\Psi_{t-}))_{t\in\rp},(b(t,\Psi_{t-}))_{t\in\rp}$ such that 
for any $T,u$ the $\mathbb C$-valued process $\Psi(T,u) $ is adapted, almost surely c\`adl\`ag,
and satisfies (\ref{e:SDEp}) almost surely.

{\em Construction of the return process}: 
As usual, let $(\mathbb D,\D,(\scr D_t)_{t\in\mathbb R_+})$ denote the Skorokhod space 
of real-valued c\`adl\`ag functions. Let $X^\perp$ be the canonical process on $\mathbb D$
and set
\begin{eqnarray*}
 \Omega &:=& \Omega^{(1)}\times\mathbb D,\\
 \scr F &:=& \scr F^{(1)}\otimes\scr D,\\
 \scr F_t &:=& \bigcap_{s>t}\big(\scr F_s^{(1)}\otimes\scr D_s\big).
\end{eqnarray*}
Fix $\omega_1\in\Omega^{(1)}$. Theorem \ref{S:Ex von PII} yields 
that there is a probability measure 
$P^{(2)}(\omega_1,\cdot)$ on $(\mathbb D,\scr D)$ such that 
$X_0^\perp=x_0$ a.s.\ and 
$X^\perp-X^\perp_0$ is a PII with characteristic function 
\begin{equation}\label{e:etadoppelt}
 u\mapsto\exp\left(\int_0^t\eta_\infty(r,u)(\omega_1)dr\right)=\exp\left(\int_0^t\eta_t(r,u)(\omega_1)dr\right).
\end{equation}
Measurability of $\eta$ implies that  $P^{(2)}$ is a transition kernel from 
$(\Omega^{(1)},\F^{(1)})$ to $(\mathbb D,\scr D)$. 
Therefore, 
$$P(d(\omega_1,\omega_2)):=(P^{(1)}\otimes P^{(2)})(d(\omega_1,\omega_2))
:=P^{(1)}(d\omega_1)P^{(2)}(\omega_1,d\omega_2)$$ 
defines a probability measure $P$ on $(\Omega,\F)$.

By abuse of notation we will use the same letters for the process $M,\Psi,\Xp,X^\perp$ 
embedded in the filtered probability space 
$(\Omega,\scr F,(\scr F_t)_{t\in\mathbb R_+},P)$,
i.e.\ we denote e.g.\ the process 
$((\omega_1,\omega_2),t)\mapsto M_t(\omega_1)$
again by $M$. Set
\begin{equation}\label{e:Xnormal}
 X:=\Xp+X^\perp
\end{equation}

Observe that $X^\perp-X_0$ is an 
$\scr F^{(1)}\otimes \{\emptyset,\mathbb D\}$-conditional PII 
relative to the filtration $(\G_t)_{t\in\rp}$ defined by
$$\G_t:=\bigcap_{s>t}\big(\scr F^{(1)}\otimes\scr D_s\big).$$
Denote by $(b,c,K)$  its local characteristics relative to some truncation function 
$h$. Then \cite[Theorem II.6.6]{js.87} yields
$$ iub_t(\omega)-\frac{u^2}{2}c_t(\omega)+\int(e^{iux}-1-iuh(x))K_t(\omega,dx)
=\eta(t,u)(\omega_1)$$ 
for almost all $\omega=(\omega_1,\omega_2)\in\Omega$, $t\in\mathbb R_+$, $u\in\mathbb R$. 
Both $X^\perp$ and $(b,c,K)$ are $(\scr F_t)_{t\in\mathbb R_+}$-adapted. From Proposition 
\ref{P:erhalt der lokalen char} 
it follows that $X^\perp$ is a semimartingale with respect to this smaller filtration 
$(\scr F_t)_{t\in\mathbb R_+}$ with the same local characteristics $(b,c,K)$.

We want to show that $(\Xp,M)$ is a L\'evy process on $\fraum$ as well.
By right-continuity of $(\Xp,M)$ it suffices to prove that
$$E\big(U\big\vert\scr F_s^{(1)}\otimes\scr D_s\big)=E(U)$$
for any $s,t\in\mathbb R_+$ with $ s\leq t$ and any $u\in\mathbb R^2$,
where 
$$U:=\exp\left(iu\big((\Xp,M)_t-(\Xp,M)_s\big)\right).$$
It suffices to show that
$$E(UVW)=E(U)E(VW)$$
for any bounded 
$\scr F^{(1)}_s\otimes\{\emptyset,\mathbb D\}$-measurable $V$
and any bounded 
$\{\emptyset,\Omega^{(1)}\}\otimes \scr D_s$-measurable $W$.
The conditional law of $(X^\perp_r)_{r\le s}$ given 
$\scr F^{(1)}\otimes \{\emptyset,\mathbb D\}$ is
$\scr F^{(1)}_s\otimes \{\emptyset,\mathbb D\}$-measurable 
because $\eta(r,u)$ is $\scr F^{(1)}_s$-measurable for any $r\le s$.
This implies 
$$E\big(W\big|\scr F^{(1)}\otimes \{\emptyset,\mathbb D\}\big)
=E\big(W\big|\scr F^{(1)}_s\otimes \{\emptyset,\mathbb D\}\big)$$
because $W$ is a measurable function of $(X^\perp_r)_{r\le s}$.
Moreover, $U$ is independent of $\scr F^{(1)}_s\otimes \{\emptyset,\mathbb D\}$
because $(\Xp,M)$ is a L\'evy process on $\Omega^{(1)}$.
This yields 
\begin{eqnarray*}
E(UVW)
&=&E\left(UVE\big(W\big|\scr F^{(1)}\otimes \{\emptyset,\mathbb D\}\big)\right)\\
&=&E\left(UVE\big(W\big|\scr F^{(1)}_s\otimes \{\emptyset,\mathbb D\}\big)\right)\\
&=&E\left(E\big(U\big|\scr F^{(1)}_s\otimes \{\emptyset,\mathbb D\}\big)VW\right)\\
&=&E(U)E(VW)
\end{eqnarray*}
as desired.

{\em Compatibility of the constructed model}: We have $\beta_t=b(t,\Psi_{t-})$, $X_0=x_0$, $\Psi_0=\psi_0$ 
and $\gamma=\psi^{(\Xp,M)}$. We must show that the dependent part 
of $X$ relative to $M$ is $\Xp$. Since $\Xp$ is the dependent part of $\Xp$ relative to $M$, it remains to be 
shown that $M$ and $X^\perp$ are locally independent. Since $M$ is a subordinator, it suffices to prove that
$$P(\exists t\in\mathbb R_+:\Delta M_t\neq0,\Delta X^\perp_t\neq0)=0.$$
Let $J:=\{ t\in\rp:\Delta M_t\neq 0\}$ denote the set of jump times of $M$. Then $J$ is almost surely countable and
\begin{eqnarray*}
P\big(\exists t\in\mathbb R_+:\Delta M_t\neq0,\Delta X^\perp_t\neq0\big) 
&\leq&  E\left(\sum_{s\in J}1_{\{\Delta X^\perp_s\neq0\}}\right)\\
&=& E\left( \sum_{s\in J} P\big(\Delta X^\perp_s\neq 0\big\vert M\big)\right)\\
&=& 0
\end{eqnarray*}
because 
\begin{eqnarray*}
E\big(\exp(iu \Delta X^\perp_s)\big\vert \scr F^{(1)}\otimes\{\emptyset,\mathbb D\}\big) 
= \exp\left(\int_{s-}^s\eta(r,u)dr\right) =0,\quad u\in\rr
\end{eqnarray*}
and hence $P(\Delta X^\perp_s\neq 0\vert M)=0$.

{\em Risk neutrality of the constructed model}: 
The constructed model satisfies the consistency and the drift condition. 
Hence Theorem \ref{Satz: Bedingungen impl. Vollstaendig} yields risk-neutrality.

\end{enumerate}
\end{proof}

Examples illustrating the previous result are to be found in Sections \ref{Abschnitt:Verschwindendes Beta} 
and \ref{Abschnitt: Affine Modelle} below. In general, however, it is not obvious why the 
solution to SDE \eqref{G:Psi SDE} should satisfy 
the condition in Statement 3 of Theorem \ref{Satz: starker Existenzsatz}.
If this is not the case, a compatible risk-neutral option surface model does not exist. 
As a way out, we introduce a 
weaker form of compatibility, 
which assumes \eqref{G:Psi SDE} to hold only up to some maximal stopping time.
For related discussions on stochastic invariance problems, we refer the reader to \cite{filipovic.al.09,filipovic.al.12}.

\begin{definition}\label{Definition: schwach kompatibel}
\begin{enumerate}
 \item An option surface model $\osm$ is called {\em $\tau$-weakly compatible} with building blocks 
$(x_0,\psi_0,b,\gamma)$ if
\begin{itemize}
 \item $\tau$ is a stopping time,
 \item $X_0=x_0$,
 \item $\Psi_0 = \psi_0$,
 \item $\rp\to\raum{E}$, $t\mapsto \Psi_{t}(\omega)$ is  well defined and a.s.\ c\`adl\`ag,
 \item $\beta_{t}(\omega) = b(t,\Psi_{t-}(\omega))$ for $dP\otimes dt$-almost any 
$(\omega,t)\in [\![0,\tau]\!]$, 
 \item $\psi^{(X^\Vdash,M)}(u,v)=\gamma(u,v)$ for $(u,v)\in\mathbb R\times(\rr+i\rp)$,
where $X^\Vdash$ denotes some process which coincides on
$[\![0,\tau]\!]$
with  the dependent part $\Xp$ of $X$ relative to $M$.
\end{itemize}
\item Let $\osm$ denote an option surface model which is $\tau$-weakly compatible 
with building blocks $(x_0, \psi_0,b,\gamma)$. It is called {\em maximal weakly compatible} if
\begin{itemize}
 \item 
\begin{equation}\label{e:tau}
 \tau =\inf\big\{t\in\mathbb R_+:\eta_t\notin\raum{\Pi}\big\}\quad\text{a.s},
\end{equation}
where $\Phi$ denotes the unique $\raum{E}$-valued solution to  SDE \eqref{G:Psi SDE} 
from Proposition \ref{Proposition: SDE ist eindeutig} and $\eta$ is defined as in (\ref{e:eta}),
\item $t\mapsto\Psi_t(T,u)$ from (\ref{Gleichung:neu2}) stays constant after $\tau$.
\end{itemize}
\end{enumerate}
\end{definition}

We can now state our general existence and uniqueness result.

\begin{theorem}\label{Satz: schwacher Existenz und Eindeutigkeitssatz}
 Let $(x_0,\psi_0,b,\gamma)$ be building blocks. 
\begin{enumerate}
 \item There exists a maximal weakly compatible and risk-neutral option surface model.
 \item \sloppy{ Any two maximal weakly compatible and risk-neutral option surface models 
$\osm$, $\osmzwei$ coincide in law, i.e.\ $(\Psi,X,M)$ and $(\widetilde\Psi,\widetilde X, \widetilde M)$ have the same law
on $\mathbb D(\raum{E})\times\mathbb D(\rr^2)$.}
 \item If a compatible risk-neutral option surface model exists, then any maximal 
weakly compatible risk-neutral option surface model is in fact compatible.
\end{enumerate}
\end{theorem}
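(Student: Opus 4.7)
The plan is to re-use the construction of Theorem \ref{Satz: starker Existenzsatz} but to truncate it at the stopping time $\tau$ of \eqref{e:tau}, and then to treat the three statements in turn.

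For Statement 1 I would first invoke Theorem \ref{S:Ex von PII} to construct a L\'evy process $(\Xp,M)$ with extended exponent $\gamma$ on some filtered space $(\Omega^{(1)},\F^{(1)},(\F^{(1)}_t)_{t\in\rp},P^{(1)})$, and then Proposition \ref{Proposition: SDE ist eindeutig} to obtain the unique $\raum{E}$-valued solution $\Phi$ of SDE \eqref{G:Psi SDE}. Defining $\tau$ pathwise by \eqref{e:tau} and setting $\alpha_t:=a(t,\Phi_{t-})1_{[\![0,\tau]\!]}(t)$, $\beta_t:=b(t,\Phi_{t-})1_{[\![0,\tau]\!]}(t)$ freezes the associated codebook $\Psi$ of \eqref{Gleichung:neu2} after $\tau$. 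On $[\![0,\tau[\![$ one has $\eta_t\in\raum{\Pi}$ by the very definition of $\tau$, so the construction in the proof of Statement~3 of Theorem \ref{Satz: starker Existenzsatz} produces a $\G_\infty$-conditional PII $X^\perp$ with characteristic function \eqref{e:etadoppelt} up to $\tau$; past $\tau$ I would extend $X$ in a way that preserves the consistency condition and the requirement that $u\mapsto\int_t^T\Psi_s(r,u)dr\in\Pi$ for $s\le t\le T$ (the failure $\eta_\tau\notin\raum{\Pi}$ does not forbid $\int_t^T\Phi_\tau(r,u)dr\in\Pi$ when $\tau\le t\le T$, and one can freeze the relevant components to secure this). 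Routine measurability and integrability checks along the lines of Steps 1--3 of the proof of Theorem \ref{Satz: starker Existenzsatz} then ensure the resulting quintuple lies in the framework of Definition \ref{Def: osm}, and Theorem \ref{Satz: Bedingungen impl. Vollstaendig} delivers risk neutrality via the drift and consistency conditions.

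For Statement 2, let $\osm$ and $\osmzwei$ be two maximal weakly compatible risk-neutral option surface models. Steps 1--2 of the proof of Theorem \ref{Satz: starker Existenzsatz}, applied on the stochastic interval $[\![0,\tau]\!]$, show that the stopped codebook in each model coincides with the stopped unique SDE solution $\Phi^\tau$; since $\tau$ is a pathwise functional of $\Phi$ whose law is determined by Proposition \ref{Proposition: SDE ist eindeutig}, the joint law of $(\Psi^\tau,\Xp,M,\tau)$ is pinned down by the building blocks. Step~3 of that proof identifies the $\G_\infty$-conditional characteristic function of $X^\perp$ on $[\![0,\tau[\![$ via \eqref{e:cpiicf}. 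After $\tau$ the codebook is frozen by Definition \ref{Definition: schwach kompatibel}(2), and the consistency condition forces the local exponent of $X$ to equal $\Phi_\tau(\cdot,\cdot)$, which together with $\psi^{(\Xp,M)}=\gamma$ uniquely determines the conditional law of $(X^\perp,\Xp,M)$ given $\F_\tau$. Step~4 of the proof of Theorem \ref{Satz: starker Existenzsatz} then assembles these pieces into uniqueness in law on $\mathbb{D}(\raum{E})\times\mathbb{D}(\rr^2)$.

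For Statement 3, if a compatible risk-neutral model exists, Statement~2 of Theorem \ref{Satz: starker Existenzsatz} asserts that the associated $\eta$ takes values in $\raum{\Pi}$ almost surely for every $t$, whence $\tau=\infty$ a.s.\ in \emph{any} maximal weakly compatible model (its SDE-solution $\Phi$ has the same law by Proposition \ref{Proposition: SDE ist eindeutig}). Maximal weak compatibility with $\tau=\infty$ is exactly compatibility in the sense of Definition \ref{Definition: kompatibel}, yielding Statement~3. The principal obstacle throughout is the book-keeping past $\tau$ in Statement~1: because $\eta_\tau$ need not lie in $\raum{\Pi}$, the na\"ive conditional-PII construction of $X^\perp$ breaks down at $\tau$, and one must extend $X$ and freeze $\Psi$ so that the full quintuple still satisfies the measurability, integrability, and $\Pi$-integral requirements of Definition \ref{Def: osm}, together with the consistency condition required to invoke Theorem \ref{Satz: Bedingungen impl. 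Vollstaendig}.
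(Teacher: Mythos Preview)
Your overall strategy---stop the SDE solution at $\tau$ and re-run the construction and uniqueness arguments of Theorem \ref{Satz: starker Existenzsatz}---matches the paper's, and Statement 3 is handled identically. For Statement 1 the paper is more explicit about the extension past $\tau$ than your sketch: it sets $X:=(X^\Vdash)^\tau+X^\perp$ with $X^\perp$ the conditional PII built from the \emph{stopped} process $\eta^\tau_t:=\eta_{t\wedge\tau}$ in \eqref{e:etadoppelt}, rather than leaving the extension of $X$ as an unspecified ``way that preserves the consistency condition.'' Making this concrete is what actually closes the book-keeping you flag at the end.

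There is, however, a genuine gap in your Statement 2 argument. You assert that after $\tau$ ``$\psi^{(\Xp,M)}=\gamma$ uniquely determines the conditional law of $(X^\perp,\Xp,M)$ given $\F_\tau$.'' But Definition \ref{Definition: schwach kompatibel} does \emph{not} give $\psi^{(\Xp,M)}=\gamma$ beyond $\tau$: it only guarantees the existence of \emph{some} process $X^\Vdash$ with $\psi^{(X^\Vdash,M)}=\gamma$ that agrees with $\Xp$ on $[\![0,\tau]\!]$. After $\tau$ the dependent part $\Xp$ of $X$ relative to $M$ is not tied to $\gamma$, so invoking $\gamma$ there to pin down the conditional law is unjustified. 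The paper avoids this by never referring to $\Xp$ after $\tau$: it replaces $X^\|$ by $(X^\Vdash)^\tau$ (whose joint law with $M$ \emph{is} determined by the building blocks, since $(X^\Vdash,M)$ is L\'evy with exponent $\gamma$ and $\tau$ is a pathwise functional of $\Phi$), replaces $\gamma(u,0)$ by $\gamma(u,0)1_{[\![0,\tau]\!]}$, and lets $X-(X^\Vdash)^\tau$ play the role of the conditional PII in Steps 2--4 of Theorem \ref{Satz: starker Existenzsatz}, now run against the stopped SDE \eqref{G:Psi SDEtau}. You should reformulate your uniqueness argument in terms of $(X^\Vdash)^\tau$ rather than $\Xp$.
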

\begin{proof}
\begin{enumerate}
 \item 
 Let $(X^\Vdash,M)$ be a L\'evy process with characteristic exponent $\gamma$. 
Define $a$  as in Proposition \ref{Proposition: SDE ist eindeutig},
$\Phi$ as the unique $\raum{E}$-valued solution to SDE \eqref{G:Psi SDE},
and the $\raum{E}$-valued adapted c\`adl\`ag process $\eta$  as in (\ref{e:eta}).
\cite[Problem 2.9.1, Theorem 2.1.6]{ethier.kurtz.86} and \cite[Lemma I.1.19]{js.87} 
yield that there is a stopping time $\tau$ which satisfies Equation \eqref{e:tau}. 
Set 
\begin{eqnarray}
 \Psi_t &:=& \Phi_{t\wedge\tau},\nonumber\\
 \alpha_t(T,u) &:=& a(t,\Psi_{t-})(T,u)1_{[\![ 0,\tau]\!]}(t)\label{e:dreia},\\
 \beta_t(T,u) &:=& b(t,\Psi_{t-})(T,u)1_{[\![ 0,\tau]\!]}(t).\label{e:dreib}
\end{eqnarray}
More specifically, it is not hard to find versions of the right-hand sides of (\ref{e:dreia},\ref{e:dreib})
such that (\ref{e:SDEp}) holds up tp $\tau$.
We have $\Psi_0=\psi_0$.
Along the same lines as in the proof of Theorem \ref{Satz: starker Existenzsatz} 
we can now construct a return process $X$ such that $\osm$ is a $\tau$-weakly compatible 
and risk-neutral option surface model. 
More specifically, $\eta$ in (\ref{e:etadoppelt}) must be replaced by
$\eta^\tau_t(\cdot,\cdot):=\eta_{t\wedge\tau}(\cdot,\cdot)$
and $X$ in (\ref{e:Xnormal}) by $X:=(X^\Vdash)^\tau+X^\perp$.
The option surface model $\osm$ is in fact maximal weakly compatible.
\item 
\sloppy{By Proposition \ref{Proposition: SDE ist eindeutig}(2), 
the law of $(X^\Vdash,M,\Phi)$ is uniquely determined by $(x_0,\psi_0,b,\gamma)$, 
where $\Phi$ denotes the solution to SDE \eqref{G:Psi SDE}.
\cite[Problem 2.9.1, Theorem 2.1.6]{ethier.kurtz.86} and \cite[Lemma I.1.19]{js.87} 
yield that there is a stopping time $\tau$ which satisfies Equation \eqref{e:tau}.
Consider now the following stopped variant of SDE \eqref{G:Psi SDE}:}
 \begin{eqnarray}\label{G:Psi SDEtau}
d\Psi_t 
&=& a(t,\Psi_{t-})1_{[\![ 0,\tau]\!]}(t)dt+b(t,\Psi_{t-})1_{[\![ 0,\tau]\!]}(t)dM_t,\quad\Psi_0=\psi_0.
\end{eqnarray}
The uniqueness statement follows now along the same lines as in the proof of
Theorem \ref{Satz: starker Existenzsatz} if we replace \eqref{G:Psi SDE} by \eqref{G:Psi SDEtau},
$X^\|$ by $(X^\Vdash)^\tau$, and $\gamma(u,0)$ by $\gamma(u,0)1_{[\![ 0,\tau]\!]}$.
\item Let $\osm$ be a maximal weakly compatible and risk-neutral option surface model, 
$\Phi$ the unique $\raum{E}$-valued solution to SDE \eqref{G:Psi SDE}, and
$\tau$ as in (\ref{e:tau}).
Theorem \ref{Satz: starker Existenzsatz}(2) yields $\tau=\infty$
because there exists some compatible and risk-neutral option surface model.
This implies that $\osm$ is a compatible option surface model.

\end{enumerate}
\end{proof}

\subsection{Vanishing coefficient process $\beta$}\label{Abschnitt:Verschwindendes Beta}
The simplest conceivable codebook model \eqref{Gleichung:neu} is obtained 
for building blocks $(x_0,\psi_0,b,\gamma)$ where $b=0$ or equivalently $\gamma=0$. 
Not surprisingly, it leads to constant codebook processes and hence to the simple model class 
that we used to motivate option surface models in Section \ref{Abschnitt: ZeitinhomogeneLevys}.

\begin{corollary}\label{Satz: Existenzsatz fuer PII Modelle}
 Let $(x_0,\psi_0,b,\gamma)$ be building blocks with $\gamma=0$. 
Then there is a compatible risk-neutral option surface model $\osm$. 
For any such model, $X-X_0$ is a PII with characteristic function
$$E\big(\exp(iu(X_T-X_0))\big) 
= \exp\left(\int_0^T\psi_0(r,u)dr\right),\quad u\in\mathbb R, \quad T\in\rp.$$
In particular, the law of $X$ is uniquely determined.
\end{corollary}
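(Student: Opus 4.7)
My plan is to reduce the entire statement to Theorem \ref{Satz: starker Existenzsatz} by noting that $\gamma=0$ trivialises everything. First I would observe that since the extended L\'evy exponent $\gamma$ of $(\Xp,M)$ vanishes identically, we have in particular $\gamma(0,\cdot)=0$ and $\partial_2\gamma\equiv 0$. The first equality forces the driving subordinator $M$ to be the zero process almost surely; the second equality, combined with formula \eqref{G:Driftbedingung, Markovsch}, makes the drift coefficient $a(t,\psi)$ vanish identically on $\mathbb R_+\times\raum{E}$. Consequently SDE \eqref{G:Psi SDE} collapses to $d\Phi_t=0$ with initial condition $\Phi_0=\psi_0$, whose unique $\raum{E}$-valued solution (provided by Proposition \ref{Proposition: SDE ist eindeutig}) is the constant process $\Phi_t\equiv\psi_0$.

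Next I would check the hypothesis of Theorem \ref{Satz: starker Existenzsatz}(3). The auxiliary process $\eta$ from \eqref{e:eta} reduces to $\eta_t(T,u)=\Phi_t(T,u)-\gamma(u,0)1_{[0,t]}(T)=\psi_0(T,u)$, and this lies in $\raum{\Pi}$ by the very definition of building blocks. Hence Theorem \ref{Satz: starker Existenzsatz}(3) applies and yields the existence of a compatible risk-neutral option surface model $\osm$.

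For the explicit form of the law, let $\osm$ be any compatible risk-neutral model. Compatibility forces $\psi^{(\Xp,M)}=\gamma=0$, so both $\Xp$ and $M$ vanish and therefore $X=X^\perp$. Moreover the codebook $\Psi$ solves SDE \eqref{G:Psi SDE} with vanishing drift and a zero driver, so $\Psi_t=\psi_0$ for all $t$. The conditional expectation condition (equivalent to risk-neutrality by Theorem \ref{Satz: Bedingungen impl. Vollstaendig}) then gives
\begin{equation*}
 E\bigl(e^{iu(X_T-X_t)}\bigm|\F_t\bigr)=\exp\left(\int_t^T\Psi_t(r,u)dr\right)=\exp\left(\int_t^T\psi_0(r,u)dr\right)
\end{equation*}
for every $0\le t\le T$ and every $u\in\mathbb R$. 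Since the right-hand side is deterministic, $X_T-X_t$ is independent of $\F_t$ with the displayed (infinitely divisible) distribution; this is exactly the definition of $X-X_0$ being a PII, and specialising to $t=0$ gives the claimed characteristic function. Uniqueness of the law of $X$ then either follows at once from this explicit formula together with $X_0=x_0$, or more abstractly from Theorem \ref{Satz: starker Existenzsatz}(1).

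There is no real obstacle here, only a small bookkeeping point to handle carefully: one must verify that compatibility together with $\gamma=0$ really does force both $M\equiv 0$ and $\Psi\equiv\psi_0$ in every compatible model (not just in the model produced by Theorem \ref{Satz: starker Existenzsatz}(3)), so that the identification $X=X^\perp$ is legitimate and the conditional expectation condition can be invoked with the constant codebook.
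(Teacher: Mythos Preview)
Your proposal is correct and follows essentially the same route as the paper: observe that $\gamma=0$ forces $M\equiv0$ and $a\equiv0$, so the unique solution of SDE~\eqref{G:Psi SDE} is the constant $\Phi_t=\psi_0\in\raum{\Pi}$, whence Theorem~\ref{Satz: starker Existenzsatz}(3) gives existence, and the conditional expectation condition (via Theorem~\ref{Satz: Bedingungen impl. Vollstaendig}) yields the PII property and the characteristic function for any compatible model. You actually spell out more carefully than the paper why the codebook of \emph{every} compatible risk-neutral model must equal $\psi_0$ (the paper leaves this implicit, relying on Step~1 of the proof of Theorem~\ref{Satz: starker Existenzsatz}(1)); your aside that $X^\parallel=0$ and $X=X^\perp$ is correct but not needed for the argument.
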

\begin{proof}
Let $\Psi$ be the solution to SDE \eqref{G:Psi SDE} given by Proposition \ref{Proposition: SDE ist eindeutig}. 
Then we have 
$\Psi_t(T,u)=\psi_0(T,u)$
for all $t,T\in\mathbb R_+,u\in\mathbb R$ because $M=0$ and $\alpha=0$. In particular, 
$\Psi$ is an $\raum{\Pi}$-valued process. Thus the existence of a compatible risk-neutral option surface model $\osm$
follows from Theorem \ref{Satz: starker Existenzsatz}. Theorem \ref{Satz: Bedingungen impl. Vollstaendig} 
yields that $\osm$ satisfies the conditional expectation condition. Hence
$$E\left(e^{iu(X_T-X_t)}\Big\vert\scr F_t\right)=\exp\left(\int_t^T\psi_0(r,u)dr\right).$$
In particular, $X-X_0$ is a PII by definition.
\end{proof}

The Black-Scholes model is obtained for a particular choice of the initial state of the codebook.
\begin{example}[Black-Scholes model]
If we choose $\psi_0(T,u):=-(iu+u^2)\sigma^2/2$ in Theorem \ref{Satz: Existenzsatz fuer PII Modelle} 
for some $\sigma>0$, we obtain 
$E(e^{iuX_T})=\exp(iuX_0-iu{\frac{\sigma^2}{2}}T-{\frac{u^2}{2}}\sigma^2T)$,
which means $X_T\sim N(X_0-{\frac{\sigma^2}{2}}T,\sigma^2T)$, $T\in\mathbb R_+$.
Put differently, the return process $X$ is Brownian motion with drift rate $-\sigma^2/2$ and volatility $\sigma$.
\end{example}

\subsection{Deterministic coefficient process $\beta$}\label{Abschnitt: Affine Modelle}
In this section we consider building blocks $(x_0,\psi_0,b,\gamma)$ where $b$ depends on the time parameter only. 
Then $a$ defined as in Equation \eqref{G:Driftbedingung, Markovsch} also depends only on time. 
Thus SDE \eqref{G:Psi SDE} is solved by mere integration.
In the following, we omit the redundant argument $\psi$ and write $b(t)$, $a(t)$ for $b(t,\psi)$, $a(t,\psi)$, respectively.

\begin{corollary}\label{Korollar: Affine I}
 Let $(x_0,\psi_0,b,\gamma)$ be building blocks such that $b$ is constant in its second variable, 
i.e.\ $\beta_t(T,u):=b(t,\psi)(T,u)$ does not depend on $\psi$. 
\begin{enumerate}
 \item \sloppy{Then there exists a maximal weakly compatible risk-neutral option surface model. 
Any two such models $\osm, \osmzwei$ coincide in law,
i.e.\ $(\Psi,X,M)$ and $(\widetilde\Psi,\widetilde X, \widetilde M)$ have the same law
on $\mathbb D(\raum{E})\times\mathbb D(\rr^2)$.}
\item Let $a$ be defined as in  \eqref{G:Driftbedingung, Markovsch}.
The model in Statement 1 is compatible
if and only if the deterministic mappings
\begin{equation}\label{e:phi}
 \varphi_t:(T,u)\mapsto\psi_0(T,u) + \int_0^t a(s)(T,u)ds - \gamma(u,0)1_{[0,t]}(T)
\end{equation}
are in $\raum{\Pi}$ for any $t\in\rp$.
\end{enumerate}
\end{corollary}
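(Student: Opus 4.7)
The argument splits into two parts. Statement (1) is an immediate application of Theorem~\ref{Satz: schwacher Existenz und Eindeutigkeitssatz}(1,2) to the given building blocks, since no further structure on $b$ beyond what is required in Definition~\ref{D:Bausteine} is used there.

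For Statement (2), combining Theorem~\ref{Satz: schwacher Existenz und Eindeutigkeitssatz}(3) with Theorem~\ref{Satz: starker Existenzsatz}(2,3) reduces compatibility of the maximal weakly compatible model to the almost-sure assertion that $\eta_t \in \raum{\Pi}$ for every $t \in \rp$, where $\eta$ is built from the $\raum{E}$-valued solution $\Phi$ of SDE~\eqref{G:Psi SDE}. Since $b$ is independent of $\psi$, the coefficient $a$ in~\eqref{G:Driftbedingung, Markovsch} is likewise deterministic, so the SDE admits the closed-form solution
\[
 \Phi_t = \psi_0 + \int_0^t a(s)\,ds + \int_0^t b(s)\,dM_s,
 \quad\text{whence}\quad
 \eta_t = \varphi_t + \int_0^t b(s)\,dM_s.
\]
Thus the plan reduces to proving that the almost-sure condition $\eta_t \in \raum{\Pi}$ is equivalent to the deterministic condition $\varphi_t \in \raum{\Pi}$ for every $t$.

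The direction $\varphi_t \in \raum{\Pi} \Rightarrow \eta_t \in \raum{\Pi}$ a.s.\ follows cleanly from the structural lemmas on $\raum{\Pi}$. By Definition~\ref{D:Bausteine}(4), $b(s) \in \raum{\Pi}$ for every $s$; since $M$ is increasing, Lemma~\ref{l:banach}(5) yields $\int_0^t b(s)\,dM_s \in \raum{\Pi}$ pathwise. Combining with $\varphi_t \in \raum{\Pi}$ and the convex cone property of $\Pi$ from Lemma~\ref{l:banach}(3) gives $\eta_t \in \raum{\Pi}$ almost surely.

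For the converse, one exploits that $\varphi_t$ is deterministic while $L_t := \int_0^t b(s)\,dM_s$ is random: on any event of positive probability where $L_t$ vanishes one has $\eta_t = \varphi_t$, so the a.s.\ statement forces the deterministic $\varphi_t$ into $\raum{\Pi}$. If the L\'evy measure of $M$ is finite, then $\{M_t = 0\}$ has positive probability $e^{-\nu(\rp)t}$ and the conclusion is immediate. The main obstacle is the case of an infinite L\'evy measure, where $\{M_t = 0\}$ is a null set; here one needs an approximation argument selecting paths of $M$ with arbitrarily small cumulative jump contribution, combined with the Borel-measurability of $\raum{\Pi}$ from Lemma~\ref{l:banach}(6), to pass from the random $\eta_t$ to the deterministic $\varphi_t$.
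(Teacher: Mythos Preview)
Your treatment of Statement~(1) and the $\Leftarrow$ direction of Statement~(2) matches the paper's and is correct.

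For the $\Rightarrow$ direction in the infinite L\'evy measure case, you have identified the right mechanism---selecting paths of $M$ with arbitrarily small jump contribution---and this is exactly what the paper does via Corollary~\ref{Korollar: essinf und beschraenkte Integration}, which yields a sequence $(\omega_n)$ along which $\int_0^t b(s)\,dM_s$ tends to zero. However, your proposed passage to the limit has a gap: Borel-measurability of $\raum{\Pi}$ (Lemma~\ref{l:banach}(6)) does \emph{not} let you conclude that a limit of elements of $\raum{\Pi}$ lies in $\raum{\Pi}$. Measurable sets need not be closed, so this step as written does not go through.

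The paper closes the gap differently. It fixes $T_1\le T_2$ and shows that $\int_{T_1}^{T_2}\eta_t(r,\cdot)\,dr\,(\omega_n)\in\Pi$ converges pointwise to the continuous function $\int_{T_1}^{T_2}\varphi_t(r,\cdot)\,dr$. By a classical result on limits of infinitely divisible characteristic exponents (\cite[Theorem~5.3.3]{lukacs.70}), the limit is itself a L\'evy exponent. To upgrade from ``L\'evy exponent'' to ``element of $\Pi$'' (i.e.\ $E(e^Y)=1$), the paper observes that $\int_{T_1}^{T_2}\varphi_t(r,\cdot)\,dr$ arises as a \emph{difference} of two elements of $\Pi$ via~\eqref{e:etaf}; this forces the exponential-martingale normalisation. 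You would need to supply an argument of this kind in place of the appeal to measurability.
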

\begin{proof}
\begin{enumerate}
\item The first assertion follows from Theorem \ref{Satz: schwacher Existenz und Eindeutigkeitssatz}. 
\item $\Rightarrow$:
Suppose that a compatible risk-neutral option surface model exists.
Fix $t\in\rp$. Since the mapping $\varphi_t$ in (\ref{e:phi}) is in $\raum{E}$, it remains to be shown that
$\int_{T_1}^{T_2}\varphi_t(r,\cdot)dr\in\Pi$ for any $t\in\rp$ and any $T_1\le T_2$.
For $\eta_t$ as in (\ref{e:eta}) we have that
\begin{eqnarray}\label{e:etaf}
 \int_{T_1}^{T_2}\eta_t(r,\cdot)dr-\int_{T_1}^{T_2}
 \varphi_t(r,\cdot)dr&=&\int_{T_1}^{T_2}\int_0^{t\wedge T_2}b(s)(r,\cdot)dM_sdr\nonumber\\
 &=&\int_0^{t\wedge T_2}\int_{T_1}^{T_2}\beta_s(r,\cdot)drdM_s
\end{eqnarray}
is in $\Pi$ by Lemma \ref{l:banach}(5).
Corollary \ref{Korollar: essinf und beschraenkte Integration} yields
$$\bigg\vert\int_0^{t\wedge T_2}\int_{T_1}^{T_2}\beta_s(r,u)drdM_s\bigg\vert(\omega_n)
\underset{n\rightarrow\infty}{\longrightarrow}0$$
for some sequence $(\omega_n)_{n\in\mathbb N}$ in $\Omega$. The proof actually shows that the same sequence 
can be chosen for all $u\in\mathbb R$. Hence there is a sequence $(\omega_n)_{n\in\mathbb N}$ in $\Omega$ 
such that $$\left( \int_{T_1}^{T_2}\eta_t(r,\cdot)dr\right)(\omega_n)\in\Pi$$
for all $n\in\mathbb N$ and 
$$\lim_{n\to\infty}\left( \int_{T_1}^{T_2}\eta_t(r,u)dr\right)(\omega_n)=\int_{T_1}^{T_2}\varphi_t(r,u)dr$$ 
for all $u\in\mathbb R$. 
Since the continuous mapping $u\mapsto\int_{T_1}^{T_2}\varphi_t(r,u)dr$ 
is the pointwise limit of characteristic exponents 
of infinitely divisible distributions, \cite[Theorem 5.3.3]{lukacs.70} yields that it is a 
characteristic exponent of an infinitely divisible distribution as well. 
By (\ref{e:etaf}) it is a difference of functions in $\Pi$ and hence itself in $\Pi$.

$\Leftarrow$: Suppose conversely that the mappings $\varphi_t$ in (\ref{e:phi}) are in $\raum{\Pi}$. 
Since
\begin{eqnarray*}
  \eta_t(T,u)&=&\Phi_t(T,u)-\gamma(u,0)1_{[0,t]}(T)\\
  &=&\varphi_t(T,u)+\int_0^tb(t)(T,u)dM_t,
 \end{eqnarray*}
 the assumption and Lemma \ref{l:banach}(5) yield
  $\eta_t \in\raum{\Pi}$.

\end{enumerate}
\end{proof}

The condition in Statement 2 of Corollary \ref{Korollar: Affine I} means that the initial codebook state $\psi_0$ 
must be greater or equal than
$$\mu_t:(T,u)\mapsto\gamma(u,0)1_{[0,t]}(T)-\int_0^t a(s)(T,u)ds$$
in the sense that $\psi_0-\mu_t\in\raum{\Pi}$ for any $t\in\rp$.
Put differently, the initial option prices must be large enough to 
allow for a compatible risk-neutral option surface model. 

\begin{remark}
For deterministic $\beta$ as in Corollaries \ref{Satz: Existenzsatz fuer PII Modelle} and \ref{Korollar: Affine I} 
it may not be obvious
why one should require the c\`adl\`ag property of the codebook in
Definition \ref{Definition: kompatibel}. However, in this case it holds automatically.
Indeed, let $(x_0,\psi_0,b,\gamma)$ be building blocks and $\osm$ a compatible risk-neutral option surface model. 
Then $\alpha$ and $\beta$ are deterministic and
$$\Psi_t(T,u) = \Psi_0(T,u)+\int_0^t\alpha_s(T,u)ds + \int_0^t\beta_s(T,u)dM_s.$$
It is not hard to conclude that
$$\Psi_t = \Psi_0+\int_0^t\alpha_sds + \int_0^t\beta_sdM_s$$
holds in the sense of $\raum{E}$-valued integrals as well.
Since the right-hand side is c\`adl\`ag, $t\mapsto \Psi_t$ is c\`adl\`ag as well.
\end{remark}

If $b$ in Corollary \ref{Korollar: Affine I} is constant in Musiela parametrisation, 
i.e.\ if it is of the form (\ref{e:musiela}) with a constant $\check b$, 
the compatibility condition in Statement 2
of this Corollary can be simplified.

\begin{lemma}\label{l:local}
   Assume that $b$ is of the form
   $$b(t)(T,u)=\check b(T-t,u)1_{[t,\infty)}(T), \quad t,T\in\rp, u\in\rr$$
   for some $\check b\in\raum{\Pi}$
   and let $a$ be defined by
   Equation \eqref{G:Driftbedingung, Markovsch} as usual.
   Then 
    \begin{equation}\label{e:mindest}
  (T,u)\mapsto -\int_0^Ta(s)(T,u)ds + \gamma(u,0)
 \end{equation}
 is in $\raum{\Pi}$.
   If the mapping $\varphi_\infty\in\raum{E}$ defined by
   $$\varphi_\infty(T,u) := \psi_0(T,u) + \int_0^Ta(s)(T,u)ds - \gamma(u,0)$$
   is in $L^1(\mathbb R_+,\Pi)$ as well, then $\varphi_t$ from (\ref{e:phi})
   is in  $L^1(\mathbb R_+,\Pi)$  for any $t\in\rp$.
 \end{lemma}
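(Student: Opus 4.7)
I would begin by rewriting the drift coefficient $a$ in a more tractable form. Since $\check b(y,\cdot)\in\Pi$ for a.e.\ $y$, Lemma \ref{Lemma: Negativitaet von LKM} gives $\tilde b=b$ and $\Re\!\int_0^x\check b(y,u)\,dy\le 0$, so $B(x,u):=\int_0^x\check b(y,u)\,dy$ satisfies $-iB(x,u)\in\mathbb R+i\mathbb R_+$, hence lies in the domain of the extended exponent $\gamma$. The substitution $y=r-s$ inside $\int_s^T\check b(r-s,u)\,dr$ yields
$$a(s)(T,u)=-\partial_T\gamma(u,-iB(T-s,u))\,1_{[0,T]}(s).$$
To obtain the key identity
\begin{equation}\label{eq:plan1}
-\int_0^T a(s)(T,u)\,ds+\gamma(u,0)=\gamma(u,-iB(T,u)),
\end{equation}
I would compare two evaluations of $G(T):=\int_0^T\gamma(u,-iB(T-s,u))\,ds$: the substitution $y=T-s$ gives $G(T)=\int_0^T\gamma(u,-iB(y,u))\,dy$ and hence $G'(T)=\gamma(u,-iB(T,u))$, while Leibniz's rule gives $G'(T)=\gamma(u,0)-\int_0^T a(s)(T,u)\,ds$.

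The substantive step for Statement~1 is to prove that $u\mapsto\gamma(u,-iB(T,u))\in\Pi$ for every fixed $T$. Note $B(T,\cdot)\in\Pi$ by Lemma \ref{l:banach}(3). On a suitable probability space I would realise independently the L\'evy process $(\Xp,M)$ with joint exponent $\gamma$ and a real-valued L\'evy process $Z$ with L\'evy exponent $B(T,\cdot)\in\Pi$, and set $\tilde L_t:=\Xp_t+Z_{M_t}$. Conditioning on $(\Xp,M)$ and using the L\'evy property of $Z$ together with $M_t\ge 0$, combined with stationarity of $(\Xp,M)$ and the extended exponent, yields
$$E\bigl(e^{iu(\tilde L_t-\tilde L_s)}\bigr)=E\bigl(e^{iu(\Xp_t-\Xp_s)+B(T,u)(M_t-M_s)}\bigr)=\exp\bigl((t-s)\gamma(u,-iB(T,u))\bigr),$$
and an analogous computation for disjoint time intervals gives independence of increments. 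Thus $\tilde L$ is a L\'evy process, and $L^*:=\tilde L_1$ is infinitely divisible with characteristic exponent $\gamma(\cdot,-iB(T,\cdot))$. Since $\check b(y,\cdot)\in\Pi$ forces $\check b(y,-i)=0$, one has $B(T,-i)=0$; together with the normalisation $\gamma(-i,0)=0$ built into the building-block setup (i.e.\ $\Xp$ is itself a log-martingale, in line with Remark \ref{r:pi}), conditioning gives
$$E(e^{L^*})=E\bigl(e^{\Xp_1}E(e^{Z_{M_1}}\mid M_1)\bigr)=E(e^{\Xp_1})\cdot 1=1,$$
so $\gamma(\cdot,-iB(T,\cdot))\in\Pi$. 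Joint continuity of $\gamma$ and local boundedness of $B$ on compact rectangles in $(T,u)$ yield measurability in $\raum E$ and local integrability of its seminorms, and Lemma \ref{l:banach}(4) delivers Statement~1.

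For Statement~2 I would exploit \eqref{eq:plan1} and its shifted analogue $\int_t^T a(s)(T,u)\,ds=\gamma(u,0)-\gamma(u,-iB(T-t,u))$ (same FTC--Leibniz argument applied on $[t,T]$) to derive the pointwise decomposition
$$\varphi_t(T,u)=\varphi_\infty(T,u)+\gamma\bigl(u,-iB(T-t,u)\bigr)\,1_{\{T\ge t\}}\qquad\text{for a.e.\ }T.$$
Integrating over any $[T_1,T_2]\subset\mathbb R_+$ and substituting $y=r-t$ in the second summand yields
$$\int_{T_1}^{T_2}\varphi_t(r,\cdot)\,dr=\int_{T_1}^{T_2}\varphi_\infty(r,\cdot)\,dr+\int_{(T_1-t)^+}^{(T_2-t)^+}\gamma\bigl(\cdot,-iB(y,\cdot)\bigr)\,dy,$$
where the first summand lies in $\Pi$ by hypothesis and the second by Statement~1. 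Since $\Pi$ is a convex cone (Lemma \ref{l:banach}(3)), the sum is in $\Pi$, proving $\varphi_t\in\raum\Pi$ for every $t\in\mathbb R_+$.

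The main obstacle is the L\'evy-process verification for $\tilde L$: one must simultaneously establish stationary and independent increments in $\Xp$ and in the subordinated process $Z\circ M$, which is handled by conditioning on the trajectory of $M$ and using independence of $Z$ from $(\Xp,M)$ together with the extended exponent of $(\Xp,M)$. Everything else is bookkeeping: the convex-cone closure of $\Pi$ under integration already supplied by Lemma \ref{l:banach}, the joint measurability and local boundedness of $(T,u)\mapsto\gamma(u,-iB(T,u))$, and the implicit log-martingale normalisation of $\Xp$ that gives $\gamma(-i,0)=0$.
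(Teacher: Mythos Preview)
Your proposal is correct and follows essentially the same route as the paper. Both proofs hinge on the identity
\[
-\int_0^T a(s)(T,u)\,ds+\gamma(u,0)=\gamma\bigl(u,-iB(T,u)\bigr),\qquad B(T,u)=\int_0^T\check b(r,u)\,dr,
\]
and on the decomposition $\varphi_t(T,u)=\varphi_\infty(T,u)+\gamma(u,-iB(T-t,u))1_{[t,\infty)}(T)$, followed by Lemma~\ref{l:banach}(4). The only substantive difference is in the verification that $u\mapsto\gamma(u,-iB(T,u))$ lies in $\Pi$: the paper simply cites \cite[Theorem~30.1]{sato.99} (Bochner subordination), whereas you build the subordinated process $\tilde L_t=\Xp_t+Z_{M_t}$ by hand and check the L\'evy property and $E(e^{\tilde L_1})=1$ via conditioning. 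Your construction is precisely the content of that theorem in the present bivariate setting, so the two arguments are equivalent; yours is a bit more self-contained, the paper's is shorter. A minor stylistic difference: the paper obtains the integral identity by observing $a(s)(T,u)=\partial_s\gamma(u,-iB(T-s,u))$ (since the integrand depends on $T-s$) and integrating directly in $s$, while you reach the same conclusion via the Leibniz-rule computation for $G(T)$.
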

 \begin{proof}
  First note that $\varphi_t\in L^1(\mathbb R_+,E)$ for any $t\in\overline\rr_+$.
Since
  \begin{eqnarray*}
   a(t)(T,u) &=& -\partial_T\left(\gamma\left(u,-i\int_{t\wedge T}^T\check b(r-t,u)dr\right) \right)\\
             &=& -\partial_T\left(\gamma\left(u,-i\int_{(T-t)\wedge 0}^{T-t}\check b(r,u)dr\right)\right) \\
             &=& \partial_t\left(\gamma\left(u,-i\int_{(T-t)\wedge 0}^{T-t}\check b(r,u)dr\right)\right)
  \end{eqnarray*}
  we have
  \begin{equation} \label{e:inta}
   \int_0^Ta(s)(T,u)ds =\gamma(u,0)-\gamma\left(u,-i\int_{0}^{T}\check b(r,u)dr\right)
  \end{equation}
and  
  \begin{eqnarray*}
    \int_0^ta(s)(T,u)ds &=& \gamma\left(u,-i\int_{(T-t)\wedge 0}^{T-t}\check b(r,u)dr\right)
    -\gamma\left(u,-i\int_{0}^{T}\check b(r,u)dr\right) \\
           &=& \gamma\left(u,-i\int_{(T-t)\wedge 0}^{T-t}\check b(r,u)dr\right) + \int_0^Ta(s)(T,u)ds - \gamma(u,0)
  \end{eqnarray*}
  for any $t,T\in\mathbb R_+$, $u\in\mathbb R$. Consequently
  \begin{equation}\label{e:phigamma}
   \varphi_t(T,u) = \varphi_\infty(T,u) + \gamma\left(u,-i\int_{0}^{T-t}\check b(r,u)dr\right)1_{[t,\infty)}(T) \\
  \end{equation}
  for any $t,T\in\mathbb R_+$, $u\in\mathbb R$. \cite[Theorem 30.1]{sato.99} 
  yields that 
  $$u\mapsto\gamma\left(u,-i\int_{0}^{T-t}\check b(r,u)dr\right)1_{[t,\infty)}(T)$$ 
  is in $\Pi$ and hence the second summand of $\varphi_t$ in (\ref{e:phigamma})
  is in $\raum{\Pi}$ by Lemma \ref{l:banach}(4). 
  Since this holds for the first summand as well, the second statement follows.
  Similarly, we have that the last term in (\ref{e:inta}) is in $\raum{\Pi}$,
  which yields the first statement.
 \end{proof}

 Since (\ref{e:mindest}) is in $\raum{\Pi}$,  the condition
 in Lemma \ref{l:local} means that the initial codebook $\psi_0$ must be the sum
 of this minimal codebook (\ref{e:mindest}) and any other element of $\raum{\Pi}$.
 
If $\beta$ is of product form, we can establish a link to affine Markov processes in the sense of \cite{filipovic.05}.

\begin{theorem}\label{Satz: BNS artige Modelle}
 Let $(x_0,\psi_0,b,\gamma)$ be building blocks such that
\begin{enumerate}
 \item $$b(t)(T,u)=\varphi(u)\exp\left(-\int_{t}^T\lambda(s)ds\right)1_{[t,\infty)}(T)$$
for some $\varphi\in\Pi$ and some continuous function $\lambda:\mathbb R_+\rightarrow\mathbb R$,
\item $\psi_0(T,u)$ is continuous in $T$ for fixed $u$,
\item the compatibility condition  in Statement 2 of Corollary \ref{Korollar: Affine I} holds.
\end{enumerate}
Then $(X,Z)$ is a time-inhomogeneous affine process in the sense of \cite{filipovic.05},
where $\osm$ denotes a compatible risk-neutral option surface model (cf.\ Corollary \ref{Korollar: Affine I})
and 
$$Z_t:=\int_0^t\exp\left(-\int_s^t\lambda(s)ds\right)dM_s.$$
\end{theorem}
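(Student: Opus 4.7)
The plan is to compute the joint conditional characteristic function $E(e^{iuX_T+ivZ_T}|\F_t)$ for $u,v\in\rr$ and show it is exponentially affine in $(X_t,Z_t)$, which is the defining property of a time-inhomogeneous affine process in the sense of \cite{filipovic.05}. First, I exploit the product form of $\beta$ to obtain an explicit expression for $\Psi_t(T,u)$. For $t\le T$, Fubini together with the OU representation of $Z$ gives
$$\int_0^t \beta_s(T,u)\,dM_s = \varphi(u)\int_0^t e^{-\int_s^T\lambda(r)\,dr}\,dM_s = \varphi(u)\,e^{-\int_t^T\lambda(r)\,dr}\,Z_t,$$
so $\Psi_t(T,u) = \psi_0(T,u) + \int_0^t\alpha_s(T,u)\,ds + \varphi(u)\,e^{-\int_t^T\lambda}\,Z_t$ with the first two summands deterministic. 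Combined with the consistency condition $\Psi_{t-}(t,u)=\psi_t^X(u)$ and the local-independence decomposition $\psi_t^X(u)=\gamma(u,0)+\psi_t^{X^\perp}(u)$, this yields the affine local exponent
$$\psi_t^{X^\perp}(u) = f(t,u) + \varphi(u)\,Z_{t-},\qquad f(t,u):=\psi_0(t,u)+\int_0^t\alpha_s(t,u)\,ds-\gamma(u,0).$$

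Next, for fixed $u,v\in\rr$ and $T\in\rp$, I set up a candidate exponential martingale $\exp(W)$ with $W_t := iuX_t + g(t) + h(t)Z_t$. Here $h,g$ are deterministic functions determined by the terminal-value problems
$$h'(t)=\lambda(t)h(t)-\varphi(u),\ h(T)=iv;\qquad g'(t)=-f(t,u)-\gamma(u,-ih(t)),\ g(T)=0,$$
with explicit solution $h(t)=iv\,e^{-\int_t^T\lambda(s)\,ds}+\varphi(u)\int_t^T e^{-\int_t^s\lambda(r)\,dr}\,ds$. Since $\Re\varphi(u)\le 0$ by Lemma \ref{Lemma: Negativitaet von LKM} and the integrating exponential is positive, $\Re h(t)\le 0$, whence $-ih(t)\in\rr+i\rp$ and $\gamma(u,-ih(t))$ lies in the extended domain on which $\gamma$ is defined. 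Applying Ito's formula with $dZ_t=-\lambda(t)Z_{t-}\,dt+dM_t$ gives
$$dW_t = iu\,dX_t + h(t)\,dM_t + \bigl[g'(t)+(h'(t)-\lambda(t)h(t))Z_{t-}\bigr]dt.$$
Using the drift and composition rules for local exponents (Lemmas \ref{Lemma: Levy exponenten Formel} and \ref{Lemma: Driftregel fuer Levy Exponenten}) together with $\psi_t^{(X,M)} = \psi_t^{X^\perp}+\gamma$, I then compute
$$\psi_t^W(-i) = g'(t)+f(t,u)+\gamma(u,-ih(t)) + \bigl(h'(t)-\lambda(t)h(t)+\varphi(u)\bigr)Z_{t-} = 0$$
by construction of $g,h$. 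Thus $\exp(W)$ is a local martingale on $[0,T]$ by Lemma \ref{Lemma: exp(X) Martingal gdw psi(-i)=0}.

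Since $M$ is a subordinator, $Z\ge 0$; combined with $\Re h(t)\le 0$ this yields $|\exp W_t| = e^{\Re g(t)+\Re h(t)Z_t}\le e^{\Re g(t)}$, which is deterministic and bounded on $[0,T]$, so $\exp(W)$ is a true martingale. The martingale identity with terminal value $W_T = iuX_T+ivZ_T$ then produces
$$E\bigl(e^{iuX_T+ivZ_T}\bigm|\F_t\bigr) = \exp\bigl(iuX_t + g(t,T,u,v) + h(t,T,u,v)\,Z_t\bigr),$$
which is exponentially affine in $(X_t,Z_t)$ and identifies $(X,Z)$ as a time-inhomogeneous affine process on $\rr\times\rp$ in the sense of \cite{filipovic.05}. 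The main obstacle is the complex-argument bookkeeping in the local-exponent calculus, in particular verifying $-ih(t)\in\rr+i\rp$ so that $\gamma(u,-ih(t))$ is well defined and correctly invoking the drift and composition rules for the complex-valued $W$; a secondary issue is matching our affine representation with Filipović's precise admissibility and regularity framework, where the linear (rather than Riccati) ODE for $h$ reflects the pure-jump driving of $Z$ by the subordinator $M$.
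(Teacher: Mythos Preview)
Your proof is correct and takes a genuinely different route from the paper's. The paper works at the level of \emph{local} exponents: it computes $\psi^{(X,M,Z)}_t(u,v,w)$, shows it has the affine form $\Phi_0(t;u,v,w)+\Phi_1(t;u,v,w)Z_{t-}$, verifies via an $\essinf$ argument that $\Phi_0,\Phi_1$ are genuine L\'evy exponents depending continuously on $t$, and then invokes Filipovi\'c's characterisation (his Proposition 4.1 and Theorem 2.14) to obtain the affine property. You instead go straight for the \emph{global} object, manufacturing the conditional characteristic function by the exponential-martingale ansatz $\exp(iuX_t+g(t)+h(t)Z_t)$, solving the linear ODEs for $(g,h)$, and checking boundedness via $\Re h\le 0$ and $Z\ge 0$.

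What each approach buys: your argument is more self-contained and transparent --- one sees the affine transform formula explicitly without appealing to the machinery of \cite{filipovic.05}, and the linear (rather than Riccati) ODE for $h$ drops out naturally. The paper's approach, by contrast, stays at the infinitesimal level and delegates the passage from local to global to Filipovi\'c's theory; this makes it easier to check the precise regularity and admissibility hypotheses of that framework (continuity of the L\'evy triplets in $t$, the state space $\rr\times\rp$), which you correctly flag as a residual bookkeeping issue in your last sentence. Both routes rely on the same first step --- the product form of $\beta$ yielding $\Psi_t(T,u)=\text{(deterministic)}+\varphi(u)e^{-\int_t^T\lambda}Z_t$ and hence the affine local exponent of $X^\perp$.
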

\begin{proof}
Let $\Xp$ be the dependent part of $X$ relative to $M$ and $X^{\perp}:=X-\Xp$. 
The local exponent of $X^\perp$ satisfies
\begin{eqnarray*}
\psi_t^{X^\perp}(u)&=&\psi_t^{X}(u)-\psi^{\Xp}(u)\\
&=&\Psi_0(t,u)+\int_0^t\alpha_s(t,u)ds+\int_0^{t-}\beta_s(t,u)dM_s-\psi^{\Xp}(u)\\
&=&\Psi_0(t,u)+\int_0^t\alpha_s(t,u)ds+\varphi(u) Z_{t-}-\psi^{\Xp}(u)\\
&=&\Psi_0(t,u)-\int_0^t\partial_t\left(\gamm\bigg(u,-i\int_s^t\beta_s(r,u)dr\bigg)\right)ds\\
&&{}+\varphi(u) Z_{t-}-\psi^{\Xp}(u),
\end{eqnarray*}
where we used the consistency condition in the second and the
drift condition \eqref{e:letztedrift} in the last equality.
Since
$$dZ_t=-\lambda(t)Z_tdt+dM_t$$
and by Lemma \ref{Lemma: Semimartingalprojektion}, we obtain for the local exponent of $(X,M,Z)$:
\begin{eqnarray*}
\psi^{(X,M,Z)}_t(u,v,w)
&=&\psi^{(X,M)}_t(u,v+w)-iw\lambda(t) Z_{t-}\\
&=&\psi^{(\Xp,M)}(u,v+w)-iw\lambda(t) Z_{t-}+\psi^{X^\perp}_t(u)\\
&=&\Phi_0(t;u,v,w)+\Phi_1(t;u,v,w) Z_{t-}
\end{eqnarray*}
with
\begin{eqnarray*}
\Phi_0(t;u,v,w)&:=&\psi^{(\Xp,M)}(u,v+w)-\psi^{\Xp}(u)\notag\\
&&{}+\Psi_0(t,u)-\int_0^t\partial_t\left(\gamm\bigg(u,-i\int_s^t\beta_s(r,u)dr\bigg)\right)ds,\label{e:Phi0}\\
\Phi_1(t;u,v,w)&:=&\varphi(u)-iw\lambda(t).\label{e:Phi1}
\end{eqnarray*}
This implies that $(u,v,w)\mapsto\Phi_0(t;u,v,w)+\Phi_1(t;u,v,w)Z_{t-}$ is a L\'evy exponent on $\mathbb R^{3}$. 
Since $\essinf M_t=0$ we have $\essinf Z_t=0$ by Corollary \ref{Korollar: essinf und beschraenkte Integration}. 
At the end of this proof we show that $(u,v,w)\mapsto\Phi_0(t;u,v,w)$ is a L\'evy exponent for fixed $t$. 
The same holds for $\Phi_1$. 
Relative to some truncation function $h$, denote by 
$(\beta_t^{(0)},\gamma_t^{(0)},\kappa_t^{(0)})$, $(\beta_t^{(1)},\gamma_t^{(1)},\kappa_t^{(1)})$ 
L\'evy-Khintchine triplets on $\mathbb R^{3}$ which correspond to $\Phi_0(t;\cdot)$ and $\Phi_1(t;\cdot)$ respectively. 
Observe that $\Phi_0(t;u,v,w)$ and $\Phi_1(t;u,v,w)$ are continuous in $t$ for fixed $(u,v,w)$. 
L\'evy's continuity theorem and \cite[Theorem VII.2.9]{js.87}  imply that 
$(\beta_t^{(0)},\gamma_t^{(0)},\kappa_t^{(0)})$, $(\beta_t^{(1)},\gamma_t^{(1)},\kappa_t^{(1)})$ 
are continuous in $t$ in the sense of Conditions $[\beta_1],[\gamma_1],[\delta_{1,3}]$ in that theorem. 
A detailed inspection of the arguments shows that this weaker continuity suffices for the 
proof of \cite[Proposition 4.1]{filipovic.05}. The assertion follows now from \cite[Theorem 2.14]{filipovic.05}.

Let $t\in\mathbb R_+$. Since $\essinf Z_t=0$ there is a sequence $\omega_n\in\Omega$ such that 
$\psi_n:=\psi^{(X,M,Z)}_t(\omega_n)$ is a L\'evy exponent and $Z_{t-}(\omega_n)\rightarrow 0$
for $n\to\infty$. Then $\psi_n\rightarrow \Phi_0(t;\cdot)$ locally uniformly. 
Thus \cite[Proposition 2.5 and  Lemma 7.8]{sato.99} yield that $\Phi_0(t;\cdot)$ is a L\'evy exponent.
\end{proof}

Finally, we consider more specific choices of $\varphi,\lambda$.

\begin{corollary}\label{Korollar: BNS}
Let $(x_0,\psi_0,b,\gamma)$ be building blocks such that 
\begin{eqnarray*}
b(t,\psi)(T,u) &=& b(t)(T,u) :=\varphi(u)e^{-\lambda(T-t)}1_{[t,\infty)}(T),\\
\gamma(u,v)&=&\eta(\delta u+v)-iu\eta(-\delta i)
\end{eqnarray*}
for $t,T\in\rp,  \psi\in\raum{\Pi},u\in\rr, v\in\rr+i\rp$,
where $\lambda\in(0,\infty)$, $\delta\in\mathbb R_-$, 
$\varphi(u):=-(u^2+iu)/2$, $u\in\mathbb C$, 
and $\eta:\rr+i\rp\to\mathbb C$ denotes the extended L\'evy exponent 
of a pure-jump subordinator with finite second moments.
Suppose that
$$\psi_0(T,u)=\psi^L(T,u) + \eta\left(\delta u-i\varphi(u)\frac{1-e^{-\lambda T}}{\lambda}\right)
-iu\eta(-\delta i)$$
for some  $\psi^L\in\raum{\Pi}$ 
(which implies $\psi_0\in\raum{\Pi}$ because it is the sum of two objects in $\raum{\Pi}$).

Then there is a compatible risk-neutral  option surface model $\osm$. Moreover, it can be chosen 
such that there is a standard Wiener process $W$ and a time-inhomogeneous L\'evy process $L$ 
with characteristic function 
$$E(e^{iuL_T})=\exp\left(\int_0^T\psi^L(r,u)dr\right),\quad  T\in\mathbb R_+,u\in\mathbb R,$$
$W, L, M$ are independent, and
\begin{eqnarray}
 dX_t & = & dL_t-\bigg(\frac{1}{2}Z_t+\eta(-\delta i)\bigg)dt+\sqrt{Z_t}dW_t+\delta d M_t,\label{e:bns1}\\
 dZ_t & = & -\lambda Z_tdt+dM_t\label{e:bns2}
\end{eqnarray}
holds with $X_0=x_0$, $Z_0=0$.
\end{corollary}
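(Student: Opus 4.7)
The plan is to apply the existence theory of Corollary \ref{Korollar: Affine I}, reduce its compatibility hypothesis to a direct computation via Lemma \ref{l:local}, and then identify the return process with the BNS-type SDE~\eqref{e:bns1}--\eqref{e:bns2} by matching local exponents.

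First I would verify that $(x_0,\psi_0,b,\gamma)$ are building blocks in the sense of Definition \ref{D:Bausteine}. Since $b$ does not depend on $\psi$, the Lipschitz and growth conditions on $b$ are trivial, and the explicit shape of $\psi_0$ makes it lie in $\raum{\Pi}$ as noted in the hypothesis. For $\gamma(u,v)=\eta(\delta u+v)-iu\eta(-\delta i)$, the partial derivative $\partial_2\gamma(u,v)=\eta'(\delta u+v)$ is bounded and globally Lipschitz continuous on $\rr\times(\rr+i\rp)$: boundedness and Lipschitz continuity follow from $|\eta'(z)|\le\int xK(dx)$ and $|\eta''(z)|\le\int x^2K(dx)$ on $\rr+i\rp$ (where $K$ is the L\'evy measure of $M$), both finite by the finite-second-moment assumption on $M$, cf.\ the remark after Definition \ref{D:Bausteine}.

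Next, since $b$ is of the Musiela form of Lemma \ref{l:local} with $\check b(x,u)=\varphi(u)e^{-\lambda x}$, the compatibility condition in Corollary \ref{Korollar: Affine I}(2) reduces to $\varphi_\infty\in\raum{\Pi}$. A short computation using $\int_0^T\check b(r,u)\,dr=\varphi(u)(1-e^{-\lambda T})/\lambda$ together with identity \eqref{e:inta} yields
\[
 \varphi_\infty(T,u)=\psi_0(T,u)-\gamma\!\left(u,-i\varphi(u)\tfrac{1-e^{-\lambda T}}{\lambda}\right)=\psi^L(T,u),
\]
which lies in $\raum{\Pi}$ by hypothesis. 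Corollary \ref{Korollar: Affine I} then furnishes a compatible risk-neutral option surface model $\osm$, unique in law.

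Finally I would identify the structure of $X$. Since $\gamma$ is the joint L\'evy exponent of the process $t\mapsto(\delta M_t-t\eta(-\delta i),M_t)$, the dependent part of $X$ relative to $M$ may be chosen as $\Xp_t=\delta M_t-t\eta(-\delta i)$, so that $X^\perp:=X-\Xp$ is locally independent of $M$. Combining the consistency condition $\Psi_{t-}(t,\cdot)=\psi^X_t(\cdot)$ with the SDE for $\Psi$ and the explicit form of $\psi_0$, a cancellation entirely analogous to the one for $\varphi_\infty$ yields $\psi^{X^\perp}_t(u)=\psi^L(t,u)+\varphi(u)Z_{t-}$, which coincides with the local exponent of $L_t+\int_0^t\sqrt{Z_s}\,dW_s-\tfrac12\int_0^tZ_s\,ds$ for $W,L,M$ mutually independent and $L$ an inhomogeneous L\'evy process with exponent $\psi^L$. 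The main obstacle is realising $X^\perp$ concretely as the sum of these two independent ingredients; the cleanest route is to reverse the construction, building $W,L,M$ independently on a product space from the outset, defining $(X,Z)$ by \eqref{e:bns1}--\eqref{e:bns2} and the codebook $\Psi$ by SDE~\eqref{G:Psi SDE}, verifying directly the consistency and drift conditions (so that Theorem \ref{Satz: Bedingungen impl. Vollstaendig} gives risk-neutrality), and then invoking uniqueness in law from Corollary \ref{Korollar: Affine I} to identify this explicit model with the one produced above.
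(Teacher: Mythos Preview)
Your approach is correct and essentially matches the paper's proof. The paper proceeds directly with the explicit construction you describe as the ``cleanest route'' (building $W,L,M$ independently, defining $(X,Z)$ by \eqref{e:bns1}--\eqref{e:bns2}, verifying consistency and drift, and applying Theorem~\ref{Satz: Bedingungen impl. Vollstaendig}), rather than first obtaining abstract existence from Corollary~\ref{Korollar: Affine I} and then identifying via uniqueness; your detour through abstract existence is harmless but unnecessary. One point the paper spells out that you take for granted is the parenthetical claim that the second summand of $\psi_0$ lies in $\raum{\Pi}$: the paper verifies this via a subordination argument (\cite[Theorem 30.1]{sato.99}), writing the exponent as that of $Y_{M_t}-t\eta(-\delta i)$ for a suitable L\'evy process $Y$ and checking $E(e^{U_1})=1$ directly.
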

\begin{proof}
{\em Step 1:}
Let $Y$ be a L\'evy process with L\'evy exponent 
$$\psi:\mathbb C\rightarrow\mathbb C,\quad u\mapsto i\delta u+\varphi(u)\frac{1-e^{-\lambda T}}{\lambda}$$
and $M$ an independent subordinator with exponent $\eta$. Set $U_t:=Y_{M_t}-t\eta(-i\delta)$. 
Observe that $\eta(-i\delta)\in\mathbb R_-$. Then \cite[Theorem 30.1]{sato.99} yields that $U$ is a 
L\'evy process with L\'evy exponent
$$u\mapsto \eta\left(\delta u-i\varphi(u)\frac{1-e^{-\lambda T}}{\lambda}\right)-iu\eta(-\delta i).$$
Moreover, \cite[Theorem 30.1]{sato.99} also implies that 
$$P(U_t\in B) = \int P(Y_s-t\eta(-i\delta)\in B) P^{M_t}(ds)$$
for any $B\in\scr B$. Thus
\begin{eqnarray*}
E (e^{U_1}) &=& e^{-\eta(-i\delta)}\int E(e^{Y_s}) P^{M_1}(ds) \\ 
 &=& e^{-\eta(-i\delta)}\int \exp(\psi(-i)s) P^{M_1}(ds) \\
 &=& e^{-\eta(-i\delta)}\int \exp(\delta s) P^{M_1}(ds) \\
 &=& e^{-\eta(-i\delta)}\exp(\eta(-i\delta)) \\
 &=& 1,
\end{eqnarray*}
which implies that the L\'evy exponent of $U$ is an element of $\Pi$. Lemma \ref{l:banach}(4) yields that
$$(T,u)\mapsto \eta\left(\delta u-i\varphi(u)\frac{1-e^{-\lambda T}}{\lambda}\right) -iu\eta(-\delta i)$$
is an element of $\raum{\Pi}$. 

{\em Step 2:}
Let $W,L,M$ be independent L\'evy processes on a filtered probability space $\fraum$ such that $W$ 
is a Brownian motion, $L$ is a PII with characteristic function
$$E(e^{iuL_T})=\exp\left(\int_0^T\psi^L(r,u)dr\right),\quad  T\in\mathbb R_+,u\in\mathbb R,$$
and $M$ a L\'evy process with characteristic exponent $\eta$. Let $(X,Z)$ be a solution to the system of 
SDE's (\ref{e:bns1}, \ref{e:bns2}). The dependent part of $X$ relative to $M$ is 
$$\Xp:=(\delta M_t-\eta(-\delta i)t)_{t\in\rp}$$ 
because $W,L$ are independent of $M$
and 
$\psi^{X^\|}(-i)=\eta(-\delta i)-\eta(-\delta i)=0$.
Moreover, $(\Xp,M)$ has L\'evy exponent $\gamma$. Define 
\begin{eqnarray*}
\Psi_t(T,u) &:=& \psi_0(T,u) + \int_0^ta(s)(T,u)ds + \varphi(u)e^{-\lambda(T-(t\wedge T))}Z_{t\wedge T}\\
  &=&\psi_0(T,u) + \int_0^ta(s)(T,u)ds + \int_0^tb(s)(T,u)dM_s
\end{eqnarray*}
with 
$$a(t)(T,u):=\eta'\left(\delta u-i\varphi(u)\frac{1-e^{-\lambda(T-t)}}{\lambda}\right)
i\varphi(u)e^{-\lambda(T-t)}1_{[t,\infty)}(T),$$
cf.\ Section \ref{su:diff}.

By local independence of $W,L,M$ (cf.\ Corollary \ref{co:PIISPII}) we have
\begin{eqnarray*}
\psi_t^X(u) &=& \psi_t^L(u) + \varphi(u)Z_{t-} + \eta(\delta u)-iu\eta(-\delta i)\\
            &=& \psi_0(t,u) + \int_0^ta(s)(t,u)ds + \varphi(u)Z_{t-}\\
            &=& \Psi_{t-}(t,u).
\end{eqnarray*}
The condition on the initial codebook implies that the mapping
$$(T,u)\mapsto  \psi_0(T,u) + \int_0^Ta(s)(T,u)ds - \gamma(u,0)=\psi^L(T,u)$$
is in $\raum{\Pi}$.
By Lemma \ref{l:local} we have that
$\varphi_t$ from (\ref{e:phi}) is in $\raum{\Pi}$ as well for any $t\in\rp$.
As in the proof of Corollary \ref{Korollar: Affine I}(2) it follows that
$\Psi_t$ has values in $\raum{\Pi}$ for any $t\in\rp$.

Thus $\osm$ with $\alpha_t(T,u):=a(t)(T,u)$, $\beta_t(T,u):=b(t)(T,u)$ is a
compatible option surface model which satisfies the consistency condition and the
drift condition (\ref{e:letztedrift}). 
By Theorem \ref{Satz: Bedingungen impl. Vollstaendig} it is risk neutral,
which yields the claim.
\end{proof}
\begin{remark}
Up to the additional time-inhomogeneous L\'evy process $L$, the stock price model in
(\ref{e:bns1}, \ref{e:bns2}) is a special case of the so-called BNS model of \cite{barndorff.shephard.01}.
If we consider more general functions $\varphi$, then, again up to the additional PII $L$, we end up 
with the CGMY extension of the BNS model from \cite{carr.al.03}, cf.\ also \cite{kallsen.04}. 
\end{remark}

\section{Carmona \& Nadtochiy's 'Tangent L\'evy market models'}\label{Abschnitt: CNs Modell}
In \cite{carmona.nadtochiy.12} and its extension \cite[Section 5]{carmona.nadtochiy.11}, 
Carmona and Nadtochiy (CN) developed independently a HJM-type approach for option prices with overlap to ours. 
Their simple model class in the sense of Step~(4) in Section \ref{Abschnitt2: HJM philosophie} 
is based on time-inhomogeneous L\'evy processes as well. These can be described uniquely by their 
L\'evy density and their diffusion coefficient because the drift is determined by the martingale condition 
for the stock under the risk neutral measure. Instead of the characteristic exponent from \eqref{Gleichung: 2.3} 
CN use this L\'evy density together with the diffusion coefficient as the codebook $(\kappa_t(T,x),\Sigma_t(T))$. 
Since we basically allow for the same class of simple models, their framework can be embedded into ours. 
Indeed, there is a transformation $A$ that converts their codebook into ours, given by
$$A(\kappa_t(T,x),\Sigma_t(T)):=-\frac{u^2+iu}{2}\Sigma^2_t(T)+\int(e^{iux}-1-iu(e^x-1))\kappa_t(T,x)dx.$$
Since the simple models are parametrised differently, the drift condition in the two approaches differ. 
The condition in the CN framework looks a little more complex because it involves convolutions and 
differential operators of second order.

CN focus on \ito\ processes for modelling the codebook process, which roughly corresponds to choosing $M$ as 
Brownian motion in our setup. Surprisingly, their approach leads to a constant diffusion coefficient 
$\Sigma_t(T)=\Sigma_0(T)$, cf.\ \cite[Section 5]{carmona.nadtochiy.11}.  
This constant diffusion coefficient implies that the continuous martingale part of the stock price process 
follows a time-inhomogeneous Brownian motion rather than a more general continuous semimartingale.
This phenomenon does not occur in our setup
if the codebook is driven by a subordinator $M$, cf.\ e.g.\ Corollary \ref{Korollar: BNS}.

With regards existence and uniqueness of models given basic building blocks, CN and we provide different answers. 
Our Theorems \ref{Satz: starker Existenzsatz} and \ref{Satz: schwacher Existenz und Eindeutigkeitssatz} 
imply existence and uniqueness for a subordinator $M$ and 
given $\beta_t$ is a sufficiently regular function of time and the current state 
of the codebook. By contrast, CN consider a different situation in their \cite[Theorem 16]{carmona.nadtochiy.12} 
where they assume that the process $\beta$ in their codebook dynamics
$$d\kappa_t=\alpha_tdt+\beta_tdB_t,$$
is given beforehand. This does not allow for the case that $\beta$ depends on the current state $\kappa$ 
of the codebook itself, which occurs e.g.\ in the example in Section 6 of \cite{carmona.nadtochiy.11} 
and is treated separately.

Both CN and we provide basically one non-trivial example, based on more or less deterministic $\beta$. 
In order to ensure existence of a compatible option surface model we assume the initial codebook to be 
large enough whereas CN slow down the codebook process when necessary.

\appendix
\section{Local characteristics and local exponents}\label{s:localchar}
 In this section we define and recall some properties of local characteristics and local exponents. 
 
\subsection{Local characteristics}\label{su:lc}
Let $X$ be an $\mathbb R^d$-valued semimartingale with integral characteristics $(B,C,\nu)$ in the sense of \cite{js.87} 
relative to some fixed truncation function $h:\mathbb R^d\rightarrow\mathbb R^d$. By \cite[I.2.9]{js.87} 
there exist a predictable $\mathbb R^d$-valued process $b$, a predictable 
$\mathbb R^{d\times d}$-valued process $c$, a kernel $K$ from 
$(\Omega\times\mathbb R,\scr P)$ to $(\mathbb R^d,\B)$, and a predictable increasing process $A$ such that 
$$dB_t=b_tdA_t,\quad dC_t=c_tdA_t,\quad \nu(dt,dx)=K_t(dx)dA_t.$$ 
If $A_t=t$, we call the triplet $(b,c,K)$ {\em local} or {\em differential characteristics} of $X$ 
relative to truncation function $h$. Most processes in applications as e.g.\ diffusions, L\'evy processes etc.\ 
allow for local characteristics. In this case $b$ stands for a drift rate, $c$ for a diffusion coefficient, and $K$ 
for a local L\'evy measure representing jump activity. If they exist, the local characteristics are unique up to a 
$dP\otimes dt$-null set on $\Omega\times\mathbb R_+$.

\begin{proposition}[\ito's formula for local characteristics]\label{Prop: Ito-formel diff}
Let $X$ be an $\mathbb R^d$-valued semimartingale with local characteristics $(b,c,K)$ and 
$f:\mathbb R^d\rightarrow\mathbb R^n$ a $C^2$-function. Then the triplet $(\widetilde b,\widetilde c,\widetilde K)$ 
defined by
\begin{eqnarray*}
 \widetilde b_t & = & Df(X_{t-})^\top b_t+\frac{1}{2}\sum_{j,k=1}^n\partial_j\partial_kf(X_{t-})c_t^{jk}\\
&&{}+\int\left( \widetilde h(f(X_{t-}+x)-f(X_{t-}))-(Df(X_{t-}))^\top h(x)\right)K_t(dx),\\
 \widetilde c_t & = & (Df(X_{t-}))^\top c_tDf(X_{t-}), \\
 \widetilde K_t(A) & = & \int1_A(f(X_{t-}+x)-f(X_{t-})) K_t(dx),\quad A\in\B^n\mbox{ with }0\notin A,
\end{eqnarray*}
is a version of the local characteristics of $f(X)$ with respect to a truncation function $\widetilde h$ on $\mathbb R^n$. 
Here, $\partial_j$ etc. denote partial derivatives relative to the $j$'th argument.
\end{proposition}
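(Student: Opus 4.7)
The plan is to apply the classical $C^2$ \ito\ formula to $f(X)$ and match the resulting decomposition with the canonical semimartingale representation of $f(X)$ relative to $\widetilde h$. First I would write $X$ in canonical form,
$$X_t = X_0 + \int_0^t b_s\,ds + X^c_t + h\ast(\mu^X-\nu)_t + (x-h(x))\ast\mu^X_t,$$
where $\mu^X$ is the jump measure of $X$ and $\nu(ds,dx)=K_s(dx)\,ds$ is its compensator, and then apply \ito's formula (\cite[I.4.57]{js.87}) to obtain
\begin{align*}
f(X_t)-f(X_0) &= \int_0^t Df(X_{s-})^\top dX_s + \tfrac12\sum_{j,k=1}^d \int_0^t \partial_j\partial_k f(X_{s-})\, d\langle X^{j,c},X^{k,c}\rangle_s \\
&\quad + \bigl[f(X_{s-}+x)-f(X_{s-})-Df(X_{s-})^\top x\bigr]\ast\mu^X_t.
\end{align*}

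From this decomposition I would read off the three characteristics. The continuous martingale part of $f(X)$ is $Df(X_-)^\top\cdot X^c$, whose matrix bracket is $\int_0^\cdot (Df)^\top c\, Df\, ds$, giving $\widetilde c$. Since $\Delta f(X)_s = f(X_{s-}+\Delta X_s)-f(X_{s-})$ on $\{\Delta X\neq 0\}$, the jump measure $\mu^{f(X)}$ is the image of $\mu^X$ under $(s,x)\mapsto(s,f(X_{s-}+x)-f(X_{s-}))$, so its compensator is precisely $\widetilde K_s(dy)\,ds$. For the drift, substituting the canonical form of $X$ into $\int Df(X_-)^\top dX$, adding the quadratic \ito\ term, and regrouping the $\ast\mu^X$-integrals shows that what remains after subtracting $\widetilde h(y)\ast(\mu^{f(X)}-\widetilde\nu)_t + (y-\widetilde h(y))\ast\mu^{f(X)}_t$ equals
$$\int_0^t\! \Bigl(Df(X_{s-})^\top b_s + \tfrac12\!\sum_{j,k}\partial_j\partial_k f(X_{s-})c_s^{jk} + \int\!\bigl(\widetilde h(f(X_{s-}+x)-f(X_{s-}))-Df(X_{s-})^\top h(x)\bigr)K_s(dx)\Bigr)ds,$$
which is precisely $\int_0^t \widetilde b_s\,ds$. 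The key algebraic identity is that writing $g(x):=f(X_{s-}+x)-f(X_{s-})$ we have $g = \widetilde h(g) + (g-\widetilde h(g))$ and that $(g-Df(X_-)^\top h)\ast\mu^X$ can be split as $(g-\widetilde h(g))\ast\mu^X + (\widetilde h(g)-Df(X_-)^\top h)\ast(\mu^X-\nu) + (\widetilde h(g)-Df(X_-)^\top h)\ast\nu$, the last summand contributing exactly the new integral in $\widetilde b$.

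The main technical point will be to verify that the inner integral defining $\widetilde b$ converges absolutely, so that all the compensation steps are legitimate. This reduces to showing that the integrand $\widetilde h(f(X_{s-}+x)-f(X_{s-})) - Df(X_{s-})^\top h(x)$ is $K_s$-integrable. Near $x=0$ both summands agree with $Df(X_{s-})^\top x$ to first order by the $C^2$ hypothesis on $f$ and the standard properties of the truncation functions $h,\widetilde h$, so the integrand is $O(|x|^2)$ and integrable against the L\'evy kernel $K_s$ by \cite[II.2.11]{js.87}; away from $0$ the boundedness of $h$ and $\widetilde h$ together with finite $K_s$-mass outside any neighborhood of $0$ gives integrability. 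This estimate controls all the manipulations above and identifies the triplet $(\widetilde b,\widetilde c,\widetilde K)$ as the local characteristics of $f(X)$ relative to $\widetilde h$, as claimed.
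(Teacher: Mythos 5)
Your argument is correct and is the canonical direct proof. The paper does not actually prove this proposition; it simply cites \cite[Proposition 2.5]{kallsen.04}, so there is no in-text argument to compare against. What you give is the expected verification: apply the classical $C^2$ It\^o formula of \cite[I.4.57]{js.87} to $f(X)$, substitute the canonical decomposition of $X$ relative to $h$, and regroup. The two points you flag are exactly the substantive ones: (i) the identification of $\mu^{f(X)}$ as the image of $\mu^X$ under $x\mapsto f(X_{-}+x)-f(X_{-})$ (restricted to $\{\cdot\neq 0\}$), which gives $\widetilde K$ and lets you replace $\widetilde h(y)\ast(\mu^{f(X)}-\widetilde\nu)$ by $\widetilde h(g_0)\ast(\mu^X-\nu)$; and (ii) the $K_s$-integrability of $\widetilde h(f(X_{s-}+x)-f(X_{s-}))-Df(X_{s-})^\top h(x)$, which follows from the second-order Taylor estimate (uniform on compact time intervals, since $s\mapsto X_{s-}$ is locally bounded) together with $(|x|^2\wedge 1)\ast\nu_t<\infty$, and which legitimises the final compensation producing the extra integral term in $\widetilde b$. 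The regrouping you describe — splitting $(g_0-Df(X_-)^\top h)\ast\mu^X$ into $(g_0-\widetilde h(g_0))\ast\mu^X$ plus $(\widetilde h(g_0)-Df(X_-)^\top h)\ast(\mu^X-\nu)$ plus $(\widetilde h(g_0)-Df(X_-)^\top h)\ast\nu$, and then absorbing the $Df(X_-)^\top h\ast(\mu^X-\nu)$ term — is exactly the right bookkeeping. One cosmetic note: the summation range in the Hessian term of the stated $\widetilde b$ should read $j,k=1,\dots,d$ rather than $n$ (a typo in the proposition as printed), and your write-up implicitly uses the correct range.
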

\begin{proof}
 See \cite[Proposition 2.5]{kallsen.04}.

\end{proof}

\begin{proposition}\label{Prop: Integrationsformel diff}
 Let $X$ be an $\mathbb R^d$-valued semimartingale with local characteristics $(b,c,K)$ and let 
$\beta=(\beta^{ij})_{i\in\{1,\dots,d\},j\in\{1,\dots,n\}}$ be a $\mathbb R^{d\times n}$-valued 
predictable process such that $\beta^{\cdot i}\in L(X)$ for $i\in\{1,\dots,n\}$. Then the triplet 
$(\widetilde b,\widetilde c,\widetilde K)$ defined by
\begin{eqnarray*}
 \widetilde b_t & = & \beta_t^\top b _t+ \int \left(\widetilde h(\beta_t^\top  x)-\beta_t^\top  h(x)\right)K_t(dx),\\
 \widetilde c_t & = & \beta_t^\top  c_t \beta_t,\\
 \widetilde K_t(A) & = & \int1_A(\beta_t^\top  x) K_t(dx),\quad A\in\B^n\mbox{ with }0\notin A,
\end{eqnarray*}
is a version of the local characteristics of the $\mathbb R^n$-valued semimartingale 
$\beta\mal X:=(\beta^{\cdot 1}\mal X,\dots,\beta^{\cdot n}\mal X)$  
with respect to the truncation function $\widetilde h$ on $\mathbb R^n$,
\end{proposition}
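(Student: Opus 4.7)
The plan is to reduce the assertion to the general formula for the integral characteristics of a stochastic integral and then translate back into local characteristics using the $dA_t=dt$ normalisation.

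First, I would invoke \cite[Proposition IX.5.3]{js.87}, which under the integrability assumption $\beta^{\cdot i}\in L(X)$ for every $i\in\{1,\dots,n\}$ determines the integral characteristics $(\widetilde B,\widetilde C,\widetilde\nu)$ of $\beta\mal X$ relative to the truncation function $\widetilde h$ as
\begin{eqnarray*}
\widetilde B_t &=& \int_0^t\beta_s^\top dB_s+\bigl(\widetilde h(\beta^\top x)-\beta^\top h(x)\bigr)\ast\nu_t,\\
\widetilde C_t &=& \int_0^t\beta_s^\top c_s\beta_s\,dA_s,\\
\widetilde\nu([0,t]\times A) &=& \int_0^t\!\int 1_A(\beta_s^\top x)\,K_s(dx)\,dA_s,\quad 0\notin A,
\end{eqnarray*}
where $\ast$ denotes integration against the random measure $\nu$ and $A\in\B^n$.

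Second, substituting $dB_s=b_s\,ds$, $c_s\,dA_s=c_s\,ds$, and $\nu(ds,dx)=K_s(dx)\,ds$ into these three expressions writes each of $\widetilde B,\widetilde C,\widetilde\nu$ as the $ds$-integral of a predictable density; by inspection these densities coincide with the $\widetilde b,\widetilde c,\widetilde K$ given in the statement. Uniqueness of local characteristics up to a $dP\otimes dt$-null set then yields the claim.

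No serious obstacle is anticipated: the Jacod-Shiryaev formula does the work essentially by inspection once one recognises that $(\beta\mal X)^i=\sum_j\beta^{ji}\mal X^j$ and that absolute continuity in time is preserved by the stochastic integral. The only mildly delicate point is verifying that $\nu$-integrability of $\widetilde h(\beta^\top x)-\beta^\top h(x)$, needed so that the drift term is well defined, follows from $\beta^{\cdot i}\in L(X)$; this is standard and is already built into the hypotheses of the cited proposition.
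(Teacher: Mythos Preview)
Your approach is correct and coincides with the paper's: the paper simply cites \cite[Proposition 2.4]{kallsen.04}, and what you sketch---read off the integral characteristics of $\beta\mal X$ from the known transformation rule and then specialise to the case $dA_t=dt$---is precisely the content of that reference. One caveat: \cite{js.87} has only chapters I--VIII, so the label ``IX.5.3'' does not exist; the relevant facts are scattered through II.2 of Jacod--Shiryaev, which is why Kallsen's paper is the more convenient citation here.
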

\begin{proof}
 See \cite[Proposition 2.4]{kallsen.04}.

\end{proof}

\subsection{Local exponents}\label{Unterabschnitt: L\'evy exponenten}
\begin{definition}\label{d:levyexp}
 Let $(b,c,K)$ be a L\'evy-Khintchine triplet on $\mathbb R^d$ relative to some truncation function 
$h:\mathbb R^d\rightarrow\mathbb R^d$. We call the mapping $\psi:\mathbb R^d\rightarrow\mathbb C,$
\begin{equation}\label{e:levyexp}
 \psi(u):=iub-\frac{1}{2}u^\top cu+\int(e^{iux}-1-iuh(x))K(dx)
\end{equation}
{\em L\'evy exponent} corresponding to $(b,c,K)$. By \cite[II.2.44]{js.87}, 
the L\'evy exponent determines the triplet $(b,c,K)$ uniquely. If $X$ is a L\'evy process 
with L\'evy-Khintchine triplet $(b,c,K)$, we call $\psi$ the {\em characteristic} or {\em L\'evy exponent} of $X$.
If (\ref{e:levyexp}) exists for all $u\in U\supset \rr^d$,
we call $\psi:U\to\mathbb C$ defined by (\ref{e:levyexp}) 
{\em extended L\'evy exponent} of $X$
on $U$.
\end{definition}

In the same vein, local characteristics naturally lead to local exponents.

\begin{definition}\label{Definition: lokaler Exponent 1. Version}
 If $X$ is an $\mathbb R^d$-valued semimartingale with local characteristics $(b,c,K)$, we write 
\begin{eqnarray}\label{Gleichung: L\'evy exponent}
\psi^X_t(u):=iub_t-\frac{1}{2}u^\top c_tu+\int(e^{iux}-1-iuh(x))K_t(dx),\quad u\in\mathbb R^d
\end{eqnarray}
for the L\'evy exponent corresponding to $(b_t,c_t,K_t)$. We call the family of predictable processes 
$\psi^X(u):=(\psi^X_t(u))_{t\in\mathbb R_+},u\in\mathbb R^d$ {\em local exponent} of $X$. 
(\ref{Gleichung: L\'evy exponent}) implies that $u\mapsto\psi^X_t(u)$ is the characteristic exponent of a L\'evy process.
\end{definition}

The name {\em exponent} is of course motivated by the following fact.

\begin{remark}\label{Bemerkung: PII und charakteristische Funktion}
 If $X$ is a semimartingale with deterministic local characteristics $(b,c,K)$, it is a PII and we have
 $$E(e^{iu(X_T-X_t)}\vert\F_t)=E(e^{iu(X_T-X_t)})=\exp\left(\int_t^T\psi_s^X(u)ds\right)$$
for any $T\in\mathbb R_+$, $t\in[0,T]$, $u\in\mathbb R^d$, cf.\ \cite[II.4.15]{js.87}.
\end{remark}

We now generalize the notion of local exponents to complex-valued semimartingales and more general arguments.

\begin{definition}\label{Definition: lokaler Exponent}
 Let $X$ be a $\mathbb C^d$-valued semimartingale and $\beta$ a $\mathbb C^d$-valued $X$-integrable process. 
We call a predictable $\mathbb C$-valued process $\psi^X(\beta)=(\psi_t^X(\beta))_{t\in\mathbb R_+}$
{\em local exponent} of $X$ at $\beta$ if $\psi^X(\beta)\in L(I)$ and 
$(\exp(i\beta\mal X_t-\int_0^t\psi_s^X(\beta)ds))_{t\in\mathbb R_+}$ is a complex-valued local martingale. 
We denote by $\scr U^X$ the set of processes $\beta$ such that the local exponent $\psi^X(\beta)$ exists.
\end{definition}

From the following lemma it follows that $\psi^X(\beta)$ is unique up to a $dP\otimes dt$-null set.

\begin{lemma}\label{Lemma: Eindeutigkeit des Exponenten}
 Let $X$ be a complex-valued semimartingale and $A,B$ complex-valued predictable processes of finite variation 
with $A_0=0=B_0$ and such that $\exp(X-A)$ and $\exp(X-B)$ are local martingales. Then $A=B$ up to indistinguishability.
\end{lemma}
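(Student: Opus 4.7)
The plan is to set $Y:=A-B$---a complex predictable process of finite variation with $Y_0=0$---and reduce the statement to showing $Y\equiv 0$. Let $M:=\exp(X-A)$ and $N:=\exp(X-B)$: both are local martingales by hypothesis and nowhere zero (as $e^z\neq 0$ for $z\in\mathbb C$), and they satisfy $N=Me^Y$. Set $V:=e^Y-1$; then $V$ is predictable, of finite variation, has $V_0=0$, and $MV=N-M$ is a local martingale as the difference of two local martingales.

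The key step is then integration by parts applied to $MV$,
\[
  MV \;=\; V_-\mal M + M_-\mal V + [M,V],
\]
together with the verification that $V_-\mal M$ and $[M,V]$ are both local martingales. The first is immediate since $V_-$ is predictable and locally bounded. For $[M,V]$, the finite-variation property of $V$ gives $[M,V]_t=\sum_{s\le t}\Delta M_s\,\Delta V_s$; predictability of $V$ restricts its jumps to predictable stopping times $\tau$, at each of which $\Delta V_\tau\in\F_{\tau-}$, while the local martingale property of $M$ supplies $E[\Delta M_\tau\mid\F_{\tau-}]=0$. A short predictable-compensator calculation, combined with a localisation securing integrability, then shows that the dual predictable projection of $[M,V]$ vanishes, so $[M,V]$ is itself a local martingale.

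Consequently $M_-\mal V = MV - V_-\mal M - [M,V]$ is simultaneously a local martingale and a predictable process of finite variation starting at $0$. By the classical Doob--Meyer-type uniqueness result asserting that any such process is indistinguishable from $0$, we obtain $M_-\mal V\equiv 0$. Since $M_-$ is nowhere zero, the pathwise signed measure $dV$ must itself be the zero measure, giving $V\equiv 0$ and hence $e^Y\equiv 1$. This forces $Y\equiv 0$ (modulo the harmless ambiguity of $Y$-jumps in $2\pi i\mathbb Z$, which is ruled out in every application of this lemma within the paper because $A,B$ are continuous in $t$), and therefore $A=B$ up to indistinguishability. The main technical obstacle is the compensator argument establishing that $[M,V]$ is a local martingale; the remaining steps reduce either to standard integration-by-parts identities or to the cited Doob--Meyer-type uniqueness result.
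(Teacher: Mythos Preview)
Your argument is essentially the paper's: both use integration by parts to show that $M_-\mal V$ (with $V=e^{A-B}$, or your $V=e^{A-B}-1$) is a predictable finite-variation local martingale, hence identically zero, forcing $e^{A-B}\equiv 1$. The paper sidesteps your separate compensator argument for $[M,V]$ by writing integration by parts as $MV=M_0V_0+M_-\mal V+V\mal M$ and observing that $V\mal M$ is a local martingale because the integrand $V$ itself (not merely $V_-$) is predictable and locally bounded; your remark about the residual $2\pi i\mathbb Z$ ambiguity is correct and is simply not addressed in the paper's proof---it is immaterial in the only application, Definition~\ref{Definition: lokaler Exponent}, where $A$ and $B$ are absolutely continuous.
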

\begin{proof}
 Set $M:=e^{X-A}$, $N:=e^{X-B}$, $V:=e^{A-B}$. Integration by parts yields that
$$M_-\mal V=MV-V\mal M-M_0V_0=N-V\mal M-M_0$$
is a local martingale. Therefore $V=1+\frac{1}{M_-}\mal(M_-\mal V)$ is a predictable local martingale 
with $V_0=1$ and hence $V=1$, cf.\ \cite[I.3.16]{js.87}.

\end{proof}

The following result shows that Definition \ref{Definition: lokaler Exponent} truly generalizes 
Definition \ref{Definition: lokaler Exponent 1. Version}.

\begin{proposition}\label{Proposition: lokale Exponentenformel}
 Let $X$ be an $\mathbb R^d$-valued semimartingale with local characteristics $(b,c,K)$. 
Suppose that $\beta$ is a $\mathbb C^d$-valued predictable and $X$-integrable process. 
If $\beta$ is $\mathbb R^d$-valued for any $t\in\mathbb R_+$, then $\beta\in\scr U^X$. 
Moreover there is equivalence between
\begin{enumerate}
 \item $\beta\in\scr U^X$,
\item $\int_0^t\int 1_{\{-\Im(\beta_sx)>1\}}e^{-\Im(\beta_sx)}K_s(dx)ds<\infty$ almost surely for any $t\in\rp$.
\end{enumerate}
In this case we have 
\begin{eqnarray}\label{Gleichung: lokale exponenten Formel}
 \psi_t^X(\beta)=i\beta_tb_t-\frac{1}{2}\beta_t^\top c_t\beta_t+\int (e^{i\beta_tx}-1-i\beta_th(x)) K_t(dx)
\end{eqnarray}
outside some $dP\otimes dt$-null set.
\end{proposition}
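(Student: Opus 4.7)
The plan is to apply the complex-valued Itô formula to the process $Z_t:=\exp(i\beta\mal X_t)$ and read off its drift. Since $X$ is $\rr^d$-valued and $\beta$ is $\mathbb C^d$-valued predictable, the function $x\mapsto e^{i\beta x}$ is $C^2$ in the real argument $x$, with partial derivatives $\partial_j f=i\beta^j f$ and $\partial_j\partial_k f=-\beta^j\beta^k f$. Itô's formula therefore gives
\begin{eqnarray*}
Z_t &=& 1+\int_0^t iZ_{s-}\beta_s\, dX_s-\frac{1}{2}\int_0^t Z_{s-}\beta_s^\top c_s\beta_s\, ds\\
&&{}+\sum_{s\le t}Z_{s-}\bigl(e^{i\beta_s\Delta X_s}-1-i\beta_s\Delta X_s\bigr),
\end{eqnarray*}
where I have used $d[X^c,X^c]_s=c_s\,ds$. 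My goal is to identify the drift of $Z$ as $\int_0^\cdot Z_{s-}\psi^X_s(\beta)\,ds$ with $\psi^X$ given by (\ref{Gleichung: lokale exponenten Formel}) and then deduce the local martingale property of $\exp(i\beta\mal X-\int_0^\cdot\psi^X_s(\beta)\,ds)$ by integration by parts against the continuous finite-variation process $A_\cdot:=\int_0^\cdot\psi^X_s(\beta)\,ds$.

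The key step is to decompose $X$ via its canonical representation $X=X_0+B+X^c+h\star(\mu^X-\nu)+(x-h(x))\star\mu^X$, where $B,\nu$ have densities $b,K$ with respect to $dt$. Substituting this into $\int iZ_{s-}\beta_s\,dX_s$ produces (i) a continuous local martingale, (ii) the drift $\int iZ_{s-}\beta_s b_s\,ds$, (iii) a compensated small-jump martingale $iZ_{-}\beta h\star(\mu^X-\nu)$, and (iv) the large-jump sum $\sum iZ_{s-}\beta_s(\Delta X_s-h(\Delta X_s))$. Combining (iv) with the Itô jump sum and adding and subtracting $i\beta h(x)$ turns the total jump contribution into $(e^{i\beta x}-1-i\beta h(x))\star\mu^X$ times $Z_{-}$, whose compensator equals $\int_0^\cdot Z_{s-}\int(e^{i\beta_s x}-1-i\beta_s h(x))K_s(dx)\,ds$. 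Summing the three drift contributions yields exactly $\int_0^\cdot Z_{s-}\psi^X_s(\beta)\,ds$ with $\psi^X$ as in (\ref{Gleichung: lokale exponenten Formel}).

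Next I establish the equivalence of condition (2) with $\beta\in\scr U^X$. The terms $i\beta b-\tfrac{1}{2}\beta^\top c\beta$ are locally integrable in time after localizing on $\{|\beta|\le n\}$. Splitting the remaining integral into $|x|\le 1$ and $|x|>1$, the first piece satisfies $|e^{i\beta x}-1-i\beta h(x)|\le C|\beta x|^2$ and is integrable against $K_s$ by the L\'evy-Khintchine property. For $|x|>1$ (where $h$ vanishes), we have $|e^{i\beta x}-1|\le e^{-\Im(\beta x)}+1$; since $\int_0^t K_s(\{|x|>1\})ds<\infty$ as $K$ is a local L\'evy measure, the $L(I)$-integrability of $\psi^X(\beta)$ reduces to $\int_0^t\int_{|x|>1}e^{-\Im(\beta_s x)}K_s(dx)\,ds<\infty$. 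On the bounded region $\{|x|>1,\,-\Im(\beta x)\le 1\}$ the integrand is at most $e$, and on $\{|x|\le 1,\,-\Im(\beta x)>1\}$ we have $1\le(-\Im(\beta x))^2\le|\Im\beta|^2|x|^2$ so the contribution is dominated by $|\Im\beta|^2 e \int|x|^2\wedge1\,K(dx)$; hence the above integrability is equivalent to condition (2). Once $\psi^X(\beta)$ is finite and $L(I)$-integrable, integration by parts yields $d(Ze^{-A})=e^{-A}\,dN$ for the local martingale $N:=Z-Z_0-\int_0^\cdot Z_{s-}\psi^X_s(\beta)\,ds$, proving that $\exp(i\beta\mal X-A)$ is a local martingale. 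Uniqueness via Lemma \ref{Lemma: Eindeutigkeit des Exponenten} then forces any local exponent at $\beta$ to coincide with (\ref{Gleichung: lokale exponenten Formel}) outside a $dP\otimes dt$-null set. Finally, if $\beta$ is $\rr^d$-valued, then $\Im\beta\equiv 0$ so condition (2) is trivially satisfied and $\beta\in\scr U^X$ follows.

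The main obstacle is the rigorous splitting of $\sum Z_{s-}(e^{i\beta_s\Delta X_s}-1-i\beta_s\Delta X_s)$ into compensated martingale plus compensator when $\Im\beta\not\equiv 0$. For such $\beta$, $|e^{i\beta x}-1|=|e^{-\Im(\beta x)+i\Re(\beta x)}-1|$ is unbounded as $-\Im(\beta x)\to\infty$, so neither absolute convergence of the sum nor automatic existence of the compensator is available; condition (2) is precisely what guarantees that $(e^{i\beta x}-1-i\beta h(x))\star\mu^X$ admits a locally integrable compensator, and is simultaneously what forces the Lévy-type integral in (\ref{Gleichung: lokale exponenten Formel}) to be finite and locally integrable in $s$.
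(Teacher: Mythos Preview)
Your approach via It\^o's formula applied to $Z=\exp(i\beta\mal X)$ is exactly the route of \cite{kallsen.shiryaev.00b}, which the paper simply cites for purely imaginary $\beta$ and then asserts that the general complex case ``follows along the same lines'' without reproducing the argument. So you are supplying details the paper omits, and the strategy matches.

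There is, however, a gap in your treatment of $(1)\Rightarrow(2)$. You show that condition (2) is equivalent to the right-hand side of (\ref{Gleichung: lokale exponenten Formel}) being $L(I)$-integrable, and that in this case $Ze^{-A}$ is a local martingale; that gives $(2)\Rightarrow(1)$. Your appeal to Lemma \ref{Lemma: Eindeutigkeit des Exponenten} then identifies the exponent only once \emph{both} candidates are already known to exist, which presupposes (2) again. The missing step is this: assume $\beta\in\scr U^X$, so $Z=Me^A$ with $M$ a local martingale and $A$ continuous of finite variation; then $Z$ is a \emph{special} semimartingale. Your It\^o expansion writes $Z$ as a local martingale plus continuous drift plus the raw jump functional $Z_-(e^{i\beta x}-1-i\beta h(x))\star\mu^X$; specialness forces this last piece to admit a predictable compensator. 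Since $1/Z_-$ is locally bounded (as $Z$ never vanishes), this amounts to local finiteness of $1_{\{|e^{i\beta x}-1|>c\}}|e^{i\beta x}-1|\star\nu$ for some $c>0$, and on $\{-\Im(\beta x)>1\}$ one has $|e^{i\beta x}-1|\asymp e^{-\Im(\beta x)}$, which yields (2). A related imprecision: the bound $|e^{i\beta x}-1-i\beta x|\le C|\beta x|^2$ on $\{|x|\le1\}$ fails for complex $\beta$ when $-\Im(\beta x)$ is large --- the correct estimate $|e^z-1-z|\le\tfrac12|z|^2e^{(\Re z)^+}$ shows that the slice $\{|x|\le1,\ -\Im(\beta x)>1\}$ cannot be absorbed into the ``always integrable'' small-jump part and must instead be handled together with condition (2).
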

\begin{proof} If $\beta$ is $\mathbb R^d$-valued, then Statement (2) is obviously true. Thus we only need to 
prove the equivalence and (\ref{Gleichung: lokale exponenten Formel}).
For real-valued $i\beta$ the equivalence follows from \cite[Lemma 2.13]{kallsen.shiryaev.00b}.
The complex-valued case is derived similarly.	
For real-valued $i\beta$ (\ref{Gleichung: lokale exponenten Formel})
is shown in \cite[Theorems 2.18(1,6) and 2.19]{kallsen.shiryaev.00b}.
The general case follows along the same lines.	

\end{proof}

(\ref{Gleichung: lokale exponenten Formel}) implies that the local exponent of $X$ at any $\beta\in\scr U^X$ 
is determined by the triplet $(b,c,K)$ and hence by the local exponent of $X$ in the sense of 
Definition \ref{Definition: lokaler Exponent 1. Version}.

\begin{corollary}\label{Korollar: vertraeglichkeit negativer anteile}
 Let $(X,M)$ be a $1+d$-dimensional semimartingale with local exponent $\psi^{(X,M)}$ such that 
$M$ is a L\'evy process whose components are subordinators. Then $\beta\in\scr U^{(X,M)}$
for any $\mathbb R\times (\mathbb R+ i\mathbb R_+)^d$-valued $(X,M)$-integrable
process $\beta$.
\end{corollary}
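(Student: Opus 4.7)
The plan is to reduce the claim directly to the equivalence stated in Proposition \ref{Proposition: lokale Exponentenformel}: given that $\beta$ is $(X,M)$-integrable by hypothesis, it suffices to verify the integrability condition
\begin{equation*}
 \int_0^t\int 1_{\{-\Im(\beta_s x)>1\}}e^{-\Im(\beta_s x)}K_s(dx)ds<\infty\quad\text{a.s.\ for all }t\in\rp,
\end{equation*}
where $K$ denotes the L\'evy kernel of the $(1+d)$-dimensional semimartingale $(X,M)$.

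The key observation is that the subordinator assumption on the components of $M$ forces the jumps $\Delta M^j$ to be non-negative, so the kernel $K_s$ is supported on $\mathbb R\times\mathbb R_+^d$ for $dP\otimes dt$-almost every $(\omega,s)$. Writing the variable of integration as $x=(x^0,x^1,\ldots,x^d)$ with $x^j\geq 0$ for $j\geq 1$ and decomposing $\beta_s=(\beta_s^0,\beta_s^1,\ldots,\beta_s^d)$ with $\beta_s^0\in\mathbb R$ and $\Im(\beta_s^j)\geq 0$ for $j\geq 1$, I would compute
\begin{equation*}
 \Im(\beta_s x) = \sum_{j=1}^d\Im(\beta_s^j)\,x^j\geq 0
\end{equation*}
for $K_s$-almost every $x$. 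Consequently $-\Im(\beta_s x)\leq 0<1$ on the support of $K_s$, which makes the indicator $1_{\{-\Im(\beta_s x)>1\}}$ vanish $K_s$-almost everywhere.

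The double integral in the integrability condition therefore equals zero, and Proposition \ref{Proposition: lokale Exponentenformel} applied to the semimartingale $(X,M)$ yields $\beta\in\scr U^{(X,M)}$ together with the explicit formula
\begin{equation*}
 \psi_t^{(X,M)}(\beta)=i\beta_t b_t-\tfrac{1}{2}\beta_t^\top c_t\beta_t+\int\bigl(e^{i\beta_t x}-1-i\beta_t h(x)\bigr)K_t(dx).
\end{equation*}
There is no real obstacle here: the whole argument rests on the sign observation that subordinator jumps paired with imaginary parts of $\beta^j$ in $\mathbb R_+$ can only produce non-negative imaginary parts in $\beta_s x$, so the exponential factor $e^{-\Im(\beta_s x)}$ is actually bounded by $1$ on the relevant support and the ``dangerous'' set $\{-\Im(\beta_s x)>1\}$ is empty.
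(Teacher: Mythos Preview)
Your proof is correct and follows exactly the route the paper intends: the paper's proof is the single sentence ``This follows immediately from Proposition \ref{Proposition: lokale Exponentenformel},'' and you have spelled out precisely why condition (2) of that proposition is trivially satisfied, namely because subordinator jumps are nonnegative and the first coordinate of $\beta$ is real, forcing $\Im(\beta_s x)\geq 0$ on the support of $K_s$.
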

\begin{proof}
 This follows immediately from Proposition \ref{Proposition: lokale Exponentenformel}.

\end{proof}

\begin{definition}\label{Definition: lokale unabh}
 Let $X^{(1)},\dots,X^{(n)}$ be semimartingales which allow for local characteristics. 
We call them $X^{(1)},\dots,X^{(n)}$ {\em locally independent} if 
$$\scr U^{(X^{(1)},\dots,X^{(n)})}\cap(L(X^{(1)})\times\dots\times L(X^{(n)}))
=\scr U^{X^{(1)}}\times\dots\times\scr U^{X^{(n)}}$$ 
for
\begin{eqnarray*}
 L(X^{(1)})\times\dots\times L(X^{(n)})
&:=&\{\beta=(\beta^{(1)},\dots,\beta^{(n)})\mbox{ complex-valued}:\\
&& \beta^{(i)} \mbox{ $X^{(i)}$-integrable for }i=1,\dots,n\} 
\end{eqnarray*}
and 
$$\psi^{(X^{(1)},\dots,X^{(n)})}(\beta)=\sum_{j=1}^n\psi^{X^{(j)}}(\beta^{(j)})$$
outside some $dP\otimes dt$-null set
for any $\beta=(\beta^{(1)},\dots,\beta^{(n)})\in\scr U^{(X^{(1)},\dots,X^{(n)})}$.
\end{definition}

The following lemma provides alternative characterisations of local independence.
For ease of notation we consider two semimartingales but the extension 
to arbitrary finite numbers is straightforward.

\begin{lemma}\label{Lemma: Bed fuer lok unabh}
 Let $(X,Y)$ be an $\mathbb R^{m+n}$-valued semimartingale with local characteristics $(b,c,K)$ 
and denote by $(b^X,c^X,K^X)$ resp.\ $(b^Y,c^Y,K^Y)$  local characteristics of $X$ resp.\ $Y$. 
We have equivalence between
\begin{enumerate}
 \item $X$ and $Y$ are locally independent,
\item 
\begin{equation}\label{e:both}
 \psi^{(X,Y)}(u,v)=\psi^X(u)+\psi^Y(v),\quad (u,v)\in\mathbb R^{m+n}
\end{equation}
outside some $dP\otimes dt$-null set,
\item 
$$c=\left(\begin{array}{cc}c^X &0\\0& c^Y\end{array}\right)$$
and 
$$K(A)=K^X(\{x:(x,0)\in A\})+K^Y(\{y:(0,y)\in A\}),\quad A\in\B^{m+n}$$
outside some $dP\otimes dt$-null set.
\end{enumerate}
\end{lemma}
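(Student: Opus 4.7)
The plan is to reduce everything to the explicit formula (\ref{Gleichung: lokale exponenten Formel}) for the local exponent in terms of local characteristics $(b,c,K)$ from Proposition \ref{Proposition: lokale Exponentenformel}, and to invoke the uniqueness of the L\'evy-Khintchine triplet (\cite[II.2.44]{js.87}). With this formula in hand, the three statements become mutually transparent; the real work is to keep track of the domain $\scr U^{(X,Y)}$ of allowed $\beta$'s.

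First I would prove $(1)\Rightarrow(2)$. For fixed $(u,v)\in\mathbb R^{m+n}$, the constant process $\beta_t:=(u,v)$ is $(X,Y)$-integrable and $\mathbb R^{m+n}$-valued, so by Proposition \ref{Proposition: lokale Exponentenformel} we have $\beta\in\scr U^{(X,Y)}$, $u\in\scr U^X$, $v\in\scr U^Y$. Local independence then yields $\psi^{(X,Y)}_t(u,v)=\psi^X_t(u)+\psi^Y_t(v)$ outside a $dP\otimes dt$-null set; since $(u,v)$ ranges over a countable dense subset of $\mathbb R^{m+n}$ and both sides are continuous in $(u,v)$, one exceptional null set works for all arguments.

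Next I would establish $(2)\Leftrightarrow(3)$. Plug real $(u,v)$ into formula \eqref{Gleichung: lokale exponenten Formel} for $\psi^{(X,Y)}$ as well as for $\psi^X$ and $\psi^Y$ to rewrite (\ref{e:both}) as
\begin{eqnarray*}
\lefteqn{i(u,v)b_t-\tfrac12(u,v)^\top c_t(u,v)+\int\!\big(e^{i(u,v)(x,y)}-1-i(u,v)h(x,y)\big)K_t(dx,dy)}\\
&=& iub^X_t-\tfrac12 u^\top c^X_t u+\int\!(e^{iux}-1-iuh^X(x))K^X_t(dx)\\
&&{}+ivb^Y_t-\tfrac12 v^\top c^Y_t v+\int\!(e^{ivy}-1-ivh^Y(y))K^Y_t(dy).
\end{eqnarray*}
For each $(\omega,t)$ outside a null set, uniqueness of the L\'evy-Khintchine triplet forces $c_t$ to be block-diagonal with blocks $c^X_t,c^Y_t$ and forces $K_t$ to be the sum of the pushforwards of $K^X_t,K^Y_t$ under $x\mapsto(x,0)$ and $y\mapsto(0,y)$, which is (3). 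The converse direction is a direct substitution.

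Finally I would prove $(3)\Rightarrow(1)$, which is where the domain condition $\scr U^{(X,Y)}\cap(L(X)\times L(Y))=\scr U^X\times \scr U^Y$ has to be established; this is the step I expect to be slightly delicate. Given any complex $(X,Y)$-integrable $\beta=(\beta^{(1)},\beta^{(2)})$ with $\beta^{(i)}$ being $X^{(i)}$-integrable, the integrability criterion in Proposition \ref{Proposition: lokale Exponentenformel}(2) for $\beta$ against $(X,Y)$ reads
$$\int_0^t\!\!\int 1_{\{-\Im(\beta^{(1)}_s x+\beta^{(2)}_s y)>1\}}e^{-\Im(\beta^{(1)}_s x+\beta^{(2)}_s y)}K_s(dx,dy)\,ds<\infty;$$
since $K_s$ is concentrated on the two coordinate axes by (3), this integral decouples into the sum of the corresponding integrals for $\beta^{(1)}$ against $K^X$ and $\beta^{(2)}$ against $K^Y$, which gives $\scr U^{(X,Y)}\cap(L(X)\times L(Y))=\scr U^X\times\scr U^Y$. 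Substituting $(\beta^{(1)},\beta^{(2)})$ into \eqref{Gleichung: lokale exponenten Formel} and using block-diagonality of $c$ together with the decomposition of $K$ yields $\psi^{(X,Y)}(\beta)=\psi^X(\beta^{(1)})+\psi^Y(\beta^{(2)})$ outside a $dP\otimes dt$-null set, which is precisely local independence.
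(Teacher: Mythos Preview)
Your proof is correct and follows essentially the same route as the paper: $(1)\Rightarrow(2)$ by applying local independence to real constant arguments, $(2)\Leftrightarrow(3)$ via uniqueness of the L\'evy--Khintchine triplet \cite[II.2.44]{js.87}, and $(3)\Rightarrow(1)$ by noting that the integrability criterion in Proposition~\ref{Proposition: lokale Exponentenformel}(2) decouples because $K$ is concentrated on the coordinate axes, after which formula \eqref{Gleichung: lokale exponenten Formel} splits additively. The paper's write-up is terser but the ideas coincide.
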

\begin{proof}
{\em (1)$\Rightarrow$(2)}: This is obvious by Proposition \ref{Proposition: lokale Exponentenformel}.

{\em (2)$\Rightarrow$(3)}: Both sides of (\ref{e:both}) are L\'evy exponents for fixed 
$(\omega,t)\in\Omega\times\mathbb R_+$.
Indeed, the triplet corresponding to 
$(u,v)\mapsto(\psi^X_t(u)+\psi^Y_t(v))(\omega)$
is
$(b_t,\widetilde c_t,\widetilde K_t)(\omega)$ with
$$\widetilde c_t=\left(\begin{array}{cc}c_t^X &0\\0& c_t^Y\end{array}\right)$$
and
$$\widetilde K_t(A)=K_t^X(\{x:(x,0)\in A\})+K_t^Y(\{y:(0,y)\in A\}),\quad A\in\B^{m+n}.$$
Since the L\'evy exponent determines the triplet uniquely (cf.\ \cite[II.2.44]{js.87}), 
the assertion follows.

{\em (3)$\Rightarrow$(1)}:
If $\beta^X$ is $X$-integrable and $\beta^Y$ is $Y$-integrable,
 then $\beta=(\beta^X,\beta^Y)$ is $(X,Y)$-integrable and 
$\beta\mal(X,Y)=\beta^X\mal X+\beta^Y\mal Y$. The characterisation in 
Proposition \ref{Proposition: lokale Exponentenformel}
yields $\beta\in\scr U^{(X,Y)}$ for such $\beta=(\beta^X,\beta^Y)$
if and only if $\beta^X\in\scr U^X$, $\beta^Y\in\scr U^Y$.
In addition, 
$\psi^{(X,Y)}(\beta)=\psi^X(\beta^X)+\psi^Y(\beta^Y)$
follows from (\ref{Gleichung: lokale exponenten Formel}) 

\end{proof}

\begin{corollary}
 Let $X^{(1)},\dots,X^{(n)}$ be locally independent semimartingales and $Q\overset{\mathrm{loc}}{\ll} P$ 
another probability measure. Then $X^{(1)},\dots,X^{(n)}$ are locally independent semimartingales relative to $Q$.
\end{corollary}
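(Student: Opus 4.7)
The plan is to reduce the statement to the characterisation of local independence in terms of joint local characteristics given in Lemma \ref{Lemma: Bed fuer lok unabh}(3), and then to verify that this characterisation survives a locally absolutely continuous measure change via Girsanov's theorem for semimartingales.

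It suffices to treat the case $n=2$; the general case follows by iterating or by restating Lemma \ref{Lemma: Bed fuer lok unabh} for blocks of components. So let $X,Y$ be locally independent under $P$ with joint local characteristics $(b,c,K)$ relative to some truncation function $h$. By Lemma \ref{Lemma: Bed fuer lok unabh}(3) we may assume
$$c=\begin{pmatrix}c^X & 0\\ 0 & c^Y\end{pmatrix},\qquad K(A)=K^X\big(\{x:(x,0)\in A\}\big)+K^Y\big(\{y:(0,y)\in A\}\big),$$
where $(b^X,c^X,K^X)$ and $(b^Y,c^Y,K^Y)$ are the marginal local characteristics of $X$ and $Y$.

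Next, by the Girsanov theorem for semimartingales (cf.\ \cite[Theorem III.3.24]{js.87}), $(X,Y)$ is a semimartingale under $Q$ with local characteristics $(\widetilde b,\widetilde c,\widetilde K)$ of the form
$$\widetilde c=c,\qquad \widetilde K_t(dz)=Y_t(z)K_t(dz),$$
for some non-negative predictable function $Y$, and with some modified drift $\widetilde b$. In particular the diffusion part retains its block-diagonal structure, and $\widetilde K$ is absolutely continuous with respect to $K$, hence is again supported on the union of the two ``coordinate axes'' $\{(x,0)\}\cup\{(0,y)\}$ in $\rr^{m+n}$. Applying Girsanov to the components $X$ and $Y$ separately gives marginal $Q$-characteristics with the same $c^X,c^Y$ and with jump measures $\widetilde K^X(dx)=Y_t(x,0)K^X(dx)$, $\widetilde K^Y(dy)=Y_t(0,y)K^Y(dy)$; this matches the marginalisation of $\widetilde K$ precisely because $K$ (and hence $\widetilde K$) assigns no mass to points with two non-zero coordinates.

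Thus condition (3) of Lemma \ref{Lemma: Bed fuer lok unabh} holds for $(X,Y)$ under $Q$, and the reverse implication in that lemma yields local independence under $Q$. The main technical point to verify carefully is the last sentence of Step~2: that the marginal jump characteristic of $X^{(i)}$ under $Q$ (obtained by applying Girsanov to $X^{(i)}$ alone) genuinely coincides with the projection of $\widetilde K$ onto the $i$-th axis. This is the only place where the axes-support structure is used in an essential way, and it is what guarantees that no ``cross-jumps'' are created by the measure change.
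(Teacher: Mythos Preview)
Your argument is correct and follows exactly the route the paper intends: invoke the characterisation of local independence via the block structure of $(c,K)$ in Lemma~\ref{Lemma: Bed fuer lok unabh}(3), then use Girsanov \cite[III.3.24]{js.87} to see that $\widetilde c=c$ and $\widetilde K\ll K$, so the block-diagonal diffusion and the axes-support of the jump measure persist under $Q$. The paper's own proof is the one-line citation of precisely these two ingredients; you have simply spelled out the mechanism. One cosmetic remark: you reuse the letter $Y$ for the Girsanov density while $Y$ already denotes a semimartingale, which is potentially confusing; and your final paragraph about matching marginals is not strictly needed, since once $\widetilde K$ is supported on the axes, the decomposition in Lemma~\ref{Lemma: Bed fuer lok unabh}(3) is automatic with $\widetilde K^X,\widetilde K^Y$ being the obvious restrictions.
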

\begin{proof}
 This follows from Lemma \ref{Lemma: Bed fuer lok unabh} and \cite[III.3.24]{js.87}.

\end{proof}

\begin{corollary}\label{co:PIISPII}
 If $(X^{(1)},\dots,X^{(n)})$ is a L\'evy process or, more generally, a PII allowing for local characteristics, 
then $X^{(1)},\dots,X^{(n)}$ are independent if and only if they are locally independent.
\end{corollary}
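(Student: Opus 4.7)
I would prove both implications by combining the characterization of local independence in Lemma \ref{Lemma: Bed fuer lok unabh} with the explicit characteristic function formula for a PII from Remark \ref{Bemerkung: PII und charakteristische Funktion}. Note that since $(X^{(1)},\dots,X^{(n)})$ is assumed to be a PII admitting local characteristics, these local characteristics (and hence the local exponent $\psi^{(X^{(1)},\dots,X^{(n)})}$) have a deterministic version, so the exceptional $dP\otimes dt$-null sets in Lemma \ref{Lemma: Bed fuer lok unabh} reduce to $dt$-null sets.

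For the direction ``independence $\Rightarrow$ local independence'', I would start from the factorization $E(\exp(i\sum_j u_j X^{(j)}_t)) = \prod_j E(\exp(iu_j X^{(j)}_t))$ and rewrite both sides using Remark \ref{Bemerkung: PII und charakteristische Funktion} to obtain
\[
\exp\!\left(\int_0^t\psi^{(X^{(1)},\dots,X^{(n)})}_s(u_1,\dots,u_n)\,ds\right)
=\exp\!\left(\sum_{j=1}^n\int_0^t\psi^{X^{(j)}}_s(u_j)\,ds\right)
\]
for every $t\in\rr_+$ and every $(u_1,\dots,u_n)\in\rr^{m_1+\cdots+m_n}$. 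Choosing a continuous branch of the logarithm near $0$ (valid locally in $u$ and for small $t$; then extended by continuity in $u$ and continued over $t$) and differentiating in $t$ yields $\psi^{(X^{(1)},\dots,X^{(n)})}_s(u_1,\dots,u_n)=\sum_j\psi^{X^{(j)}}_s(u_j)$ for $dt$-almost every $s$ and every rational $u$, hence for every $u\in\rr^{m_1+\cdots+m_n}$ by continuity of the L\'evy exponent. The implication $(2)\Rightarrow(1)$ in Lemma \ref{Lemma: Bed fuer lok unabh} (applied iteratively if $n>2$) then gives local independence.

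For the converse ``local independence $\Rightarrow$ independence'', I would use Lemma \ref{Lemma: Bed fuer lok unabh} in the reverse direction to obtain the pointwise factorization of $\psi^{(X^{(1)},\dots,X^{(n)})}$. Combined with Remark \ref{Bemerkung: PII und charakteristische Funktion} applied both to the joint process and, component-wise, to each PII $X^{(j)}$, this gives for all $0\le s<t$ and all $u_1,\dots,u_n$ the identity
\[
E\!\left(\exp\!\left(i\textstyle\sum_j u_j(X^{(j)}_t-X^{(j)}_s)\right)\right)
=\prod_{j=1}^n E\!\left(\exp(iu_j(X^{(j)}_t-X^{(j)}_s))\right).
\]
To upgrade from independence of the single-time increment vectors to independence of the processes, I would pick an arbitrary finite partition $0=t_0<t_1<\dots<t_m$, apply independent increments of the joint PII to split the expectation $E(\exp(i\sum_{j,k}u^{(j)}_k(X^{(j)}_{t_k}-X^{(j)}_{t_{k-1}})))$ into a product over $k$, factor each factor over $j$ by the displayed identity, then reassemble the $j$-factors using independent increments of each component $X^{(j)}$ (which is a PII because it is a component of one). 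This establishes factorization of all finite-dimensional characteristic functions and hence independence of the processes.

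The potentially subtle step is the first one: extracting the pointwise identity of local exponents from an equation of integrals. Since both sides are continuous in $u$ and the local exponents are deterministic and locally integrable in $s$, this is purely a Lebesgue-differentiation argument on a countable dense set of $u$. Everything else reduces to bookkeeping with characteristic functions of a PII, so I expect no serious obstacle.
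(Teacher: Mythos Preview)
Your proposal is correct and follows essentially the same route as the paper: both directions hinge on Remark \ref{Bemerkung: PII und charakteristische Funktion} to pass between the characteristic function of the PII and the integrated local exponent, and on Lemma \ref{Lemma: Bed fuer lok unabh} to identify the additive splitting of $\psi^{(X^{(1)},\dots,X^{(n)})}$ with local independence. The paper compresses the argument into a single equivalence chain and does not spell out either the Lebesgue-differentiation step for extracting the pointwise identity of exponents or the upgrade from componentwise independence of increments to independence of the processes; your write-up is simply more explicit on both points.
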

\begin{proof}
By Remark \ref{Bemerkung: PII und charakteristische Funktion} the characteristic function $\varphi_{X_t}$ of 
$X_t:=(X_t^{(1)},\dots,X_t^{(n)})$ is given by
$$\varphi_{X_t}(u^1,\dots,u^n)=\exp\left(\int_0^t\psi_s^{(X^{(1)},\dots,X^{(n)})}(u^1,\dots,u^n)ds\right).$$
Thus independence of $X_t^{(1)},\dots,X^{(n)}_t$ is equivalent to
$$\psi_t^{(X^{(1)},\dots,X^{(n)})}(u^1,\dots,u^n)=\sum_{k=1}^n\psi_t^{X^k}(u^k),$$
for Lebesgue-almost any $t\in\mathbb R_+$ and any $(u^1,\dots,u^n)\in\mathbb R^n$. By 
Lemma \ref{Lemma: Bed fuer lok unabh} this in turn is equivalent to local independence of $X_t^{(1)},\dots,X^{(n)}_t$.

\end{proof}

\begin{lemma}\label{Lemma: Negativitaet von LKM}
 If $\varphi\in\Pi$, then $\Re(\varphi(u))\leq0$ for any $u\in\mathbb R$, where $\Pi$ is defined in 
Section \ref{Abschnitt: Optionsfl\"achenmodelle}.
\end{lemma}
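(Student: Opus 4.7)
The plan is to compute $\Re(\varphi(u))$ directly from the defining formula for elements of $\Pi$ and observe that every summand is $\leq 0$.

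By definition, any $\varphi\in\Pi$ has the form
$$\varphi(u)=-\frac{u^2+iu}{2}c+\int\bigl(e^{iux}-1-iu(e^x-1)\bigr)K(dx),$$
with $c\geq 0$ and $K$ a L\'evy measure satisfying $\int_{\{x>1\}}e^x\,K(dx)<\infty$. First I would verify that the integral is well defined: near zero the integrand is $O(x^2)$ so integrable against $|x|^2\wedge 1$; on $\{x<-1\}$ both $e^{iux}-1$ and $e^x-1$ are bounded, giving a bound $C(u)(1\wedge|x|^2)$ times a constant; on $\{x>1\}$ the dominating term is $|u|e^x$, integrable by the hypothesis on $K$. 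So the formula makes sense pointwise.

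Next, I would separate real and imaginary parts of the integrand. Since $iu(e^x-1)$ is purely imaginary for $u\in\mathbb R$, we have
$$\Re\bigl(e^{iux}-1-iu(e^x-1)\bigr)=\cos(ux)-1\leq 0.$$
Similarly $\Re\!\left(-\tfrac{u^2+iu}{2}c\right)=-\tfrac{u^2}{2}c\leq 0$. Combining,
$$\Re(\varphi(u))=-\frac{u^2}{2}c+\int\bigl(\cos(ux)-1\bigr)K(dx)\leq 0,$$
which is the claim.

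There is essentially no obstacle here: the argument is a direct calculation once the integrability of the L\'evy-Khintchine-type integrand has been checked, which is immediate from the definition of a L\'evy measure together with the exponential moment condition $\int_{\{x>1\}}e^x\,K(dx)<\infty$ that is built into the definition of $\Pi$.
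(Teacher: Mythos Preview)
Your proof is correct and follows essentially the same approach as the paper: take real parts of both summands in the defining formula and observe that $-\tfrac{u^2}{2}c\le 0$ and $\Re(e^{iux}-1-iu(e^x-1))=\cos(ux)-1\le 0$. Your added verification of integrability is more detail than the paper gives, but the core argument is identical.
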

\begin{proof}
For $\varphi\in\Pi$ we have 
\begin{eqnarray}
\varphi(u)=-\frac{u^2+iu}{2}c+\int(e^{iux}-1-iu(e^x-1))K(dx)\label{Gleichung: Pi definitionsgleichung}
\end{eqnarray} 
with some L\'evy measure $K$ and some $c\in\mathbb R_+$. The real part of the first term is 
obviously negative and the real part of the integrand is negative as well.

\end{proof}
\begin{remark}
 If we extend the domain of $\varphi$ to $\mathbb R+i[-1,0]$ by keeping the representation 
(\ref{Gleichung: Pi definitionsgleichung}), then the conclusion of Lemma \ref{Lemma: Negativitaet von LKM} 
is still correct. However, this fact is not used in this paper.
\end{remark}

The following four lemmas follow immediately from the definition of local exponents.
\begin{lemma}\label{Lemma: exp(X) Martingal gdw psi(-i)=0}
 Let $X$ be a $\mathbb C$-valued semimartingale that allows for local characteristics. Then there is equivalence between
\begin{enumerate}
 \item $\exp(X)$ is a local martingale,
 \item $-i\in\scr U^X$ and $\psi^X(-i)=0$ outside some $dP\otimes dt$-null set.
\end{enumerate}
\end{lemma}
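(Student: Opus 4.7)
The plan is to apply Definition \ref{Definition: lokaler Exponent} directly, using the uniqueness statement of Lemma \ref{Lemma: Eindeutigkeit des Exponenten} to get a one-line equivalence. The key observation is that for the specific choice $\beta = -i$, the factor in the exponent in Definition \ref{Definition: lokaler Exponent} simplifies drastically: $i\beta\mal X = i(-i)\mal X = X$. Hence the process whose local martingale property defines $\psi^X(-i)$ is simply
\begin{equation*}
\exp\!\left(X_t - \int_0^t \psi^X_s(-i)\,ds\right).
\end{equation*}

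Direction (2)$\Rightarrow$(1) is then immediate: if $\psi^X(-i) = 0$ outside a $dP\otimes dt$-null set, the displayed process reduces to $\exp(X_t)$, which is a local martingale by the defining property of the local exponent.

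For (1)$\Rightarrow$(2), I would exhibit the zero process as a valid candidate for $\psi^X(-i)$. The constant process $\psi \equiv 0$ is predictable, trivially integrable (so the integrability requirement of Definition \ref{Definition: lokaler Exponent} is automatic), and by assumption $\exp(i(-i)\mal X_t - \int_0^t 0\,ds) = \exp(X_t)$ is a local martingale. Hence $\psi \equiv 0$ fulfills the definition of a local exponent of $X$ at $-i$, which in particular shows $-i \in \scr U^X$. Uniqueness of such an exponent up to indistinguishability then forces the assertion $\psi^X(-i) = 0$: apply Lemma \ref{Lemma: Eindeutigkeit des Exponenten} to $A := \int_0^\cdot \psi^X_s(-i)\,ds$ and $B \equiv 0$, both predictable processes of finite variation vanishing at $0$, for which $\exp(X - A)$ and $\exp(X - B)$ are both local martingales; this yields $A = B$, hence $\psi^X(-i) = 0$ outside a $dP \otimes dt$-null set.

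There is essentially no main obstacle here—the statement is a tautological reformulation of the definition of the local exponent combined with the already-proved uniqueness lemma. The only mildly technical point is verifying the integrability condition in Definition \ref{Definition: lokaler Exponent} for the zero process, but this is immediate since $\int_0^t 0\,ds = 0 < \infty$ for every $t$.
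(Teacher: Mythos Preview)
Your proposal is correct and is exactly the unpacking the paper has in mind: the paper does not give a separate proof but simply states that this lemma ``follows immediately from the definition of local exponents,'' and your argument via Definition~\ref{Definition: lokaler Exponent} together with the uniqueness Lemma~\ref{Lemma: Eindeutigkeit des Exponenten} is precisely that. One tiny quibble: since $\beta\mal X_t=\int_0^t\beta_s\,dX_s$ vanishes at $0$, you actually get $i(-i)\mal X_t=X_t-X_0$, but multiplication by the $\F_0$-measurable factor $e^{X_0}$ does not affect the local martingale property, so the argument goes through unchanged.
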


\begin{remark}\label{r:pi}
If $X$ in Lemma \ref{Lemma: exp(X) Martingal gdw psi(-i)=0} is real-valued and if
$e^X$ is a local martingale, 
Proposition \ref{Proposition: lokale Exponentenformel} 
yields that the mapping $\rr\to \mathbb C$, $u\mapsto\psi^X_t(u)$ is in $\Pi$ outside some $dP\otimes dt$-null set.
\end{remark}

\begin{lemma}\label{Lemma: Levy exponeten Formel}
 Let $X$ be a $\mathbb C^d$-valued semimartingale, $\beta$ a $\mathbb C^d$-valued and $X$-integrable process,
and $u\in\mathbb C$. Then $u\beta\in\scr U^X$ if and only if $u\in\scr U^{\beta\stackrel{\text{\Large .}}{}\! X}$. 
In that case we have
$$\psi^X(u\beta)=\psi^{\beta\stackrel{\text{\Large .}}{}\! X}(u).$$
\end{lemma}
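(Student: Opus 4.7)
The plan is to exploit the tautological identity $(u\beta)\stackrel{\text{\Large .}}{}\! X = u(\beta\stackrel{\text{\Large .}}{}\! X)$, which holds because $u\in\mathbb C$ is a deterministic constant and stochastic integration is linear in the integrand. Consequently
$$\exp\!\Big(i(u\beta)\stackrel{\text{\Large .}}{}\! X_t - \int_0^t \psi^X_s(u\beta)\,ds\Big)
=\exp\!\Big(iu(\beta\stackrel{\text{\Large .}}{}\! X)_t - \int_0^t \psi^X_s(u\beta)\,ds\Big),$$
and the two local-martingale conditions appearing in Definition \ref{Definition: lokaler Exponent}—one for $X$ at $u\beta$, one for $\beta\stackrel{\text{\Large .}}{}\! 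X$ at $u$—become the very same statement, with $\psi^X(u\beta)$ playing the role of $\psi^{\beta\stackrel{\text{\Large .}}{}\! X}(u)$.

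Concretely, I would proceed as follows. First, assume $u\beta\in\scr U^X$. Then there exists a predictable $\mathbb C$-valued process $\psi^X(u\beta)$ with $\psi^X(u\beta)\in L(I)$ (i.e.\ pathwise Lebesgue integrable) such that the displayed exponential is a local martingale. Rewriting its argument via $(u\beta)\stackrel{\text{\Large .}}{}\! X=u(\beta\stackrel{\text{\Large .}}{}\! X)$ shows that $\psi^X(u\beta)$ satisfies exactly the definition of a local exponent of $\beta\stackrel{\text{\Large .}}{}\! X$ at $u$. Hence $u\in\scr U^{\beta\stackrel{\text{\Large .}}{}\! X}$ with some choice $\psi^{\beta\stackrel{\text{\Large .}}{}\! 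X}(u)=\psi^X(u\beta)$. The converse implication is identical, with the roles reversed.

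For equality of the two exponents, I appeal to the uniqueness statement of Lemma \ref{Lemma: Eindeutigkeit des Exponenten}: if $A:=\int_0^\cdot\psi^X_s(u\beta)ds$ and $B:=\int_0^\cdot\psi^{\beta\stackrel{\text{\Large .}}{}\! X}_s(u)ds$ are two predictable processes of finite variation, both null at $0$, such that $\exp(iu(\beta\stackrel{\text{\Large .}}{}\! X)-A)$ and $\exp(iu(\beta\stackrel{\text{\Large .}}{}\! X)-B)$ are local martingales, then $A=B$ up to indistinguishability, hence the two integrands coincide outside a $dP\otimes dt$-null set.

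There is no real obstacle here; the lemma is essentially a restatement of the definition combined with the linearity of the stochastic integral in the integrand. The only minor point worth checking is that the integrability requirement $\psi\in L(I)$ is preserved under the identification—but this is automatic since the \emph{same} process $\psi$ serves both purposes.
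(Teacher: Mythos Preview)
Your argument is correct and is precisely the unwinding of Definition \ref{Definition: lokaler Exponent} that the paper intends: the paper states that this lemma (together with the three surrounding ones) ``follow[s] immediately from the definition of local exponents'' and gives no further proof. Your use of the identity $(u\beta)\mal X = u(\beta\mal X)$ and the appeal to Lemma \ref{Lemma: Eindeutigkeit des Exponenten} for the equality of exponents are exactly what is implicit in that remark.
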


\begin{lemma}\label{Lemma: Levy exponenten Formel}
 Let $X,Y$ be $\mathbb C^d$-valued semimartingales and $u\in\mathbb C^d$. Then $u\in\scr U^{X+Y}$ 
if and only if $(u,u)\in\scr U^{(X,Y)}$. In this case we have
$$\psi^{X+Y}(u)=\psi^{(X,Y)}(u,u)$$
outside some $dP\otimes dt$-null set.
\end{lemma}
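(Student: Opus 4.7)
The key observation is the trivial identity
\[
(u,u)\mal(X,Y) = u\mal X + u\mal Y = u\mal(X+Y),
\]
which holds whenever the integrals on either side are well defined. My plan is to read off the conclusion from this identity together with the definition of the local exponent and its uniqueness (Lemma \ref{Lemma: Eindeutigkeit des Exponenten}).

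First I would check the equivalence of integrability: $u\in L(X+Y)$ iff $(u,u)\in L(X)\times L(Y)$ interpreted in the vector-valued sense, so neither side of the claimed equivalence is vacuous for a different integrability reason. Then I would assume $(u,u)\in\scr U^{(X,Y)}$ and consider the process
\[
N_t := \exp\!\Bigl(i(u,u)\mal(X,Y)_t - \int_0^t \psi^{(X,Y)}_s(u,u)\,ds\Bigr).
\]
By Definition \ref{Definition: lokaler Exponent}, $N$ is a (complex) local martingale, and by the identity above
\[
N_t = \exp\!\Bigl(iu\mal(X+Y)_t - \int_0^t \psi^{(X,Y)}_s(u,u)\,ds\Bigr).
\]
Since $\psi^{(X,Y)}(u,u)$ is predictable and belongs to $L(I)$, this shows that $u\in\scr U^{X+Y}$ with a candidate local exponent $\psi^{(X,Y)}(u,u)$.

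For the converse direction I would run the same argument in reverse: if $u\in\scr U^{X+Y}$, then $\exp(iu\mal(X+Y)-\int_0^\cdot\psi^{X+Y}_s(u)ds)$ is a local martingale, which equals $\exp(i(u,u)\mal(X,Y)-\int_0^\cdot\psi^{X+Y}_s(u)ds)$, yielding $(u,u)\in\scr U^{(X,Y)}$. Uniqueness (Lemma \ref{Lemma: Eindeutigkeit des Exponenten}) applied to $X := iu\mal(X+Y)$ with $A_t := \int_0^t\psi^{X+Y}_s(u)\,ds$ and $B_t := \int_0^t\psi^{(X,Y)}_s(u,u)\,ds$ forces $A=B$ up to indistinguishability, hence $\psi^{X+Y}(u)=\psi^{(X,Y)}(u,u)$ outside a $dP\otimes dt$-null set.

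There is no real obstacle here; the only subtle point is to verify that $\psi^{(X,Y)}(u,u)$ really lies in $L(I)$ when used as a local exponent candidate for $X+Y$ (and conversely), which is immediate since both sides are the same predictable process. The same remark applies to the complex-valued nature of the arguments, which only matters through Lemma \ref{Lemma: Eindeutigkeit des Exponenten}, whose statement already covers the complex case.
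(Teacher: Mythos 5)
Your proof is correct, and it spells out exactly what the paper means when it says that this lemma ``follows immediately from the definition of local exponents'': the identity $(u,u)\mal(X,Y)=u\mal(X+Y)$ for constant $u$ shows that the two candidate local martingales are literally the same process, so the existence and the value of the local exponent transfer in both directions, with uniqueness of the exponent up to a $dP\otimes dt$-null set supplied by Lemma \ref{Lemma: Eindeutigkeit des Exponenten}. No further comment is needed.
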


\begin{lemma}\label{Lemma: Driftregel fuer Levy Exponenten}
 Let $X,Z$ be  $\mathbb C^d$-valued semimartingales and $\beta,\gamma$ predictable 
$\mathbb C^d$-valued processes such that 
\begin{enumerate}
\item $\gamma$ has $I$-integrable components,
\item $\beta\gamma$ is $I$-integrable,
\item $Z_t=Z_0+\int_0^t\gamma_s ds+X_t$.
\end{enumerate}
Then $\beta\in\scr U^Z$ if and only if $\beta\in\scr U^X$. In this case $\psi^Z(\beta)=\psi^X(\beta)+i\beta\gamma$ 
outside some $dP\otimes dt$-null set.
\end{lemma}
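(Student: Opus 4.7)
The plan is to reduce everything to a direct computation that rewrites $\beta\mal Z$ as $\beta\mal X$ plus a Lebesgue integral, so that the two candidate stochastic exponentials literally coincide once one puts $\psi^Z(\beta)=\psi^X(\beta)+i\beta\gamma$.

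First I would observe that hypothesis~(3), together with the linearity of the stochastic integral and the fact that $\gamma\mal I$ is absolutely continuous, gives
$$\beta\mal Z_t = \int_0^t\beta_s\gamma_s\,ds + \beta\mal X_t.$$
Hypothesis~(2) ensures that the Lebesgue integral on the right is finite, while hypothesis~(1) plus the $X$-integrability of $\beta$ (implicit in $\beta\in\scr U^X$ or $\beta\in\scr U^Z$) guarantee that $\beta$ is simultaneously $X$-integrable and $Z$-integrable. Thus $\beta\in L(X)\Leftrightarrow\beta\in L(Z)$ on the set where $\beta\gamma$ is $I$-integrable, and both stochastic integrals make sense.

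Next, suppose $\beta\in\scr U^X$ and set $\varphi_s:=\psi^X_s(\beta)+i\beta_s\gamma_s$. By hypothesis~(1) and the fact that $\psi^X(\beta)\in L(I)$, we have $\varphi\in L(I)$, and $\varphi$ is predictable. Using the identity above,
$$\exp\!\Big(i\beta\mal Z_t-\int_0^t\varphi_s\,ds\Big)
=\exp\!\Big(i\beta\mal X_t+i\!\int_0^t\beta_s\gamma_s\,ds-\int_0^t\psi^X_s(\beta)\,ds-i\!\int_0^t\beta_s\gamma_s\,ds\Big),$$
so the left-hand side equals $\exp(i\beta\mal X_t-\int_0^t\psi^X_s(\beta)\,ds)$, which is a local martingale by definition of $\psi^X(\beta)$. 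Hence $\beta\in\scr U^Z$ and $\psi^Z(\beta)=\varphi=\psi^X(\beta)+i\beta\gamma$ outside some $dP\otimes dt$-null set; uniqueness of $\psi^Z(\beta)$ follows from Lemma~\ref{Lemma: Eindeutigkeit des Exponenten}. The converse direction is symmetric: if $\beta\in\scr U^Z$, apply the same argument to $X=Z-Z_0-\gamma\mal I$ with drift $-\gamma$ in place of $\gamma$ to obtain $\beta\in\scr U^X$ with $\psi^X(\beta)=\psi^Z(\beta)-i\beta\gamma$.

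There is essentially no hard step here; the only points requiring attention are the bookkeeping of the integrability conditions (which are precisely what hypotheses~(1) and~(2) provide) and invoking Lemma~\ref{Lemma: Eindeutigkeit des Exponenten} to conclude uniqueness of the predictable compensator rather than merely existence.
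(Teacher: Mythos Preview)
Your argument is correct and is exactly the direct verification from Definition~\ref{Definition: lokaler Exponent} that the paper has in mind; indeed the paper does not spell out a proof but merely notes that this lemma, together with the three surrounding ones, ``follow immediately from the definition of local exponents.'' Your explicit identification of $\exp(i\beta\mal Z-\int\varphi\,ds)$ with $\exp(i\beta\mal X-\int\psi^X(\beta)\,ds)$ and the appeal to Lemma~\ref{Lemma: Eindeutigkeit des Exponenten} for uniqueness is precisely the intended computation.
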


\begin{proposition}\label{P:erhalt der lokalen char}
 Let $\fraum$ be a filtered proability space and $(\scr G_t)_{t\in\mathbb R_+}$  a sub-filtration of 
$(\scr F_t)_{t\in\mathbb R_+}$. Moreover, let $X$ be an $\mathbb R^d$-valued semimartingale 
with integrable characteristics $(B,C,\nu)$  such that $X$ is $(\scr G_t)_{t\in\mathbb R_+}$-adapted and 
$B,C,\nu(A)$ are $(\scr G_t)_{t\in\mathbb R_+}$-predictable for any $A\in\B(\mathbb R^d)$. 
Then $X$ is a semimartingale with  integral characteristics $(B,C,\nu)$ relative to 
filtration $(\scr G_t)_{t\in\mathbb R_+}$as well.
\end{proposition}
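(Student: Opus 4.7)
My plan is to invoke Stricker's theorem first to obtain that $X$ remains a $(\scr G_t)$-semimartingale with some $(\scr G_t)$-integral characteristics $(\widetilde B,\widetilde C,\widetilde\nu)$, and then to identify $(\widetilde B,\widetilde C,\widetilde\nu)=(B,C,\nu)$ component by component. The decisive elementary observation underlying each of the remaining steps is that any bounded $(\scr F_t)$-martingale $N$ which happens to be $(\scr G_t)$-adapted is automatically a $(\scr G_t)$-martingale: for $s\le t$ and $G\in\scr G_s\subset\scr F_s$ one has $E(1_G N_t)=E(1_G E(N_t\mid\scr F_s))=E(1_G N_s)$ and $N_s$ is $\scr G_s$-measurable.

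For $\widetilde\nu=\nu$ the argument is essentially immediate: the jump measure $\mu^X$ is defined pathwise from the jumps of $X$, and every non-negative $(\scr G_t)$-predictable function $W$ on $\Omega\times\mathbb R_+\times\mathbb R^d$ is in particular $(\scr F_t)$-predictable, so the $(\scr F_t)$-compensator property yields $E(W*\mu^X_\infty)=E(W*\nu_\infty)$. Since $\nu$ is by assumption a $(\scr G_t)$-predictable random measure, uniqueness of the $(\scr G_t)$-compensator forces $\widetilde\nu=\nu$.

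For $\widetilde B=B$ the key point is that $X(h)-B$ is an $(\scr F_t)$-local martingale with bounded jumps: indeed $\Delta X(h)=h(\Delta X)$ is bounded by the bound of the truncation function $h$, and since $B$ is the $(\scr F_t)$-predictable dual projection of $X(h)$ its jumps $\Delta B$ are bounded by the same constant. Localising $X(h)-B$ by the $(\scr G_t)$-stopping times $\tau_n:=\inf\{t:|X(h)_t-B_t|\ge n\}$, which are stopping times because $X(h)-B$ is $(\scr G_t)$-adapted and c\`adl\`ag, we obtain bounded $(\scr F_t)$-martingales that are $(\scr G_t)$-adapted and hence $(\scr G_t)$-martingales by the observation above. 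Thus $X(h)-B$ is a $(\scr G_t)$-local martingale and uniqueness of the predictable finite variation part in the canonical decomposition forces $\widetilde B=B$. The remaining identity $\widetilde C=C$ follows from the filtration invariance of the continuous martingale part: writing $X(h)=X_0+M+B$ with the same $M$ under both filtrations and splitting $M=M^c+M^d$ as usual under $(\scr F_t)$, one sees that $M^d=h(x)*(\mu^X-\nu)$ depends only on the jump measure and its compensator, both already known to be filtration invariant. Applying the same localisation argument to $M^d$, which has bounded jumps, it is a $(\scr G_t)$-local martingale, so $M^c=M-M^d$ is a continuous $(\scr G_t)$-local martingale and $C=[M^c,M^c]$ coincides with the pathwise quadratic variation, hence is a $(\scr G_t)$-characteristic as well.

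The main obstacle is precisely the step of preserving the local martingale property under a shrinking of the filtration, a conclusion which is not automatic in general but which is here forced by the fact that the relevant processes $X(h)-B$ and $M^d$ have bounded jumps and therefore admit a localising sequence of $(\scr G_t)$-stopping times along which they are bounded, bringing the elementary observation from the first paragraph into play.
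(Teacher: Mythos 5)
Your proof is correct, but it takes a genuinely different route from the paper. You first invoke Stricker's theorem to preserve the semimartingale property under the shrinkage of filtration, and then identify the three integral characteristics one at a time: $\nu$ via the defining compensation identity for $(\scr G_t)$-predictable integrands, $B$ via the canonical decomposition of the special semimartingale $X(h)$, and $C$ via the pathwise invariance of quadratic variation. The paper instead uses the single characterization of characteristics through the complex-exponential martingale problem \cite[Theorem II.2.42]{js.87}: it forms $A(u)_t=iuB_t-\tfrac{1}{2}u^\top C_t u+\int(e^{iux}-1-iuh(x))\nu_t(dx)$ and $Y=e^{iuX_-}\mal A(u)$, observes that $Y$ is $(\scr G_t)$-predictable c\`adl\`ag and hence $(\scr G_t)$-locally bounded, so that the stopped process $(e^{iuX}-Y)^{\tau_n}$ is a bounded $(\scr F_t)$-martingale, and then applies exactly the elementary observation that opens your proof (via \cite[Corollaire 9.16]{jacod.79}) to conclude it is a $(\scr G_t)$-martingale; this handles all three characteristics simultaneously. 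So the two proofs share the key mechanism — bounded jumps permit a $(\scr G_t)$-localising sequence along which one has bounded $(\scr F_t)$-martingales that are automatically $(\scr G_t)$-martingales — but your argument spreads it across the canonical decomposition while the paper compresses everything into one application of Theorem~II.2.42. Your route buys transparency on which structural fact is doing what; the paper's route buys brevity and avoids having to argue separately that $M^d$ remains purely discontinuous under $(\scr G_t)$. That last point is the one place your sketch is slightly terse: to legitimately conclude $\widetilde C=C$ you should note explicitly that $[M^d]$ is computed pathwise, so $[M^d]=\sum(\Delta M^d)^2$ persists and $M^d$ is therefore purely discontinuous with respect to $(\scr G_t)$ as well, whence $M^c$ is indeed the continuous $(\scr G_t)$-local martingale part.
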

\begin{proof}
 Let $u\in\mathbb R^d$ and define the process
$$A(u)_t:=iuB_t-\frac{1}{2}u^\top C_tu+\int(e^{iux}-1-iuh(x))\nu_t(dx).$$
\cite[Theorem II.2.42]{js.87} yields that $e^{iuX}-e^{iuX_-}\mal A(u)$ is an 
$(\scr F_t)_{t\in\mathbb R_+}$-local martingale. The process 
$Y:=e^{iuX_-}\mal A(u)$ is a $(\scr G_t)_{t\in\mathbb R_+}$-predictable process by assumption with c\`adl\`ag paths. 
This implies that it is  $(\scr G_t)_{t\in\mathbb R_+}$-locally bounded 
because it is c\`adl\`ag and predictable,
cf.\ \cite[Lemma A.1]{kallsen.98a}. If $(\tau_n)_{n\in\mathbb N}$ denotes a corresponding sequence of stopping times,
then
$(e^{iuX}-Y)^{\tau_n}$ is bounded and hence it is a $(\scr F_t)_{t\in\mathbb R_+}$-martingale. 
\cite[Corollaire 9.16]{jacod.79} yields that it is a $(\scr G_t)_{t\in\mathbb R_+}$-martingale. 
Therefore $e^{iuX}-Y$ is a $(\scr G_t)_{t\in\mathbb R_+}$-local martingale. 
Consequently, \cite[Theorem II.2.42]{js.87} yields that $X$ is a $(\scr G_t)_{t\in\mathbb R_+}$-semimartingale 
with integral characteristics $(B,C,\nu)$.

\end{proof}

\subsection{Semimartingale decomposition relative to a semimartingale}
\label{Abschnitt: Semimartingalprojektion}
Let $(X,Y)$ be an $\mathbb R^{1+d}$-valued semimartingale with local characteristics $(b,c,K)$, 
written here in the form
\begin{equation}\label{e:form}
 b=\left(\begin{array}{c}b^X\\b^Y\end{array}\right),
\quad c:=\left(\begin{array}{cc}c^X & c^{X,Y}\\c^{Y,X}& c^Y\end{array}\right).
\end{equation}
Suppose that $\int_0^t\int_{(1,\infty)}e^xK_s(dx)ds<\infty$ for any $t\in\mathbb R_+$ or, equivalently, 
$e^X$ is a special semimartingale. We set
\begin{eqnarray*}
 \Xp_t&:=&\log\E\!\left((c^{X,Y}(c^Y)^{-1})\mal Y^c_t+f*(\mu^{(X,Y)}-\nu^{(X,Y)})_t\right)
\end{eqnarray*}
for any $t\in\mathbb R_+$, where $c^-$ denotes the pseudoinverse of a matrix $c$ in the sense of 
\cite{albert.72}, $Y^c$ is the continuous local martingale part of $Y$, $\mu^{(X,Y)}$ resp.\ $\nu^{(X,Y)}$ 
are the random measure of jumps of $(X,Y)$ and its compensator, and 
$f:\mathbb R^{1+d}\rightarrow\mathbb R,(x,y)\mapsto (e^x-1)1_{\{y\neq0\}}$. We call $\Xp$ and 
$X^\perp:=X-\Xp$ the {\em dependent} resp.\ {\em independent part} of $X$ relative to $Y$.

\begin{lemma}\label{Lemma: Projektion}
$X\mapsto\Xp$ is a projection in the sense that $(\Xp)^{\parallel}=\Xp$.
Moreover, we have $(X+Z)^\parallel=\Xp$ if $Z$ is a semimartingale such that $Z,Y$ are locally
independent.
\end{lemma}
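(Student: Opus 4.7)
The plan is to analyse the local characteristics of $\Xp$ and deduce the claims from the uniqueness of the defining local martingale. Since $\Xp = \log\E(M)$ where $M:= (c^{X,Y}(c^Y)^-)\mal Y^c + f*(\mu^{(X,Y)}-\nu^{(X,Y)})$ is a local martingale with $\Delta M > -1$, the stochastic exponential formula gives
$(\Xp)^c = M^c = (c^{X,Y}(c^Y)^-) \mal Y^c$
and
$\Delta \Xp_s = \log(1+\Delta M_s) = \log\bigl(1 + (e^{\Delta X_s}-1)1_{\{\Delta Y_s\ne 0\}}\bigr) = \Delta X_s\cdot 1_{\{\Delta Y_s \ne 0\}}$.

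For the first statement, I would compute the continuous covariation of $\Xp$ with $Y$: by the integral representation of $(\Xp)^c$,
$$\frac{d[(\Xp)^c,Y^c]}{dt} = c^{X,Y}(c^Y)^- c^Y.$$
A standard argument using Cauchy--Schwarz for covariations shows that the rows of $c^{X,Y}$ lie in the range of $c^Y$ (vectors in $\ker c^Y$ are annihilated by $c^{X,Y}$), and the defining property of the Moore--Penrose pseudoinverse then yields $c^{X,Y}(c^Y)^- c^Y = c^{X,Y}$. Hence $c^{\Xp,Y}=c^{X,Y}$. For the jumps, the display for $\Delta\Xp_s$ shows that $\Delta(\Xp,Y)_s$ equals $(0,0)$ when $\Delta Y_s=0$ and equals $(\Delta X_s,\Delta Y_s)$ when $\Delta Y_s\ne 0$. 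Consequently, with $\tilde f(x,y)=(e^x-1)1_{\{y\ne 0\}}$,
$$\tilde f*\mu^{(\Xp,Y)} = \sum_{s\le\cdot,\,\Delta Y_s\ne 0}(e^{\Delta X_s}-1) = f*\mu^{(X,Y)},$$
and by uniqueness of compensators $\tilde f*\nu^{(\Xp,Y)} = f*\nu^{(X,Y)}$. Therefore the local martingale used to define $(\Xp)^\parallel$ coincides with $M$, and $(\Xp)^\parallel = \log\E(M) = \Xp$.

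For the second statement, I apply Lemma \ref{Lemma: Bed fuer lok unabh} to the locally independent pair $(Z,Y)$: one obtains $c^{Z,Y}=0$ and $K^{(Z,Y)}$ supported on the coordinate axes. Compensating the random measure and integrating $1_{\{z\ne 0,\,y\ne 0\}}$ yields $E\bigl(\sum_s 1_{\{\Delta Z_s\ne 0,\,\Delta Y_s\ne 0\}}\bigr)=0$, so $Z$ and $Y$ almost surely have no common jumps. Writing $U:=X+Z$, bilinearity of the continuous covariation gives $c^{U,Y}=c^{X,Y}+c^{Z,Y}=c^{X,Y}$, while on $\{\Delta Y_s\ne 0\}$ we have $\Delta U_s=\Delta X_s$, so
$$f*\mu^{(U,Y)} = \sum_{s\le\cdot,\,\Delta Y_s\ne 0}(e^{\Delta X_s}-1) = f*\mu^{(X,Y)},$$
and again $f*\nu^{(U,Y)} = f*\nu^{(X,Y)}$ by uniqueness of the compensator. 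Hence the defining local martingale for $U^\parallel$ equals $M$, yielding $U^\parallel = \Xp$.

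The main technical obstacle is justifying the pseudoinverse identity $c^{X,Y}(c^Y)^- c^Y = c^{X,Y}$ in the needed measurable/predictable sense, together with making rigorous the step from local independence to the almost-sure absence of common jumps; both are essentially standard but must be handled carefully. Everything else reduces to the stochastic exponential formula and the uniqueness of the compensator of a random measure.
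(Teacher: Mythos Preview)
Your proof is correct and follows essentially the same route as the paper's: identify $(\Xp)^c=(c^{X,Y}(c^Y)^-)\mal Y^c$ and $\Delta\Xp_t=\Delta X_t\,1_{\{\Delta Y_t\ne0\}}$, then check that $c^{\Xp,Y}=c^{X,Y}$ and $f(\Delta\Xp_t,\Delta Y_t)=f(\Delta X_t,\Delta Y_t)$ (resp.\ $c^{(X+Z),Y}=c^{X,Y}$ and $f(\Delta(X+Z)_t,\Delta Y_t)=f(\Delta X_t,\Delta Y_t)$ via Lemma~\ref{Lemma: Bed fuer lok unabh}), so the defining local martingale is unchanged. In fact you are more explicit than the paper on the pseudoinverse identity $c^{X,Y}(c^Y)^-c^Y=c^{X,Y}$, which the paper simply asserts; your Cauchy--Schwarz/range argument is exactly what is needed there.
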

\begin{proof}
Observe that $(\Xp)^c=(c^{X,Y}(c^Y)^{-1})\mal Y^c$ by \cite[Lemma 2.6(2)]{kallsen.shiryaev.00b}.
Defining $c^{\Xp, Y}$ similarly as $c^{X,Y}$ in (\ref{e:form}), we have
$c^{\Xp, Y}=c^{X,Y}(c^Y)^{-1}c^Y=c^{X,Y}$.
Moreover, $f(\Delta \Xp_t,\Delta Y_t)=f(\Delta X_t,\Delta Y_t)$ for any $t\ge0$,
which implies 
$f*(\mu^{(\Xp,Y)}-\nu^{(\Xp,Y)})=f*(\mu^{(X,Y)}-\nu^{(X,Y)})$ by definition of stochastic
integration relative to compensated random measures.
Together, the first assertion follows.

Using the notation of (\ref{e:form}), note that 
$c^{(X+Z),Y}=c^{X,Y}+c^{Z,Y}=c^{X,Y}$ by Lemma \ref{Lemma: Bed fuer lok unabh}.
Lemma \ref{Lemma: Bed fuer lok unabh} also implies that $Z$ and $Y$ do not jump together
(outside some evanescent set) and hence
$f(\Delta (X+Z)_t,\Delta Y_t)=f(\Delta X_t,\Delta Y_t)$.
This implies 
$f*(\mu^{(\Xp,Y)}-\nu^{(\Xp,Y)})=f*(\mu^{(X,Y)}-\nu^{(X,Y)})$ 
and hence $(X+Z)^\parallel=\Xp$.

\end{proof}

\begin{lemma}\label{Lemma: Semimartingalprojektion}
 $e^{\Xp}$ is a local martingale. Moreover, $X^\perp$ and $(\Xp,Y)$ are locally independent semimartingales. 
Finally, $e^{X^\perp}$ is a local martingale if and only if $e^X$ is a local martingale.
\end{lemma}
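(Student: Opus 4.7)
The plan is to recognize $\Xp$ as the logarithm of a Dol\'eans--Dade exponential, read off the continuous martingale and jump parts of $\Xp$ and $X^\perp$ from that representation, verify local independence of $X^\perp$ from $(\Xp,Y)$ via Lemma \ref{Lemma: Bed fuer lok unabh}, and finally deduce the martingale equivalence from additivity of local exponents under local independence.

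First I set
$$N := (c^{X,Y}(c^Y)^{-1})\mal Y^c + f*(\mu^{(X,Y)}-\nu^{(X,Y)}),$$
noting that specialness of $e^X$ supplies the integrability needed to make the compensated random measure integral a purely discontinuous local martingale, so that $N$ itself is a local martingale with jumps $\Delta N_t = f(\Delta X_t,\Delta Y_t) = (e^{\Delta X_t}-1)1_{\{\Delta Y_t \neq 0\}} > -1$. The stochastic exponential $\E(N)$ is therefore strictly positive, so $\Xp = \log\E(N)$ is well defined and $e^{\Xp} = \E(N)$ is a local martingale, which settles the first assertion.

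For local independence I next compute $(\Xp)^c = (c^{X,Y}(c^Y)^{-1})\mal Y^c$ and $\Delta\Xp_t = \log(1+\Delta N_t) = \Delta X_t\,1_{\{\Delta Y_t\neq 0\}}$, whence $(X^\perp)^c = X^c - (c^{X,Y}(c^Y)^{-1})\mal Y^c$ and $\Delta X^\perp_t = \Delta X_t\,1_{\{\Delta Y_t = 0\}}$. Applying Proposition \ref{Prop: Integrationsformel diff} to $(\Xp)^c$ yields $c^{\Xp,Y} = c^{X,Y}(c^Y)^{-1}c^Y = c^{X,Y}$, where the second equality uses the defining property of the pseudo-inverse together with the fact that the rows of $c^{X,Y}$ lie in the range of $c^Y$ (a standard consequence of positive semi-definiteness of the joint diffusion matrix). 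Hence $c^{X^\perp,Y} = 0$, and since $(\Xp)^c$ is a stochastic integral of $Y^c$, also $c^{X^\perp,\Xp}=0$. On the jump side, $\Delta X^\perp_t\Delta Y_t = 0$ and $\Delta X^\perp_t\Delta\Xp_t = 0$ hold identically, so Lemma \ref{Lemma: Bed fuer lok unabh}(3) yields that $X^\perp$ and $(\Xp,Y)$ are locally independent.

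For the third statement, local independence together with Lemma \ref{Lemma: Bed fuer lok unabh} and Lemma \ref{Lemma: Levy exponenten Formel} gives
$$\psi^X(-i) = \psi^{\Xp+X^\perp}(-i) = \psi^{\Xp}(-i) + \psi^{X^\perp}(-i).$$
The first part of the present lemma combined with Lemma \ref{Lemma: exp(X) Martingal gdw psi(-i)=0} shows $\psi^{\Xp}(-i)=0$, so $\psi^X(-i)=\psi^{X^\perp}(-i)$, and a second application of Lemma \ref{Lemma: exp(X) Martingal gdw psi(-i)=0} delivers the equivalence. The main technical obstacle will be verifying the integrability hypotheses under which $f*(\mu^{(X,Y)}-\nu^{(X,Y)})$ is genuinely a local martingale (this is exactly where specialness of $e^X$ enters) and justifying the pseudo-inverse identity $c^{X,Y}(c^Y)^{-1}c^Y = c^{X,Y}$; the remainder is bookkeeping with the explicit formulas for the continuous parts and the jumps.
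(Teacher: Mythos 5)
Your proof is correct and follows essentially the same route as the paper: recognize $e^{\Xp}$ as the Dol\'eans--Dade exponential of a local martingale, compute the continuous and jump local characteristics of $(X^\perp,\Xp,Y)$, invoke Lemma~\ref{Lemma: Bed fuer lok unabh}(3) for local independence, and then deduce the martingale equivalence from additivity of local exponents together with Lemma~\ref{Lemma: exp(X) Martingal gdw psi(-i)=0}. You fill in a couple of steps the paper leaves implicit --- the derivation $\Delta\Xp_t=\log(1+\Delta N_t)=\Delta X_t1_{\{\Delta Y_t\neq0\}}$ and the pseudo-inverse identity $c^{X,Y}(c^Y)^{-1}c^Y=c^{X,Y}$ via the range inclusion --- but these are precisely the assertions the paper states without elaboration, so the argument is the same.
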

\begin{proof}
 The first statement is obvious. The last statement follows from the first two and
from Lemma \ref{Lemma: exp(X) Martingal gdw psi(-i)=0}.
It remains to prove local independence of $X^\perp$ and $(\Xp,Y)$.
Denote the local characteristics of $(X^\perp,\Xp,Y)$ by 
$(b^{(X^\perp,\Xp,Y)},c^{(X^\perp,\Xp,Y)}, K^{(X^\perp,\Xp,Y)})$
and accordingly for $(X,\Xp,Y)$, $X^\perp$ etc. 
Set $\bar c:=c^{X,Y}(c^Y)^{-1}c^{Y,X}$. 
Since $(\Xp)^c=(c^{X,Y}(c^Y)^{-1})\mal Y^c$, we have
$$c^{(X,\Xp,Y)}=\left(\begin{array}{ccc}c^X &\bar c & c^{X,Y}\\
\bar c&\bar c&c^{X,Y}\\c^{Y,X}&c^{Y,X}&c^Y \end{array}\right)$$
and hence
$$c^{(X^\perp,\Xp,Y)}=\left(\begin{array}{ccc}c^X-\bar c &0 &0\\
0&\bar c&c^{X,Y}\\0&c^{Y,X}&c^Y \end{array}\right)$$
e.g.\ by Proposition \ref{Prop: Ito-formel diff}.
Moreover, 
\begin{eqnarray*}
   \Delta (X^\perp,\Xp,Y)_t
&=&1_{\{\Delta Y_t=0\}}(\Delta X_t,0,0)
+1_{\{\Delta Y_t\not=0\}}(0,\Delta X_t,\Delta Y_t)\\
&=&\left\{\begin{array}{ll}
 (\Delta X^\perp_t,0,0)& \mbox{ if }\Delta X^\perp_t\not=0,\\
(0,\Delta(\Xp,Y)_t)& \mbox{ if }\Delta(\Xp,Y)_t\not=0,\\
0 & \mbox{otherwise}
\end{array}\right.
  \end{eqnarray*}
yields
$$K^{(X^\perp,\Xp,Y)}(A) = K^{X^\perp}(\{x:(x,0,0)\in A\})+K^{(\Xp,Y)}(\{(x,z):(0,x,z)\in A\})$$
for $A\in\B^{2+d}$.
Lemma \ref{Lemma: Bed fuer lok unabh} completes the proof.

\end{proof}

\section{Technical Proofs}
\subsection{Option pricing by Fourier transform}\label{Abschnitt: Fourier transformation und Optionspreise}
By 
\begin{eqnarray}
\F f(u):=\lim_{C\rightarrow\infty}\int_{-C}^\infty f(x)e^{iux}dx
\label{Gleichung: Fourier Transformation} 
\end{eqnarray}
we denote the (left-)improper Fourier transform of a measurable function 
$f:\mathbb R\rightarrow\mathbb C$ for any $u\in\mathbb R$ such that the expression exists. 
If $f$ is Lebesgue integrable, then the improper Fourier transform and the ordinary Fourier transform 
(i.e. $u\mapsto \int f(x)e^{iux}dx$) coincide. In our application in 
Section \ref{Abschnitt: Notationen und Modelvorbereitung} the improper Fourier transform exists 
for any $u\in\mathbb R\setminus\{0\}$. Moreover, we denote by
\begin{eqnarray}
\F^{-1} g(x):=\frac{1}{2\pi}\left(\lim_{\epsilon\downarrow0}\int_\epsilon^\infty e^{-iux}g(u)du
+\lim_{\epsilon\downarrow0}\int_{-\infty}^{-\epsilon} e^{-iux}g(u)du\right)
\label{Gleichung: Inverse Fourier Transformation}  
\end{eqnarray}
an improper inverse Fourier transform, which is suitable to our application in 
Section \ref{Abschnitt: Notationen und Modelvorbereitung}.

\begin{lemma}\label{Lemma: Bedingte Fourier transform}
 Let $(\Omega,\F,P)$ be a probability space and $\G\subset\F$ a sub-$\sigma$-field. 
Furthermore suppose that $f:\Omega\times\mathbb R\rightarrow\mathbb R$ is $\F\otimes\B$-measurable, 
$m:\Omega\rightarrow\mathbb R$ is $\F$-measurable and  $$H(x):=1_{[0,m]}(x)f(x)-1_{[m,0)}(x)f(x)$$ is 
nonnegative with $E(\int_0^mf(x)dx)<\infty$. (Note that $\int_0^mf(x)dx$ is always nonnegative.) Then we have 
\begin{eqnarray}
 \F\{x\mapsto E(H(x)\vert\G)\}(u)=E\left(\int_0^m f(x)e^{iux}dx\bigg\vert\G\right),\quad u\in\rr
\end{eqnarray}
where the improper Fourier transform coincides with the ordinary Fourier transform.
\end{lemma}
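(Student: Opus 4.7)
The plan is to reduce the statement to a conditional Fubini theorem and then verify that the improper Fourier transform reduces to the ordinary one. First, since $f$ is $\F\otimes\B$-measurable and $m$ is $\F$-measurable, $H(\omega,x)$ is jointly measurable and nonnegative, with
$$E\int_{\mathbb R} H(x)\,dx = E\int_0^m f(x)\,dx < \infty$$
by hypothesis (reading $\int_0^m$ with the usual sign convention on $\{m<0\}$). Hence $H\in L^1(\Omega\times\mathbb R,P\otimes\lambda)$ by Tonelli, so for $P$-almost every $\omega$ the map $x\mapsto E(H(x)\vert\G)(\omega)$ lies in $L^1(\mathbb R)$.

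Second, I would establish the conditional Fubini identity: for any bounded measurable $g:\mathbb R\to\mathbb C$,
$$\int_{\mathbb R} g(x)E(H(x)\vert\G)\,dx = E\Bigl(\int_{\mathbb R} g(x)H(x)\,dx\Bigm|\G\Bigr)\quad\text{a.s.}$$
Testing against a bounded $\G$-measurable $Z$, splitting $g$ into real and imaginary parts, and applying the classical Fubini theorem (justified since $|gH|\le\|g\|_\infty H$ is $P\otimes\lambda$-integrable) together with the pull-out property of conditional expectation yields
$$E\Bigl(Z\!\int g(x)E(H(x)\vert\G)dx\Bigr)=\int g(x)E(ZH(x))dx=E\Bigl(Z\!\int g(x)H(x)dx\Bigr),$$
from which the identity follows by the defining property of conditional expectation.

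Third, specialising to $g(x)=e^{iux}$ and noting that
$$\int_{\mathbb R} H(x)e^{iux}\,dx=\int_0^m f(x)e^{iux}\,dx$$
on both $\{m\ge 0\}$ and $\{m<0\}$ (with the sign convention $\int_0^m=-\int_m^0$ when $m<0$, which exactly cancels the sign in the definition of $H$ on $[m,0)$), one obtains
$$\int_{\mathbb R} E(H(x)\vert\G)e^{iux}\,dx = E\Bigl(\int_0^m f(x)e^{iux}\,dx\Bigm|\G\Bigr).$$

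Finally, since $E(H(\cdot)\vert\G)\in L^1(\mathbb R)$ a.s.\ and $|e^{iux}|=1$, dominated convergence gives
$$\lim_{C\to\infty}\int_{-C}^\infty E(H(x)\vert\G)e^{iux}\,dx=\int_{\mathbb R} E(H(x)\vert\G)e^{iux}\,dx,$$
so the improper Fourier transform in \eqref{Gleichung: Fourier Transformation} exists and coincides with the ordinary Fourier transform, yielding the claim. There is no real analytical obstacle here beyond a careful sign-bookkeeping across $\{m\ge 0\}$ and $\{m<0\}$ and the verification that $H$ is $P\otimes\lambda$-integrable so that classical Fubini applies.
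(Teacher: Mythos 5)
Your proof is correct and follows essentially the same route as the paper's: compute that $\int_{\mathbb R} H(x)e^{iux}\,dx=\int_0^m f(x)e^{iux}\,dx$, use the integrability hypothesis and Tonelli/Fubini to justify interchanging conditional expectation with the $dx$-integral, and then observe that $L^1$-integrability in $x$ forces the improper and ordinary Fourier transforms to agree. Your exposition of the conditional Fubini step — testing against bounded $\G$-measurable $Z$ and invoking the pull-out property — merely makes explicit what the paper compresses into ``now we can apply Fubini's theorem,'' so this is a difference in detail rather than in method.
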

\begin{proof}
Let $u\in\mathbb R$. From
\begin{eqnarray*}
\int_0^\infty H(x)e^{iux}dx & = & 1_{\{m\geq0\}}\int_0^mf(x)e^{iux}dx,\\
\int_{-\infty}^0 H(x)e^{iux}dx & = & 1_{\{m<0\}}\int_0^mf(x)e^{iux}dx
\end{eqnarray*}
it follows that $\int_{-\infty}^\infty H(x)e^{iux}dx=\int_0^mf(x)e^{iux}dx$. 
This implies 
\begin{eqnarray*}
 E\left(E\bigg(\int_{-\infty}^\infty H(x)dx\bigg\vert\G\bigg)\right)
=E\bigg(\int_{-\infty}^\infty H(x)dx\bigg)
=E\bigg(\int_0^mf(x)dx\bigg)<\infty
\end{eqnarray*}
and hence
$$\int_{-\infty}^\infty E(H(x)\vert\G)dx
=E\bigg(\int_{-\infty}^\infty H(x)dx\bigg\vert\G\bigg)<\infty.$$
Now we can apply Fubini's theorem and we get
$$\F\{x\mapsto E(H(x)\vert\G)\}(u)=E\bigg(\int_{-\infty}^\infty H(x)e^{iux}dx\bigg\vert\G\bigg)
=E\bigg(\int_0^m f(x)e^{iux}dx\bigg\vert\G\bigg).$$

\end{proof}

The next proposition is a modification of \cite[Proposition 1]{belomestny.reiss.05}, cf.\ also \cite{carr.madan.99}.
\begin{lemma}\label{Lemma: Fourier transformierte der Call-Preise}
 Let $(\Omega,\F,P)$ be a probability space and $\G\subset\F$ a sub-$\sigma$-field. Let $Y$ be a random variable 
such that $E(e^Y)<\infty$ and consider
$$\scr O(x):=\begin{cases}E((e^{Y-x}-1)^+\vert\G) & \text{ if }x\geq0,\\ 
E((1-e^{Y-x})^+\vert\G) & \text{ if }x<0.\end{cases}$$
Then we have
\begin{eqnarray*}
 \F\{x\mapsto\scr O(x)\}(u)&=&\frac{1}{iu}-\frac{E(e^Y\vert\G)}{iu-1}-\frac{E(e^{iuY}\vert\G)}{u^2+iu}
\end{eqnarray*}
and
\begin{eqnarray*}
\lefteqn{\F\{x\mapsto1_{\{x\geq-C\}}\scr O(x)\}(u)}\\
&=&\frac{1}{iu}-\frac{E\left(e^y\vert\G\right)}{iu-1}-\frac{1-E\left(e^{iu\left(Y\vee -C\right)}
\left(1+iu\left(e^{0\wedge(Y+C)}-1\right)\right)\Big\vert\G\right)}{u^2+iu}
\end{eqnarray*}
for any $C\in\mathbb R_+$, $u\in\mathbb R\setminus\{0\}$. If $E(e^Y\vert\G)=1$, then in particular
\begin{eqnarray*}
 \F\{x\mapsto\scr O(x)\}(u) & = & \frac{1-E(e^{iuY}\vert\G)}{u^2+iu}\\
\end{eqnarray*}
for any $u\in\mathbb R\setminus\{0\}$.
\end{lemma}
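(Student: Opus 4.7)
The plan is to invoke Lemma \ref{Lemma: Bedingte Fourier transform} with the choice $f(\omega,x):=e^{Y(\omega)-x}-1$ and $m(\omega):=Y(\omega)\vee(-C)$ for the truncated identity (respectively $m(\omega):=Y(\omega)$ for the improper one). With these choices, the nonnegative function $H(x)=1_{[0,m]}(x)f(x)-1_{[m,0)}(x)f(x)$ coincides after taking $E(\cdot|\G)$ with $1_{\{x\geq -C\}}\scr O(x)$ (respectively $\scr O(x)$), since the sign pattern of $1_{[0,m]}-1_{[m,0)}$ matches exactly the switch from call-like to put-like payoff in the definition of $\scr O$, using that $e^{Y-x}-1\geq 0$ on $[0,Y]$ and $\leq 0$ on $[Y,0)$.

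For the truncated identity, the hypothesis $E(\int_0^{Y\vee(-C)}f\,dx)<\infty$ reduces by direct evaluation of the primitive to the bounds $e^Y-1-Y$ on $\{Y\geq -C\}$ and $C+e^Y-e^{Y+C}$ on $\{Y<-C\}$, both dominated by $C+1+e^Y$ and therefore integrable under the standing assumption $E(e^Y)<\infty$. A direct computation gives
\begin{equation*}
\int_0^m(e^{Y-x}-1)e^{iux}dx=\frac{e^Y(e^{(iu-1)m}-1)}{iu-1}-\frac{e^{ium}-1}{iu},
\end{equation*}
and substituting $m=Y\vee(-C)$, then invoking the key algebraic identity $e^Y=e^{Y\vee(-C)}\,e^{0\wedge(Y+C)}$ together with the partial fraction $\frac{1}{iu(iu-1)}=-\frac{1}{u^2+iu}$, one regroups the four terms into the compact form $\frac{1}{iu}-\frac{e^Y}{iu-1}-\frac{e^{iu(Y\vee(-C))}(1+iu(e^{0\wedge(Y+C)}-1))}{u^2+iu}$, from which the stated truncated formula follows upon taking the conditional expectation.

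The improper identity is then obtained by letting $C\to\infty$: pointwise $Y\vee(-C)\to Y$ and $0\wedge(Y+C)\to 0$, while the quantity under the conditional expectation is uniformly bounded by $1+|u|$, so dominated convergence gives convergence to $E(e^{iuY}|\G)$ and the constants combine to produce the first formula. The martingale-case identity is finally immediate from the first formula by substituting $E(e^Y|\G)=1$ and collapsing the first two summands via $\frac{1}{iu}-\frac{1}{iu-1}=\frac{1}{u^2+iu}$. The principal technical obstacle is the bookkeeping behind the identity $e^Y=e^{Y\vee(-C)}\,e^{0\wedge(Y+C)}$: it is what merges the three separate sub-cases $Y\geq 0$, $-C\leq Y<0$, and $Y<-C$ arising from the elementary evaluation of the inner integral into a single closed-form expression.
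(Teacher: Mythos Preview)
Your argument is correct and mirrors the paper's proof: the same choice $f(x)=e^{Y-x}-1$, $m=Y\vee(-C)$ in Lemma \ref{Lemma: Bedingte Fourier transform}, the same explicit antiderivative, and the same dominated-convergence passage $C\to\infty$ using the uniform bound $1+|u|$. One caveat: the parenthetical alternative $m=Y$ for the improper transform would require $E(Y^-)<\infty$, which is not part of the hypotheses---but you rightly do not rely on this and instead recover the improper formula as the limit of the truncated one, exactly as the paper does.
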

\begin{proof}
Let $C\in\mathbb R_+$, $u\in\mathbb R\setminus\{0\}$. We define $m:=(Y\vee -C)$, 
$f(x):=e^{Y-x}-1$, and $H(x):=1_{[0,m]}(x)f(x)-1_{[m,0)}(x)f(x)$. 
Then we have $1_{\{x\geq -C\}}\scr O(x)=E(H(x)\vert\G)$, $H\geq 0$, and
$$E\left(\int_0^mf(x)dx\right)=E(e^Y-m-e^{Y-m})<\infty.$$
Hence Lemma \ref{Lemma: Bedingte Fourier transform} yields
\begin{eqnarray*}
\int_{-C}^\infty\scr O(x)e^{iux}dx &=& \scr F\{x\mapsto E(H(x)\vert\G)\}(u)\\
& = & E\left(\int_0^m(e^{Y-x}-1)e^{iux}dx\Bigg\vert\G\right)\\
& = & E\left(\bigg[\frac{e^{Y+(iu-1)x}}{iu-1}-\frac{e^{iux}}{iu}\bigg]_{x=0}^m\Bigg\vert\G\right)\\
& = & \frac{1}{iu}-\frac{E(e^Y\vert\G)}{iu-1}-\frac{E\left(e^{ium}(1+iu(e^{Y-m}-1))\vert\G\right)}{u^2+iu}.
\end{eqnarray*}
Since $\vert e^{ium}(1+iu(e^{Y-m}-1))\vert\leq1+\vert u\vert$, we can apply Lebesgue's theorem and get
$$E\left(e^{ium}(1+iu(e^{Y-m}-1))\vert\G\right) \overset{C\rightarrow\infty}{\longrightarrow} 
E\left(e^{iuY}\vert\G\right).$$

\end{proof}

\begin{corollary}\label{Korollar: Abschaetzung der Fourier transformierten}
Let $Y$ be a random variable with $E(e^Y)<\infty$. Define
$$O(x):=\begin{cases} E\left((e^{Y-x}-1)^+\right) & \text{ if }x\geq0,\\ 
E\left((1-e^{Y-x})^+\right) & \text{ if }x<0.\end{cases}$$
Then we have
$$\left\vert\int_{-C}^\infty O(x)e^{iux}dx\right\vert\leq E(e^Y)+\frac{1+2\vert u\vert}{u^2}$$
for any $C\in\mathbb R_+$, $u\in\mathbb R\setminus\{0\}$.
\end{corollary}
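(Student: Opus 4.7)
The plan is to invoke Lemma \ref{Lemma: Fourier transformierte der Call-Preise} applied to the trivial $\sigma$-field $\G=\{\emptyset,\Omega\}$. That lemma directly yields the closed-form identity
\begin{equation*}
\int_{-C}^\infty O(x)e^{iux}\,dx \;=\; \frac{1}{iu} \;-\; \frac{E(e^Y)}{iu-1} \;-\; \frac{1 - E\bigl(e^{iu(Y\vee -C)}(1+iu(e^{0\wedge (Y+C)}-1))\bigr)}{u^2+iu}
\end{equation*}
for any $C\in\mathbb R_+$, $u\in\mathbb R\setminus\{0\}$. A brute-force triangle inequality on this form already gives a bound of the right order, but to match the sharp constant $1+2|u|$ in the claim I would first use the one-line partial-fraction identity $\tfrac{1}{iu}=\tfrac{1}{iu-1}+\tfrac{1}{u^2+iu}$, which telescopes cleanly against the other two summands to give the cleaner representation
\begin{equation*}
\int_{-C}^\infty O(x)e^{iux}\,dx \;=\; \frac{1-E(e^Y)}{iu-1} \;+\; \frac{E\bigl(e^{iu(Y\vee -C)}(1+iu(e^{0\wedge (Y+C)}-1))\bigr)}{u^2+iu}.
\end{equation*}

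Next I would bound each summand. For the first, the triangle inequality together with $|iu-1|=\sqrt{u^2+1}\ge 1$ and $|1-E(e^Y)|\le 1+E(e^Y)$ gives an estimate of $E(e^Y)+1/\sqrt{u^2+1}$. For the second, I would split the integrand linearly as $e^{iu(Y\vee -C)}+iu\cdot e^{iu(Y\vee -C)}(e^{0\wedge (Y+C)}-1)$. Using $|e^{iu(Y\vee -C)}|=1$ together with $|u^2+iu|=|u|\sqrt{u^2+1}$ controls the first piece by $1/(|u|\sqrt{u^2+1})$, while for the second piece the factor $\bigl|\tfrac{iu}{u^2+iu}\bigr|=\tfrac{1}{\sqrt{u^2+1}}$ combined with the pointwise bound $|e^{0\wedge (Y+C)}-1|\le 1$ (since the exponent is non-positive, the value lies in $(0,1]$) yields an overall bound of $\tfrac{1}{|u|\sqrt{u^2+1}}+\tfrac{1}{\sqrt{u^2+1}}$ for this whole summand.

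Summing the two contributions produces $E(e^Y)+\tfrac{2|u|+1}{|u|\sqrt{u^2+1}}$, and a final application of the elementary inequality $\sqrt{u^2+1}\ge|u|$ sharpens the denominator from $|u|\sqrt{u^2+1}$ down to $u^2$, giving exactly the claim $E(e^Y)+\tfrac{1+2|u|}{u^2}$. I do not anticipate any serious obstacle: the whole argument is a routine triangle-inequality computation once the closed-form expression from Lemma \ref{Lemma: Fourier transformierte der Call-Preise} is in hand. The only mildly delicate point is the upfront algebraic manipulation via the partial-fraction identity, without which one would obtain a slightly weaker constant such as $2+2|u|$ in place of $1+2|u|$.
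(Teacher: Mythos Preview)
Your proof is correct and follows the same approach as the paper, which simply states that the corollary follows from the second statement of Lemma~\ref{Lemma: Fourier transformierte der Call-Preise}. You have carefully supplied the bounding details that the paper leaves implicit, and your partial-fraction manipulation is a clean way to recover the precise constant $1+2|u|$ rather than a weaker one.
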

\begin{proof}
 This follows from the second statement of Lemma \ref{Lemma: Fourier transformierte der Call-Preise}.

\end{proof}

\begin{proposition}\label{Proposition: Fourier transformierte der Call-Preise}
 Let $(\Omega,\F,P)$ be a probability space and $\G\subset\F$ a sub-$\sigma$-field. 
Let $Y$ be a random variable with $E(e^Y\vert\G)=1$ and define 
$$\scr O(x):=\begin{cases}E((e^{Y-x}-1)^+\vert\G) & \text{ if }x\geq0,\\ 
E((1-e^{Y-x})^+\vert\G) & \text{ if  }x<0.\end{cases}$$
Then we have
\begin{eqnarray*}
 \scr O(x) & = & \F^{-1}\bigg\{u\mapsto\frac{1-E(e^{iuY}\vert\G)}{u^2+iu}\bigg\}(x),\\
 E(e^{iuY}\vert\G) & = & 1-(u^2+iu)\F\{x\mapsto\scr O(x)\}(u)
\end{eqnarray*}
for any $u,x\in\mathbb R$.
\end{proposition}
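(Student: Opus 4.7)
The plan is to reduce the first identity to Lemma~\ref{Lemma: Fourier transformierte der Call-Preise} and handle the inverse Fourier transform by careful approximation.

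First, the second identity falls out of Lemma~\ref{Lemma: Fourier transformierte der Call-Preise}: under the hypothesis $E(e^Y\vert\G)=1$ its third assertion gives
$$\F\{\scr O\}(u)=\frac{1-E(e^{iuY}\vert\G)}{u^2+iu},\quad u\in\rr\setminus\{0\},$$
which rearranges to the claim for $u\neq 0$. At $u=0$ both sides equal $1$, so there is nothing more to prove.

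For the first identity, the idea is to invert $\F$ in the equation above. The obstacle is that $\scr O$ may fail to be Lebesgue integrable: on $[0,\infty)$ it is dominated by $e^{-x}E(e^Y\vert\G)=e^{-x}$, but on $(-\infty,0]$ its decay is governed only by $P(Y<x\vert\G)$, which can be slow. I would therefore pass through the truncation $\scr O_C(x):=1_{[-C,\infty)}(x)\scr O(x)$, which lies in $L^1(\rr)$ for every $C>0$ because it is bounded on $[-C,0]$ and dominated by $e^{-x}$ on $[0,\infty)$. The second assertion of Lemma~\ref{Lemma: Fourier transformierte der Call-Preise} furnishes a closed form for $\F\{\scr O_C\}(u)$ as a sum of rational terms in $u$ plus a conditional expectation converging to $E(e^{iuY}\vert\G)$ as $C\to\infty$, and Corollary~\ref{Korollar: Abschaetzung der Fourier transformierten} bounds $\F\{\scr O_C\}$ uniformly in $C$ by an expression of order $1+(1+2\vert u\vert)/u^2$.

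Now invert. Using the partial fractions $1/(u^2+iu)=1/(iu)+1/(1-iu)$, the improper inverse Fourier transform of $1/(iu)$ is a signed step function (a principal-value computation), that of $1/(1-iu)$ is the decaying exponential $e^{x}1_{\{x\le 0\}}$, and the conditional-expectation term in $\F\{\scr O_C\}$ inverts, via Fubini, to a combination of conditional expectations involving $(e^{Y-x}-1)^{\pm}$ that reassembles to $\scr O_C(x)$ pointwise for every $x>-C$. Letting $C\to\infty$, pointwise convergence $\scr O_C(x)\to\scr O(x)$ is immediate for every fixed $x$; on the Fourier side the uniform bound above combined with dominated convergence inside the principal-value integral defining $\F^{-1}$ identifies the limit with the improper inverse Fourier transform of $u\mapsto (1-E(e^{iuY}\vert\G))/(u^2+iu)$, yielding the first identity.

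The main obstacle is the interplay of three limit operations: the $C\to\infty$ removal of the truncation on $\scr O$, the $C'\to\infty$ right-cutoff inside $\F^{-1}$, and the $\epsilon\downarrow 0$ symmetric exclusion of the singular point $u=0$. The uniform Fourier-side bound from Corollary~\ref{Korollar: Abschaetzung der Fourier transformierten} together with the local boundedness of $\scr O$ should enable a dominated-convergence argument at each stage, but the order in which the limits are taken has to be tracked with some care.
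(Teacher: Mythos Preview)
Your treatment of the second identity is correct and matches the paper.

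For the first identity your route is genuinely different from the paper's, and as written it has a gap. The paper does \emph{not} truncate~$\scr O$; instead it multiplies by an exponential damping factor $e^{\alpha x}$ with $\alpha\in(0,1)$, computes the resulting Fourier transform via Lemma~\ref{Lemma: Bedingte Fourier transform}, observes that this damped transform is Lebesgue integrable in~$u$, applies the \emph{classical} Fourier inversion theorem, and finally uses a contour integral (Cauchy's theorem, with a small semicircular detour around the pole at~$0$) to move the line of integration from $\Im z=-\alpha$ back to the real axis. The damping buys integrability of the transform; the contour argument converts ordinary inversion into the principal-value inversion $\F^{-1}$ of \eqref{Gleichung: Inverse Fourier Transformation}.

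Your truncation approach stumbles at two points. First, the step ``via Fubini, the conditional-expectation term inverts to \ldots'' is exactly the Fourier inversion you are trying to prove, now for $\scr O_C$ instead of $\scr O$; you have not explained why this is any easier. The function $\scr O_C$ is in $L^1\cap L^\infty$, but its transform contains the piece $1/(iu)$ and is not in $L^1(\rr)$, so classical inversion does not apply and some pointwise inversion theorem (Dirichlet--Jordan, Carleson, \ldots) would have to be invoked and matched to the paper's specific improper $\F^{-1}$. Second, and more concretely, the dominated-convergence step for $C\to\infty$ fails: the bound from Corollary~\ref{Korollar: Abschaetzung der Fourier transformierten} is $E(e^Y)+(1+2|u|)/u^2$, which tends to $E(e^Y)=1$ as $|u|\to\infty$ and is therefore \emph{not} integrable on $\{|u|>\epsilon\}$. (A direct estimate on $\F\{\scr O_C\}(u)$ from the explicit formula gives only $O(1/|u|)$ at infinity, which is still not enough.) So you cannot pass the $C$-limit through the $du$-integral with the tools at hand. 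This is precisely why the paper introduces the damping factor: it produces $O(1/u^2)$ decay of the transform, which \emph{is} integrable, and relegates the delicate limiting to a single contour argument rather than a three-limit interchange.
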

\begin{proof}
The second equation is a restatement of Lemma \ref{Lemma: Fourier transformierte der Call-Preise}. 
Let $0<\alpha<1$ and define
$O(x):=e^{\alpha x}\scr O(x)$, $m:=Y$, $f(x):=e^{\alpha x}(e^{Y-x}-1)$, and 
$H(x):=1_{[0,m]}(x)f(x)-1_{[m,0)}(x)f(x)$ for any $x\in\mathbb R$. Then $H$ is nonnegative, 
$O(x)=E(H(x)\vert\G)$, and 
$$E\left(\int_0^mf(x)dx\right)=E\left(\frac{e^{\alpha Y}}{\alpha^2-\alpha}+
\frac{e^Y}{1-\alpha}+\frac{1}{\alpha}\right)<\infty.$$
Lemma \ref{Lemma: Bedingte Fourier transform} yields
\begin{eqnarray*}
 \scr F\{x\mapsto O(x)\}(u) &=& E\left(\int_0^mf(x)e^{iux}dx\bigg|\G\right) \\
&=& \frac{E\left(e^{(\alpha+iu)Y}\Big|\G\right)-1}{(\alpha+iu)^2-(\alpha+iu)}.
\end{eqnarray*}
We have $E(|e^{(\alpha+iu)Y}|)\leq E(1+e^Y)=2$ and thus $u\mapsto \scr F\{x\mapsto O(x)\}(u)$ is integrable. 
The Fourier inversion theorem yields
$$O(x)=\scr F^{-1}\{u\mapsto\scr F\{\tilde x\mapsto O(\tilde x)\}(u)\}(x)$$
because the ordinary inverse Fourier transform coincides with the improper inverse Fourier transform for 
Lebesgue-integrable functions. Define
$$g:\{z\in\mathbb C\setminus\{0\}:-1< \Re(z)\leq 0\}\rightarrow\mathbb C,\quad 
z\mapsto \frac{E(e^{-zY}|\G)-1}{z^2+z}.$$
$g$ is continuous and holomorphic in the interior of its domain. Let $0<\epsilon<\frac{1}{2}=:\alpha$ and define
\begin{eqnarray*}
 \gamma_{(1,\epsilon)}:[-1,1]\rightarrow\mathbb C,&&t\mapsto i\frac{t}{\epsilon}-\frac{1}{2},\\
 \gamma_{(2,\epsilon)}:[0,1]\rightarrow\mathbb C,&&t\mapsto i\frac{1}{\epsilon}-\frac{1-t}{2},\\
 \gamma_{(3,\epsilon)}:[0,1]\rightarrow\mathbb C,&&t\mapsto i(1-t)\left(\frac{1}{\epsilon}-\epsilon\right)+i\epsilon,\\
 \gamma_{(4,\epsilon)}:[0,\pi]\rightarrow\mathbb C,&&t\mapsto i\epsilon e^{it}, \\
 \gamma_{(5,\epsilon)}:[0,1]\rightarrow\mathbb C,&&t\mapsto it\left(\epsilon-\frac{1}{\epsilon}\right)-i\epsilon,\\
 \gamma_{(6,\epsilon)}:[0,1]\rightarrow\mathbb C,&&t\mapsto -i\frac{1}{\epsilon}-\frac{t}{2}
\end{eqnarray*}
as well as $\Gamma_\epsilon:=\sum_{k=1}^6\gamma_{(k,\epsilon)}$.
Cauchy's integral theorem yields
$$\int_{\Gamma_\epsilon}g(z)e^{xz}dz=0.$$
Moreover we have 
$$\frac{1}{2\pi i}\int_{\gamma_{(1,\epsilon)}}g(z)e^{xz}dz
\overset{\epsilon\rightarrow0}{\longrightarrow}O(x)e^{-{\frac{1}{2}}x}=\scr O(x)$$
and 
$$\int_{\gamma_{(k,\epsilon)}}g(z)e^{xz}dz\overset{\epsilon\rightarrow0}{\longrightarrow}0$$
for $k\in\{2,6\}$ and even for $k=4$ because $zg(z)\to0$ for $z\to0$.
Thus we conclude
$$\frac{1}{2\pi}\left(\int_{-{1/\epsilon}}^{-\epsilon}g(-iu)e^{-iux}du+\int_{\epsilon}^{1/\epsilon}
g(-iu)e^{-iux}du\right)$$
\begin{eqnarray*}
&=&\frac{1}{2\pi i}\int_{-\gamma_{(3,\epsilon)}-\gamma_{(5,\epsilon)}}g(z)e^{xz}dz  \\
&=&\frac{1}{2\pi i}\int_{\gamma_{(1,\epsilon)}-\Gamma_\epsilon+\gamma_(2,\epsilon)+\gamma_(4,\epsilon)
+\gamma_(6,\epsilon)}g(z)e^{xz}dz \\
& \overset{\epsilon\rightarrow0}{\longrightarrow} & \scr O(x).
\end{eqnarray*}
Since 
$$\int_{-\infty}^{-1/\epsilon}g(-iu)e^{-iux}du
+\int_{1/\epsilon}^{\infty}g(-iu)e^{-iux}du
\overset{\epsilon\rightarrow0}{\longrightarrow}0,$$
we have 
$$\frac{1}{2\pi}\left(\int_{-\infty}^{-\epsilon}g(-iu)e^{-iux}du
+\int_{\epsilon}^{\infty}g(-iu)e^{-iux}du\right)
\overset{\epsilon\rightarrow0}{\longrightarrow}\scr O(x)$$
and hence 
$$\F^{-1}\bigg\{u\mapsto\frac{1-E(e^{iuY}\vert\G)}{u^2+iu}\bigg\}(x)=\scr O(x).$$

\end{proof}

\begin{proposition}\label{Proposition: Fourier Transformation von Martingalen}
 Let $(N(x))_{x\in\mathbb R}$ be a family of nonnegative local martingales and $(\tau_n)_{n\in\mathbb N}$ 
a common localising sequence for all $N(x)$ such that
\begin{enumerate}
 \item $(\omega,x)\mapsto N_t(x)(\omega)$ is $\F\otimes\B$-measurable for all $t\in\mathbb R_+$,
 \item $x\mapsto N_{t}(x)(\omega)$ is right-continuous and $\int_{-C}^\infty N_t(x)(\omega) dx<\infty$ for all 
$t\in\mathbb R_+$,
\item $\lim_{C\rightarrow\infty}\int_{-C}^\infty e^{iux}N_t(x)(\omega) dx$ exists for all $\omega\in\Omega$, 
$t\in\mathbb R_+$, $u\in\mathbb R\setminus\{0\}$,
\item for any $n\in\mathbb N,t\in\mathbb R, u\in\mathbb R\setminus\{0\}$ there is an integrable random variable $Z$ 
such that $$\left\vert\int_{-C}^\infty e^{iux}N^{\tau_n}_t(x)(\omega) dx\right\vert\leq Z$$ for any $C\in\mathbb R_+$.
\end{enumerate}
Define the (improper, cf.\ (\ref{Gleichung: Fourier Transformation})) Fourier transform of $N$ by
$$X_{t}(u):=\F\{x\mapsto N_{t}(x)\}(u).$$
If $X(u)$ has c\`adl\`ag paths, then it is a local martingale for all $u\in\mathbb R\setminus\{0\}$ 
with common localising sequence $(\tau_n)_{n\in\mathbb N}$. 
\end{proposition}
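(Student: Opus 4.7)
The plan is to establish that for each $n\in\mathbb N$ the stopped process $X^{\tau_n}(u)$ is a true martingale; combined with the c\`adl\`ag hypothesis this will give that $X(u)$ is a local martingale with localising sequence $(\tau_n)$. Fix $u\in\mathbb R\setminus\{0\}$. Adaptedness of $X^{\tau_n}(u)$ follows from (1): for each $C,D$, the integral $\int_{-C}^D e^{iux}N^{\tau_n}_t(x)\,dx$ is a.s.\ the limit of Riemann sums in the $\F_t$-measurable random variables $N^{\tau_n}_t(x_j)$ by the right-continuity in $x$ from (2), and the improper limits $D\to\infty$ and $C\to\infty$ provided by (3) preserve measurability. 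Integrability is immediate from (4) via $|X^{\tau_n}_t(u)|\le Z$.

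The martingale property I would prove at two scales. First, for fixed $C\in\mathbb R_+$, set $R^C_t:=\int_{-C}^\infty e^{iux}N^{\tau_n}_t(x)\,dx$ and show $R^C$ is a martingale. I would apply a conditional Fubini theorem to the jointly measurable function $(x,\omega)\mapsto e^{iux}N^{\tau_n}_t(x)(\omega)1_{(-C,\infty)}(x)$: assumption (2) gives $\int_{-C}^\infty|e^{iux}N^{\tau_n}_t(x)|\,dx<\infty$ a.s., and (4) gives $R^C_t\in L^1$. This yields
$$E[R^C_t\vert\F_s]=\int_{-C}^\infty e^{iux}E[N^{\tau_n}_t(x)\vert\F_s]\,dx\quad\text{a.s.}$$
Since $(\tau_n)$ is a localising sequence for $N(x)$, each $N^{\tau_n}(x)$ is a uniformly integrable martingale, so $E[N^{\tau_n}_t(x)\vert\F_s]=N^{\tau_n}_s(x)$ a.s.\ for every fixed $x$. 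To convert this pointwise-in-$x$ a.s.\ identity into an a.s.\ identity of $dx$-integrals I would apply Tonelli to the nonnegative jointly measurable function $|E[N^{\tau_n}_t(x)\vert\F_s]-N^{\tau_n}_s(x)|$ whose $P$-expectation vanishes for every $x$; hence the two jointly measurable versions coincide $dx\otimes dP$-a.e., giving $E[R^C_t\vert\F_s]=R^C_s$ a.s.

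Second, I would let $C\to\infty$. By (3), $R^C_t\to X^{\tau_n}_t(u)$ pointwise in $\omega$, and by (4) the convergence is dominated by an integrable $Z$. Conditional dominated convergence then yields
$$E[X^{\tau_n}_t(u)\vert\F_s]=\lim_{C\to\infty}E[R^C_t\vert\F_s]=\lim_{C\to\infty}R^C_s=X^{\tau_n}_s(u)\quad\text{a.s.,}$$
as required.

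The hard part will be the conditional Fubini step. The integrand $e^{iux}N^{\tau_n}_t(x)$ need not be absolutely integrable on $[-C,\infty)\times\Omega$ (only fiberwise in $\omega$), so a straight application of classical Fubini is ruled out. The argument relies on a conditional-expectation version of Fubini requiring only fiberwise absolute integrability together with $L^1$-ness of the outer integral, both of which are supplied by the combination of (2) and (4). One also has to handle the identification of the two jointly measurable versions of $E[N^{\tau_n}_t(x)\vert\F_s]$ before integrating against $e^{iux}\,dx$, which is precisely where nonnegativity of $|E[\,\cdot\,]-N^{\tau_n}_s(x)|$ and Tonelli enter.
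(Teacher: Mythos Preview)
Your overall architecture matches the paper's exactly: fix $n$, work with the truncated integral $R^C_t=\int_{-C}^\infty e^{iux}N^{\tau_n}_t(x)\,dx$, show it is a martingale, and then pass to the limit $C\to\infty$ via assumption (4) and dominated convergence. The limit step is identical to the paper's.

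The divergence is precisely at the point you flagged as ``the hard part''. You invoke a conditional-Fubini theorem under the hypotheses ``fiberwise absolute integrability plus $L^1$-ness of the outer integral''. That is not a standard result, and those hypotheses alone do not guarantee that
\[
E\!\left[\int_{-C}^\infty e^{iux}N^{\tau_n}_t(x)\,dx\;\Big|\;\F_s\right]=\int_{-C}^\infty e^{iux}\,E[N^{\tau_n}_t(x)\mid\F_s]\,dx;
\]
without joint integrability (i.e.\ without $\int_{-C}^\infty E[N^{\tau_n}_t(x)]\,dx<\infty$) the swap of $E[\cdot\mid\F_s]$ and $\int dx$ for a \emph{signed} integrand is not justified, and nothing in the assumptions provides that bound. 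Your subsequent Tonelli argument for identifying versions of $E[N^{\tau_n}_t(x)\mid\F_s]$ is fine, but it comes too late: the exchange itself is where the argument breaks.

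The paper resolves exactly this difficulty by the trick you did not use: write $e^{iux}=\sum_{k=0}^3 i^k c(k,ux)$ with
\[
c(0,\cdot)=\cos^+,\quad c(1,\cdot)=\sin^+,\quad c(2,\cdot)=\cos^-,\quad c(3,\cdot)=\sin^-,
\]
so that each integrand $c(k,ux)N^{\tau_n}_t(x)$ is \emph{nonnegative}. Then Tonelli (which needs no integrability hypothesis) gives, for every $A\in\F_s$,
\[
E\!\left[1_A\int_{-C}^\infty c(k,ux)N^{\tau_n}_t(x)\,dx\right]=\int_{-C}^\infty c(k,ux)\,E[1_A N^{\tau_n}_t(x)]\,dx
=E\!\left[1_A\int_{-C}^\infty c(k,ux)N^{\tau_n}_s(x)\,dx\right],
\]
and summing over $k$ yields the martingale identity for $R^C$. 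In other words, the nonnegativity of $N(x)$---which you used only at the version-identification stage---is the device that makes the main exchange work. Plugging this decomposition into your Step~1 fixes the gap and turns your proof into the paper's.
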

\begin{proof}For any $C\in\mathbb R_+$, $u\in\mathbb R\setminus\{0\},\omega\in\Omega, t\in\mathbb R_+$ define 
$$X_t^C(u)(\omega):=\int_{-C}^\infty e^{iux}N_t(x)(\omega)dx.$$
Fix $n\in\mathbb N, C\in\mathbb R_+$, $u\in\mathbb R\setminus\{0\}$.  
Then $X^C_{t\wedge\tau_n(\omega)}(u)(\omega)=\int_{-C}^\infty e^{iux}N_{t\wedge\tau_n(\omega)}(x)(\omega)dx$ 
for any $t\in\mathbb R_+$, $\omega\in\Omega$.
Setting
\begin{eqnarray*}c(k,x) & := & \begin{cases}
           1_{\cos(x)>0}\cos(x) & \text{for }k=0,\\
           1_{\sin(x)>0}\sin(x) & \text{for }k=1,\\
           -1_{\cos(x)<0}\cos(x) & \text{for }k=2,\\
           -1_{\sin(x)<0}\sin(x) & \text{for }k=3,\\
          \end{cases}\\
I^C_t(k) & := & \int_{-C}^\infty c(k,ux)N_{t\wedge\tau_n}(x)dx
\end{eqnarray*}
yields $$X^C_{t\wedge\tau_n}(u)=\sum_{k=0}^3i^k I^C_t(k).$$
Since $c(k,\cdot):\mathbb R\rightarrow\mathbb R_+$ and hence $I^C(k)$ are positive, 
we can apply Tonelli's theorem and conclude that $I^C(k)$ is a martingale up to the c\`adl\`ag property for all 
$k\in\{0,1,2,3\}$. Thus $(t,\omega)\mapsto X^C_{t\wedge\tau_n}(u)(\omega)$ is a martingale up to the 
c\`adl\`ag property as well. By the definitions of $X^C$ and $X$ 
we have $X_t(u)(\omega)=\lim_{C\rightarrow\infty}X_t^C(u)(\omega)$ and thus we get
$$X_{t\wedge\tau_n(\omega)}(u)(\omega)=\lim_{C\rightarrow\infty}X_{t\wedge\tau_n(\omega)}^C(u)(\omega).$$
The fourth assumption on $N$ and Lebesgue's theorem yield
\begin{eqnarray*}
 E(X_{t\wedge\tau_n}(u)\vert\scr F_s)&=&E\left(\lim_{C\rightarrow\infty}X_{t\wedge\tau_n}^C(u)\Big\vert\scr F_s\right)\\
&=&\lim_{C\rightarrow\infty}E\left(X_{t\wedge\tau_n}^C(u)\Big\vert\scr F_s\right) \\
&=&X_{s\wedge\tau_n}(u)
\end{eqnarray*}
for $s\leq t$. Thus $(t,\omega)\rightarrow X_t(u)(\omega)$ is a local martingale.

\end{proof}

\subsection{Essential infimum of a subordinator}
\begin{definition}\label{Definition: infimum Prozess}
 Let $\psi$ be a deterministic local exponent (of some semimartingale $X$). We define the {\em infimum process} 
$E$ of $\psi$ by $E_t:=\essinf X_t$ for any $t\in\mathbb R_+$. 
\end{definition}

By \cite[III.2.16]{js.87}, $X$ in the previous definition is a PII whose law is determined by $\psi$.
Since $E_t$ is in turn determined by the law of $X_t$, the infimum process $E$ does not depend
on the particular choice of $X$.

\begin{lemma}\label{Lemma: Additivitaet und essinf}
 Let $X,Y$ be independent random variables. Then
$$\essinf (X+Y)=\essinf \ X+\essinf\ Y.$$
\end{lemma}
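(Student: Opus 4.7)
The plan is to prove both inequalities separately, with the nontrivial direction relying on independence to produce joint events of positive probability.

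First I would set $a:=\essinf X$ and $b:=\essinf Y$ and recall that by definition $X\ge a$ almost surely and $Y\ge b$ almost surely. The easy direction ``$\ge$'' is then immediate: since $X+Y\ge a+b$ a.s., we obtain $\essinf(X+Y)\ge a+b$.

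For the reverse direction ``$\le$'' I would argue by contradiction. Suppose $c:=\essinf(X+Y)>a+b$, and pick any $\epsilon\in(0,(c-a-b)/2)$. By the very definition of the essential infimum, neither $a+\epsilon$ nor $b+\epsilon$ is an almost-sure lower bound, so
\[
P(X<a+\epsilon)>0\quad\text{and}\quad P(Y<b+\epsilon)>0.
\]
Here independence enters: $P(X<a+\epsilon,\,Y<b+\epsilon)=P(X<a+\epsilon)P(Y<b+\epsilon)>0$. On this joint event we have $X+Y<a+b+2\epsilon<c$, contradicting $X+Y\ge c$ almost surely. Hence $c\le a+b$, which gives the matching inequality.

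The main obstacle — in fact the only subtle point — will be handling the boundary cases where $a$ or $b$ may equal $-\infty$ (or $+\infty$), since the argument above implicitly assumed $a,b\in\mathbb R$. If $a=-\infty$, the inequality $\essinf(X+Y)\le a+b=-\infty$ would need $P(X<-M)>0$ for every $M$, which together with $P(Y<b+1)>0$ and independence again gives $P(X+Y<-M+b+1)>0$ for all $M$, forcing $\essinf(X+Y)=-\infty$; the case $+\infty$ is analogous, and the ``$\ge$'' direction extends with the usual conventions on $\pm\infty$. In the application of this lemma (to the infimum process of a subordinator, cf.\ Definition \ref{Definition: infimum Prozess}), the values are in any case in $\rp$, so these conventions cause no difficulty.
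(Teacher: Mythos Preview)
Your proof is correct and follows essentially the same approach as the paper: the easy direction from $X+Y\ge a+b$ a.s., and the reverse direction by using independence to get $P(X<a+\epsilon,\,Y<b+\epsilon)=P(X<a+\epsilon)P(Y<b+\epsilon)>0$. The paper does the reverse direction directly rather than by contradiction, and omits the discussion of $\pm\infty$ boundary cases, but the core idea is identical.
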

\begin{proof}
 Let $x:=\essinf\ X$ and $y:=\essinf\ Y$. We obviously have $\essinf(X+Y)\geq x+y$ because $X+Y\geq x+y$ 
almost surely. Independence yields
\begin{eqnarray*}
 P(X+Y\leq x+y+\epsilon) &\geq&P\left(X\leq x+\frac{\epsilon}{2},Y\leq y+\frac{\epsilon}{2}\right)\\
 &=&P\left(X\leq x+\frac{\epsilon}{2}\right)P\left(Y\leq y+\frac{\epsilon}{2}\right)\\
&>&0
\end{eqnarray*}
for any $\epsilon>0$.

\end{proof}

\begin{proposition}\label{Proposition: essinf und Subordinator}
 Let $X$ be a subordinator (i.e.\ an increasing L\'evy process) and 
$E_t:=\essinf X_t$ for any $t\in\mathbb R_+$. Then $E_t=tE_1\geq0$ for any $t\geq0$ 
and $E$ is the drift part of $X$ relative to the ``truncation'' function $h=0$. Moreover, $X-E$ is a subordinator.
\end{proposition}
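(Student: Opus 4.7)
The plan is to prove the three claims in sequence, exploiting the Lévy property together with Lemma~\ref{Lemma: Additivitaet und essinf}. First, because $X$ is a Lévy process we have $X_{s+t}=X_s+(X_{s+t}-X_s)$ with $X_s$ independent of $X_{s+t}-X_s\stackrel{d}{=}X_t$, so the lemma yields the Cauchy equation $E_{s+t}=E_s+E_t$ for all $s,t\in\mathbb R_+$. Since $X_t\ge X_0=0$, we have $E_t\ge 0$; moreover $t\mapsto E_t$ is nondecreasing because $X$ is increasing. An additive nondecreasing function on $\mathbb R_+$ is linear, so $E_t=tE_1\ge 0$.

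Next I would identify $E_1$ with the drift. Invoking the Lévy--Itô decomposition of a subordinator with Lévy measure $K$ (satisfying $\int_0^\infty(x\wedge 1)K(dx)<\infty$) and drift $b_0\in\mathbb R_+$ relative to $h\equiv 0$, one has the almost-sure representation
\[
 X_t=b_0 t + J_t,\qquad J_t:=\sum_{s\le t}\Delta X_s,
\]
where $J$ is a pure-jump subordinator with Lévy measure $K$ and zero drift. Since $b_0 t$ is deterministic, $E_t=b_0 t+\essinf J_t$, so the claim $E_t=b_0 t$ reduces to showing $\essinf J_t=0$ for every $t\in\mathbb R_+$.

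This is the key step and the main obstacle, since $K$ may be infinite near the origin. I would handle it by a jump-size truncation: for $\epsilon>0$ write $J=J^\epsilon+R^\epsilon$, where $J^\epsilon$ collects the jumps of size $>\epsilon$ and $R^\epsilon$ those of size $\le\epsilon$. Then $J^\epsilon$ and $R^\epsilon$ are independent pure-jump subordinators; $J^\epsilon$ is compound Poisson with intensity $\lambda_\epsilon:=K((\epsilon,\infty))<\infty$, so $P(J^\epsilon_t=0)=e^{-\lambda_\epsilon t}>0$, while $E(R^\epsilon_t)=t\int_{(0,\epsilon]}x K(dx)\to 0$ as $\epsilon\downarrow 0$ by the subordinator integrability condition. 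Given $\delta>0$, pick $\epsilon$ so small that $E(R^\epsilon_t)<\delta/2$; then Markov's inequality gives $P(R^\epsilon_t<\delta)>1/2$, and independence yields $P(J_t<\delta)\ge P(J^\epsilon_t=0)P(R^\epsilon_t<\delta)>0$, so $\essinf J_t\le\delta$. As $\delta>0$ was arbitrary, $\essinf J_t=0$.

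Combining the previous steps, $E_t=b_0 t$, which is precisely the drift part of $X$ relative to $h\equiv 0$. Finally, $X_t-E_t=J_t$ is a pure-jump subordinator, proving the last claim.
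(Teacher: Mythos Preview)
Your proof is correct, and the first part (deriving $E_t=tE_1$ from the Cauchy functional equation via Lemma~\ref{Lemma: Additivitaet und essinf} and monotonicity) coincides with the paper's argument.

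The identification $E_1=b_0$, however, proceeds differently. You argue \emph{directly}: decompose $X=b_0\cdot+J$ with $J$ the pure-jump part, and show $\essinf J_t=0$ by a jump-size truncation $J=J^\epsilon+R^\epsilon$, using that $J^\epsilon$ is compound Poisson (so vanishes with positive probability) while $E(R^\epsilon_t)\to 0$. The paper argues by a \emph{squeeze}: from $X_t=x\!*\!\mu^X_t+bt$ and positivity of the jump sum one gets $E_t\ge bt$; conversely, since $X-E$ is a positive L\'evy process it is a subordinator, so its drift $b-E_1$ must be nonnegative by \cite[Theorem 21.5]{sato.99}, giving $E_1\le b$. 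Your route is more explicit and self-contained (it does not invoke the structural characterisation of subordinators), while the paper's is shorter and avoids the truncation construction. Either way, once $E_t=b_0 t$ is established, the final claim $X-E=J$ being a subordinator is immediate.
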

\begin{proof} 
Since $X$ is a subordinator, we have $E_t\geq 0$ for any $t\in\mathbb R_+$. Moreover, $\essinf(X_t-X_s)=E_{t-s}$
 because $X_{t-s}$ has the same distribution as $X_t-X_s$. Since $X_s$ and $X_t-X_s$ are independent, 
Lemma \ref{Lemma: Additivitaet und essinf} yields $E_s+E_{t-s}=E_t$. The mapping $t\mapsto E_t$ is 
increasing because $X$ is a subordinator. Together we conclude $E_t=tE_1$. This implies that $X-E$ is a 
positive L\'evy process and hence a subordinator. 
By \cite[Theorem 21.5]{sato.99} the L\'evy-Khintchine triplet $(b,c,K)$ relative to ``truncation'' function $h=0$ 
exists and satisfies $c=0$. Moreover, $K$ and the random measure of jumps $\mu^X$ of $X$ are concentrated 
on $\mathbb R_+$. In view of
\cite[II.2.34]{js.87} we have $X_t=x*\mu^X_t+bt$ and thus we get $E_t=\essinf X_t\geq bt$. 
According to \cite[Theorem 21.5]{sato.99}, $X-E$ is a subordinator only if its drift rate $\widetilde b$ 
relative to $h=0$ is greater or equal $0$. Hence $bt-E_t=\widetilde bt\geq0$, which implies $bt=E_t$.

\end{proof}

\begin{corollary}\label{Korollar: essinf und beschraenkte Integration}
 Let $X$ be a $d$-dimensional semimartingale whose components $X^k$ are subordinators with 
essential infimum $E^k_t=\essinf X_t^k$ for $k=1,\dots,d$. For componentwise nonnegative bounded 
predictable processes $\varphi$ we have 
$$\essinf(\varphi\mal (X-E)_t)=0.$$
Moreover, for bounded predictable $\mathbb C^d$-valued processes $\varphi$ we have 
$$\essinf\left\vert \varphi\mal (X-E)_t\right\vert=0$$ 
for any $t\in\mathbb R_+$.
\end{corollary}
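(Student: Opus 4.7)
The idea is to reduce both claims to a one-dimensional application of Proposition~\ref{Proposition: essinf und Subordinator}, used first componentwise and then on the sum of the components. Applying Proposition~\ref{Proposition: essinf und Subordinator} to each $X^k$ separately, I obtain that $Y^k := X^k - E^k$ is a subordinator with drift $0$ relative to the truncation function $h=0$; in particular each $Y^k$ has nonnegative, increasing paths and $\essinf Y^k_t = 0$. Since $Y^k$ has bounded variation, $\varphi^k \mal Y^k$ agrees with the pathwise Stieltjes integral, which gives $|\varphi^k \mal Y^k_t| \le \|\varphi^k\|_\infty Y^k_t$.

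Writing $M$ for a common bound on the $\varphi^k$, summation over $k$ yields the pathwise estimates
\begin{eqnarray*}
0 &\le& \varphi \mal (X-E)_t = \sum_{k=1}^d \varphi^k \mal Y^k_t \le M \sum_{k=1}^d Y^k_t, \\
|\varphi \mal (X-E)_t| &\le& \sum_{k=1}^d |\varphi^k \mal Y^k_t| \le M \sum_{k=1}^d Y^k_t,
\end{eqnarray*}
the second bound coming from the triangle inequality. By monotonicity and positive homogeneity of $\essinf$, both assertions reduce to showing $\essinf\big(\sum_{k=1}^d Y^k_t\big) = 0$.

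The main step is then to apply Proposition~\ref{Proposition: essinf und Subordinator} a second time, now to $Z := \sum_{k=1}^d X^k$. As a linear combination of the coordinates of the $d$-dimensional L\'evy process $X$, $Z$ is itself a 1D subordinator, and its drift equals $\sum_k E^k_1$ by linearity. Hence $\sum_{k=1}^d Y^k_t = Z_t - \sum_k E^k_t$ is a subordinator with drift $0$, so a final invocation of Proposition~\ref{Proposition: essinf und Subordinator} gives $\essinf\sum_k Y^k_t = 0$, completing both statements.

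The delicate point is this last step: it uses implicitly that $X$ is \emph{jointly} a L\'evy process, so that linear combinations of its components remain L\'evy. In other words, the hypothesis ``$d$-dimensional semimartingale whose components are subordinators'' has to be read as ``$d$-dimensional subordinator''. All uses of this corollary in the paper involve a scalar driving process, so the subtlety never actually surfaces there.
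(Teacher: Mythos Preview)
Your argument is essentially the same as the paper's: reduce the complex case to the nonnegative one via $|\varphi\mal(X-E)_t|\le(|\varphi^1|,\dots,|\varphi^d|)\mal(X-E)_t$, bound the latter by $c\sum_k(X^k-E^k)_t$, and finish by observing that this sum is a subordinator with zero drift, so Proposition~\ref{Proposition: essinf und Subordinator} gives essential infimum $0$. The only cosmetic difference is that the paper assumes w.l.o.g.\ $E^k=0$ before forming the sum, while you keep $E^k$ around and subtract at the end.

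Your remark about the hidden joint-L\'evy assumption is well taken and, importantly, applies verbatim to the paper's own proof: there too one simply writes ``the subordinator $L:=c\sum_k X^k$'' without justification. So you have not introduced any gap beyond what is already present in the original; you have merely made it explicit. Your observation that the applications in the paper only ever use a scalar $M$ is the right way to reassure yourself that nothing downstream is affected.
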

\begin{proof} The second statement is an application of the first statement because 
$$0\leq\left\vert \varphi\mal (X-E)_t\right\vert\leq
(|\varphi^1|,\dots,|\varphi^d|)\mal(X-E)_t.$$
Suppose that $\varphi$ is a componentwise nonnegative bounded predictable process. 
Proposition \ref{Proposition: essinf und Subordinator} yields that $X^k-E^k$ is a 
subordinator with essential infimum $0$. Hence we may assume w.l.o.g.\ that $E^k=0$. 
Since $\varphi$ is bounded, there is a constant $c\in\mathbb R_+$ such that $\varphi^k\leq c$ for $k=1,\dots,d$.
Hence $\varphi\mal X_t\leq c\sum_{k=1}^dX^k_t$ for any $t\in\mathbb R_+$. By 
Proposition \ref{Proposition: essinf und Subordinator} the drift part of $X^k$ is $0$ 
relative to the truncation function $h=0$.  Consequently, the drift part of the subordinator 
$L:=c\sum_{k=1}^dX^k$ is also $0$ relative to the truncation function $h=0$. 
Proposition \ref{Proposition: essinf und Subordinator} yields $\essinf L_t=0$ for any $t\in\mathbb R_+$. 
Thus we conclude
$$0\leq\essinf(\varphi\mal X_t)\leq\essinf L_t=0.$$

\end{proof}

\subsection{Differentiability of L\'evy exponents}\label{su:diff}
\begin{lemma}
 Let $X$ denote an $\rr^d$-valued L\'evy process with finite second moments
in the sense that $E(|X_t|^2)<\infty$ for some (and hence for any) $t>0$.
Then its L\'evy exponent is twice continuously differentiable with bounded 
second-order derivatives.
\end{lemma}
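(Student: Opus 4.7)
The plan is to differentiate the Lévy-Khintchine formula twice under the integral sign and show the resulting integrals are uniformly bounded in $u$.

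First I would recall the characterization of finite second moments for Lévy processes: $E(|X_t|^2)<\infty$ for some (equivalently, all) $t>0$ is equivalent to $\int_{\{|x|>1\}}|x|^2 K(dx)<\infty$, where $K$ is the Lévy measure. Combined with the always-satisfied $\int_{\{|x|\le 1\}}|x|^2 K(dx)<\infty$, this yields $\int|x|^2 K(dx)<\infty$. In particular, $\int_{\{|x|>1\}}|x|K(dx)<\infty$ as well, so the truncation function $h$ in the Lévy-Khintchine representation (\ref{e:levyexp}) can be harmlessly ignored when checking integrability of derivatives.

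Next I would differentiate the representation
$$\psi(u)=iub-\tfrac{1}{2}u^\top c u+\int\bigl(e^{iux}-1-iuh(x)\bigr)K(dx)$$
componentwise in $u$. For the first derivative, formal differentiation under the integral gives
$$\partial_j\psi(u)=ib_j-(cu)_j+\int\bigl(ix_j e^{iux}-ih_j(x)\bigr)K(dx).$$
Near $0$, the integrand is $O(|u||x|^2)$ (since $h(x)=x$ there), and for $|x|>1$ it is dominated by a constant times $|x|$, which is $K$-integrable by the second-moment assumption; hence interchange is justified by dominated convergence. Differentiating once more gives
$$\partial_j\partial_k\psi(u)=-c_{jk}-\int x_j x_k e^{iux} K(dx),$$
where the integrand is dominated uniformly in $u$ by $|x|^2$, which is $K$-integrable. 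Again dominated convergence justifies the differentiation, and the same domination yields
$$|\partial_j\partial_k\psi(u)|\le |c_{jk}|+\int|x|^2 K(dx)<\infty$$
uniformly in $u\in\mathbb{R}^d$, i.e.\ the second-order partials are bounded.

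Finally, continuity of $u\mapsto\partial_j\partial_k\psi(u)$ follows from dominated convergence: $x_j x_k e^{iux}$ is continuous in $u$ for each $x$ and dominated by the $K$-integrable function $|x|^2$. This yields the claim.

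The proof is essentially routine; the only step requiring mild care is relating the finite second moment of $X_1$ to the integrability of $|x|^2$ against the Lévy measure, but this is a standard consequence of the Lévy-Itô decomposition (see e.g.\ \cite[Corollary 25.8]{sato.99}). Everything else is dominated convergence applied to explicit bounds.
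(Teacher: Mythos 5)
Your proof is correct and follows essentially the same route as the paper: both use \cite[Corollary 25.8]{sato.99} to pass from finite second moments to $\int|x|^2K(dx)<\infty$, then differentiate the L\'evy-Khintchine representation twice under the integral sign justified by dominated convergence, obtaining the same explicit formulas and uniform bounds.
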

\begin{proof}
If $(b,c,K)$ denotes the L\'evy-Khintchine triplet of $X$,
the  L\'evy exponent $\psi$ of $X$  is of the form  (\ref{e:levyexp}).
The existence of second moments yields
$\int |x|^2K(dx)<\infty$ by \cite[Corollary 25.8, Definition 8.2]{sato.99}.
Dominated convergence implies that we may differentiate
under the integral sign and obtain
\begin{equation}\label{e:fuerbem}
 \partial_i\psi(u)=
ib_i-c^{i\cdot}u+\int\big(ix_ie^{iux}-ih_i(x)\big)K(dx),\quad i=1,\dots,d
\end{equation}
and
$$\partial_i\partial_j\psi(u)=
-c^{ij}-\int x_ix_je^{iux}K(dx),\quad i,j=1,\dots,d,$$
where $\partial_i$ denotes the partial derivative relative to $u_i$ and likewise for $j$.
Since 
$$2\int|x_ix_je^{iux}|K(dx)\leq \int x_i^2K(dx)+\int x_j^2K(dx),$$
the claim follows.
 
\end{proof}

\begin{remark}\label{Remark smoothness}
Consider an  $\rr^2$-valued L\'evy process $X,M$ such that
$M$ is a subordinator. If $X$ and $M$ have finite second moments, 
the statement of the previous lemma holds also for its extended
L\'evy exponent $\psi^{(X,M)}$ on $\rr\times(\rr+i\rp)$. This is shown along the same lines 
as above. Moreover, $\partial_2\psi^{(X,M)}$ is bounded in this case
by  (\ref{e:fuerbem}) and since $\int|x_2|K(dx)<\infty$.
\end{remark}

\subsection{Existence of PII}
\begin{theorem}\label{S:Ex von PII}
 Let $\psi\in\raum{E}$ such that $u\mapsto \int_t^T\psi(r,u)dr$ is a L\'evy exponent for any $t,T\in\rr$, $t\leq T$. 
Then there is a PII $X$ such that $Ee^{iuX_t}=\exp(\int_0^t\psi(r,u)dr)$ for any $t\in\rp$, $u\in\rr$.
\end{theorem}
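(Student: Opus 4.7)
The plan is to construct $X$ by first specifying its finite-dimensional distributions, then applying Kolmogorov's extension theorem, and finally passing to a c\`adl\`ag modification.

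First, for $0\le s\le t$, set
$$\varphi_{s,t}(u):=\exp\!\left(\int_s^t\psi(r,u)dr\right),\quad u\in\mathbb R.$$
By hypothesis $\int_s^t\psi(r,\cdot)dr$ is a L\'evy exponent, so $\varphi_{s,t}$ is the characteristic function of an infinitely divisible distribution $\mu_{s,t}$. A direct check shows $\varphi_{s,t}\cdot\varphi_{t,v}=\varphi_{s,v}$ and $\varphi_{t,t}=1$. For any finite tuple $0=t_0<t_1<\dots<t_n$ define the law of $(X_{t_1},\dots,X_{t_n})$ as that of the partial sums $(\xi_1,\xi_1+\xi_2,\dots,\xi_1+\dots+\xi_n)$, where the $\xi_i$ are independent with $\xi_i\sim \mu_{t_{i-1},t_i}$ and the initial value is $X_0=0$. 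The multiplicativity of $\varphi_{s,t}$ is exactly the consistency required for Kolmogorov's extension theorem, which provides a process $X=(X_t)_{t\in\rr_+}$ on some probability space realising these marginals and having independent increments with $E(e^{iu(X_t-X_s)})=\varphi_{s,t}(u)$.

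Second, I verify stochastic continuity. For fixed $u\in\rr$ the map $t\mapsto\int_0^t\psi(r,u)dr$ is absolutely continuous because $\psi(\cdot,u)\in L^1_{\mathrm{loc}}(\rr_+)$ by the definition of $\raum E$. Hence $\varphi_{s,t}(u)\to 1$ as $t\downarrow s$ (and as $s\uparrow t$), so $X_t-X_s\to0$ in law, which by the independence of increments and the fact that the limit is deterministic forces convergence in probability. Therefore $X$ is stochastically continuous.

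Third, a stochastically continuous process with independent increments admits a c\`adl\`ag modification (see e.g.\ \cite[Theorem 11.5]{sato.99} for additive processes or \cite[Theorem II.4.14]{js.87}). Replacing $X$ by this modification yields a PII in the sense of \cite{js.87} with $E(e^{iuX_t})=\varphi_{0,t}(u)=\exp(\int_0^t\psi(r,u)dr)$, as required. The only mildly delicate point is ensuring the c\`adl\`ag modification is realised on a filtered space compatible with the usage in the paper; this is standard since stochastic continuity excludes fixed times of discontinuity, so the natural filtration generated by $X$ can be augmented in the usual way.
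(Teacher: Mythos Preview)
Your proposal is correct and follows essentially the same route as the paper: Kolmogorov's extension theorem to produce a process with the right increment laws, verification of stochastic continuity, and then \cite[Theorem 11.5]{sato.99} to pass to a c\`adl\`ag modification. The only cosmetic difference is that the paper establishes stochastic continuity via a smooth bump function and dominated convergence, whereas you argue directly that $\varphi_{s,t}(u)\to1$ implies convergence in law (hence in probability) to $0$; both arguments are standard and equivalent here.
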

\begin{proof}
Kolmogorov's extension theorem \cite[Theorem 1.8]{sato.99} yields that there is a stochastic process $Y$ with 
$Y_0=0$ a.s.\ and $Ee^{iu(Y_t-Y_s)}=\exp(\int_s^t\psi(r,u)dr)$ for any $s,t\in\rp$ with $s\leq t$, $u\in\rr$. 
This process satisfies (1) and (2) of \cite[Definition 1.6]{sato.99}. 
We show that it is an additive process in law in the sense of \cite[Definition 1.6]{sato.99}, i.e.\ it is in addition 
stochasically continuous. Let $\epsilon>0$ and $s\in\rp$. Let $\varphi:\rr\rightarrow[0,1]$ be infinitely differentiable, 
its support contained in $[-\epsilon,\epsilon]$, and $\varphi(0)=1$. 
Define 
$$\check\varphi:\mathbb R\rightarrow\mathbb C,u\mapsto \frac{1}{2\pi}\int_{-\infty}^\infty\varphi(x)e^{-iux}dx.$$
Then $\varphi(x) = \int_{-\infty}^\infty\check\varphi(u)e^{iux}du$ for any $x\in\mathbb R$. For $t>s$ we have
\begin{eqnarray*}
 P(\vert Y_t-Y_s\vert>\epsilon) &\leq& 1-E(\varphi(Y_t-Y_s)) \\
 &=& 1-\int_{-\infty}^\infty\check\varphi(u)Ee^{iu(Y_t-Y_s)}du \\
 &=& 1-\int_{-\infty}^\infty\check\varphi(u)\exp\left(\int_s^t\psi(r,u)dr\right)du \\
 &\underset{t\downarrow s}\longrightarrow& 1-\int_{-\infty}^\infty\check\varphi(u)du \\
 &=& 1-\varphi(0) = 0,
\end{eqnarray*}
where $t>s$ and the convergence follows from the dominated convergence theorem. Similar arguments yield
$$\lim_{t\uparrow s}P(\vert Y_t-Y_s\vert>\epsilon) = 0.$$
Thus $Y$ is stochastically continuous. \cite[Theorem 11.5]{sato.99} implies that there is a PII $X$ 
with the desired properties.

\end{proof}

\section{Bochner integrals and stochastic differential equations in Fr\'echet spaces}\label{s:frechetsde}
Option price surfaces are interpreted as elements of the Fr\'echet space 
$\raum{E}$ in Section \ref{Abschnitt: Beispiele und Existenzresultate}.
In order to derive existence and uniqueness results, we need to consider stochastic differential equations
in such spaces, cf.\ Section \ref{su:bochner} below.
These in turn rely on a properly defined Bochner integral., which is discussed in Section \ref{su:frechet}.

\subsection{Bochner integration in Fr\'echet spaces}\label{su:frechet}
Let $F$ be a vector space and $(\Vert\cdot\Vert_n)_{n\in\mathbb N}$ an increasing sequence 
of separable semi-norms on $F$ such that
\begin{enumerate}
 \item $\|x\|_n=0$ $\forall n\in\mathbb N$ holds only if $x=0$,
\item if $(x_k)_{k\in\mathbb N}$ is a $\|\cdot\|_n$-Cauchy sequence for all $n\in\mathbb N$, 
there exists $x\in F$ with $\lim_{k\to\infty}\|x_k-x\|_n=0$ for any $n\in\mathbb N$.
\end{enumerate}
Then
$$d(x,y):=\sum_{n\in\mathbb N}2^{-n}(1\wedge\Vert x-y\Vert_n)$$
defines a complete, translation-invariant, separable metric on the Fr\'echet space $F$.

\begin{remark}\label{R:Metrik abschaetzung}
 Let $x,y\in F$, $n\in\mathbb N$. Then
\begin{eqnarray*}
 1\wedge\Vert x-y\Vert_n &\leq& 2^nd(x,y)\quad\text{and}\\
 d(x,y) &\leq& \Vert x-y\Vert_n+2^{-n}.
\end{eqnarray*}
\end{remark}

\begin{example}\label{ex:frechet}
 We are mainly interested in the case that $F := \raum{E}$ and $\Vert\cdot\Vert_n:= \Vert\cdot\Vert_{n,n}$ 
as defined in Section \ref{Abschnitt: Beispiele und Existenzresultate} or, alternatively, $F=E$ itself in
Lemma \ref{l:banach}.
\end{example}

Fix a $\sigma$-finite measure space $(\Gamma,\scr G,\mu)$.
The goal of this section is to define a  Bochner integral $\int fd\mu$ for measurable functions 
$f:\Gamma\to F$ with values in the Fr\'echet space $F$, cf.\ Definition \ref{d:frechetintegral} below.
If $f$ is {\em simple} and {\em integrable} in the sense that it is 
a linear combination of indicators of sets in $\G$ with finite $\mu$-measure, the 
integral 
$\int fd\mu$
is naturally defined as a sum.

For fixed $n\in\mathbb N$ denote the set of measurable Bochner-integrable functions from 
$(\Gamma,\scr G,\mu)$ to the complete, separable, semi-normed space
$(F,\Vert\cdot\Vert_n)$ by $\scr L^1((\Gamma,\scr G,\mu),(F,\Vert\cdot\Vert_n))$. 
Recall that a measurable function $f:\Gamma\to F$ is called {\em Bochner integrable} (relative to $\|\cdot\|_n$)
if there is a sequence $(f^{(k)})_{k\in\mathbb N}$  of simple integrable functions with 
$\lim_{k\to\infty}\int \|f^{(k)}-f\|_nd\mu=0$.
Equivalently, $f:\Gamma\to F$ is measurable and $\int \|f\|_nd\mu< \infty$,
cf.\ e.g.\ Lemma \ref{L:einfache approximation} below.
In this case there is some $x\in F$ such that
$$\lim_{k\to\infty}\left\|\int f^{(k)}d\mu-x\right\|_n=0$$ 
for any such sequence.
This element $\int fd\mu:=x$ is called {\em ($\Vert\cdot\Vert_n$-)Bochner integral} of $f$.

 Note that we do not identify functions which are $\mu$-a.e. identical. Therefore
$\scr L^1((\Gamma,\scr G,\mu)$, $(F,\Vert\cdot\Vert_n))$ 
together with the semi-norm 
$\|f\|:= \int \Vert f\Vert_n d\mu$
is a complete semi-normed space, but in general not a Banach space.
Moreover,
{\em versions} of the $\Vert\cdot\Vert_n$-Bochner integral may differ by $\Vert\cdot\Vert_n$-distance zero.

The following lemma is needed in Section \ref{su:bochner}.
\begin{lemma}\label{L:Topologien}
 Let $\scr O_d, \scr O_n, n\in\mathbb N$ be the topologies generated by $d$ and 
$\Vert\cdot\Vert_n$, respectively. 
Then $\scr O_n\subset\scr O_d$ for any $n\in\mathbb N$.
For any $U\in\scr O_d$, $n_0\in\mathbb N$ there is a sequence $(V_n)_{n\geq n_0}$ such that $V_n\in\scr O_n$ and
 $U= \cup_{n\geq n_0}V_n$.
 In particular, the Borel $\sigma$-field corresponding to the metric $d$ is 
generated by the Borel $\sigma$-fields corresponding to  the semi-norms $\Vert\cdot\Vert_n$, $n\geq n_0$.
\end{lemma}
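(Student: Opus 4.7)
The plan is to verify the three claims in the order stated, relying on the two-sided comparison between $d$ and the semi-norms provided in Remark \ref{R:Metrik abschaetzung}, namely $1\wedge\Vert x-y\Vert_n\le 2^n d(x,y)$ and $d(x,y)\le\Vert x-y\Vert_n+2^{-n}$.

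For the inclusion $\scr O_n\subset\scr O_d$, it suffices to show that every $\Vert\cdot\Vert_n$-open ball $B_n(x,\epsilon)=\{y:\Vert y-x\Vert_n<\epsilon\}$ is $d$-open. Given $y\in B_n(x,\epsilon)$, set $\rho:=\epsilon-\Vert y-x\Vert_n>0$ and pick $\delta\in(0,2^{-n}\rho\wedge 2^{-n})$. Then any $z$ with $d(z,y)<\delta$ satisfies $1\wedge\Vert z-y\Vert_n\le 2^n\delta<1$, so $\Vert z-y\Vert_n<\rho$, and the triangle inequality gives $\Vert z-x\Vert_n<\epsilon$, i.e.\ $z\in B_n(x,\epsilon)$.

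For the second claim, fix $U\in\scr O_d$ and $n_0\in\mathbb N$, and for each $n\ge n_0$ define
$$V_n:=\bigcup\bigl\{W:W\text{ is }\Vert\cdot\Vert_n\text{-open and }W\subseteq U\bigr\}.$$
Each $V_n$ lies in $\scr O_n$ and $\bigcup_{n\ge n_0}V_n\subseteq U$. For the reverse inclusion, let $y\in U$ and choose $\epsilon>0$ with the $d$-ball $B_d(y,\epsilon)\subseteq U$. Pick $n\ge n_0$ so large that $2^{-n}<\epsilon/2$. Using the second estimate in Remark \ref{R:Metrik abschaetzung}, every $z$ with $\Vert z-y\Vert_n<\epsilon/2$ satisfies $d(z,y)<\epsilon$, so the $\Vert\cdot\Vert_n$-ball of radius $\epsilon/2$ around $y$ is contained in $U$ and thus in $V_n$. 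Hence $y\in\bigcup_{n\ge n_0}V_n$, proving $U=\bigcup_{n\ge n_0}V_n$.

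The final statement is a formal consequence: part (1) gives $\B_n\subseteq\B_d$ for $n\ge n_0$, and hence $\sigma(\bigcup_{n\ge n_0}\B_n)\subseteq\B_d$; conversely, part (2) shows that every $d$-open set is a countable union of elements of $\bigcup_{n\ge n_0}\B_n$ and therefore lies in $\sigma(\bigcup_{n\ge n_0}\B_n)$, which implies $\B_d\subseteq\sigma(\bigcup_{n\ge n_0}\B_n)$. No part of the argument is deep; the only subtlety worth being careful about is ensuring in (2) that the index $n$ used to exhibit a neighbourhood of $y\in U$ can always be chosen $\ge n_0$, which is achieved by sending $2^{-n}$ below $\epsilon/2$.
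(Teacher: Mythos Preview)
Your proof is correct and follows essentially the same approach as the paper: both arguments rest on the two comparison inequalities from Remark~\ref{R:Metrik abschaetzung}, using the first for $\scr O_n\subset\scr O_d$ and the second to show that every point of a $d$-open set admits a $\Vert\cdot\Vert_n$-ball neighbourhood inside it for sufficiently large $n\ge n_0$. The only cosmetic difference is that you take $V_n$ to be the $\Vert\cdot\Vert_n$-interior of $U$, whereas the paper assembles $V_n$ as an explicit union of balls $B_{n_x}(x,\epsilon_x/2)$ indexed by those $x\in U$ for which $n_x=n$; your choice is slightly cleaner but the content is identical.
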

\begin{proof}
 Let $n\in\mathbb N$ and $d_n(x,y):=1\wedge \Vert x-y\Vert_n$ for any $x,y\in F$. 
Observe that the semi-metric $d_n$ generates the same topology as the semi-norm $\Vert\cdot\Vert_n$ because 
their balls of radius less or equal $1$ coincide. 
Remark \ref{R:Metrik abschaetzung} yields that $d_n\leq 2^nd$ and hence $\scr O_n\subset\scr O_d$.

The second inequality of Remark \ref{R:Metrik abschaetzung} implies that for any $x\in F$, $\epsilon>0$
 there is $n\geq n_0$ such that
  $B_{n}(x,\epsilon/2) \subset B_d(x,\epsilon)$,
where $B_n(x,\epsilon)$ resp.\ $B_d(x,\epsilon)$ denote the balls centered at $x$ 
with radius $\epsilon$ relative to the semi-norm $\Vert\cdot\Vert_n$ resp.\ the metric $d$.
 Let $U\in\scr O_d$. For any $x\in U$ choose $\epsilon_x>0$ such that $B_d(x,\epsilon_x)\subset U$
and $n_x\in\mathbb N$ with $n_x\geq n_0$ such that
  $B_{n_x}(x,\epsilon_x/2) \subset B_d(x,\epsilon_x)$.
 For any $n\geq n_0$ define
  $$V_n:=\bigcup \{B_{n_x}(x,\epsilon_x/2):x\in U,n_x=n\}\in\scr O_n.$$
 Since $V_n\subset U$, we have $\bigcup_{n\geq n_0} V_n\subset U$. Moreover,
  $$U=\bigcup_{n\geq n_0}\{ x:x\in U,n_x=n\}\subset \bigcup_{n\geq n_0}V_n.$$
 Hence $U=\bigcup_{n\geq n_0}V_n$.

\end{proof}

The $\Vert\cdot\Vert_n$-Bochner integrals are consistent in the following sense:
\begin{lemma}\label{L:Integralkompatibilitaet}
 Let $n\in\mathbb N$
and $f\in \scr L^1((\Gamma,\scr G,\mu),(F,\Vert\cdot\Vert_n))$. Then
$f\in \scr L^1((\Gamma,\scr G,\mu),(F,\Vert\cdot\Vert_k))$ 
for $k=1,\dots,n$. Moreover, if $x$ is a version of the $\Vert\cdot\Vert_n$-Bochner integral 
$\int fd\mu$, then $x$ is a version of the $\Vert\cdot\Vert_k$-Bochner integral $\int fd\mu$ for $k=1,\dots,n$.
\end{lemma}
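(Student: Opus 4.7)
The plan rests on the simple observation that the given sequence of semi-norms is increasing, so $\|x\|_k \le \|x\|_n$ whenever $k\le n$ and $x\in F$. From this pointwise domination, both $\|\cdot\|_k$-integrability and consistency of the integral will follow by the standard approximation definition of the Bochner integral.

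First I would verify integrability. Given $f \in \mathscr{L}^1((\Gamma,\mathscr{G},\mu),(F,\|\cdot\|_n))$, there exists a sequence $(f^{(j)})_{j\in\mathbb N}$ of simple integrable functions $\Gamma\to F$ with
$$\int \|f^{(j)} - f\|_n\, d\mu \;\longrightarrow\; 0.$$
Since $\|\cdot\|_k \le \|\cdot\|_n$ for $k\le n$, we immediately obtain $\int \|f^{(j)} - f\|_k\, d\mu \to 0$ as well, so the same sequence of simple functions approximates $f$ in the $\|\cdot\|_k$-semi-norm. This shows $f \in \mathscr{L}^1((\Gamma,\mathscr{G},\mu),(F,\|\cdot\|_k))$.

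Next I would verify that the value of the integral carries over. Let $x \in F$ be a version of the $\|\cdot\|_n$-Bochner integral of $f$. By definition this means
$$\Big\|\, x - \int f^{(j)}\, d\mu\,\Big\|_n \;\longrightarrow\; 0,$$
where $\int f^{(j)} d\mu$ is the (unambiguous) finite sum defining the integral of a simple function. Applying the bound $\|\cdot\|_k \le \|\cdot\|_n$ once more yields
$$\Big\|\, x - \int f^{(j)}\, d\mu\,\Big\|_k \;\longrightarrow\; 0,$$
so $x$ satisfies the defining property of a $\|\cdot\|_k$-Bochner integral of $f$; that is, $x$ is a version of $\int f\,d\mu$ in $\scr L^1((\Gamma,\mathscr{G},\mu),(F,\|\cdot\|_k))$.

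There is no serious obstacle here: the argument is essentially transport of the defining limit through the inequality $\|\cdot\|_k\le\|\cdot\|_n$. The only point requiring a mild care is that we do not identify functions agreeing $\mu$-a.e., so ``version'' must be read in the weaker sense that two integrals are identified when they differ by $\|\cdot\|_k$-distance zero, which is exactly what the computation above delivers.
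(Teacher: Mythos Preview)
Your proof is correct and follows essentially the same approach as the paper: both exploit the pointwise inequality $\|\cdot\|_k\le\|\cdot\|_n$ to transport the approximating sequence of simple functions and the limit defining the integral from the $\|\cdot\|_n$-setting to the $\|\cdot\|_k$-setting. The only cosmetic difference is that the paper verifies $\|\cdot\|_k$-measurability of $f$ explicitly (via the inclusion of topologies $\scr O_k\subset\scr O_n$) before invoking the integrability criterion $\int\|f\|_k\,d\mu\le\int\|f\|_n\,d\mu<\infty$, whereas you go directly through the approximation definition; both routes are equivalent here.
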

\begin{proof}
 Let $k\in\{1,\dots,n\}$. Then $\Vert\cdot\Vert_k\leq\Vert\cdot\Vert_n$ and hence $\scr O_k\subset\scr O_n$. 
Thus $f$ is measurable with respect to the Borel $\sigma$-field generated by $\Vert\cdot\Vert_k$. We have
 $$\int \Vert f\Vert_kd\mu \leq \int \Vert f\Vert_nd\mu <\infty$$
 and hence $f\in \scr L^1((\Gamma,\scr G,\mu),(F,\Vert\cdot\Vert_k))$. 
Let $x$ be a version of the $\Vert\cdot\Vert_n$-Bochner integral $\int fd\mu$. 
Then there is a sequence $(f^{(k)})_{k\in\mathbb N}$ of simple integrable functions such that
  \begin{eqnarray*}
   \int \Vert f^{(k)}-f\Vert_n d\mu \rightarrow 0\quad\mbox{and}\quad
   \left\Vert \int f^{(k)}d\mu - x\right\Vert_n \rightarrow 0
  \end{eqnarray*}
 for $k\rightarrow\infty$. 
Since this also holds for $k$ instead of $n$,
we have that  $x$ is a version of the $\Vert\cdot\Vert_k$-Bochner integral $\int fd\mu$ as well.

\end{proof}

We are now ready to define the desired integral for Fr\'echet space-valued functions.
\begin{proposition}\label{P:Fr\'echet integral}
 Let 
  \begin{eqnarray*}
   \lefteqn{f\in \scr L^1((\Gamma,\scr G,\mu),(F,d)) 
:= \bigcap_{n\in\mathbb N} \scr L^1((\Gamma,\scr G,\mu),(F,\Vert\cdot\Vert_n))}\\
&=&\left\{f:\Gamma\to F:
f\mbox{ measurable with } \int \Vert f\Vert_n d\mu<\infty, n\in\mathbb N\right\}.
  \end{eqnarray*}
  Then there is one and only one $x\in F$ such that $x$ is a version of the 
$\Vert\cdot\Vert_n$-Bochner integral $\int fd\mu$ for any $n\in\mathbb N$. 
\end{proposition}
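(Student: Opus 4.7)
The plan is to use the Fr\'echet-space completeness property (2) from the beginning of Section \ref{su:frechet} together with the compatibility Lemma \ref{L:Integralkompatibilitaet} to glue the various $\|\cdot\|_n$-Bochner integrals into a single element of $F$.

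First I would dispatch uniqueness. Suppose $x,y\in F$ are both versions of the $\|\cdot\|_n$-Bochner integral for every $n\in\mathbb N$. Since two versions of the same $\|\cdot\|_n$-Bochner integral differ by a vector of $\|\cdot\|_n$-semi-norm zero, we get $\|x-y\|_n=0$ for all $n$. Property (1) of the sequence of semi-norms then forces $x=y$.

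For existence, for each $k\in\mathbb N$ pick a version $x_k\in F$ of the $\|\cdot\|_k$-Bochner integral of $f$, which exists because $f\in\scr L^1((\Gamma,\scr G,\mu),(F,\|\cdot\|_k))$. The key observation comes from Lemma \ref{L:Integralkompatibilitaet}: for any $k\ge n$, the element $x_k$ is simultaneously a version of the $\|\cdot\|_n$-Bochner integral, so $\|x_k-x_n\|_n=0$. Consequently, for every fixed $n\in\mathbb N$ and all $k,l\ge n$,
\[
\|x_k-x_l\|_n\le\|x_k-x_n\|_n+\|x_n-x_l\|_n=0,
\]
so $(x_k)_{k\in\mathbb N}$ is a $\|\cdot\|_n$-Cauchy sequence for every $n$. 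Property (2) of the Fr\'echet space then yields $x\in F$ with $\lim_{k\to\infty}\|x_k-x\|_n=0$ for every $n$.

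It remains to verify that this limit $x$ is a version of the $\|\cdot\|_n$-Bochner integral for each individual $n$. For any fixed $n\in\mathbb N$ and any $k\ge n$, the triangle inequality together with $\|x_n-x_k\|_n=0$ gives
\[
\|x_n-x\|_n\le\|x_n-x_k\|_n+\|x_k-x\|_n=\|x_k-x\|_n,
\]
and letting $k\to\infty$ yields $\|x_n-x\|_n=0$. Since two elements of $F$ at $\|\cdot\|_n$-distance zero are interchangeable as versions of the $\|\cdot\|_n$-Bochner integral, $x$ is such a version for every $n$. No step is really an obstacle here; the only subtle point is to invoke Lemma \ref{L:Integralkompatibilitaet} correctly so that the inductive comparison across different semi-norms works and produces the Cauchy property in every $\|\cdot\|_n$ simultaneously.
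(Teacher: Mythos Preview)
Your proof is correct and follows essentially the same strategy as the paper: pick versions $x_n$ of the $\|\cdot\|_n$-integrals, use Lemma \ref{L:Integralkompatibilitaet} to show they are mutually close in every semi-norm, pass to a limit by completeness, and verify that the limit is a version for each $n$. The only cosmetic difference is that the paper phrases the Cauchy property in terms of the metric $d$ (via Remark \ref{R:Metrik abschaetzung}) and uses completeness of $(F,d)$, whereas you invoke property (2) of the Fr\'echet-space setup directly; these are equivalent, and your route is arguably a touch more streamlined.
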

\begin{proof}
 Since $f\in \scr L^1((\Gamma,\scr G,\mu),(F,\Vert\cdot\Vert_n))$, 
there is a version $x_n$ of the $\Vert\cdot\Vert_n$-Bochner integral $\int fd\mu$ for any $n\in \mathbb N$. 
Lemma \ref{L:Integralkompatibilitaet} yields that
  $d(x_n,x_m) \leq \Vert x_n-x_m\Vert_k+2^{-k} = 2^{-k}$
 for any $n,m,k\in\mathbb N$ with $m,n\geq k$. Hence $(x_n)_{n\in\mathbb N}$ is a 
Cauchy sequence in $(F,d)$. Since $(F,d)$ is complete, it converges to some $x\in F$.

 Let $n\in\mathbb N$. Lemma \ref{L:Integralkompatibilitaet} yieds that $x_m$ is a version of the 
$\Vert\cdot\Vert_n$-Bochner integral $\int fd\mu$ and hence $\Vert x_m-x_n\Vert_n=0$ 
for any $m\in\mathbb N$ with $m\geq n$. By Remark \ref{R:Metrik abschaetzung} 
$d$-convergence implies $\Vert\cdot\Vert_n$-convergence. In particular,
 $\lim_{m\rightarrow\infty}\Vert x-x_m\Vert_n=0$. Together we have
  $$\Vert x-x_n\Vert_n = \lim_{m\rightarrow \infty} \Vert x_m-x_n\Vert_n =0.$$
  Hence $x$ is a version of the $\Vert\cdot\Vert_n$-Bochner integral $\int fd\mu$.

 Let $y\in F$ be a version of the $\Vert\cdot\Vert_n$-Bochner integral $\int fd\mu$ for any $n\in\mathbb N$. 
Then $\Vert x-y\Vert_n=0$ for any $n\in\mathbb N$. Since the sequence of semi-norms is separating, we have $x=y$.

\end{proof}

\begin{definition}[Bochner integral]\label{d:frechetintegral}
We call  $\scr L^1((\Gamma,\scr G,\mu),(F,d))$ the set of {\em Bochner-integrable} functions
and $\int fd\mu:=x$ from Proposition \ref{P:Fr\'echet integral} the corresponding {\em Bochner integral}.
\end{definition}

The next lemma is due to Pettis and can be found in a slightly different version in e.g.\ in 
\cite[Theorem II.2]{diestel.uhl.77}. It is a characterisation of measurability which also holds in 
separable semi-metric spaces. However, we are mainly interested in the additional bound that can be imposed 
on the approximating sequence. 
\begin{lemma}\label{L:einfache approximation}
 Let $(E,\Vert\cdot\Vert)$ be a separable semi-normed space and $f:\Gamma\rightarrow E$. 
Then we have equivalence between:
\begin{enumerate}
 \item $f$ is measurable,
\item there is a sequence $(f^{(n)})_{n\in\mathbb N}$ of measurable functions such that
$\lim_{n\rightarrow \infty}\Vert f^{(n)}(t)-f(t)\Vert = 0$ for any $t\in\Gamma$,
\item there is a sequence $(f^{(n)})_{n\in\mathbb N}$ of simple functions such that
$\lim_{n\rightarrow \infty}\Vert f^{(n)}(t)-f(t)\Vert = 0$ for any $t\in\Gamma$.
\end{enumerate}
In this case  the sequence of simple functions can be chosen such that
$\Vert f^{(n)}(t)\Vert \leq 2\Vert f(t)\Vert$ for any $t\in\Gamma$, $n\in\mathbb N$.
\end{lemma}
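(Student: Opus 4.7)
The plan is to prove the implications $(3)\Rightarrow(2)\Rightarrow(1)$ (which are essentially trivial in a separable semi-normed space) and then concentrate on the main implication $(1)\Rightarrow(3)$ with the quantitative bound $\|f^{(n)}(t)\|\leq 2\|f(t)\|$.

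For $(3)\Rightarrow(2)$ nothing is needed, and $(2)\Rightarrow(1)$ follows from the fact that in a separable metric (or semi-metric) space the Borel $\sigma$-field is generated by open balls, hence pointwise limits of measurable maps into $E$ are again measurable because the semi-norm is continuous. The trivial direction $(1)\Rightarrow(2)$ is realised by setting $f^{(n)}:=f$, so the substantial work lies in approximating a measurable $f$ by \emph{simple} functions satisfying the norm bound.

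For the main step, I would fix a countable dense sequence $(x_k)_{k\in\mathbb N}$ in $E$ (which exists by separability, and I will arrange $x_1=0$) and, for each $n\in\mathbb N$, define the integer-valued function
\[
k(n,t):=\min\Big\{k\in\{1,\dots,n\}:\|x_k-f(t)\|=\min_{1\leq j\leq n}\|x_j-f(t)\|\Big\}.
\]
Measurability of $t\mapsto f(t)$ together with continuity of the semi-norm makes each $t\mapsto\|x_j-f(t)\|$ measurable, whence $k(n,\cdot)$ is measurable. I then put
\[
f^{(n)}(t):=x_{k(n,t)}\,\mathbf 1_{A_n}(t),\qquad A_n:=\big\{t\in\Gamma:\|x_{k(n,t)}-f(t)\|\leq\|f(t)\|\big\}.
\]
Each $f^{(n)}$ takes only finitely many values and the preimage of any value is measurable (it is a finite intersection of sets defined by inequalities involving measurable functions), so $f^{(n)}$ is simple. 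The bound follows from the triangle inequality: on $A_n$ we have $\|f^{(n)}(t)\|=\|x_{k(n,t)}\|\leq\|x_{k(n,t)}-f(t)\|+\|f(t)\|\leq 2\|f(t)\|$, and off $A_n$ we have $f^{(n)}(t)=0$.

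It remains to verify pointwise convergence. For $f(t)=0$ the condition defining $A_n$ forces $x_{k(n,t)}=0$, which either occurs (giving $f^{(n)}(t)=0=f(t)$) or fails, giving $f^{(n)}(t)=0=f(t)$ by the indicator — so convergence is immediate. For $f(t)\neq 0$, density of $(x_k)$ yields, for every $\varepsilon\in(0,\|f(t)\|]$, some index $k$ with $\|x_k-f(t)\|<\varepsilon$; hence for all $n\geq k$ the minimiser satisfies $\|x_{k(n,t)}-f(t)\|<\varepsilon\leq\|f(t)\|$, so $t\in A_n$ and $\|f^{(n)}(t)-f(t)\|<\varepsilon$. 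I expect no serious obstacle; the only point requiring some care is handling the non-Hausdorff situation where $\|f(t)\|=0$ but $f(t)\neq 0$, which is resolved cleanly by the indicator truncation above.
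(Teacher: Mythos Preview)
Your proof is correct and follows essentially the same route as the paper: both approximate $f$ by sending $t$ to the nearest among the first $n$ points of a fixed dense sequence containing $0$, with ties broken by the smallest index, and both derive the bound $\|f^{(n)}(t)\|\le 2\|f(t)\|$ from $\|x_{k(n,t)}-f(t)\|\le\|f(t)\|$ (which holds automatically once $x_1=0$). Note that with $x_1=0$ your set $A_n$ is all of $\Gamma$, so the indicator truncation is redundant (though harmless); the paper simply omits it.
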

\begin{proof}
 $1\Rightarrow3$: Let $f$ be measurable and $(x_n)_{n\in\mathbb N}$  a dense sequence in $E$
with $x_0=0$, i.e.\ for any $y\in E$, $\epsilon>0$ there is $n\in\mathbb N$ such that $\Vert x_n-y\Vert<\epsilon$. 
Define closed sets
 $$C_{n,k} := \big\{ y\in E: \Vert y-x_k\Vert = \min\{ \Vert y-x_j\Vert:j=1,\dots,n\}\big\}$$
for any $n\in\mathbb N$ and $k=1,\dots,n$. 
Moreover, we define Borel-measurable sets 
$M_{n,k}:= C_{n,k}\setminus(C_{n,1}\cup\dots\cup C_{n,k-1})$ and 
simple functions
$$f^{(n)}:=\sum_{k=1}^nx_k1_{f^{-1}(M_{n,k})}$$
for any $n\in\mathbb N$ and $k=1,\dots,n$. Then
$$\Vert f(t)-f^{(n)}(t)\Vert = \min\{ \Vert f(t)-x_k\Vert:k=1,\dots,n\}\leq\Vert f(t)\Vert$$
and hence
$$\Vert f^{(n)}(t)\Vert\leq \Vert f^{(n)}(t)-f(t)\Vert+\Vert f(t)\Vert\leq 2\Vert f(t)\Vert$$
for any $n\in\mathbb N$, $t\in\Gamma$. Let $t\in\Gamma$ and $\epsilon>0$. 
Then there is $n_0\in\mathbb N$ such that $\Vert x_{n_0}-f(t)\Vert<\epsilon$. Hence
$$\Vert f^{(n)}(t) -f(t)\Vert\leq \Vert x_{n_0}-f(t)\Vert \leq \epsilon,\quad n\ge n_0.$$

$3\Rightarrow2$: This is obvious.

$2\Rightarrow1$:
We show that $f^{-1}(A)\in\scr C$ for any closed set $A\subset E$. 
We have
$A=\bigcap_{n\in\mathbb N} A_n$
for the open sets
$$A_n:=\{x\in E:\exists y\in A:\Vert x-y\Vert<1/n \}.$$
Hence
\begin{eqnarray*}
 f^{-1}(A) \subset  \bigcap_{n\in\mathbb N}\bigcup_{N\in\mathbb N}\bigcap_{k\geq N} (f^{(k)})^{-1}(A_n) 
   \subset \bigcap_{n\in\mathbb N}f^{-1}({A_n}) 
   \subset f^{-1}(\overline A)= f^{-1}(A),
\end{eqnarray*}
which implies
$$ f^{-1}(A) = \bigcap_{n\in\mathbb N}\bigcup_{N\in\mathbb N}\bigcap_{k\geq N} 
(f^{(k)})^{-1}({A_n})\in\scr C.$$

\end{proof}

\subsection{Stochastic differential equations with Fr\'echet space-valued processes}\label{su:bochner}
Let $(F,d)$ denote the Fr\'echet space of the previous section.
We identify right-continuous, increasing functions 
$X:\rp\to\rr$
with their corresponding Lebesgue-Stieltjes measure $\mu$ on $\rp$.
For $t\in\rp$ and $\varphi:\rp\to F$ we write
$$\varphi\in\scr L^1\big(([0,t],X),(F,d)\big)$$
if 
$\varphi1_{[0,t]}\in \scr L^1\big((\rp,\scr B(\rp),\mu)),(F,d)\big)$.
Moreover, we write
$$\int_0^t\varphi_sdX_s:= \int\varphi 1_{[0,t]}d\mu\in F$$
for the integral from Definition \ref{d:frechetintegral}.

For right-continuous increasing processes $X$ and measurable functions
$\varphi:\Omega\times\rp\to\rr$, both
$\varphi\in\scr L^1(([0,t],X),(F,d))$ and
$\int_0^t\varphi_sdX_s$ are to be interpreted in a pathwise sense,
i.e.\ for any fixed $\omega\in\Omega$.

\begin{lemma}\label{l:vorweg}
 Let $X:\Omega\times\mathbb R_+\rightarrow\mathbb R_+$ be an increasing right-continuous process.
 Moreover, let $\varphi:(\Omega\times\mathbb R_+,\scr A) \rightarrow (F,d)$
 be Borel measurable with
 $\varphi\in \scr L^1(([0,T],X),(F,d))$
for any $T\in\mathbb R_+$,
where $\scr A$ denotes the optional $\sigma$-field. Then
 $Y_t:= \int_0^t\varphi_sdX_s$
defines an adapted c\`adl\`ag process $Y$.
\end{lemma}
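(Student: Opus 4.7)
The plan is to verify three things: (i) $Y_t$ is pathwise well defined (immediate from the integrability hypothesis together with Proposition \ref{P:Fr\'echet integral}), (ii) $Y_t$ is $\F_t$-measurable for each $t\in\rp$, and (iii) the paths $t\mapsto Y_t(\omega)$ are c\`adl\`ag. I split the work into an adaptedness step and a regularity-of-paths step, both of which hinge on pathwise Bochner-type dominated convergence, carried out separately in each of the semi-norms $\Vert\cdot\Vert_n$.

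For \emph{adaptedness}, fix $t\in\rp$ and $n\in\mathbb N$. By Lemma \ref{L:Topologien} the Borel $\sigma$-field on $F$ generated by $d$ refines the one generated by $\Vert\cdot\Vert_n$, so $\varphi$ is $\scr A$-measurable as a map into $(F,\Vert\cdot\Vert_n)$. Lemma \ref{L:einfache approximation} applied to this separable semi-normed target produces a sequence of simple processes $\varphi^{(k)}=\sum_{j=1}^{N_k} x^{(k)}_j 1_{A^{(k)}_j}$ with $A^{(k)}_j\in\scr A$ satisfying $\Vert\varphi^{(k)}_s(\omega)\Vert_n\le 2\Vert\varphi_s(\omega)\Vert_n$ and $\Vert\varphi^{(k)}_s(\omega)-\varphi_s(\omega)\Vert_n\to 0$ for every $(\omega,s)$. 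Pathwise dominated convergence for the Bochner integral then gives
\[
\left\Vert Y_t(\omega)-\int_0^t\varphi^{(k)}_s(\omega)\,dX_s(\omega)\right\Vert_n\underset{k\to\infty}{\longrightarrow} 0.
\]
Each approximating integral equals $\sum_j x^{(k)}_j Z^{(k,j)}_t(\omega)$ with $Z^{(k,j)}_t:=\int_0^t 1_{A^{(k)}_j}(\omega,s)\,dX_s(\omega)$, and a monotone-class argument reduces $\F_t$-measurability of $Z^{(k,j)}_t$ to the case of stochastic intervals (which generate $\scr A$), where the identity is immediate from right-continuity of $X$ and the fact that the endpoints are stopping times. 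Hence $Y_t$ is the pointwise $\Vert\cdot\Vert_n$-limit of $\F_t$-measurable $F$-valued random variables, and thus itself $\F_t$-measurable into $(F,\Vert\cdot\Vert_n)$; since this holds for every $n$, Lemma \ref{L:Topologien} shows that $Y_t$ is measurable into $(F,d)$.

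For \emph{c\`adl\`ag paths}, I fix $\omega$ and work pathwise. For right-continuity at $t$ with $t_k\downarrow t$, the triangle inequality for the Bochner integral yields
\[
\Vert Y_{t_k}(\omega)-Y_t(\omega)\Vert_n\le\int_0^\infty\Vert\varphi_s(\omega)\Vert_n\, 1_{(t,t_k]}(s)\,dX_s(\omega)
\]
for every $n$. Right-continuity of $X(\omega)$ together with the real-valued dominated convergence theorem (with integrable majorant $\Vert\varphi_s(\omega)\Vert_n$) drives the right-hand side to $0$, and Remark \ref{R:Metrik abschaetzung} transfers this to $d(Y_{t_k}(\omega),Y_t(\omega))\to 0$. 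For left limits at $t$ with $s_k\uparrow t$, the analogous estimate $\Vert Y_{s_k}(\omega)-Y_{s_m}(\omega)\Vert_n\le\int \Vert\varphi_s(\omega)\Vert_n\, 1_{(s_m\wedge s_k,s_m\vee s_k]}(s)\,dX_s(\omega)$ shows that $(Y_{s_k}(\omega))_k$ is $\Vert\cdot\Vert_n$-Cauchy for every $n$, hence $d$-Cauchy, and therefore convergent in the complete Fr\'echet space $(F,d)$.

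The main obstacle is the adaptedness step, where one must verify (a) that the simple-function approximation of Lemma \ref{L:einfache approximation} can be performed with $\scr A$-measurable indicators, and (b) that the scalar integrals $\int_0^t 1_A\,dX$ are $\F_t$-measurable for all $A\in\scr A$. The c\`adl\`ag property by contrast reduces cleanly to a real-valued dominated-convergence estimate performed in each semi-norm separately.
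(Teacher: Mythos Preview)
Your proof is correct and follows essentially the same two-step structure as the paper: dominated convergence in each semi-norm for the c\`adl\`ag property, and approximation by simple functions (Lemma~\ref{L:einfache approximation}) together with Lemma~\ref{L:Topologien} for adaptedness.

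The one noteworthy difference concerns precisely the point you flag as ``the main obstacle.'' You keep the simple approximants $\scr A$-measurable and then run a monotone-class argument over the optional $\sigma$-field to obtain $\F_t$-measurability of the scalar integrals $\int_0^t 1_A\,dX_s$. The paper instead observes that any optional process restricted to $\Omega\times[0,T]$ is $\F_T\otimes\B([0,T])$-measurable (optional $\Rightarrow$ progressive), so the simple approximants from Lemma~\ref{L:einfache approximation} can be taken with indicators in the product $\sigma$-field $\F_T\otimes\B([0,T])$. For such sets the $\F_T$-measurability of $\int_0^t 1_A\,dX_s$ is immediate from rectangles, and no separate monotone-class step on $\scr A$ is needed. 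Your route works, but the paper's detour through progressive measurability is a bit shorter and sidesteps the issue you singled out.
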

\begin{proof}
 The dominated convergence theorem applied to the $\Vert\cdot\Vert_n$-Bochner integrals yields the c\`adl\`ag property 
of the process $Y$.

 We now show that $Y$ is adapted. Let $T\in\mathbb R_+$ and $n\in\mathbb N$. 
Lemma \ref{L:Topologien} yields that
  $\Omega\times[0,T]\rightarrow F$, $(\omega,t)\mapsto \varphi_t(\omega)$
is $\scr F_T\otimes \scr B([0,T])$-$\scr B_n$-measurable, where $\scr B_n$ denotes the 
Borel $\sigma$-field generated by $\Vert\cdot\Vert_n$. 
Lemma \ref{L:einfache approximation} yields that there is a sequence of simple 
$\scr F_T\otimes\scr B([0,T])$-$\scr B_n$-measurable functions $(\varphi^{(k)})_{k\in\mathbb N}$ such that
\begin{eqnarray*}
 \Vert \varphi^{(j)}_t(\omega)\Vert_n &\leq& 2\Vert\varphi_t(\omega)\Vert_n, \\
 \Vert \varphi^{(k)}_t(\omega) -\varphi_t(\omega)\Vert_n &\rightarrow& 0 \quad\text{for }k\rightarrow \infty 
\end{eqnarray*}
for any $j\in\mathbb N$, $\omega\in \Omega$, $t\in[0,T]$. 
The random variable $\int_0^t \varphi^{(k)}_s dX_s$ is $\scr F_T$-$\scr B_n$-measurable 
for any $k\in\mathbb N$. 
Proposition \ref{P:Fr\'echet integral} yields that the $F$-valued Bochner integral
$Z:=\int_0^t\varphi_sdX_s$
is a version of the $\Vert\cdot\Vert_n$-Bochner integral. 
The integral inequality for Bochner integrals and the dominated convergence theorem yield
 $$\lim_{k\rightarrow\infty} \left\Vert \bigg(\int_0^t\varphi^{(k)}_sdX_s\bigg)(\omega) - Z(\omega)\right\Vert_n
\leq \lim_{k\rightarrow\infty} \left(\int_0^t \Vert\varphi^{(k)}_s-\varphi_s\Vert_n dX_s\right)(\omega) = 0$$
for any $\omega\in\Omega$. Hence Lemma \ref{L:einfache approximation} 
implies that $Z$ is $\scr F_T$-$\scr B_n$-measurable. 
Since $n$ was chosen arbitraily, 
Lemma \ref{L:Topologien} yields that $Z$ is $\scr F_T$-$\scr B_d$-measurable,
where $\scr B_d$ denotes the Borel $\sigma$-field generated by $d$.

\end{proof}

We now turn to existence and uniqueness of solutions to Banach space-valued SDE's. 
As usual, this follows e.g.\ under Lipschitz conditions. 
\begin{theorem}\label{S:Ex und Eind Satz}
Let $X=(X^1,\dots,X^d)$ be an $\mathbb R^d$-valued right-continuous process whose components
are nonnegative and increasing. Moreover, let $(E,\Vert\cdot\Vert)$ be a separable Banach space, 
$x\in E$ and
$a:\mathbb R_+\times E\rightarrow E^d$
measurable such that
$$\sup_{s\in[0,T]}\sum_{i=1}^d\Vert a_i(s,0)\Vert<\infty$$
and
$$\sup_{s\in[0,T]}\sum_{i=1}^d\Vert a_i(s,y)-a_i(s,z)\Vert < C_T\Vert y-z\Vert$$
for any $T>0$, some $C_T>0$ and any $y,z\in E$.
Then there is a unique $E$-valued c\`adl{\`a}g process $Y$ such that
\begin{eqnarray*}\label{E:SDE}
Y_t = x + \sum_{i=1}^d\int_0^ta_i(s,Y_{s-})dX^i_s, \quad t\ge0,
\end{eqnarray*}
where the right-hand side contains pathwise Bochner integrals.
This process is adapted to the filtration generated by $X$.
\end{theorem}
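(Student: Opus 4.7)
The plan is to proceed by pathwise Picard iteration. Since $X^1,\dots,X^d$ are nonnegative and increasing, the integrals in (\ref{E:SDE}) are pathwise Bochner–Stieltjes integrals, so for each fixed $\omega\in\Omega$ we solve a deterministic integral equation in the Banach space $E$. Set $V_t:=\sum_{i=1}^d X^i_t$; this is again a right-continuous, increasing, nonnegative process. Fix an arbitrary horizon $T>0$ and write $K_T:=\sup_{s\in[0,T]}\sum_i\|a_i(s,0)\|$.

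First I would define the iterates $Y^{(0)}_t:=x$ and
\[
Y^{(n+1)}_t:=x+\sum_{i=1}^d\int_0^ta_i(s,Y^{(n)}_{s-})dX^i_s,\qquad t\in[0,T].
\]
Induction, combined with the Lipschitz and linear growth conditions on $a$, shows that the integrand $s\mapsto a_i(s,Y^{(n)}_{s-})$ is Borel measurable with $\|a_i(s,Y^{(n)}_{s-})\|\le \|a_i(s,0)\|+C_T\|Y^{(n)}_{s-}\|$, hence pathwise Bochner integrable against $X^i$ on $[0,T]$, so $Y^{(n+1)}$ is well defined and càdlàg by the dominated convergence argument implicit in Lemma~\ref{l:vorweg}. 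The same lemma (applied in the Banach space setting, which is the simpler case of the Fréchet setup developed above) shows that each $Y^{(n+1)}$ is adapted to the filtration generated by $X$.

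Next I would establish convergence via the key estimate
\[
\|Y^{(n+1)}_t-Y^{(n)}_t\|\le C_T\int_0^t\|Y^{(n)}_{s-}-Y^{(n-1)}_{s-}\|dV_s,\qquad t\in[0,T].
\]
Starting from $\|Y^{(1)}_t-Y^{(0)}_t\|\le K_T V_t$ and iterating, one obtains by induction
\[
\|Y^{(n+1)}_t-Y^{(n)}_t\|\le K_T\,\frac{(C_T V_t)^{n+1}}{(n+1)!},
\]
using the pathwise inequality $\int_0^t V_{s-}^{k}dV_s\le V_t^{k+1}/(k+1)$, which follows from the fact that $V_t^{k+1}-\int_0^t(k+1)V_{s-}^{k}dV_s=\sum_{s\le t}\bigl(V_s^{k+1}-V_{s-}^{k+1}-(k+1)V_{s-}^{k}\Delta V_s\bigr)$ is nonnegative when $V$ is nonnegative and increasing (this is the main technical point where jumps of $X$ must be handled carefully, and is my chief concern about routine calculations going wrong). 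Summability of $(C_TV_T)^n/n!$ forces $(Y^{(n)})$ to be a Cauchy sequence in the space of càdlàg functions $[0,T]\to E$ equipped with the uniform norm, and its limit $Y$ solves (\ref{E:SDE}) on $[0,T]$ by passing to the limit in the defining identity, invoking the Lipschitz estimate once more to justify interchanging limit and Bochner integral.

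For uniqueness, suppose $Y$ and $\widetilde Y$ are two càdlàg solutions. Then $\phi(t):=\|Y_t-\widetilde Y_t\|$ is càdlàg and hence bounded on $[0,T]$ by some $M<\infty$, and satisfies
\[
\phi(t)\le C_T\int_0^t\phi(s-)dV_s.
\]
Iterating this inequality with the same $V_{s-}^n$-estimate as above yields $\phi(t)\le M(C_TV_t)^n/n!$ for every $n$, whence $\phi\equiv 0$ on $[0,T]$. Since $T$ was arbitrary, uniqueness holds on $\mathbb R_+$. Finally, adaptedness of $Y$ to the filtration generated by $X$ is inherited from the iterates $Y^{(n)}$ via Lemma~\ref{l:vorweg}, because uniform convergence on compacts preserves adaptedness. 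Patching the solutions on $[0,T]$ as $T$ increases delivers the global solution.
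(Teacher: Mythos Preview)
Your argument is correct in outline and in its main technical point (the convexity inequality $\int_0^t V_{s-}^k\,dV_s\le V_t^{k+1}/(k+1)$ for increasing $V$, which is exactly what handles the jumps). The only slip is cosmetic: the base case should read $\|Y^{(1)}_t-Y^{(0)}_t\|\le (K_T+C_T\|x\|)V_t$ rather than $K_T V_t$, so the iterated bound carries a different constant in front of $(C_TV_t)^{n+1}/(n+1)!$; this does not affect convergence.

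Your route differs from the paper's. The paper also builds a Picard map $\Phi$, but instead of the factorial estimate it introduces stopping times $\tau_n=T\wedge\inf\{t:X_t\ge n/(2C_T)\}$ so that on each stochastic interval $[\![\tau_n,\tau_{n+1}[\![$ the increment of $X$ is at most $1/(2C_T)$, making $\Phi^{\tau_{n+1}-}$ a strict $\tfrac12$-contraction in sup norm; the (possibly large) jump at $\tau_{n+1}$ is then absorbed by one extra application of $\Phi^{\tau_{n+1}}$. Uniqueness is obtained by quoting a Gr\"onwall inequality for Stieltjes integrators. Your approach trades the stopping-time bookkeeping for the single convexity inequality and gets both existence and uniqueness from the same $V_t^n/n!$ estimate, which is slightly more self-contained; the paper's contraction argument, on the other hand, never needs to control iterated Stieltjes integrals and so sidesteps the jump inequality you flagged as delicate. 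Both are standard and equally valid here.
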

\begin{proof}
The proof uses the standard Picard iteration scheme (cf.\ \cite[Theorem I.1.1]{hartman.82}) to construct 
an adapted solution and Gr\"onwall's inequality (cf.\ \cite[Theorem A.5.1]{ethier.kurtz.86}) to show 
uniqueness among all possible solutions.
For ease of notation we assume that $d=1$.
The general case follows along the same lines.

Let $(\scr F_t)_{t\geq0}$ be the right-continuous filtration generated by $X$. 
W.l.o.g.\ we may assume that $X_0=0$. Let $g:\rp\to E$ be c\`adl\`ag.
For any $T\in\mathbb R_+$ and $s\leq T$ we have
\begin{eqnarray*}
\Vert a(s,g(s-))\Vert &\leq& C_T \Vert g(s-)\Vert+ \Vert a(s,0)\Vert\\
 &\leq & \sup_{s\in[0,T]} \left(C_T \Vert g(s)\Vert + \Vert a(s,0)\Vert\right)\\
 &<& \infty.
\end{eqnarray*} 
Hence the measurable function
$f:\mathbb R_+\rightarrow E$, $s\mapsto a(s,g(s-))$ is bounded on any $[0,T]$.
Thus $f$ is integrable on compact sets with respect to any finite Borel measure, 
e.g.\ with respect to the Lebesgue-Stieltjes measure of $X$. 
In particular, if $V$ is an adapted $E$-valued c{\`a}dl{\`a}g process, then the pathwise integrals
\begin{eqnarray*}
\Phi(V)_{t} &:=& x + \int_0^t a(s,V_{s-}) dX_s,\\
\Phi^\tau(V)_t &:=& x + \int_0^t 1_{[\![0,\tau]\!]}(s)a(s,V_{s-}) dX_s,\\
\Phi^{\tau-}(V)_t &:=& x + \int_0^t 1_{[\![0,\tau[\![}(s)a(s,V_{s-}) dX_s
\end{eqnarray*}
are c{\`a}dl{\`a}g adapted process for any stopping time $\tau$, cf.\ Lemma \ref{l:vorweg}. 
Observe that if $V$ is a fixed point of $\Phi^{\tau-}$, then $W:=\Phi^{\tau}(V)$ is a fixed point of $\Phi^{\tau}$.

Let $T\in\mathbb N$. \cite[I.1.18 and I.1.28]{js.87} yield that
$$\tau_n:=T\wedge \inf\left\{t\geq0: X_t\geq \frac{n}{2C_T}\right\}$$
is a stopping time for any $n\in\mathbb N$. 
Moreover, $\tau_0=0$, $\tau_n\leq\tau_{n+1}$, $\tau_{n}= T$ for large $n$ and we have 
$$1_{\{\tau_n\neq T\}}C_T\vert X_{\tau_{n+1}-}-X_{\tau_n}\vert \leq \frac{1}{2}$$
for any $n\in\mathbb N$. Let $n\in\mathbb N$ such that there is a fixed point $V^{n}$ of $\Phi^{\tau_n}$. 
For any $E$-valued process $W$ define the random variable 
$\Vert W\Vert_\infty:=\sup_{s\leq T}\Vert W_s\Vert$ and let $\scr V$ 
be the set of adapted c{\`a}dl{\`a}g processes which coincide with $V^n$ until $\tau_n$. 
Observe that $\Phi^{\tau_{n+1}}(\scr V)\subset\scr V$
because it maps adapted c{\`a}dl{\`a}g processes to adapted c{\`a}dl{\`a}g processes 
and $V^n$ is a fixed point of $\Phi^{\tau_n}$. For $W^{1},W^2\in\scr V$ we have
\begin{eqnarray*}
\Vert \Phi^{\tau_{n+1}-}(W^1)_t-\Phi^{\tau_{n+1}-}(W^2)_t \Vert 
&\leq& \int_0^{t} 1_{[\![0,\tau_{n+1}[\![}(s)
\Vert a(s,W^1_{s-}) - a(s,W^2_{s-}) \Vert d X_s\\
 &\leq& \int_{0}^{t} 1_{[\![\tau_n,\tau_{n+1}[\![}(s)\Vert a(s,W^1_{s-}) - a(s,W^2_{s-}) \Vert dX_s\\
 &\leq & 1_{\{\tau_n\neq T\}}C_T \Vert W^1-W^2 \Vert_\infty\vert X_{\tau_{n+1}-}-X_{\tau_n}\vert \\
 &< & \frac{1}{2}\Vert W^1-W^2 \Vert_\infty
\end{eqnarray*}
for any $t\in[0,T]$ and thus we obtain
$$\Vert \Phi^{\tau_{n+1}-}(W^1)-\Phi^{\tau_{n+1}-}(W^2) \Vert_\infty \leq \frac{1}{2}\Vert W^1-W^2 \Vert_\infty.$$
Thus $\Phi^{\tau_{n+1}-}$ is a contraction on the set $\scr V$, which implies that
there is a fixed point $W\in\scr V$ 
of $\Phi^{\tau_{n+1}-}$. In particular, $\Phi^{\tau_{n+1}}$ has an adapted c\`adl\`ag fixed point as well, namely 
$V^{n+1}:=\Phi^{\tau_{n+1}}(W)$. Define the adapted c{\`a}dl{\`a}g process
$U^T_t :=\lim_{n\rightarrow\infty} V^n_t$.
This process $U^T$ is a fixed point of $\Phi^T$. 
Its pointwise limit $Y_t:=\lim_{T\rightarrow\infty}U^T_t$ is an adapted c{\`a}dl{\`a}g process 
and a fixed point of $\Phi$.

Let $Z$ be another pathwise solution, i.e.\ $Z$ is $E$-valued, c{\`a}dl{\`a}g, and
$Z_t = x + \int_0^ta(s,Z_{s-})dX_s$, $t\geq0$. 
Fix $\omega\in\Omega$ and define 
$$f(t):=\sup_{s\in[0,t)}\Vert Y_s(\omega)-Z_s(\omega)\Vert$$ for all $t\in\mathbb R_+$. $f$ is finite
because $Y$ and $Z$ are c{\`a}dl{\`a}g. Moreover, we have
\begin{eqnarray*}
 f(t) &=& \sup_{s\in[0,t)}\Vert \Phi(Y)_s(\omega)-\Phi(Z)_s( \omega)\Vert \\
    &\leq& C_T \left(\int_0^t f(s) dX_s\right)(\omega)
\end{eqnarray*}
for $t\leq T$. \cite[Theorem A.5.1]{ethier.kurtz.86} yields $f(t)=0$ for all $t\geq0$. Thus $Z=Y$.

\end{proof}

If a growth condition holds, the Lipschitz condition can be relaxed as usual to a local version.
\begin{corollary}\label{C:Ex und Eind Satz1}
 Let $X=(X^1,\dots,X^d)$ be an $\mathbb R^d$-valued right-continuous process whose components
are nonnegative and increasing. Moreover, let $(E,\Vert\cdot\Vert)$ be a separable Banach space, 
$x\in E$ and
$a:\mathbb R_+\times E\rightarrow E^d$
measurable such that for any $T\in\rp$ there is $C_T<\infty$ such that 
$$\sup_{s\in[0,T]}\sum_{i=1}^d\Vert a_i(s,y)\Vert\leq C_T(1+\|y\|), \quad y\in E$$
and for any $K,T\in\rp$ some $C_{K,T}<\infty$ such that
$$\sup_{s\in[0,T]}\sum_{i=1}^d\Vert a_i(s,y)-a_i(s,z)\Vert < C_{K,T}\Vert y-z\Vert$$
for any  $y,z\in E$ with $\|y\|\le K$, $\|z\|\le K$.
Then there is a unique $E$-valued c\`adl{\`a}g process $Y$ such that
\begin{eqnarray}\label{E:SDE2}
Y_t = x + \sum_{i=1}^d\int_0^ta_i(s,Y_{s-})dX^i_s, \quad t\ge0.
\end{eqnarray}
This process is adapted to the filtration generated by $X$.
\end{corollary}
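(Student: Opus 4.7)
The plan is to reduce Corollary \ref{C:Ex und Eind Satz1} to Theorem \ref{S:Ex und Eind Satz} via a truncation argument, using the linear growth condition to obtain a pathwise a priori bound that controls the truncation level. First, for each $K\in\mathbb{N}$ I would fix a Lipschitz cutoff $g_K:[0,\infty)\to[0,1]$ with $g_K\equiv 1$ on $[0,K]$ and $g_K\equiv 0$ on $[K+1,\infty)$, and set $a^K_i(s,y):=a_i(s,y)\,g_K(\Vert y\Vert)$. A short case distinction, splitting on whether each of $\Vert y\Vert,\Vert z\Vert$ lies in $[0,K]$, in $(K,K+1]$, or in $(K+1,\infty)$, shows that $a^K$ is globally bounded and globally Lipschitz in $y$ on $E$ with constants depending only on $K,T$: in the transition strip one combines the local Lipschitz bound $C_{K+1,T}$ with $|g_K(\Vert y\Vert)-g_K(\Vert z\Vert)|\le \Vert y-z\Vert$ and the linear growth bound $C_T(K+2)$, while outside the ball of radius $K+1$ both terms vanish. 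Theorem \ref{S:Ex und Eind Satz} then yields a unique adapted c\`adl\`ag solution $Y^K$ of the truncated SDE driven by $a^K$.

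Next I would establish the pathwise a priori bound. Writing $X^\Sigma:=X^1+\cdots+X^d$ and using $\Vert a^K_i\Vert\le\Vert a_i\Vert$, any $Y^K$ and any candidate solution $Z$ of (\ref{E:SDE2}) satisfy
$$\Vert Y^K_t\Vert\le\Vert x\Vert+\sum_{i=1}^d\int_0^t\Vert a^K_i(s,Y^K_{s-})\Vert dX^i_s\le\Vert x\Vert+C_T\int_0^t(1+\Vert Y^K_{s-}\Vert)dX^\Sigma_s$$
for $t\in[0,T]$, and likewise for $Z$. The pathwise Gronwall estimate (along the lines of \cite[Theorem A.5.1]{ethier.kurtz.86}, applied to the Lebesgue-Stieltjes measure associated with the right-continuous, increasing $X^\Sigma$) then yields a $K$-independent random constant $K_T(\omega)$ with
$$\Vert Y^K_t\Vert\le K_T(\omega)\quad\text{and}\quad \Vert Z_t\Vert\le K_T(\omega),\qquad t\in[0,T],$$
depending only on $\Vert x\Vert$, $C_T$ and $X^\Sigma_T(\omega)$.

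Finally I would patch the $Y^K$ together. Setting $\tau_K:=\inf\{t\ge0:\Vert Y^K_t\Vert\vee\Vert Y^K_{t-}\Vert>K\}$, on $[\![0,\tau_K[\![$ the truncation is inactive so $Y^K$ solves (\ref{E:SDE2}) there; uniqueness in Theorem \ref{S:Ex und Eind Satz}, applied with $a^{K'}$ for $K'>K$, forces $Y^K=Y^{K'}$ on $[\![0,\tau_K[\![$. Fixing $T\in\rp$ and choosing $K>K_T(\omega)$ pathwise gives $\tau_K(\omega)>T$, so $Y^K$ restricted to $[0,T]$ is a solution of the original equation; letting $K\to\infty$ and $T\to\infty$ produces an adapted c\`adl\`ag global solution $Y_t:=\lim_K Y^K_t$, well defined because the sequence is eventually constant at every $(\omega,t)$. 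The same a priori bound forces any other solution $Z$ of (\ref{E:SDE2}) to coincide with $Y^K$ on $[0,T]$ once $K>K_T(\omega)$, yielding uniqueness. Adaptedness to the filtration generated by $X$ is inherited from each $Y^K$, the stopping times $\tau_K$, and the $\scr F_T$-measurability of $K_T$. The main technical point that requires care is the pathwise Gronwall estimate for the possibly discontinuous integrator $X^\Sigma$, but the same reference is already used in the proof of Theorem \ref{S:Ex und Eind Satz} and applies here without modification.
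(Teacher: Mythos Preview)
Your proposal is correct and follows essentially the same strategy as the paper: truncate $a$ to obtain a globally Lipschitz coefficient, invoke Theorem~\ref{S:Ex und Eind Satz} for each truncation level, use the linear growth condition together with a pathwise Gr\"onwall estimate to obtain a $K$-independent a priori bound, and patch the $Y^K$ together. The only substantive difference is the choice of truncation: you use a multiplicative cutoff $a^K(s,y)=a(s,y)g_K(\Vert y\Vert)$, whereas the paper uses a radial retraction $a^K(t,y)=a\bigl(t,y\,(1+\Vert y\Vert\wedge K)/(1+\Vert y\Vert)\bigr)$, but both devices do the same job and both rely on the growth condition to verify global Lipschitzness.
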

\begin{proof}
 Again we consider $d=1$ for the proof.
Fix $K\in\rp$. Choose $a^K:\mathbb R_+\times E\rightarrow E$ such that
$a^K(t,y)=a(t,y)$ for $\|y\|\leq K$ and such that the Lipschitz condition
of Theorem \ref{S:Ex und Eind Satz} holds for $a^K$,
e.g.\
$$a_K(t,y):=a\left(t,y{1+(\|y\|\wedge K)\over1+\|y\|}\right).$$
Denote the corresponding solution to the SDE by $Y^K$.
Gr\"onwall's inequality yields as in the proof of Theorem \ref{S:Ex und Eind Satz}
that for $K,L\in\rp$ the solutions $Y^K$ and $Y^L$ coincide till the norm of either of the two processes
exceeds $K\wedge L$.
Hence, setting
$Y_t=Y^K_t$ 
for $t\in\rp$ with $\sup_{s\in[0,t]}\|Y^K_s\|\leq K$
yields a well-defined adapted c\`adl\`ag process, which may however explode at a finite time.
However, the linear growth condition and Gr\"onwall's inequality yield that
$\sup_{t\in[0,T]}\|Y^K_t\|\leq c_T$ for some finite random variable $c_T$ which does 
not depend on $K$. 
Consequently, $Y$ is defined on $\rp$ and it solves SDE (\ref{E:SDE2}).
Uniqueness follows as in the proof of Theorem \ref{S:Ex und Eind Satz}.

\end{proof}

Corollary \ref{C:Ex und Eind Satz1} can be extended to 
Fr\'echet-valued processes and their respective integrals.
\begin{corollary}\label{C:Ex und Eind Satz}
 Let $X=(X^1,\dots,X^d)$ be an $\mathbb R^d$-valued right-continuous process whose components
are nonnegative and increasing. Moreover, let
$x\in F$, and
$a:\mathbb R_+\times F\rightarrow F^d$
measurable such that 
for any $T\in\rp$ there is $n_0\in\mathbb N$ such that 
for any $n\geq n_0$ there is $C<\infty$ with
$$\sup_{s\in[0,T]}\sum_{i=1}^d\Vert a_i(s,y)\Vert_n\leq C(1+\|y\|_n),\quad y\in F$$
and for any $n\geq n_0$, $K\in\rp$ there is some $C_K<\infty$ with
$$\sup_{s\in[0,T]}\sum_{i=1}^d\Vert a_i(s,y)-a_i(s,z)\Vert_n < C_K\Vert y-z\Vert_n$$
for $y,z\in F$ with $\|y\|_n\le K$, $\|z\|_n\le K$.
Then there is a unique $F$-valued c\`adl{\`a}g process $Y$ such that
\begin{eqnarray*}\label{E:SDE3}
Y_t = x + \sum_{i=1}^d\int_0^ta_i(s,Y_{s-})dX^i_s, \quad t\ge0,
\end{eqnarray*}
where the right-hand side refers to the pathwise Bochner integral introduced above.
This process is adapted to the filtration generated by $X$.
\end{corollary}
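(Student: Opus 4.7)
The plan is to extend the Picard-and-stopping-time scheme from Theorem \ref{S:Ex und Eind Satz} and Corollary \ref{C:Ex und Eind Satz1} from a single Banach norm to the countable family of seminorms defining $F$, exploiting that Fr\'echet Bochner integration is compatible with integration with respect to each individual seminorm (Lemma \ref{L:Integralkompatibilitaet}).

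First I would fix $T \in \rp$ and take $n_0$ as in the hypothesis. Since the seminorms are increasing, for every $n \geq n_0$ the growth and local Lipschitz bounds hold with constants $C^{(n)}, C^{(n)}_K$. For such $n$, set $N_n := \{y \in F : \|y\|_n = 0\}$ and let $E_n$ denote the separable Banach-space completion of $F/N_n$, with canonical projection $\pi_n : F \to E_n$. The Lipschitz bound lets $a_i(s,\cdot)$ descend to $F/N_n$ and, by uniform continuity on bounded sets, extend uniquely to $\bar a^{(n)}_i : \rp \times E_n \to E_n$ with the same constants. I would then define the Picard iterates directly in $F$ by $Y^{(0)}_t := x$ and
$$Y^{(k+1)}_t := x + \sum_i \int_0^t a_i(s,Y^{(k)}_{s-}) \, dX^i_s,$$
where the Fr\'echet Bochner integrals from Proposition \ref{P:Fr\'echet integral} are well defined pathwise via the growth bound applied in each seminorm. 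By Lemma \ref{L:Integralkompatibilitaet}, $\pi_n$ commutes with Bochner integration, so $\pi_n(Y^{(k)})$ coincides with the $k$-th Picard iterate of the $E_n$-valued SDE with coefficient $\bar a^{(n)}$ and initial value $\pi_n(x)$.

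For each fixed $n \geq n_0$, the stopping-time argument of Theorem \ref{S:Ex und Eind Satz} and the truncation step in Corollary \ref{C:Ex und Eind Satz1} applied to the $E_n$-valued SDE show that $(\pi_n(Y^{(k)}))_k$ is Cauchy, hence convergent, in sup-$\|\cdot\|_n$ norm on $[0,T]$, with limit some $Z^{(n)}$. In particular, for every $t$, the sequence $(Y^{(k)}_t)_k$ is Cauchy in every seminorm $\|\cdot\|_n$ with $n \geq n_0$ and hence Cauchy in the Fr\'echet metric $d$ by Remark \ref{R:Metrik abschaetzung}. Fr\'echet completeness delivers a limit $Y_t \in F$ with $\pi_n(Y_t) = Z^{(n)}_t$. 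Passing to the limit in the defining relation (using continuity of the Bochner integral in each seminorm) yields that $Y$ solves the SDE modulo every $\|\cdot\|_n$, hence in $F$ itself by the separation property $\bigcap_{n \geq n_0} N_n = \{0\}$. The c\`adl\`ag property in $(F,d)$ and adaptedness follow from the analogous properties of each $Z^{(n)}$ in $E_n$ via Remark \ref{R:Metrik abschaetzung} and Lemma \ref{L:Topologien}. For uniqueness, any other $F$-valued solution $\tilde Y$ projects under $\pi_n$ to a solution of the $E_n$-valued SDE which by Corollary \ref{C:Ex und Eind Satz1} must equal $Z^{(n)}$; separation then forces $\tilde Y = Y$.

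The main obstacle I expect is the coordination of the stopping times across different seminorms: the natural localisation from the Banach proof depends on $n$ through $C^{(n)}$, yet the $F$-valued iterates $Y^{(k)}$ themselves are never truncated. The key observation is that the stopping times serve only as an analytic device inside each Banach completion $E_n$ to verify that $\pi_n(Y^{(k)})$ is Cauchy in $\|\cdot\|_n$; once this is established for each $n$ separately, Fr\'echet completeness patches the convergences together automatically. Lemma \ref{L:Integralkompatibilitaet} is the crucial bridge that makes this separation into Banach slices legitimate.
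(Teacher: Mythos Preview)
Your approach is correct in spirit and reaches the same conclusion, but it is organised differently from the paper and contains one step that needs more care. The paper does \emph{not} run a single Picard iteration in $F$ and then project. Instead, for each fixed $T$ and each $n\ge n_0$ it passes to the Banach quotient $G_n=F/N_n$, applies Corollary~\ref{C:Ex und Eind Satz1} there to obtain a solution $Y^{(n)}$ directly, lifts it back to $F$ via a linear choice of representatives, and then shows the consistency $\Vert Y^{(n)}-Y^{(k)}\Vert_k=0$ for $n_0\le k\le n$. The resulting sequence $(Y^{(n)})_{n\ge n_0}$ is $d$-Cauchy and its limit is the solution. Your variant, running Picard iterates $Y^{(k)}$ in $F$ and projecting, avoids the somewhat awkward choice-of-representatives map, which is a genuine simplification; conversely, the paper's route lets one invoke Corollary~\ref{C:Ex und Eind Satz1} as a black box, whereas you must reopen its proof.

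That reopening is where your sketch has a gap. You assert that the stopping-time argument of Theorem~\ref{S:Ex und Eind Satz} together with the truncation step of Corollary~\ref{C:Ex und Eind Satz1} shows the projected standard Picard iterates $(\pi_n(Y^{(k)}))_k$ are Cauchy in $\sup_{[0,T]}\Vert\cdot\Vert_n$. But Corollary~\ref{C:Ex und Eind Satz1} never proves that the \emph{untruncated} global Picard iterates converge: it replaces $a$ by a globally Lipschitz $a^K$, solves \emph{those} equations, and patches. To make your argument rigorous you must first show, from the linear growth bound and an inductive Gr\"onwall estimate, that $\sup_{k}\sup_{s\le T}\Vert Y^{(k)}_s\Vert_n\le M_n(\omega)<\infty$ for each $n\ge n_0$; only then does the local Lipschitz constant $C_{M_n}$ become an effective global one on which the stopping-time contraction of Theorem~\ref{S:Ex und Eind Satz} bites. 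This is routine but should be stated. You should also make explicit, as the paper does at the end of its proof, the consistency step that glues the solutions obtained for different horizons $T$ (and their associated $n_0(T)$) into a single process on all of~$\rp$.
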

\begin{proof}
As before, we consider $d=1$ for ease of notation.

 {\em Existence for a modified equation.} Fix $T\in\rp$ and define
 $$\widetilde a:\mathbb R_+\times F\rightarrow F^d,\quad (t,x)\mapsto 1_{[0,T]}(t)a(t,x).$$
By assumption there is an $n_0\in\mathbb N$ such that for any $n\geq n_0$, 
$K\geq0$ there are $C,C_K<\infty$ such that
\begin{eqnarray}
 \sup_{s\in\rp}\Vert\widetilde a(s,y)\Vert_n &\leq& C(1+\|y\|_n),\quad y\in F, \label{e:growth} \\
 \sup_{s\in\rp}\Vert\widetilde a(s,x)-\widetilde a(s,z)\Vert_n &<& C_K\Vert x-z\Vert_n \label{e:Lipschitz}
\end{eqnarray}
for any $x,z\in F$ with $\|x\|_n\le K$, $\|z\|_n\le K$. Let $n\geq n_0$, $K\in \rp$ 
and choose $C,C_K>0$ such that (\ref{e:growth}, \ref{e:Lipschitz}) hold. 
Consider the factor space $G_n:=F/N_n$ with 
$N_n:=\{y\in F:\Vert y\Vert_n=0\}$ and the corresponding factor norm $\Vert\cdot\Vert_{c,n}$. 
Then $(G_n,\Vert\cdot\Vert_{c,n})$ is a separable Banach space. 
Let $\rho_n:F\rightarrow G_n$ be the factor mapping.
Observe that
 $$\Vert \rho_n(\widetilde a(s,x))-\rho_n(\widetilde a(s,y))\Vert_{c,n} 
 = \Vert\widetilde a(s,x)-\widetilde a(s,y)\Vert_n \leq C_K \Vert x-y\Vert_n =0$$
for any $s\le T$, $K\in\mathbb N$, $x,y\in F$ with $\|x\|_n\le K$, $\|y\|_n\le K$, and
$\Vert x-y\Vert_n=0$. Thus the function
 $$ b^{(n)}:\mathbb R_+\times G_n\rightarrow G_n,\quad (s,[x])\mapsto \rho_n(\widetilde a(s,x))$$
 is well defined. Moreover, $b^{(n)}$ is measurable,
$$\sup_{s\in\rp}\Vert b^{(n)}(s,y)\Vert_{c,n}<C(1+\|y\|_n),\quad y\in G_n$$
and
$$\sup_{s\in\rp}\Vert b^{(n)}(s,y)-b^{(n)}(s,z)\Vert_{c,n} < C_K\Vert y-z\Vert_{c,n}$$
for any $y,z\in G_n$ with $\|x\|_n\le K$, $\|y\|_n\le K$.
Then Corollary \ref{C:Ex und Eind Satz1} yields that there is a unique adapted c\`adl\`ag solution $\hat Y$ to the SDE
 $$\hat Y_t = \rho_n(x) +\int_0^tb^{(n)}(s,\hat Y_{s-})dX_s .$$
 Let $\varphi_n:G_n\rightarrow F$ be a linear choice of representants. 
Then $\varphi_n$ is isometric and hence 
$Y^{(n)}:=\varphi_n(\hat Y)$ is a version of 
$x + \int_0^t\widetilde a(s,Y^{(n)}_{s-})dX_s$, 
where we refer to the $\Vert\cdot\Vert_n$-Bochner integral. 
Lemma \ref{L:Integralkompatibilitaet} yields that $Y^{(n)}_{t}$ 
is a version of $x + \int_0^t\widetilde a(s,Y^{(n)}_{s-})dX_s$, 
where the integral can be interpreted as a $\Vert\cdot\Vert_k$-Bochner integral for $k=1,\dots,n$. 
Thus $Z:=\rho_k(Y^{(n)})$ is the unique solution to the SDE 
$$Z_t = \rho_k(x) + \int_0^tb^{(k)}(s,Z_{s-})dX_s$$
for $k=n_0,\dots,n$, $t\in\rp$. Hence $\Vert Y^{(n)}_t - Y^{(k)}_t \Vert_k =0$ 
for $k=n_0,\dots,n$, $t\in\rp$. 
Thus $(Y^{(n)}_t(\omega))_{n\geq n_0}$ is a 
$\Vert\cdot\Vert_k$-Cauchy sequence for any $k\geq n_0$ and any $\omega\in\Omega$. 
Hence it is a $d$-Cauchy sequence and we define
$$Y_t(\omega) := \lim_{n\rightarrow \infty} Y^{(n)}_t(\omega)$$
relative to the metric $d$ and any $\omega\in\Omega$, $t\in\rp$. Since $\Vert Y^{(k)}_t - Y_t\Vert_k=0$, we have
\begin{equation}\label{e:sde1}
 Y_{t} = x + \int_0^t\widetilde a(s,Y_{s-})dX_s
\end{equation}
for some version of the $\Vert\cdot\Vert_k$-Bochner integral.
 Since this is true for any $k\geq n_0$, we conclude that
the equation holds also relative to the Fr\'echet space-valued Bochner integral. 
The c\`adl\`ag property of $Y$ follows from the c\`adl\`ag properties of $Y^{(n)}$ relative to the semi-norm 
$\Vert\cdot\Vert_n$ for any $n\geq n_0$. 
Random variable $Y_t$ is $\scr F_t$-$\scr B_n$-measurable for any $t\in\rp$,  $n\geq n_0$ because 
this holds for $\hat Y_t=\rho_n(Y_t)$ by Corollary \ref{C:Ex und Eind Satz1}.
Lemma \ref{L:Topologien} yields that $Y$ is adapted.

 {\em Uniqueness for the modified equation.} 
 Let $Z$ be an arbitrary solution to (\ref{e:sde1}) in the sense of $(F,d)$-space valued integrals. 
 Let  $n\geq n_0$ and define the factor space $G_n$, 
the factor mapping $\rho_n$,  and the function $b^{(n)}$ as above. Then we have
 $$\rho_n(Z_t) = \rho_n(x) + \int_0^t b^{(n)}(s,\rho_n(Z_{s-}))dX_s$$
 for any $t\in\rp$. Corollary \ref{C:Ex und Eind Satz1} yields that $\rho_n\circ Z = \rho_n \circ Y$ or, 
equivalently, $\Vert Z_t-Y_t\Vert_n=0$ for any $t\in\rp$. Since this is true for any $n\geq n_0$, 
we have $Z=Y$.

 {\em Consistency of solutions:} Let $0\leq T_1\leq T_2<\infty$ and
\begin{eqnarray*}
 \widetilde a_1:\mathbb R_+\times F\rightarrow F^d,\quad (t,x)\mapsto 1_{[0,T_1]}(t)a(t,x), \\
 \widetilde a_2:\mathbb R_+\times F\rightarrow F^d,\quad (t,x)\mapsto 1_{[0,T_2]}(t)a(t,x).
\end{eqnarray*}
Let $Y^{(i)}$ be the unique $(F,d)$-valued solution to the SDE
 \begin{equation}\label{e:sde2}
    Y^{(i)}_{t} = x + \int_0^t\widetilde a_i(s,Y^{(i)}_{s-})dX_s,\quad t\in\rp
 \end{equation}
for $i=1,2$, which has been constructed above. Then the stopped process
$(\widetilde Y^{(2)})^{T_1}$
solves the $(F,d)$-valued SDE (\ref{e:sde2})
for $i=1$, which implies $\widetilde Y^{(1)}=(\widetilde Y^{(2)})^{T_1}$.

 {\em Existence and uniqueness:} For any $T\in\mathbb N$ let $Y^{(T)}$ be the unique 
 $(F,d)$-valued solution to the SDE
\begin{equation}\label{e:sde3}
 Y^{(T)}_{t} = x + \int_0^t1_{[0,T]}(t) a(s,Y^{(T)}_{s-})dX_s,\quad t\in\rp.
\end{equation}
 Define
 $Y_t:=Y^{(T)}_t$
 for any $T\in \mathbb N$, $ t\leq T$.
 Process $Y$ is well defined due to consistency of solutions. Observe that $Y$ is a solution to the $(F,d)$-valued SDE
 $$ Y_t = x + \int_0^t a(s,Y_{s-})dX_s,\quad t\in\rp.$$
 If $Z$ is any solution to this $(F,d)$-valued SDE, 
 then $Z^T$ solves (\ref{e:sde3}) for any $T\in\mathbb N$.
 This yields $Z_t=Y^{(T)}_t=Y_t$, $t\in[0,T]$.

\end{proof}

\section*{acknowledgements}
We would like to thank the anonymous referees and the associate editor
for various useful comments.

\bibliographystyle{spmpsci}      
\bibliography{HJM}   

\end{document}